\newcommand{\labeledvertices}[3]{
	%The first two arguments for this macro are the sets of vertex labels
	%The last argument is a the _number_ of the x/ y endpoints
	\foreach[count = \i]  \x in {#1} {
		\draw (-1, -.5*\i) node [label = {[shift={(-.5, -.4)}]$\x$}] {$\bullet$} ;}
	\foreach[count = \i]  \y in {#2} {
		\draw (1, -.5*\y) node [label = {[shift={(.5, -.4)}]$\y$}] {$\bullet$} ;}
	\foreach \xend / \yend in {#3} {
		\draw (-1, -.5*\xend) -- (1, -.5*\yend);}
  
}
\newcommand{\bipartitelabels}[2]{
	%The arguments here are the labels of the two sets 
	\draw (-1, 0) node {$#1$};
	\draw (1, 0) node {$#2$};
}
\tikzstyle{smallpropagator}=[decorate,decoration={snake,segment length=3mm,amplitude=0.5mm}]
\def\centerarc[#1](#2)(#3:#4:#5)% Syntax: [draw options] (center) (initial angle:final angle:radius)
\def\clipcenterarc(#1)(#2:#3:#4)% Syntax: [draw options] (center) (initial angle:final angle:radius)
\newcommand{\newbetternode}[3][{label distance=-1mm]left}]{
  \node[label={#1:{\scriptsize $#3$}}] at (\zero + #2*\step:\radius) {\scriptsize $\bullet$};
  %\node[#1] at (\zero + #2*\step:\radius) {\scriptsize $#3$};
}
\newcommand{\Z}{\mathbb{Z}}
\newcommand{\R}{\mathbb{R}}
\newcommand{\Gr}{\mathbb{G}_{\R, \geq 0}}
\newcommand{\Grall}{\mathbb{G}_{\R}}
\DeclareMathOperator{\rk}{rk}
\newcommand{\cl}{\textrm{cl}}
\def\ba #1\ea{\begin{align} #1 \end{align}}
\def\bas #1\eas{\begin{align*} #1 \end{align*}}
\def\bml #1\eml{\begin{multline} #1 \end{multline}}
\def\bmls #1\emls{\begin{multline*} #1 \end{multline*}}
\newcommand{\sI}{\mathscr{I}}
\newcommand{\cP}{\mathcal{P}}
\newcommand{\cB}{\mathcal{B}}
\newcommand{\cF}{\mathcal{F}}
\newcommand{\cN}{\mathcal{N}}
\newcommand{\sfM}{\mathsf{M}}
\newcommand{\cM}{\mathcal{M}}
\newcommand{\cI}{\mathcal{I}}
\newcommand{\cC}{\mathcal{C}}
\newcommand{\Prop}{\textrm{Prop}}
\newcommand{\End}{\textrm{End}}
\newcommand{\cE}{\mathcal{E}}
\newcommand{\Int}{\textrm{Int}}
\newtheorem{thm}{Theorem}[section]
\newtheorem{lem}[thm]{Lemma}
\newtheorem{cor}[thm]{Corollary}
\newtheorem{prop}[thm]{Proposition}
\newtheorem{algorithm}[thm]{Algorithm}
\theoremstyle{remark}
\newtheorem{eg}[thm]{Example}
\theoremstyle{definition}
\newtheorem{dfn}[thm]{Definition}
\newtheorem{rmk}[thm]{Remark}
\definecolor{latte}{rgb}{.99,.97,.92}
\definecolor{codeblue}{rgb}{0.08, .3, 1}
\definecolor{codegray}{rgb}{0.5,0.5,0.5}
\definecolor{codered}{rgb}{.76, 0, .22}
\definecolor{codegreen}{rgb}{0,.76,.54}
\lstdefinestyle{mystyle}{
    backgroundcolor=\color{latte},   
    commentstyle=\color{codeblue},
    keywordstyle=\color{codered},
    numberstyle=\tiny\color{codegreen},
    stringstyle=\color{codegray},
    basicstyle=\ttfamily,
    breakatwhitespace=false,         
    breaklines=true,                 
    captionpos=b,                    
    keepspaces=true,                 
    numbers=left,                    
    numbersep=5pt,                  
    showspaces=false,                
    showstringspaces=false,
    showtabs=false,                  
    tabsize=2
}
\title{Rado matroids and a graphical calculus for boundaries of Wilson loop diagrams}
\author{Susama Agarwala, Colleen Delaney, Karen Yeats}
\date{}
\begin{document}
\maketitle

\begin{abstract}
    We study the boundaries of the positroid cells which arise from $\cN=4$ super Yang Mills theory. Our main tool is a new diagrammatic object which generalizes the Wilson loop diagrams used to represent interactions in the theory. We prove conditions under which these new generalized Wilson loop diagrams correspond to positroids and give an explicit algorithm to calculate the Grassmann necklace of said positroids. Then we develop a graphical calculus operating directly on noncrossing generalized Wilson loop diagrams. In this paradigm, applying diagrammatic moves to a generalized Wilson loop diagram results in new diagrams that represent boundaries of its associated positroid, without passing through cryptomorphisms. We provide a Python implementation of the graphical calculus and use it to show that the boundaries of positroids associated to ordinary Wilson loop diagram are generated by our diagrammatic moves in certain cases.
\end{abstract}
\section{Introduction}

Wilson loop diagrams are a variant of the notion of chord diagram that was introduced in high energy physics in order to compute scattering amplitudes in $\cN=4$ super Yang Mills theory (SYM). Each admissible Wilson loop diagram is associated to a positroid, which has led to their study from a combinatorial \cite{generalcombinatoricsI, generalcombinatoricsII}, geometric \cite{non-orientable, cancellation, casestudy}, and matroidal \cite{Wilsonloop, casestudy} point of view, in addition to their investigation from a physics perspective \cite{Wilsonloop, cancellation, Amplituhedronsquared, HeslopStewart}.

In particular, it is natural to ask about the boundaries of positroid cells associated to Wilson loop diagrams. Interest in this question is further motivated by the physics context as understanding the boundaries is necessary for understanding the singularity structure of the integral that physics tells us is associated to each Wilson loop diagram and hence for understanding cancellations of singularities between such integrals.

 Some of these boundaries were already known to be able to be understood at the diagrammatic level by taking a limit where the end of a propagator of the Wilson loop diagram is sent to a neighboring vertex or identified with the end of a neighboring propagator \cite{casestudy, HeslopStewart, Amplituhedronsquared}. The resulting diagrams no longer met the definition of a Wilson loop diagram, but were close enough that the mapping from diagram to positroid could be defined with essentially no change from the Wilson loop diagram case. 

 Upon enumeration of the Wilson loop diagrams and their codimension one boundaries, one finds that the diagrams correspond to some, but not all, positroid cells in $\Gr(k,n)$, and the boundaries formed by moving propagators correspond to some, but not all, of the codimension one boundaries of said positroid cells (and some actually correspond to higher codimension boundaries)\cite{casestudy, cancellation}. This raises the question of whether there is a diagrammatic way to obtain all boundaries of any codimension of positroid cells associated to Wilson loop diagrams. In order to attack this question, we first define a generalization of Wilson loop diagrams where propagators are now weighted hyperedges, i.e.\ can have an arbitrary number of ends, with a new weighting quantity called the capacity. Furthermore, these generalized propagators can have ends on vertices or on edges of the outer loop. We call these diagrams {\it generalized Wilson loop diagrams}, with their definition laid out in Section \ref{sec:gWLDdiagrammatics}. These prove to be interesting objects in and of themselves, both combinatorially and matroidally.

 With the definition in place, we pursue two related directions. First, we study the matroidal properties of generalized Wilson loop diagrams (Section \ref{sec:allWLDtomatroid}), and second, we develop a diagrammatic calculus for constructing boundaries of a generalized Wilson loop diagram (Section \ref{sec:diagrammaticmoves}), though it is not yet strong enough to obtain all boundaries in all circumstances.

For the first direction, we observe that each ordinary Wilson loop diagram can be associated to a transversal matroid representing the connection between the vertices of the diagram and the propagators \cite{basisshapeloci}. This naturally generalizes to associate a Rado matroid structure to each generalized Wilson loop diagram. We pursue this construction in Section \ref{sec:gWLDtomatroid}. Generalized Wilson loop diagrams have particularly clean Rado matroids and we can understand when they give positroids (Section~\ref{sec:positive gWLD}). Further, we give an algorithm to read off their Grassmann necklaces in general (Section~\ref{sec:GN of Rado}). 
Specifically for the case of non-crossing generalized Wilson loop diagrams, we prove they are associated to positroids (Corollary \ref{res:non-crosspos}) and that the algorithm to read off the Grassmann necklace has a particularly explicit formulation for the non-crossing case (Algorithm \ref{algo:non-crossgwld}) which generalizes the algorithm for ordinary Wilson loop diagrams.

The second direction, then, is pursued in Section \ref{sec:diagrammaticmoves} where we define families of diagrammatic moves on non-crossing generalized Wilson loop diagrams and provide a Python implementation. We prove that all our moves give boundaries of the diagram in Section \ref{sec:movesrealizeboundaries}. Then in Section \ref{sec:diagrammaticboundaries} we use our Python implementation to prove that all boundaries of all codimension can be obtained by applying sequences of these moves in the case that the initial diagram was an ordinary admissible Wilson loop diagram with at most two propagators. 
While there are many algorithms for computing boundaries of positroid cells using more algebraic tools, such as reduced subwords \cite{Juggling, Towers} a diagrammatic representation of boundaries is something that is currently missing from the literature on the geometry of positroids. Our moves give a partial diagrammatic calculus for positroid cell boundaries. We discuss how far we are from the full story for diagrammatic boundaries along with other concluding remarks in Section~\ref{sec:finale}. 

\subsection*{Acknowledgements}

The authors thank Zee Fryer for sharing their code to compute positroid boundaries via reduced subwords and Matthew Towers for sharing an algorithm to compute the boundaries on which Zee Fryer's code is based.
The authors also thank SLMath for their hospitality and the opportunity they provided to focus intensely on this project for two weeks.
    
This material is based on upon work supported by the National Science Foundation under Grant No.~DMS-1928930 and the National Security Agency under Grant No.~H98230-23-1-0004 while the authors participated in a program hosted by the Simons Laufer Mathematical Sciences Institute (formerly MSRI) in Berkeley, California during the summer of 2023.

KY is supported by an NSERC Discovery grant, the Canada Research Chairs program, and was supported by the Emmy Noether Fellows program at Perimeter Institute. 
Research at Perimeter Institute is supported in part by the Government of Canada through the
Department of Innovation, Science and Economic Development Canada and by the province of
Ontario through the Ministry of Economic Development, Job Creation and Trade. This research
was also supported in part by the Simons Foundation through the Simons Foundation Emmy
Noether Fellows Program at Perimeter Institute. 

CD is supported by Simons Foundation Grant $\#$888984 (Simons Collaboration on Global Categorical Symmetries). Part of this work was also supported by NSF Mathematical Sciences Postdoctoral Fellowship $\#$2002222.

\section{Generalized Wilson loop diagrams \label{sec:gWLDdiagrammatics}}

\subsection{General set up}

Throughout this paper, we use cyclically ordered sets. Therefore, we start with some notation on these sets and their orderings.
\begin{dfn}\label{dfn:cyclic orders}
Let $S = \{s_1, \ldots, s_n\}$ be a set with a cyclic ordering $s_1 < s_2 \ldots < s_n < s_1$.  We write $<_{s_i}$, for the linear order on the elements starting at $s_i$: \bas s_i <_{s_i} s_{i+1} <_{s_i} \ldots s_n <_{s_i} s_1 <_{s_i} \ldots s_{i-1} \;.\eas 
\end{dfn}

Next we define a generalized Wilson loop diagram and then recall the structure of the ordinary Wilson Loop diagrams as a special case. 

\begin{dfn}\label{dfn:gWLD} 
  Endow the set $C=\{v_1, e_1, v_2, e_2, \ldots, v_n, e_n\}$ of $2n$ elements, with a cyclic ordering by setting $v_i < e_i < v_{i+1}$ and $e_n < v_1$. A \emph{generalized Wilson loop diagram} is a set of multisets of elements of $C$ where each multiset has a positive integer weight called the \emph{capacity} which is at most the size of that multiset. The multisets are called \emph{propagators}.
  We call the elements of a propagator (as a multiset) the \emph{ends} of the propagator.

\end{dfn}

Alternatively, generalized Wilson loop diagrams can be viewed as hypergraphs where the propagators are hyperedges. Multiplicities in the multiset then become additional weights on the elements of the hyperedge, which are in addition to the capacity of the hyperedge as a whole, as defined above.

\begin{rmk}\label{rmk:matroids on props def}
    Through the work in this paper we will begin to see value in yet a further generalization of this notion of generalized Wilson loop diagram where instead of each propagator having a capacity, each propagator would have an associated matroid whose underlying set is the set of ends and with the rank of that matroid playing the role of the capacity. The generalized Wilson loop diagrams of Definition~\ref{dfn:gWLD} are then those where the associated matroids are all uniform.

    For the purposes of this paper, this further generalization will be restricted to some remarks. We expect to investigate these objects in more depth in future work.
\end{rmk}

The notation of the set $C$ as an alternating sequence of $v_i$ and $e_i$ is intended to be suggestive of the manner in which we will draw these objects. Specifically, we will draw a generalized Wilson loop diagram by an outer circle with $n$ marked vertices $\{v_1, \ldots, v_n\}$ ordered counterclockwise along the circle. The $e_i$ are then the edges of the outer circle with $e_i$ joining vertex $v_i$ with vertex $v_{i+1}$ (with indices interpreted modulo $n$, as will be done without comment in what follows). The propagators are then drawn as stars connecting an internal point with each of the ends, either a $v_i$ or $e_i$, by wiggly lines. If the capacity of a propagator is larger than $1$ then the capacity will be written near the internal point. When multiple ends (whether of the same propagator or different propagators) are incident with an edge $e_i$ they will be drawn meeting $e_i$ at separate points along the edge. We will prefer to draw generalized Wilson loop diagrams in ways which minimize propagator crossings. The generalized Wilson loop diagram does not encode the order at which the ends meet elements of $C$, but in the non-crossing case, there is an order induced from the non-crossing structure which will be respected in drawings. See Definition \ref{dfn:global cyclic end order} for further details on the non-crossing case.

Definition~\ref{dfn:gWLD} does not treat the $v_i$ differently from the $e_i$, but in view of our desired connection with positroids and our motivating context, the $v_i$ and $e_i$ will be treated differently in various constructions that we will use on generalized Wilson loop diagrams. To this end, we define the \emph{support} of an end incident to vertex $v_i$ to be $\{v_i\}$ while the support of an end incident to an edge $e_i$ is $\{v_i, v_{i+1}\}$. The support of a propagator, then, is the union of the support of its ends. 

With this in mind, we speak of a generalized Wilson loop diagram defined on $[n]=\{1,2,\ldots, n\}$, meaning $C=\{v_1, e_1, \ldots, v_n, e_n\}$, and where convenient we abuse notation by conflating the vertices $\{v_1, \ldots, v_n\}$ with their indices $[n]$ and by identifying $e_i$ with the set $\{v_i, v_{i+1}\}$ for each $i$.  We use the standard cyclic order on $[n]$, corresponding to the cyclic order on $C$.
We will use the following notation. A generalized Wilson loop diagram is denoted $W=(\cP, [n])$ where $\cP$ is the set of propagators. The capacity of a propagator $p\in \cP$ is denoted $c(p)$, its ends denoted $\cE(p)$, and its \emph{support} indicated by $V(p)$. The support of an end $e$ is given by $V(e)$.

\begin{eg}\label{eg:working example}
The following generalized Wilson loop diagram has 4 propagators, one with capacity 2 and the rest with capacity 1:
\[\begin{tikzpicture}[rotate=-67.5, line width=1, scale=2]
	%define number of vertices
	\def \n {9}
	%draw circle with nodes
	\draw circle(1)
	\foreach \v in{1,...,\n}
	{(360*\v/\n-360/\n+180:1)circle(.4pt)circle(.8pt)circle(1.2pt)circle(1.4pt) node[anchor=360/\n*\v-360/\n-67.5]{$\v$}};
	%draw multi-edge propagator
	\foreach \x/\y in {-0.940/-0.342,0.058/-0.998,-0.500/0.866}{\draw[decorate,decoration={snake,amplitude=0.8mm}] (-0.661,-0.358) -- (\x,\y);}
	%label multi-edge propagator
	\draw (-0.661,-0.358) node[shift = {(.15,-.15)}] {\small2};
	\draw (-0.661,-0.358) node[shift = {(-.25,-.15)}] {\small $p$};
	%draw 2 ended propagator
	\draw[decorate,decoration={snake,amplitude=0.8mm}] (0.500,-0.866) -- (-0.973,0.231);
	\draw (-.25 , -.3) node[shift = {(.15,-.15)}] {\small $q$};
	%draw multi-edge propagator
	\foreach \x/\y in {1.00000000000000/0,0.766/0.643,-0.894/0.449}{\draw[ decorate,decoration={snake,amplitude=0.8mm}] (0.291,0.364) -- (\x,\y);}
	\draw (0.291,0.364) node[shift = {(-.25,-.15)}] {\small $r$};
	%draw single-ended propagator
	\draw[decorate,decoration={snake,amplitude=0.8mm}] (0.500,0.866) -- ++({atan(1.73)}:-.25 cm);
	\draw (0.500,0.866) node[shift = {(-.25,-.15)}] {\small $s$};
\end{tikzpicture}\]

In order to refer to the propagators more easily, we name them. Let $p$ be the capacity 2 propagator with 3 ends, $\cE(p) = \{e_1, e_3, e_8\}$. Let $q$ be the three ended propagator with capacity 1, with $\cE(q) = \{e_5, e_6, e_9\}$. Let $r$ be the two ended propagator with capacity 1, and $\cE(r) = \{v_4, e_9\}$. Finally, let $s$ be the single end propagator with $\cE(s) = \{v_7\}$. 

Note that there is no way to depict this set of propagators without the lines representing $p$ crossing the lines representing the propagators $q$ and $r$ at least once each. Because we prefer drawing generalized Wilson loop diagrams so as to have the minimal number of crossings, we do not place the internal point of $p$ to the right of the lines for $q$ or $r$, as that would cause additional crossings. The analogous comment holds for the internal point of $q$. We explore crossings further in Section~\ref{sec:non-crossing}.
\end{eg}

Next, we set up some notation regarding propagators, ends, their supporting vertices, and their capacities.
We will take the convention of denoting sets of propagators, vertices, and ends by capital letters, and use lower case letters for individual elements.

\begin{dfn}\label{dfn:prop stuff}
Let $(\cP, [n])$ be a generalized Wilson loop diagram.
    \begin{itemize}
        \item For any subset of propagators, $P \subseteq \cP$, define the \emph{support} of $P$ to be $V(P) = \bigcup_{p \in P} V(p)$. In particular $V(\cP)$ is the set of all vertices that support at least one propagator.
        \item Given a set of vertices $V \subseteq [n]$, we define $\Prop(V)$ to be the set of propagators supported on $V$, \bas \Prop(V) = \{p \in \cP : V \cap V(p) \neq \emptyset\}\eas
    \end{itemize}
\end{dfn}

Note that each end is implicitly attached to a propagator. Formally, then, we can view an end as a pair $(a, p)$ where $p\in \cP$ and $a$ is an element of the multiset $p$, so $a$ itself is a pair of some $v_i$ or $e_i$ along with an index to distinguish the copies of the same element in the multiset. However, this formalism is unnecessarily heavy for our purposes; it will suffice to refer to ends by their support with the additional information understood from context, or to name the ends as needed.

\begin{dfn}\label{dfn:end stuff}
    Let $(\cP, [n])$ be a generalized Wilson loop diagram.
    \begin{itemize}
        \item For any subset of propagators, $P\subseteq \cP$, we define the set of ends of $P$ to be $\cE(P) = \bigsqcup_{p \in P} \cE(p)$. The use of $\bigsqcup$ is to emphasize that this is multiset union and so multiple ends on the same vertex or edge remain distinct whether they come from the same propagator or from different propagators.
        \item For a set of ends, $E \subseteq \cE(\cP)$, define the \emph{support} of $E$  to be $V(E) = \bigcup_{e \in E} V(e)$ and write $\Prop(E)$ for the set of propagators to which these ends belong. 
        \item Write $\cE = \cE(\cP)$ for the set of all ends of the diagram.
        \item Given a set of vertices $V\subseteq [n]$, we define $\End(V)$ to be the set of ends supported on $V$ \bas \End(V) = \{e \in \cE: V \cap V(e) \neq \emptyset\}\;. \eas
    \end{itemize}
\end{dfn}

\begin{eg} \label{eg:workingexample2}
In the diagram in Example \ref{eg:working example}, $V(p) = \{v_1, v_2, v_3, v_4, v_8, v_9\}$, $V(q) = \{v_1, v_5, v_6, v_7, v_9\}$, $V(r) = \{v_1, v_4, v_9\}$, and $V_s = \{v_7\}$. The propagators supported on, say $v_4$, is $\Prop(v_4) = \{p, r\}$. The ends supported on, say $v_1$, are $\End(v_1) = \{e_2 \textrm{ of } p, e_1 \textrm{ of } q, e_1 \textrm{ of } r\}$, while $\Prop (\End(v_1)) = \{ p, q, r\}$. The multiset of ends of the propagator set $P = \{ q, r, s\}$ is 
\[
\cE(P) = \{\{v_1, v_9\}, \{v_4\}, \{v_5, v_6\}, \{v_6,v_7\},\{v_1, v_9\}, \{v_7\}\}.
\]
Note that since the quantity $\cE(P)$ is defined as a multiset union, the end $e_9 = \{v_1, v_9\}$ appears twice, once for $q$ and once for $r$.
\end{eg}

Finally, we extend the notion of capacity from propagators to sets of ends. 
This leads into a definition of matroidal rank function on the set of ends, see Corollary~\ref{res:endmatroid}.

\begin{dfn}\label{dfn: cap stuff}
Let $(\cP, [n])$ be a generalized Wilson loop diagram.
   \begin{itemize}
       \item For a set of propagators $P \subseteq \cP$, write $c(P) = \sum_{p\in P} c(p)$ for the \emph{total capacity} of the propagators set $P$.
       \item Fix a propagator $p$. For a set of ends $E\subseteq \cE(p)$ of $p$, the \emph{capacity of $E$} is the smaller of the number of ends or the capacity of the propagator: $c(E) = \min\{|E|, c(p)\}$.
       \item For a set of ends $E\subseteq \cE(\cP)$ the \emph{capacity} is the sum across propagators of the capacity of those ends of $E$ which are from that propagator. Specifically \ba c(E) = \sum_{p \in \Prop(E)} c(E \cap \cE(p))\;. \label{eq:capacityfunction}\ea 
   \end{itemize} 
\end{dfn}
Note that the capacity of a set of propagators is just the sum of the individual capacities, but the capacity of a set of ends is constrained when the number of ends belonging to a particular propagator is less than its capacity.

\begin{eg} \label{eg:same support and cap}
Note that propagators may be different while still having the same support and capacity. See for instance the propagators in the three diagrams below  all of which have their sole propagator with the same support and capacity. The diagrams on the left and the right also have the same number of ends. However, as we show in Example \ref{eg:same support and cap matroid}, these three diagrams represent very different objects when used to study the boundaries of Wilson loop diagrams and their matroid structure.

\begin{align*}
\begin{tikzpicture}[rotate=-67.5, line width=1, scale=1.5]
%define number of vertices
 \def \n {6}
%draw circle with nodes
 \draw circle(1)
 \foreach \v in{1,...,\n}
 {(360*\v/\n-360/\n+180:1)circle(.4pt)circle(.8pt)circle(1.2pt)circle(1.4pt) node[anchor=(360/\n*\v-360/\n -67.5]{$\v$}};
%draw multi-edge propagator
 \foreach \x/\y in {-0.866/-0.500,0.866/-0.500,0/1.00000000000000}{\draw[decorate,decoration={snake,amplitude=0.8mm}] (0,0) -- (\x,\y);}
%label multi-edge propagator
 \draw (0,0) node[shift = {(.2,-.15)}] {\small2};
\end{tikzpicture} 
\qquad & \qquad  
\begin{tikzpicture}[rotate=-67.5, line width=1, scale=1.5]
%define number of vertices
 \def \n {6}
%draw circle with nodes
 \draw circle(1)
 \foreach \v in{1,...,\n}
 {(360*\v/\n-360/\n+180:1)circle(.4pt)circle(.8pt)circle(1.2pt)circle(1.4pt) node[anchor=360/\n*\v-360/\n-67.5]{$\v$}};
%draw multi-edge propagator
 \foreach \x/\y in {-0.866/-0.500,0.500/-0.866,1.00000000000000/0,0/1.00000000000000}{\draw[decorate,decoration={snake,amplitude=0.8mm}] (0.159,-0.091) -- (\x,\y);}
%label multi-edge propagator
 \draw (0.159,-0.091) node[shift = {(.25,-.1)}] {\small2};
\end{tikzpicture} 
\quad & \quad
\begin{tikzpicture}[rotate=-67.5, line width=1, scale=1.5]
%define number of vertices
 \def \n {6}
%draw circle with nodes
 \draw circle(1)
 \foreach \v in{1,...,\n}
 {(360*\v/\n-360/\n+180:1)circle(.4pt)circle(.8pt)circle(1.2pt)circle(1.4pt) node[anchor=360/\n*\v-360/\n-67.5]{$\v$}};
%draw multi-edge propagator
 \foreach \x/\y in {-0.866/0.500,0/-1.00000000000000,0.866/0.500}{\draw[decorate,decoration={snake,amplitude=0.8mm}] (0,0) -- (\x,\y);}
%label multi-edge propagator
 \draw (0,0) node[shift = {(.3,.1)}] {\small2};
\end{tikzpicture}
\end{align*}
\end{eg}

\begin{rmk}\label{rmk:ordinary WLD}
The notion of Wilson loop diagram from \cite{Wilsonloop, HeslopStewart, Amplituhedronsquared} is a special case of our notion of generalized Wilson loop diagram wherein every propagator has capacity 1 and exactly two ends with those ends on non-consecutive edges. For such diagrams, a propagator can be specified by the pair of indices of these two edges. This is how they are represented in the papers cited above.

To avoid confusion with generalized Wilson loop diagrams we will use the term \emph{ordinary Wilson loop diagram} to refer to this original notion of Wilson loop diagram.

It is via the ordinary Wilson loop diagrams that we have the connection with scattering amplitudes in $\mathcal{N}=4$ SYM theory, see Section \ref{sec:SYM}. Ordinary Wilson loop diagrams are also a variant of the notion of chord diagrams, where the chords are the propagators and the chords end on edges rather than vertices, separated along the edge when necessary so that all chord end points are distinct.

\end{rmk}

\subsection{Generalized Wilson loop diagrams with non-crossing propagators \label{sec:non-crossing}}

Generalized Wilson loop diagrams with non-crossing propagators are particularly well behaved and will form the foundation of our work in this paper. Loosely speaking, a generalized Wilson loop diagram with non-crossing propagators is any diagram that may be drawn in such a fashion that none of the propagators cross in the interior of the circle (although they may share an endpoint on the boundary circle).

\begin{dfn}\label{dfn:non-crossing}
    Let $W= (\cP, [n])$ be a generalized Wilson loop diagram. 
    Two propagators $p$ and $q$ of $W$ \emph{cross} if either 
    \begin{enumerate}
 
     \item there exist ends $a$ and $b$ of $p$ and ends $c$ and $d$ of $q$ such that $a<c<b<d <a $ in the cyclic order of vertices and edges of the set $\{v_1, e_1, \ldots, v_n, e_n\}$  or 
     \item there are three elements of $\{v_1, e_1, \ldots, v_n, e_n\}$ each of which are ends of both $p$ and $q$.
        \end{enumerate}
        Here we view the ends as elements of $\{v_1, e_1, \ldots, v_n, e_n\}$.

    We call a generalized Wilson loop diagram $W$ \emph{non-crossing} if no two propagators of $W$ cross.
\end{dfn}

Again, we note that the inequalities in the second part of Definition \ref{dfn:non-crossing} are strict so that two propagators may have ends on the same vertex or edge without necessarily causing a crossing. The first part of Definition \ref{dfn:non-crossing} takes into account the special case where there is a crossing without four distinct interlaced ends, namely when there are three shared ends. Note also that since both conditions in Definition \ref{dfn:non-crossing} require each involved propagator have at least two ends which are distinct as elements of $\{v_1, e_1, \ldots, v_n, e_n\}$, any propagator with all ends on the same edge or vertex (for example a single-ended propagator) can never cross another propagator.

\begin{eg} \label{eg:crossing}     
    For the first type of crossing, consider the generalized Wilson loop diagram in Example \ref{eg:working example}. Propagators $q$ and $r$ do not cross, however the propagator $p$ crosses both $q$ and $r$. Specifically, the ends of $q$ are $\{e_9, v_4 \}$ while the ends of $r$ are $\{e_5, e_6, e_9\}$. Since there are not a pair of ends of $q$, $\{a, b\}$ and a pair of ends of $r$, $\{c, d\}$ such that $a<c<b<d$, $q$ and $r$ do not cross. The ends of $p$ are $\{e_1, e_3, e_8\}$. We have that $e_1 < v_4 <e_8 <e_9$, showing that $p$ and $q$ cross, and $e_1 < e_5 <e_8 <e_9$ showing that $p$ and $r$ cross. 
    
    For the second type of crossing edges, consider the following diagram.
    \begin{align*}
    \begin{tikzpicture}[rotate=-67.5, line width=1, scale=1.5]
    	%define number of vertices
    	\def \n {8}
    	%draw circle with nodes
    	\draw circle(1)
    	\foreach \v in{1,...,\n}
    	{(360*\v/\n-360/\n+180:1)circle(.4pt)circle(.8pt)circle(1.2pt)circle(1.4pt) node[anchor=360/\n*\v-360/\n-67.5]{$\v$}};
    	%draw multi-edge propagator
    	\foreach \x/\y in {-0.966/-0.259,0.500/-0.866,0.866/0.500,0/1.00000000000000}{\draw[dotted, decorate,decoration={snake,amplitude=0.8mm}] (0.100,0.094) -- (\x,\y);}
    	\draw (0.100,0.094) node[shift = {(.35,-.10)}] {\small $q$};
    	%draw multi-edge propagator
    	\foreach \x/\y in {-0.866/-0.500,0.259/-0.966,0.966/0.259,-0.707/0.707}{\draw[decorate,decoration={snake,amplitude=0.8mm}] (-0.087,-0.125) -- (\x,\y);}
    	\draw (-0.087,-0.125) node[shift = {(-.35,-.0)}] {\small $p$};
    \end{tikzpicture}
    \end{align*}
    Let $p$ be the propagator drawn with a solid line and ends $\{e_1, e_3, e_5, v_7\}$ and $q$ be the propagator drawn with a solid line and ends $\{e_1, e_3, e_5, v_7\}$. Namely,  they share three endpoints. While there is no set of ends $\{a, b\}$ of $p$ and $\{c,d\}$ of $q$ satisfying $a<c<b<d <a $ in the cyclic order of edges and vertices, these propagators are clearly crossing in any pictoral representation. 
\end{eg}

For any fixed vertex or edge, the non-crossing structure defines an (essentially) canonical order on the propagators with an end on that vertex or edge and hence on the ends at that vertex or edge. These end orders can then be concatenated to give a cyclic order on all ends. 

We need one lemma in order to define this order. Recall from the original definition of a generalized Wilson loop diagram (Definition \ref{dfn:gWLD}) that $C$ is the cyclically ordered set $\{v_1, e_1, v_2, e_2, \ldots, v_n, e_n\}$. Given $a\in C$ we will for this section write $E_a$ for the set of ends on $a$ and then $\Prop(E_a)$ is the set of propagators to which these ends belong.
\begin{lem}\label{res:non-crossing pairs}
    Let $W = (\cP, [n])$ be a non-crossing generalized Wilson loop diagram, and $a \in C$. Take $p,q\in \Prop(E_a)$. Then either for all $b \in \cE(p) \setminus E_a$ and all $c \in \cE(q) \setminus E_a$ we get $b\leq_a c$, or the same holds with $p$ and $q$ reversed.
\end{lem}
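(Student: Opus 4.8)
The plan is to argue by contradiction: I assume that \emph{neither} of the two alternatives holds and then manufacture a crossing of $p$ and $q$, contradicting that $W$ is non-crossing. First I dispose of the trivial cases. If either $p$ or $q$ has no ends off $a$ (i.e.\ $\cE(p)\setminus E_a=\emptyset$ or $\cE(q)\setminus E_a=\emptyset$), then one of the two universally quantified statements is vacuously true, so I may assume both propagators have at least one end whose position in $C$ differs from $a$; I also take $p\ne q$. Throughout I identify each end with its position in the cyclically ordered set $C$ and compare positions using the linear order $<_a$ that starts at $a$.

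Negating both alternatives gives me two \emph{inverted pairs}: ends $b_1\in\cE(p)\setminus E_a$, $c_1\in\cE(q)\setminus E_a$ with $c_1<_a b_1$, and ends $b_2\in\cE(p)\setminus E_a$, $c_2\in\cE(q)\setminus E_a$ with $b_2<_a c_2$ (the inequalities are strict because $\le_a$ is a linear order, and all four positions lie strictly above $a$ since none is on $a$). The key device is that $a$ is an end of \emph{both} $p$ and $q$, so I may use it as a shared basepoint and build an alternating quadruple that includes $a$. I then split on the comparison of the $p$-end $b_1$ with the $q$-end $c_2$. If $b_1<_a c_2$, then $c_1<_a b_1<_a c_2$, and the ends of $p$ at $a$ and at $b_1$, together with the ends of $q$ at $c_1$ and at $c_2$, occur cyclically in the pattern $p,q,p,q$, namely $a<_a c_1<_a b_1<_a c_2$, which is exactly a type-(1) crossing. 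Symmetrically, if $c_2<_a b_1$, then $b_2<_a c_2<_a b_1$ and the ends of $q$ at $a,c_2$ interleave with the ends of $p$ at $b_2,b_1$, again a type-(1) crossing.

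The remaining, and genuinely delicate, case is the boundary $b_1=c_2$ as positions in $C$: here $p$ and $q$ already share the two distinct positions $a$ and $b_1$, but two shared ends are not yet a crossing. I would resolve this by comparing the remaining ends $c_1$ and $b_2$ (both strictly below $b_1$): if $b_2<_a c_1$ or $c_1<_a b_2$ I again produce a strict alternating quadruple through the basepoint $a$ and hence a type-(1) crossing, while if $c_1=b_2$ as positions then $p$ and $q$ share the three distinct elements $a,b_2,b_1$ of $C$, which is precisely a type-(2) crossing. Either way $p$ and $q$ cross, the desired contradiction. The main obstacle is exactly this bookkeeping around coincident positions: because several ends may sit on the same vertex or edge, the naive interleaving-of-four-ends argument can fail to be strict, and one must check that each degeneracy either still yields a strict alternating quadruple or collapses into the three-shared-ends condition. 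This is the structural reason the definition of crossing needs its second clause, and handling it cleanly is where the real care lies.
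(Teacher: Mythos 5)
Your proof is correct and follows essentially the same route as the paper's: negate the dichotomy to obtain two inverted pairs, insert an end at the shared basepoint $a$ to build a strictly interleaved alternating quadruple (a type-(1) crossing), and divert the coincident-position degeneracies into the three-shared-ends clause. If anything, your handling of the fully degenerate case $b_1=c_2$, $c_1=b_2$ via clause (2) is cleaner than the paper's, which cites clause (1) there even though two ends sitting at the same position $a$ cannot form a strictly interleaved quadruple, so the shared elements $a$, $b_1$, $b_2$ really do need the second clause exactly as you argue.
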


Also note that if one or both of $p$ and $q$ only has copies of $a$ as its ends, then the statement of the lemma is vacuously true.

\begin{proof}
    Suppose that the property in the statement does not hold. Without loss or generality we may suppose that in the linear order $<_a$ we can find ends $b_1 <_a c_1$ and $c_2 <_a b_2$ where $b_1$ and $b_2$ are $p$ ends and $c_1$ and $c_2$ are $q$ ends. If $b_1=c_2$ and $c_1=b_2$ then taking also two ends from $E_a$, one a $p$ end and one a $q$ end, we get a crossing by the first part of the definition of crossing. Suppose now that $b_1\neq c_2$ or $c_1\neq b_2$. In that case we have at least one of $c_2<b_1<c_1$, $b_1<c_2<b_2$, $b_1<c_1<b_2$, or $c_2<b_2<c_1$. Consider also an end in $E_a$ of whichever of $p$ or $q$ has an end in the middle of the chain of inequalities. Then we have in the cyclic order on $C$ a $q$ end, a $p$ end, a $q$ end, and then a $p$ end with all four of these ends distinct. This is also a crossing, giving in all cases a contradiction and hence proving the lemma.
\end{proof}

Before we can define an order on all ends we also need to define an ordering of the propagator sets $\Prop(E_a)$.
\begin{dfn}\label{dfn:new def of prop order at edge or vertex}
    Let $(\cP, [n])$ be a non-crossing generalized Wilson loop diagram. For each $a\in \{v_1, e_1, \ldots, v_n, e_n\}$, define an order on $\Prop(E_a)$ as follows: 
    \begin{itemize}
        \item Fix a total order for the set of propagators with all ends in $E_a$ and set these propagators to come before all other propagators in $\Prop(E_a)$. 
        \item For the remaining propagators, i.e.\ when $p, q\in \Prop(E_a)$ with $\cE(p), \cE(q) \not \subseteq  E_a$, order $p$ and $q$ in the reverse of the order that their ends outside of $E_a$ have under Lemma~\ref{res:non-crossing pairs}.
    \end{itemize}
\end{dfn}

For and example of this linear order, see Example \ref{eg:non-crossing order}.

Note that this local linear ordering of $\Prop(E_a)$ is canonical up to a choice of orderings of propagators with a single distinct end (with possible multiplicity). We may use this local linear ordering of propagators at the edges and vertices of the diagram to define a global cyclic ordering of the ends of the diagram $\cE$.

\begin{dfn}\label{dfn:global cyclic end order}
    For any propagator with multiple copies of the end $a$, fix an ordering on the copies of the element $a$ to refine the order on $\Prop(E_a)$ into an order on $E_a$. Define a cyclic order on $\cE$ by concatenating the resulting order on $E_{v_1}$ with the order on $E_{e_1}$ with the order on $E_{v_2}$, and so on following the cyclic order on $\{v_1, e_1, \ldots, v_n, e_n\}$.
\end{dfn}

Note that, like the ordering on the propagators, the ordering on the ends is canonical up to a choice of ordering of multiple ends on the same edge or vertex belonging to the same propagator.

\begin{rmk}\label{remark drawing is non-crossing}
    The point of the orders of Definition~\ref{dfn:new def of prop order at edge or vertex} is that they are the orders given by a non-crossing drawing of the generalized Wilson loop diagram, up to the ambiguities of propagators with all ends in $E_a$ (which can be positioned arbitrarily) and the order of multiple ends. Furthermore, a drawing of the diagram according to these orders will, with suitable positioning inside the circle, be non-crossing.

    This also explains why the order of the propagators with at least one end outside of $E_a$ are reversed relative to Lemma~\ref{res:non-crossing pairs}: those propagators whose other ends come first after $a$ must come later in the order on $a$ in order not to cross, see Example~\ref{eg:non-crossing order}. 

    We briefly sketch a proof that such a non-crossing drawing can always be made: Note that the cyclic order on $\cE$ of Definition~\ref{dfn:global cyclic end order} refines the pre-order on ends used in the definition of non-crossing, and, similarly to the proof of Lemma~\ref{res:non-crossing pairs}, refines it in such a way that none of what were equalities become crossings. Now that there are no equalities, the only possible kind of crossing is the kind with four distinct interleaved ends, which is the usual notion of crossing for a set partition. Therefore, the propagators give a non-crossing set partition on the ends, and it is well known that non-crossing set partitions can be drawn without crossings in the manner that we require. 
\end{rmk}

\begin{eg} \label{eg:non-crossing order}
To illustrate the global and local cyclic orderings of ends of a generalized Wilson loop diagram with non-crossing propagators, consider the following generalized Wilson loop diagram:
\begin{center}
\begin{tikzpicture}[rotate=-67.5, line width=1, scale=2]
	%define number of vertices
	\def \n {9}
	%draw circle with nodes
	\draw circle(1)
	\foreach \v in{1,...,\n}
	{(360*\v/\n-360/\n+180:1)circle(.4pt)circle(.8pt)circle(1.2pt)circle(1.4pt) node[anchor=360/\n*\v-360/\n-67.5]{$\v$}};
	%draw multi-edge propagator
	\foreach \x/\y in {-0.940/-0.342,0.058/-0.998,-0.766/0.643}{\draw[decorate,decoration={snake,amplitude=0.8mm}] (-0.549,-0.232) -- (\x,\y);}
	%label multi-edge propagator
	\draw (-0.549,-0.232) node[shift = {(.2,-.15)}] {\small $p$};
	%draw 2 ended propagator
	\draw[decorate,decoration={snake,amplitude=0.8mm}] (0.287,-0.958) -- (0.766,0.643);
	%draw single-ended propagator
	\draw[decorate,decoration={snake,amplitude=0.8mm}] (0.940,-0.342) -- ++({atan(-0.364)}:-.25 cm);
 \draw (.9,-.15) node {$s$};
	%draw 2 ended propagator
	\draw[decorate,decoration={snake,amplitude=0.8mm}] (0.500,0.866) -- (-0.766,0.643);
	%draw single-ended propagator
	\draw[decorate,decoration={snake,amplitude=0.8mm}] (-0.973,-0.231) -- ++({atan(0.237)}:.25 cm);
	\draw (.35,.7) node {$r$};
    \draw (.35,-.05) node {$q$};
    \draw (-0.85,-0.15) node {$t$};
\end{tikzpicture}
\end{center}

Let $p$ be the propagator with $3$ ends, $q$ the propagators with ends $\{e_3, e_6\}$, $r$ the propagator with ends $\{v_7, v_9\}$, $s$ the single-ended propagator with end $v_5$ and $t$ the single-ended propagator with end $e_1$. The capacities of the propagators do no matter for this example, so we set them all to be $1$ for simplicity. 

From Definition \ref{dfn:global cyclic end order}, we see that the end $e_1$ of $p$ precedes $e_3$ of $q$, which precedes the end $v_7$ of $r$. 

Furthermore, there are three cases of multiple propagators supported on the same vertex or edge: $\Prop(e_3) = \{p,q\}$, $\Prop(v_9) = \{p, r\}$, $\Prop(e_1) = \{p, t\}$. Note that, starting at $e_3$, all the other ends of $p$ come weakly after all the other ends of $r$, and thus (by Definition \ref{dfn:new def of prop order at edge or vertex}) the end of $p$ precedes the end of $q$ on $e_3$. Similarly, starting at $v_9$ the other ends of $p$ precede the other ends of $r$, and thus (by Definition \ref{dfn:new def of prop order at edge or vertex}) the end of $r$ precedes the end of $p$ at $v_9$. However, at $e_1$, $t$ does not have other ends whose position one can compare with the other ends of $p$. Therefore, one has freedom to place these elements in the linear order. Definition \ref{dfn:new def of prop order at edge or vertex} chooses the convention of placing these ends first. 

\end{eg}

\begin{eg}\label{eg:manyrepscrossing}

Note that in the case of crossing propagators, one cannot impose a canonical ordering of the ends of the vertices. This is because there is no longer a canonical ordering of the propagators as given by Lemma \ref{res:non-crossing pairs}. For instance, in the following two diagrams, the dotted and the solid propagators cross and there are multiple ways to draw them in the interior of the diagram with the same minimal number of crossings.
 \begin{align*}
 \begin{tikzpicture}[rotate=-67.5, line width=1, scale=2]
	%define number of vertices
	\def \n {7}
	%draw circle with nodes
	\draw circle(1)
	\foreach \v in{1,...,\n}
	{(360*\v/\n-360/\n+180:1)circle(.4pt)circle(.8pt)circle(1.2pt)circle(1.4pt) node[anchor=360/\n*\v-360/\n-67.5]{$\v$}};
	%draw multi-edge propagator
	\foreach \x/\y in {-0.901/-0.434,0.989/0.149,-0.365/0.931}{\draw[dotted, decorate,decoration={snake,amplitude=0.8mm}] (-0.092,0.215) -- (\x,\y);}
	%label multi-edge propagator
	\draw (-0.092,0.215) node[shift = {(.2,-.05)}] {\small2};
	%draw multi-edge propagator
	\foreach \x/\y in {-0.223/-0.975,0.989/-0.149,-0.075/0.997}{\draw[decorate,decoration={snake,amplitude=0.8mm}] (0.230,-0.042) -- (\x,\y);}
\end{tikzpicture} \quad & \quad \begin{tikzpicture}[rotate=-67.5, line width=1, scale=2]
%define number of vertices
\def \n {7}
%draw circle with nodes
\draw circle(1)
\foreach \v in{1,...,\n}
{(360*\v/\n-360/\n+180:1)circle(.4pt)circle(.8pt)circle(1.2pt)circle(1.4pt) node[anchor=360/\n*\v-360/\n-67.5]{$\v$}};
%draw multi-edge propagator
\foreach \x/\y in {-0.901/-0.434,0.989/-0.149,-0.365/0.931}{\draw[dotted, decorate,decoration={snake,amplitude=0.8mm}] (-0.092,0.116) -- (\x,\y);}
%label multi-edge propagator
\draw (-0.092,0.116) node[shift = {(.2,-.05)}] {\small 2};
%draw multi-edge propagator
\foreach \x/\y in {-0.223/-0.975,0.989/0.149,-0.075/0.997}{\draw[decorate,decoration={snake,amplitude=0.8mm}] (0.230,0.2) -- (\x,\y);}
\end{tikzpicture} 	
\end{align*}
\end{eg}

\section{Background}\label{sec:background}

\subsection{Physics motivation}\label{sec:SYM}

The original motivating question for studying Wilson loop diagrams and their boundaries is the computation of amplitudes in $\cN=4$ supersymmetric Yang--Mills (SYM). The reader who is uninterested can skip Section~\ref{sec:SYM}, but we outline this motivation as it provides richness and justification for everything we study.

Wilson loop diagrams are graphical representations of particle interactions in $\cN=4$ SYM theory. Most generally, a holomorphic Wilson loop gives the amplitude for an N${}^k$MHV interaction\footnote{Here MHV means maximal helicity violation. The MHV amplitudes are well understood. Interactions that are one less than MHV are denoted NMHV while those $k$ less than MHV are denoted N${}^k$MHV and are the ones that are currently an active area of research.} on $n$ particles. Each holomorphic Wilson loop can be decomposed in terms of integrals in twistor space, each of which, diagrammatically, correspond to an N${}^k$MHV diagram \cite{Adamo:2011pv, Boels:2007qn, Britto:2005fq}. For this paper, we are interested in a diagrammatic representation for the integrals for tree level interactions in the dual twistor space, which we call (ordinary) Wilson loop diagrams \cite{HeslopStewart,LM1}. We find this perspective insightful because of the explicitness of the connection to matroids \cite{Wilsonloop, casestudy, generalcombinatoricsI, generalcombinatoricsII,Amplituhedronsquared}.

It is well known that $\cN = 4$ SYM is a finite theory, and therefore does not suffer from the ultraviolet divergences present in many field theories. However, each of the integrands representing an N${}^k$MHV interaction is a rational function, each of the poles of which are conjectured to correspond to an ultraviolet singularity. It is further conjectured in \cite{Cachazo:2004kj,hodges:2013eliminating} that these singularities cancel when taking the sum that corresponds to the amplitude for an N${}^k$MHV interaction on $n$ particles. These singularities appear to occur in ways that can be represented by certain diagrammatic limits of the original diagrams. See, for instance, the discussion of \emph{boundary diagrams} in \cite{casestudy}. In \cite{cancellation}, one of us with other collaborators show that this cancellation occurs according to the conjecture for the degree one singularities. 

The proof of cancellation of degree one singularities comes from a correspondence between Wilson loop diagrams and subspaces of the positive Grassmannian, $\Gr(k, n)$, called positroid cells \cite{Wilsonloop, generalcombinatoricsII}. The authors of \cite{cancellation} show that the singularities lie on the boundaries of these cells, and thus cancel by appearing on shared boundaries with combined opposite magnitudes. The geometry of the positroid cells and the positive Grassmannians are well studied objects both in the physics and mathematics literature \cite{Arkani-Hamed:2013jha, Even-Zohar:2021sec,  GrBall, Postnikov}. In fact, there is a large body of literature connecting the on shell interactions of $\cN=4$ SYM theory to positroids, giving rise to a geometry called the Amplituhedron \cite{Arkani-Hamed:2013kca, galashinlam18,AmplituhedronDecomposition}. While that body of work is related to the work presented here, the N${}^k$MHV perspective, and thus the Wilson loop diagram perspective, corresponds to the off shell scattering amplitudes \cite{Cachazo:2004kj}, and the Wilson loop diagram work outlined above gives an Amplituhedron analogue in the off shell case.

From this perspective, there is a special class of ordinary Wilson loop diagrams that arises from the physics. 

\begin{dfn}\label{dfn:admissibleWLD} An ordinary Wilson loop diagram $W= (\cP, [n])$ is \emph{admissible} if \footnote{Note that this is not the standard definition of admissible Wilson loop diagrams. Rather, this corresponds to \emph{weakly admissible diagrams} from \cite{generalcombinatoricsI, generalcombinatoricsII}. However, since this is the only notion of admissibility for ordinary Wilson loop diagrams that we use in this paper, we omit the word weakly}
\begin{enumerate}
	\item no pair of propagators $p, q \in \cP$ cross, and
	\item for every subset $P \subseteq \cP$, $|V(P)|\geq |P|+3$. 
\end{enumerate}
\end{dfn}	
In particular, the last condition imposes that a propagator is always supported by exactly $4$ distinct vertices. Specifically, it never has ends on two consecutive edges. 

For admissible ordinary Wilson loop diagrams, there is a well defined mapping to positroid cells, and a well defined bijection between these same Wilson loop diagrams and tree level interactions in $\cN=4$ SYM theory. Therefore, one may associate to each tree level interaction in $\cN=4$ SYM theory to a positroid cell. The singularities of the tree level interactions all lie on the boundaries of the associated cell. However, the relationship between boundaries of positroid cells and the singularities of tree level interactions is far from simple. Furthermore, there is not a good methodology for identifying the boundaries of positroid cells from the physical picture. In order to understand the intricacies of the boundaries of the positroid cells arising from Wilson loop diagrams, one needs to move beyond the simple diagrammatics currently used in the physics literature \cite{HeslopStewart, casestudy}, however we wish to stay diagrammatic in order to maintain both physical and combinatorial control over the behaviour of the boundaries. It is with this goal in mind that we work with generalized Wilson loop diagrams. 

\subsection{Matroidal background \label{sec:MatroidsandRado}}

In this section, we present several results about matroids, transversal matroids, Rado matroids, and positroids that are needed for our results. See, for example, \cite{Oh43,OxleyMatroidBook, DevosNotes, Postnikov}, Experts on the subject may skip this exposition.

\subsubsection{Matroids \label{sec:matroids}}

This section is a review for the experts, but is included here for completeness. An interested reader is encourage to turn to textbooks on the subject, such as \cite{OxleyMatroidBook} for a more thorough discussion of the topics contained herein.

A \emph{matroid} $\mathcal{M}$ is defined by a \emph{ground set} $E$ and the data of the independence of subsets of the ground set. There are several equivalent representations of the independence of subsets that define a matroid. For this paper, we review independence systems, sets of bases, rank functions, circuit sets, and flats. 

We begin with a definition of an \emph{independence system}. The members of the independence system are called \emph{independent sets}. Any subset of $E$ not in the independence system is a \emph{dependent set}.

\begin{dfn}[(I1)-(I2) \cite{OxleyMatroidBook}] \label{dfn:indep system}
	An \emph{independence system}, $\sI$, is a set of subsets of $E$ that is downward closed. In other words, if $I \subseteq \sI$, then any subset $J \subseteq I$ is also in $\sI$.
\end{dfn}

Given an independence system (on a ground set $E$) one may define a matroid from it by adding an exchange property.
 
\begin{dfn}[(I1)-(I2) \cite{OxleyMatroidBook}] \label{dfn:indep exchange matroid}
Let $E$ be a ground set and $\sI$ an independence system of $E$. Then $\cM=(E, \sI)$ is a \emph{matroid} if and only if $\sI$ satisfies the independence exchange property: \ba \forall I,\; J \in \sI \textrm{ with } |I| > |J|, \; \exists a \in I \textrm{ such that } J \cup a \in \sI\;. \label{eq:indepexch}\ea 
\end{dfn}

Because of the downward closed property, we may define an independence system in terms of its largest independent sets. For a matroid, these are called the \emph{bases}. A matroid can be defined in terms of its set of bases, $\cB$. 

\begin{dfn}[Lemma 1.2.2 \cite{OxleyMatroidBook}]\label{dfn:basis exchange}
Let $E$ be a ground set and $\cB$ a set of subsets of $E$. Then $\cM = (E, \cB)$ is a matroid with bases $\cB$ if and only if $\cB$ is non-empty and satisfies the basis exchange property: \bas \forall A, \; B \in \cB, \quad a \in A \setminus B, \quad \exists b \in B \setminus A \textrm{ such that } (A \setminus a) \cup b \in \cB \;.\eas 
\end{dfn}

Any subset of an element of $\cB$ is an independent set. If a set is not contained in an element of $\cB$, then it is a dependent set. Every element of $\cB$ is of the same size, which can easily seen from the independence exchange property.

Each matroid may also be defined by a \emph{rank function} $\rk: 2^E \rightarrow \Z_{\geq 0}$ that assigns a non-negative integer to each subset of the ground set. The rank function is a monotonic subadditive function, bounded above by the number of elements in the subset. 

\begin{dfn}[Corollary 1.3.4 \cite{OxleyMatroidBook}]\label{dfn:rank function}
Let $E$ be a ground set and $\rk$ a function from the set of subsets of $E$ to the non-negative integers. Then $\cM = (E, \rk)$ is a matroid if and only if $\rk$ satisfies the following properties:
\begin{itemize}
	\item $\rk(\emptyset) = 0$
	\item $\forall x \in E, \quad \rk(A) \leq \rk(A \cup x ) \leq \rk(A) +1$
	\item $\forall A, B \subseteq E, \quad \rk(A \cup B) + \rk(A \cap B) \leq \rk(A) + \rk(B)$.
\end{itemize}
\end{dfn}

The definition of a matroid in terms of a rank function relates to the definition in terms of independent sets via the rule that a set is independent if and only if its rank is its cardinality.

Because of the downward closed property, one may additionally define an independence system by its set of smallest dependent sets. For a matroid, these are called the \emph{circuits}. Any dependent set of a matroid contains a circuit. Note that minimality implies that any proper subset of a circuit is independent. This gives the fourth equivalent definition of a matroid.

\begin{dfn}[Corollary 1.1.5 \cite{OxleyMatroidBook}]\label{dfn:circuit condition}
Let $E$ be a ground set and $\cC$ a set of subsets of $E$. Then $\cM = (E, \cC)$ is a matroid if and only if $\cC$ satisfies the \emph{circuit condition}: \bas \forall C_1, C_2 \in \cC, \; x \in C_1 \cap C_2, \exists C \subseteq (C_1 \cup C_2) \setminus x \textrm{ such that } C \in \cC \;.\eas 
\end{dfn}

Finally, given a set $A \subset E$, and a rank function on $E$, we define the \emph{closure} of $A$ 
$$\cl(A) = \{x \in E : \rk(A \cup x) = \rk(A)\}$$ as the set of elements that can be added to $A$ without increasing the rank. A \emph{flat} is a set $F \subset E$ such that $\cl(F) = F$. 

\begin{dfn}[Section 1.4, Exercise 11 \cite{OxleyMatroidBook}]\label{dfn:flats}  For $E$ a ground set and $\cF$ a set of subsets, $\cM = (E, \cF)$ is a matroid if and only if $\cF$ satisfies the following properties:
\begin{itemize}
        \item $E \in \cF$
	\item If $F, G \in \cF$, then $F \cap G \in \cF$
	\item For $F \in \cF$, and $x \not \in F$, then the set of elements of $\cF$ containing both $F$ and $x$ has an unique minimal element under set inclusion. 
\end{itemize} Any $\cF$ satisfying the above is the set of flats of a matroid. \end{dfn}

There is a special class of flats that we discuss in this paper called \emph{cyclic flats}.  A cyclic flat is a set which is both a flat and a union of circuits.\footnote{The word cyclic, in this situation, arises from the fact that, in matroid theory a cycle is a union of circuits. In order to avoid confusion with a contradictory use of the word cycle in graph theory, we avoid defining cycles directly in the text of this paper.}

\begin{dfn} \label{dfn:loopsandco-loops}
	There are two special classes of elements of a ground set of a matroid, $M = (E, \cB)$ that are worth noting:\begin{itemize}
		\item an element $x \in E$ is a \emph{loop} if it is not an element of any basis set: $\forall \; B \in \cB, \; x \not \in B$ [Section 1.1 \cite{OxleyMatroidBook}]  
		\item an element $x \in E$ is a \emph{co-loop} if it is an element of every basis set: $\forall \; B \in \cB, \; x \in B$. [Section 2.1 \cite{OxleyMatroidBook}]
	\end{itemize}
\end{dfn}

Here are a few further notions related to matroids that will be useful later. 

\begin{dfn}[Section 1.1, \cite{OxleyMatroidBook}] \label{dfn:representable matroid}
A matroid $\cM$ is \emph{representable} over a field $k$ if there is a matrix with elements in $k$ and a bijection between the ground set of $\cM$ and the columns of the matrix so that the rank functions agree. In this case the bases of the matroid and the column space of the matrix also agree, and likewise for the independent sets.
\end{dfn}

\begin{dfn}[Section 1.2, \cite{OxleyMatroidBook}]\label{dfn:uniformmatroid}
	A \emph{uniform matroid} of rank $k$ is a matroid over a base set $E$ where every subset of size $<k$ is independent. We denote a uniform matroid of rank $k$ by $U_n^k = (E, \cB)$ where $|E|= n$ and $\cB = \{B \subseteq E :\ |B| = k \}$. 
\end{dfn}

Note that for a uniform matroid, the circuits are the sets of size $k+1$, all flats of size less than $k$ are independent sets, and the only cyclic flat is $E$.

\begin{dfn}[Section 1.3 \cite{OxleyMatroidBook}]\label{dfn:matroid restriction} 
    Let $\cM$ be a matroid with underlying set $E$ and let $S\subseteq E$. Then the \emph{restricted matroid} $\cM|_S$ is the matroid on underlying set $S$ where a subset of $S$ is independent in $\cM|_S$ if and only if it is independent in $\cM$. 
\end{dfn}

In other words, the set of bases of a restricted matroid $\cM|_S$ is the set $\cB|_S = \{B \cap S : |B\cap S| \text{ maximal among } B \in \cB\}$ where $\cM = (E, \cB)$. 

\begin{dfn}[(3.1.14) \cite{OxleyMatroidBook}]\label{dfn:matroid contraction} 
  Let $\cM$ be a matroid with underlying set $E$ and set of bases $\cB$ and let $S\subseteq E$ be a subset of $E$ of rank $\rk(S) = r$. Then the \emph{contracted matroid} $\cM/S$ is the matroid on underlying set $E \setminus S$
with set of bases $\cB/S = \{B \setminus S : |B\cap S| \text{ maximal among } B \in \cB\}$.
\end{dfn}

\begin{dfn}[Proposition 4.2.8 \cite{OxleyMatroidBook}]\label{dfn:matroid direct sum} 
Let $\cM$ be a matroid with underlying set $E$. If $E$ can be partitioned into $E_1\sqcup E_2 = E$ such that the independent sets in $\cM$ are precisely the unions of an independent set in $\cM|_{E_1}$ with an independent set in $\cM|_{E_2}$ then $\cM$ is the \emph{direct sum} of $\cM|_{E_1}$ with $\cM|_{E_2}$, written $\cM = \cM|_{E_1} \oplus \cM|_{E_2}$.

If $\cM$ cannot be written as a direct sum of two nonempty matroids then $\cM$ is \emph{connected}.
\end{dfn}
Equivalently phrased in terms of rank functions, $\cM = \cM|_{E_1} \oplus \cM|_{E_2}$ if and only if the rank function of $\cM$ is not just subadditive, but additive across the partition $E_1\sqcup E_2$. That is for $S\subseteq E$, $\rk(S) = \rk(S\cup E_1) + \rk(S\cup E_2)$. Using the other matroid properties, it suffices to check this property only on $E$ itself: $\rk(E) = \rk(E_1) + \rk(E_2)$ [(4.2.13) \cite{OxleyMatroidBook}]

We conclude with a general fact about flats of uniform matroids that we use in Section \ref{sec:positive gWLD}.

\begin{lem}\label{res:flatcondpos}
	A matroid $M$ is uniform if and only if every connected dependent flat has full rank.
\end{lem}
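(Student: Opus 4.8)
The plan is to reduce everything to the sizes of circuits. Recall that a matroid $M$ of rank $\rk(M)=k$ is uniform exactly when $\rk(A)=\min\{|A|,k\}$ for every $A\subseteq E$, equivalently when every circuit has size $k+1$, equivalently when $M$ has no dependent set of size at most $k$. I will also use the observation that a flat $F$ has \emph{full rank} (meaning $\rk(F)=k$) if and only if $F=E$: if $\rk(F)=k$ then $\rk(F\cup x)=k=\rk(F)$ for every $x\in E$, so $x\in\cl(F)=F$. With these reformulations the statement becomes a clean dichotomy about the existence of short circuits, and the two directions separate naturally.

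For the forward direction I would argue directly. Suppose $M$ is uniform and let $F$ be any dependent flat. Since $\rk(F)=\min\{|F|,k\}$ and $F$ is dependent we have $\rk(F)<|F|$, which forces $|F|>k$ and hence $\rk(F)=k$. Thus in a uniform matroid \emph{every} dependent flat already has full rank, so in particular every connected dependent flat does; the connectedness hypothesis is not even needed for this implication.

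The content is in the converse, which I would prove by contrapositive: assuming $M$ is not uniform I will produce a connected dependent flat of rank $<k$. Non-uniformity gives a circuit $C$ with $|C|\le k$. Set $F=\cl(C)$. Then $C\subseteq F$ shows $F$ is dependent, and $\rk(F)=\rk(C)=|C|-1\le k-1<k$, so $F$ is not of full rank. It remains to check that $M|_F$ is connected, which is the crux of the lemma. Fix $e\in C$, so that $B:=C\setminus\{e\}$ is independent of size $|C|-1=\rk(F)$, i.e.\ a basis of $F$ with $\cl(B)=F$. Every element of $B$ lies with the rest of $B$ (and with $e$) in the circuit $C\subseteq F$. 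For each remaining $x\in F\setminus B$ the set $B\cup\{x\}$ is dependent, and hence contains a fundamental circuit $C_x$ through $x$ with $C_x\subseteq B\cup\{x\}\subseteq F$; since $C_x$ is not a single loop it meets $B\subseteq C$. If $M|_F$ split as a nontrivial direct sum $M|_{F_1}\oplus M|_{F_2}$, then by Definition~\ref{dfn:matroid direct sum} a set is independent iff each of its two parts is, so every circuit contained in $F$ lies entirely in one of $F_1,F_2$. But then $C$ lies in one part, say $F_1$, while any $x\in F_2$ lies in a circuit $C_x$ meeting $C$ and therefore meeting both parts, a contradiction. Hence $M|_F$ is connected, giving the desired connected dependent flat of non-full rank.

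I expect the connectivity of $\cl(C)$ to be the main obstacle, since it is the one place where the matroid axioms genuinely enter, through the existence of fundamental circuits; everything else is rank bookkeeping. The delicate point to handle carefully is the degenerate case in which the short circuit is a single element, i.e.\ a loop: then $\cl(C)$ is the set of all loops, which is connected exactly when there is a unique loop, so the fundamental-circuit argument above should be run in the loopless setting (where every circuit used has at least two elements and $C_x$ necessarily meets $B$), with the single-loop situation disposed of by hand.
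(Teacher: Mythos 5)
Your forward direction coincides with the paper's (and you are right that connectivity plays no role there). The real divergence is in the converse, and here your proof does strictly more than the paper's. The paper's ``other direction'' paragraph assumes a connected dependent flat $F$ with $\rk(F)<r$ and extracts from it a circuit of size at most $r$, concluding $M$ is not uniform; logically that is once again the contrapositive of the forward implication ($M$ uniform $\Rightarrow$ every connected dependent flat has full rank), so the paper never actually establishes the implication ``every connected dependent flat has full rank $\Rightarrow$ $M$ uniform'' --- which is the direction invoked later, in the proof of Theorem~\ref{res:posivityonflats}, when $\cF=\emptyset$ is upgraded to uniformity. Your construction supplies exactly the missing content: from non-uniformity you produce a circuit $C$ with $|C|\le k$, set $F=\cl(C)$, and prove $M|_F$ is connected by observing that in a direct sum every circuit lies in one summand, while each $x\in F\setminus B$ carries a fundamental circuit $C_x\subseteq B\cup\{x\}$ meeting $B$ (here $\cl(B)=F$ since $e\in\cl(C\setminus e)$, so such a $C_x$ exists). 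In the loopless setting this argument is correct and complete.

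Your closing worry about loops is not a mere technicality, and you should state the conclusion more bluntly: with two or more loops the lemma as written is false, not just harder to prove. Every flat contains all loops, so $\cl(C)$ is disconnected whenever $M$ has $\ge 2$ loops; worse, for $M$ a coloop together with two loops there is \emph{no} connected dependent flat at all (both the loop set and the ground set decompose as direct sums), so the right-hand condition holds vacuously while $M$ is not uniform. Thus looplessness (or at most one loop, with the single-loop case handled via the flat $\cl(\emptyset)$ as you indicate) is a genuinely necessary hypothesis, not an optional reduction. It is harmless in the paper's applications, where one restricts to diagrams in which every vertex supports a propagator, so $\cM(W)$ has no loops (cf.\ Lemma~\ref{res:ignorenon-supporting}); within that setting your step ``$C_x$ is not a single loop'' is justified because $x$ is not a loop, and your proof stands.
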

Here by a flat being connected, we mean the matroid restricted to the flat is connected.
\begin{proof}
	By definition $M$ is a uniform matroid of rank $r$, if and only if every set of size $\leq r$ is independent. Therefore if $F$ is a connected dependent flat of $M$, it has size at least $r+1$ and so contains a basis and must be of full rank. 
	
	To prove the other direction, suppose $F$ is a connected dependent flat of $M$ with $\rk(F) < r$. Then $F$ contains a circuit, $U \subset F$ with $|U| \leq \rk(F) +1 \leq r$. In other words, $U$ is a dependent set with $|U| \leq r$, forcing $M$ to not be uniform.
\end{proof}

\subsubsection{Transversal and Rado matroids \label{sec:Rado}} 

Transversal matroids are a special class of matroids where the independence data can be encoded in a bipartite graph by Hall's matching theorem. In order to define this, we need to establish some notation that we will continue to use through the paper. For a more thorough discussion of the subject, see \cite{OxleyMatroidBook}, section 1.6.

Let $G = (X \sqcup Y, E)$ be a bipartite graph with vertex sets $X$ and $Y$ and edge set $E$. A \emph{matching}, $M$, is a subset of $E$ such that such that no edges in $M$ share a common vertex. A vertex in $X$ or $Y$ is covered by $M$ if it is the endpoint of an edge in $M$. We write $M\cap X$ and $M\cap Y$ to indicate the vertices in $X$ and $Y$ that are covered by $M$. For $A \subseteq X$, an $A$\emph{-perfect matching} is a matching that covers every vertex of $A$. In other words, if $M$ is a matching in $G$ with $A = M \cap X$, then $M$ is an $A$-perfect matching. By Hall's matching theorem, the subset $A$ has an $A$-perfect matching if and only if, for every subset $U \subseteq A$, the neighborhood of $U$ in $Y$ has more elements than $U$: $\forall U \subseteq A, \; |U| \leq |N(U)|$. A \emph{maximum matching} is one which contains the largest possible number of edges.

Every bipartite graph gives rise to a transversal matroid.

\begin{dfn}\label{dfn:transeverse matroid}
The bipartite graph $G = (X \sqcup Y, E)$ defines a \emph{transversal matroid} $\cM_G$ over the ground set $X$ as follows. The set $A \subseteq X$ is independent in $\cM_G$ if and only if there is an $A$-perfect matching in $G$. That is, $A \subseteq X$ is independent in $\cM_G$ if and only if $\forall U \subseteq A, \; |U| \leq |N(U)|$.
\end{dfn}

The following are useful additional facts about $\cM_G$. The set $A$ is a basis of $\cM_G$ if and only if the corresponding $A$-perfect matching is a maximum matching. If $A$ is an independent set in $\cM_G$ but not a basis, then there is an algorithm to extend any $A$-perfect matching to a maximum matching. 

A Rado matroid is a generalization of a transversal matroid, where the independence structure is derived from a bipartite graph $G = (X \sqcup Y, E)$ together with a matroid structure $\sfM_Y$ on the set $Y$. In the case that $Y$ is a discrete matroid, that is, when every subset of $Y$ is independent, then the Rado matroid is simply the transversal matroid of the bipartite graph.

Some further matching language incorporating the structure of $\sfM_Y$ will be useful, which we give following \cite{DevosNotes}.

\begin{dfn}\label{dfn:indepmatching} For $G = (X \sqcup Y , E)$ where $Y$ has a matroid structure $\mathsf{M}_Y$, a matching $M$ in $G$ is \emph{independent} if the set of vertices which is covered in $Y$ (still denoted $M \cap Y$) is an independent set in $\sfM_Y$. 
\end{dfn}

\begin{dfn} \label{dfn:saturation_closure}We say a vertex $y\in Y$ is \emph{saturated} by a matching $M$ in $G$ if and only if it is in the closure of the covered vertices: $y\in \cl(M\cap Y)$, while a vertex $x\in X$ is \emph{saturated} by $M$ if and only if it is covered by $M$, $x \in M\cap X$. \end{dfn}

Note that this means that vertices with rank $0$ in $Y$ are saturated by the empty matching.

In the case that $\sfM_Y$ is a discrete matroid then all matchings are independent matchings and the notion of saturated becomes identical to the notion of covered both in $X$ and in $Y$. Relatedly, by taking a discrete matroid structure on $X$ we could give the definition of saturation symmetrically, on both sides saying a vertex is saturated if it is in the closure of the covered vertices. This explains why, in the usual theory of matchings on bipartite graphs, covered and saturated are interchangeable terms, though we will reserve saturation for the meaning given in Definition~\ref{dfn:saturation_closure}.

Rado's theorem states that there is an $A$-perfect independent matching in $G$ if and only if, for every $U \subseteq A$, the rank of the neighborhood of $U$ exceeds the size of $U$: $\forall U \subseteq A,\; |U| \leq \rk(N(U))$. Note that if $\sfM_Y$ is the discrete matroid on $Y$, this becomes Hall's matching theorem.

We can use this to define the Rado matroid defined by a bipartite graph $G$, with matroid structure $\sfM_Y$.

\begin{dfn} \label{dfn:Rado}
The \emph{Rado matroid} $\cM_{G,\sfM_Y}$ defined by $G = (X \sqcup Y , E)$ and $\sfM_Y$ has $X$ as its ground set. A set $A \subseteq X$ is an independent set in  $\cM_{G,\sfM_Y}$ if and only if there is an $A$-perfect independent matching in $G$; that is, $\forall U \subseteq A,\; |U| \leq \rk(N(U))$.
\end{dfn}

The following lemma regarding disconnected Rado matroids will be useful later. Recall the definition of direct sum of matroids and connectivity of matroids from Definition~\ref{dfn:matroid direct sum}.
\begin{lem}\label{res:rado disconnected}
    Suppose the bipartite graph $G=(X\sqcup Y, E)$ is disconnected and write $G=G_1\sqcup G_2$ with $V(G_i) = X_i \sqcup Y_i$, $X_1\sqcup X_2 = X$, $Y_1\sqcup Y_2 = Y$. Suppose also that $\sfM_Y = \sfM_{Y_1} \oplus \sfM_{Y_2}$ where $\sfM_{Y_i} = \sfM_Y|_{Y_i}$. Then $\cM_{G, \sfM_Y} = \cM_{G_1, \sfM_{Y_1}} \oplus \cM_{G_2, \sfM_{Y_2}}$.
\end{lem}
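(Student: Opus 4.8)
The plan is to verify the direct sum decomposition at the level of independent sets, using the characterization of direct sums from Definition~\ref{dfn:matroid direct sum} together with Rado's criterion for independence in a Rado matroid. Concretely, writing $\rk$ for the rank function of $\sfM_Y$ and $\rk_i$ for that of $\sfM_{Y_i}$, I will show that a subset $A\subseteq X$ is independent in $\cM_{G,\sfM_Y}$ if and only if $A\cap X_1$ is independent in $\cM_{G_1,\sfM_{Y_1}}$ and $A\cap X_2$ is independent in $\cM_{G_2,\sfM_{Y_2}}$. Since the right-hand condition is exactly the definition of independence in the direct sum $\cM_{G_1,\sfM_{Y_1}}\oplus\cM_{G_2,\sfM_{Y_2}}$, and two matroids on the common ground set $X$ with the same independent sets coincide, this proves the lemma.

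First I would record two structural facts. Because $G=G_1\sqcup G_2$ is disconnected along the stated bipartition, any $x\in X_i$ has all of its neighbors in $Y_i$; hence for $U\subseteq X$, writing $U_i = U\cap X_i$, we have $N(U) = N(U_1)\sqcup N(U_2)$ with $N(U_i)\subseteq Y_i$ and the two pieces disjoint. Second, because $\sfM_Y = \sfM_{Y_1}\oplus\sfM_{Y_2}$, its rank function is additive across the partition $Y_1\sqcup Y_2$; in particular, for a set $S=S_1\sqcup S_2$ with $S_i\subseteq Y_i$ we have $\rk(S) = \rk_1(S_1)+\rk_2(S_2)$, and $\rk(S_i)=\rk_i(S_i)$. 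Combining these, for any $U\subseteq X$,
\[
\rk(N(U)) = \rk_1(N(U_1)) + \rk_2(N(U_2)).
\]

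With these in hand the equivalence follows from Rado's theorem, which says $A$ is independent in $\cM_{G,\sfM_Y}$ iff $|U|\le \rk(N(U))$ for every $U\subseteq A$ (and analogously for each component). For the forward direction, given an independent $A$ and any $U_1\subseteq A\cap X_1$, applying the criterion to $U=U_1$ gives $|U_1|\le \rk(N(U_1)) = \rk_1(N(U_1))$, so $A\cap X_1$ is independent in $\cM_{G_1,\sfM_{Y_1}}$; the argument for $X_2$ is identical. For the converse, suppose both $A\cap X_1$ and $A\cap X_2$ are independent. Given any $U\subseteq A$, split it as $U=U_1\sqcup U_2$ and estimate
\[
|U| = |U_1| + |U_2| \le \rk_1(N(U_1)) + \rk_2(N(U_2)) = \rk(N(U)),
\]
so $A$ satisfies Rado's criterion and is independent in $\cM_{G,\sfM_Y}$.

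There is no serious obstacle here; the content is entirely in the two structural observations, namely that disconnectedness of $G$ makes the neighborhoods split into the blocks $Y_1,Y_2$ and that the direct-sum hypothesis on $\sfM_Y$ makes the rank additive across those same blocks. The only point requiring a little care is to make sure these two splittings are compatible --- i.e.\ that $N(U_i)$ lands in exactly the block $Y_i$ on which $\rk_i$ is the relevant rank --- which is immediate once the indexing of the components is fixed consistently. Everything else is a direct application of Rado's theorem term by term.
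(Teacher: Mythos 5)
Your proposal is correct and follows essentially the same route as the paper's proof: both verify independence via Rado's criterion, using the splitting of neighborhoods $N(U)=N(U_1)\sqcup N(U_2)$ forced by disconnectedness together with additivity of the rank of $\sfM_Y$ across $Y_1\sqcup Y_2$. The only cosmetic difference is that the paper phrases one direction contrapositively (a set dependent in a component stays dependent in $\cM_{G,\sfM_Y}$), where you restrict Rado's criterion directly; these are the same argument.
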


\begin{proof}
    Suppose $S_1$ is independent in $\cM_{G_1, \sfM_{Y_1}}$ and $S_2$ is independent in $\cM_{G_2, \sfM_{Y_2}}$. By definition, for every $U_i\subseteq S_i$, $|U_i| \leq \rk_{\sfM_{Y_i}}(N_{G_i}(U_i))$. Then summing over $i=1,2$ and using the direct sum structure of $\sfM_Y$ (phrased in terms of rank) we get $|U| \leq \rk_{\sfM_Y}(N(U))$. Since every $U\subseteq S_1\sqcup S_2$ is formed in this way, $S_1\sqcup S_2$ is independent in $\cM_{G, \sfM_Y}$. On the other hand, if $S_i$ is dependent in $\cM_{G_i, \sfM_{Y_i}}$ then it is dependent in $\cM_{G, \sfM_Y}$ and so is any set containing it. Therefore $\cM_{G, \sfM_Y} = \cM_{G_1, \sfM_{Y_1}} \oplus \cM_{G_2, \sfM_{Y_2}}$.
\end{proof}

\subsubsection{Positroids \label{sec:positroid}}  

In this section, we review the definition of positroids and their geometry. This is meant to be a very minimal review of the details needed for this paper. For a fuller exposition, see \cite{Postnikov, Juggling, GrBall}.

As before $[n]$ denotes the set $\{1, \ldots, n\}$ with the standard cyclic order, and for $i \in n$, we have the associated linear orders $<_i$ as in Definition~\ref{dfn:cyclic orders}. Sometimes we write $<$ instead of $<_1$. 

\begin{dfn} \label{dfn:positroid} \cite{Oh} 
	A \emph{positroid}, $\cM$, is a matroid with a cyclically ordered ground set, $[n]$, such that it can be represented over $\R$ by a matrix with non-negative maximal minors.
\end{dfn}

There are a few things of note here. First, the positivity condition defining positroids is not in general invariant under arbitrary permutations of the ground set, only under cyclic permutations. Consequently, positroids are generically isomorphic only under cyclic permutation of the ground set while general matroids are isomorphic under arbitrary permutations of the ground set. 
Secondly, one may restate Definition \ref{dfn:positroid} to say that a positroid of rank $k$ can be represented by a point in the Grassmannian, $\Grall(k, n)$, with non-negative Pl\"{u}cker coordinates. For a more complete discussion of the Grassmannian as a manifold and Pl\"{u}cker coordinates, see sections 2.1 and 2.2 of \cite{Postnikov}. We refer to the subset of $\Grall(k, n)$ with non-negative Pl\"{u}cker coordinates as the positive Grassmannians, $\Gr(k, n)$. One may define a positroid cell, $\Sigma(\cM)$ as the set of points in $\Gr(k,n)$ that represent the positroid $\cM$. These positroid cells form a CW-complex over $\Gr(k,n)$ [Theorem 3.5 \cite{Postnikov}]. 

This CW-complex structure gives rise to the idea of a positroid cell being a boundary of another cell. Algebraically, in our context, this becomes the following definition.
\begin{dfn}[Section 17 \cite{Postnikov}]\label{dfn:positriodboundary} 
Let $\cM = ([n], \cB)$ and $\cM' = ([n], \cB')$ be two positroids. The positroid cell corresponding to $\cM'$ is a \emph{boundary cell} of the positroid cell corresponding to $\cM$ (in $\Gr(\rk(\cM), n)$) if and only if  $\cB' \subseteq \cB$.  \end{dfn}

In a slight abuse of notation, with $\cM$ and $\cM'$ as above, we call $\cM'$ a \emph{boundary} of $\cM$.

There are many equivalent combinatorial structures that are in bijection with positroids. In this section, we define one, \emph{Grassmann necklaces}, which we will make considerable use of. The notions of Le diagrams and reduced word subword pairs will briefly appear in Section \ref{sec:diagrammaticmoves}, but we will not need their definitions, which the interested reader can find in \cite{Juggling,Postnikov}.

\begin{dfn}[Definition 4 \cite{Oh43}]\label{dfn:Grassmannnecklace}   A \emph{Grassmann necklace}, $\cI$ is a sequence of $n$ sets of size $k$ \bas (I_1, \ldots , I_n )\eas with elements in $[n]$ such that either 

\begin{itemize} 
    \item $i \in I_i$ and there is some $j \in [n]$ such that $I_{i+1} = (I_i \setminus i) \cup j$ or 
    \item  $i \not \in I_i$ $I_i  = I_{i+1}$ otherwise. 
\end{itemize} 
\end{dfn}

The set of Grassmann necklaces is in bijection with the set of postitroids. However, one may associate a Grassmann necklace to any matroid via its set of bases, \cite{Oh43, Postnikov}. Specifically, for any matroid $\cM  = ([n], \cB)$, let $\cI(\cB)$ be the lexicographically minimal element of $\cB$ in the $<_i$ linear order on $[n]$. Then $\cI(\cB)$ is a Grassmann necklace. 

To calculate the set of bases of a matroid associated to a Grassmann necklace, we need the definition of a \emph{Gale ordering}. 
\begin{dfn}[Section 2.2 \cite{Oh43}]\label{dfn:Gale order}
	The \emph{Gale ordering} is a partial ordering on sets of size $k$ with elements in the cyclically ordered set $[n]$. Specifically, if $A$ and $B$ are two such sets, we write $A = \{A^{(1)}, A^{(2)}, \ldots , A^{(k)} \}$ and $B = \{B^{(1)}, B^{(2)}, \ldots , B^{(k)} \}$ with respect to the $i^{th}$ ordering, i.e.\ $A^{(j)} <_i A^{(j+1)}$. Then $A \leq_i B$ in the Gale ordering if and only if $A^{(j)} \leq_i B^{(j)}$ for all $j \in [n]$. 
\end{dfn}

Each Grassmann necklace defines a set of bases, using the Gale ordering defined above: \ba  \cB_\cI = \{ B \subseteq [n] : |B|=k \text{ and } I_i \leq_i B \; \forall i \in [n]\}  \;, \label{eq:basisfromGN} \ea and the matroid $([n], \cB_\cI)$ is a positroid \cite{Postnikov}. 

Given a matroid $([n], \cB)$ and with $\cI$ the associated Grassmann necklace, note that $\cB \subseteq \cB_\cI$ is always true. However, $([n],\cB_I)$ is the smallest positroid that contains $\cM=([n],\cB)$ and the matroid $\cM$ is a positroid if and only if $\cB_\cI = \cB$.

We are interested in Grassmann necklaces in this paper because there is an algorithm to pass from Wilson loop diagrams to the Grassmann necklaces associated to the corresponding positroid cell \cite{generalcombinatoricsII}. We generalize this result for generalized Wilson loop diagrams in Algorithm \ref{algo:non-crossgwld}.

For now we note a fact about Grassmann necklaces that will be needed to read the Grassmann necklace from the generalized Wilson loop diagram.

\begin{lem}\label{res:greedyGN}
	Let $I_i \in \cI$ be an element of the Grassmann necklace associated to the matroid $\cM$ on $[n]$. Then (using the notation from Definition~\ref{dfn:Gale order}) $I_i^{(j+1)}$ is the first element of $[n]$ in the $<_i$ order that is independent of the set $\{I_i^{(1)}, \ldots, I_i^{(j)}\}$.
\end{lem}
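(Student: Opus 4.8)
The plan is to recognize the statement as the assertion that the lex-minimal basis $I_i$ coincides with the output of the matroid \emph{greedy algorithm} run in the order $<_i$, since the greedy algorithm is by construction the one that appends at each stage the $<_i$-first element preserving independence. Concretely, I would define a sequence $g_1 <_i g_2 <_i \cdots$ by letting $g_{j+1}$ be the $<_i$-first element of $[n]$ independent of $\{g_1, \ldots, g_j\}$ (starting from the empty set). First I would check this sequence is well-defined, strictly $<_i$-increasing, and terminates in a basis $G = \{g_1, \ldots, g_k\}$: strict monotonicity holds because $g_{j+1}$, being independent of $\{g_1, \ldots, g_j\}$, is also independent of the subset $\{g_1, \ldots, g_{j-1}\}$, so were it $<_i g_j$ it would have been selected as $g_j$; and $G$ is a basis because the process stops exactly when no independent extension remains, which for a matroid means a maximal independent set. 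With this in hand, the lemma is precisely the statement $I_i = G$ with the elements listed in $<_i$ order, since then $I_i^{(j+1)} = g_{j+1}$ has the claimed property by the very definition of $g_{j+1}$.

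The core step is therefore to prove that $G$ is the lex-minimal basis. I would establish the stronger Gale inequality $g_j \le_i b_j$ for every basis $B = \{b_1 <_i \cdots <_i b_k\}$ and every $j$, which immediately forces $G$ to precede $B$ lexicographically and hence $G = I_i$. Suppose not, and let $j$ be minimal with $g_j >_i b_j$, so that $g_m \le_i b_m$ for all $m < j$. Apply the independence exchange property (Definition~\ref{dfn:indep exchange matroid}) to the independent sets $\{b_1, \ldots, b_j\}$ (of size $j$) and $\{g_1, \ldots, g_{j-1}\}$ (of size $j-1$): there is some $b_m \in \{b_1, \ldots, b_j\} \setminus \{g_1, \ldots, g_{j-1}\}$ with $\{g_1, \ldots, g_{j-1}, b_m\}$ independent. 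Since $b_m \le_i b_j <_i g_j$, this $b_m$ is a $<_i$-earlier element than $g_j$, distinct from all of $g_1, \ldots, g_{j-1}$, and independent of them, contradicting the defining minimality of $g_j$. Hence no such $j$ exists and $G \le_i B$ in the Gale order.

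I expect the exchange step to be the only genuine obstacle, and the required care is purely bookkeeping: the augmenting element $b_m$ must be simultaneously not already in $\{g_1, \ldots, g_{j-1}\}$, which is guaranteed by how independence exchange selects it, and strictly $<_i g_j$, which is guaranteed by $m \le j$ together with $b_j <_i g_j$. Everything else is standard matroid greedy theory. I would also note explicitly the equivalence, for a unique minimizer, between the lex-minimality used to define $I_i$ and the Gale-minimality just proved: if $g_j \le_i b_j$ for all $j$, then at the first index where $G$ and $B$ differ the $G$-entry is strictly smaller, so the Gale-minimal basis is exactly the lex-minimal one, namely $I_i$.
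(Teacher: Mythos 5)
Your proof is correct, but it takes a genuinely different route from the paper's. The paper argues by direct contradiction: if some $a$ with $I_i^{(j)} <_i a <_i I_i^{(j+1)}$ made $J = \{I_i^{(1)}, \ldots, I_i^{(j)}, a\}$ independent, then repeated application of the exchange axiom against the basis $I_i$ itself extends $J$ to an independent set of size $\rk(\cM)$, i.e.\ a basis whose sorted list begins $I_i^{(1)}, \ldots, I_i^{(j)}, a$ (the borrowed elements of $I_i$ all lie after $a$ in the $<_i$ order), which is lexicographically smaller than $I_i$ — contradicting the definition of the necklace element as the lex-minimal basis. You instead build the greedy basis $G$ from scratch and prove the stronger Gale-domination statement $g_j \leq_i b_j$ against \emph{every} basis $B$, via a single exchange between $\{b_1, \ldots, b_j\}$ and $\{g_1, \ldots, g_{j-1}\}$, and then identify $G = I_i$ through lex-minimality. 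Both arguments rest on the same exchange axiom, but they buy different things: your route establishes Gale-minimality of $I_i$ over all bases as a byproduct, a fact the paper needs anyway and invokes separately elsewhere (it underlies \eqref{eq:basisfromGN} and is asserted for independent sets in the proof of Lemma~\ref{res:positroid implies cond5}), whereas the paper's proof is shorter and more local, avoiding your (correctly handled, but nontrivial) bookkeeping that the greedy sequence is well-defined, strictly $<_i$-increasing, and terminates in a basis. One small reading note: Definition~\ref{dfn:indep exchange matroid} as printed says $\exists a \in I$, which is only meaningful in the standard form $a \in I \setminus J$; your exchange step uses exactly that form to guarantee $b_m \notin \{g_1, \ldots, g_{j-1}\}$, which is the intended reading, so the step is sound.
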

\begin{proof}
	This follows from the independence exchange axiom of matroids: Definition \ref{dfn:indep exchange matroid}.
	
	Suppose, for contradiction, that there is an $a \in [n]$ such that  $I_i^{(j)} <_i a <_i I_i^{(j+1)}$ and $J = \{I_i^{(1)}, \ldots, I_i^{(j)}, a \}$ is independent. Since $I_i$ is the lexicographically minimal independent set of size $\rk(\cM)$ in the $<_i$ order, $j+1 < \rk(\cM)$. Otherwise, $J$ would be the Grassmann necklace element. Since $J$ is independent, $I_i$ is lexicographically minimal, but $a <_i I_i^{(j+1)}$, so it must be true that $J$ cannot be extended to an independent set of size $\rk(\cM)$. 
 
    However, $|I_i| = \rk(M) > j+1$. Therefore, by the independence exchange property of matroids, there is a $b \in I_i$ such that $J \cup b$ is independent. By repeated extension, one can find elements of $I_i$ that allow $J$ to be extended to an independent set of size $\rk(\cM)$, which yields a contradiction.
\end{proof}

\section{Matroids of generalized Wilson loop diagrams \label{sec:allWLDtomatroid}}

In this section, we use the matroid background presented in Section \ref{sec:background} to associate matroids to generalized Wilson loop diagrams and investigate some of their properties. 

\subsection{Generalized Wilson loop diagrams and their matroids \label{sec:gWLDtomatroid}}

We first consider the case of ordinary Wilson loop diagrams as discussed in Remark~\ref{rmk:ordinary WLD} and previous sources including \cite{Wilsonloop}. The matroid structure for ordinary Wilson loop diagrams has already been established by one of us and other coauthors in \cite{Wilsonloop} using a matrix defined combinatorially from the ordinary Wilson loop diagram. This is the matrix that carries the physical information of the particle interaction \cite{Arkani-Hamed:2013jha}. As a lead in to the generalized Wilson loop diagram case, we will revisit the ordinary case using an alternate approach of transversal matroids.

\begin{dfn} \label{dfn:Wtransveral}
 	For each Wilson loop diagram $W = (\cP, n)$, define the bipartite graph $G^\cP_W$ to be the graph with one set of vertices corresponding to the propagators, the other to the set $[n]$ and an edge in $G^\cP_W$ between the propagators $p$ and the vertex $v$ if and only if $v \in V(p)$. Define the transversal matroid associated to $W$ to be the transversal matroid determined by $G^\cP_W$.
\end{dfn}

Note that the \emph{neighbors} in $G^\cP_W$ of a set $V \subseteq [n]$ are the propagators supported on those vertices, $N(V) = \Prop(V)$ and for a set of propagators $P \subseteq \cP$, the neighbors of $P$ in $G^\cP_W$ are their supporting vertices, $N(P) = V(P)$. 

\begin{eg}\label{eg:wldbipartite}
	For example, the ordinary Wilson loop diagram, $W$. has the following associated bipartite graph, $G^\cP_W$: 
\begin{center} \begin{tikzpicture}[rotate=-45, line width=1, scale=1.5, baseline=(current bounding box.north)]
		%define number of vertices
		\def \n {7}
		%draw circle with nodes
		\draw circle(1)
        \foreach \v in{1,...,\n}
		{(360*\v/\n-360/\n+180:1)circle(.4pt)circle(.8pt)circle(1.2pt)circle(1.4pt) node[anchor=360/\n*\v-360/\n-45]{$\v$}};
	%	%draw 2 ended propagator
		\draw[decorate,decoration={snake,amplitude=0.8mm}] (-0.901,-0.434) -- (-0.223,0.975);
%		%draw 2 ended propagator
		\draw[decorate,decoration={snake,amplitude=0.8mm}] (-0.365,-0.931) -- (0.623,0.782);
%		%draw 2 ended propagator
		\draw[decorate,decoration={snake,amplitude=0.8mm}] (-0.075,-0.997) -- (1.00000000000000,0);
	\end{tikzpicture}
\begin{tikzpicture}[scale=1.25, baseline=(current bounding box.north)]
	\labeledvertices{ , , }{1,...,7}{1/1, 1/2, 1/6, 1/7, 2/2, 2/3, 2/4, 2/5, 3/3, 3/3, 3/4, 3/5, 3/6}
	\bipartitelabels{\cP}{[n]}
\end{tikzpicture} 
\end{center}
	
Note that $N(V) = \Prop(V)$ and that $N(P) = V(P)$.
\end{eg}

We see from Hall's matching theorem that the transversal matroid of $W$ is exactly the same matroid as the matroid associated to $W$ in \cite{Wilsonloop}. Specifically, by Hall's matching theorem, a subset $V \subseteq [n]$ is independent if and only if, for all $U \subseteq V$, $|N(U)| \geq |U|$. Similarly, in \cite{Wilsonloop}, the authors show that a subset of vertices $V \subseteq [n]$ is independent if and only if, for all $U \subseteq V$, $|\text{Prop}(U)| \geq |U|$. 

In order to extend this kind of correspondence from ordinary Wilson loop diagrams to generalized Wilson loop diagrams we need to take into account that a propagator may have more or less than two ends and that a propagator may have capacity larger than $1$. For the matroid associated to the diagram this will require that we pass from transversal matroids to Rado matroids. Additionally, the bipartite graph will have the set of propagators replaced by the set of ends. This set of ends will be given an underlying matroid structure that will serve as input to the Rado matroid construction.

In order to define the matroid on the set of ends, we first show that the capacity function defined in Equation \eqref{eq:capacityfunction} is a rank function on $\cE$. 

\begin{lem} \label{res:capacityisrank}
	The capacity function defined in Equation \eqref{eq:capacityfunction} gives a rank function on $\cE$.
\end{lem}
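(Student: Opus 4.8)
The plan is to recognize the capacity function as the rank function of a direct sum of uniform matroids, one per propagator. The key structural observation is that the ground set $\cE = \cE(\cP)$ decomposes as a disjoint union $\cE = \bigsqcup_{p \in \cP} \cE(p)$ indexed by the propagators: because $\cE(\cP)$ is a \emph{multiset} union (the $\bigsqcup$ of Definition~\ref{dfn:end stuff}), every end is tracked together with its propagator and so belongs to exactly one $\cE(p)$. For a fixed propagator $p$, Definition~\ref{dfn: cap stuff} gives $c(E) = \min\{|E|, c(p)\}$ for $E \subseteq \cE(p)$, and Definition~\ref{dfn:gWLD} guarantees $c(p) \le |\cE(p)|$. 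This is precisely the rank function of the uniform matroid $U_{|\cE(p)|}^{c(p)}$ on ground set $\cE(p)$ (Definition~\ref{dfn:uniformmatroid}), and the bound $c(p)\le|\cE(p)|$ is exactly what makes that uniform matroid well-defined.

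First I would rewrite the capacity function of Equation~\eqref{eq:capacityfunction} as a sum over \emph{all} propagators rather than only those in $\Prop(E)$. A propagator $p$ with $E \cap \cE(p) = \emptyset$ contributes $\min\{0, c(p)\} = 0$, so nothing is lost, and we obtain
\begin{equation*}
c(E) = \sum_{p \in \cP} \min\bigl\{\,|E \cap \cE(p)|,\ c(p)\,\bigr\}.
\end{equation*}
This exhibits $c$ as additive across the partition $\cE = \bigsqcup_{p} \cE(p)$, with the $p$-summand depending only on $E \cap \cE(p)$. That is precisely the additivity-of-rank condition of Definition~\ref{dfn:matroid direct sum}, so $c$ is the rank function of the direct sum $\bigoplus_{p \in \cP} U_{|\cE(p)|}^{c(p)}$. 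Iterating Definition~\ref{dfn:matroid direct sum} shows this direct sum is a matroid whose rank function is $c$, and hence $c$ satisfies the three axioms of Definition~\ref{dfn:rank function}.

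If one prefers to avoid the direct-sum machinery, the same per-propagator additivity lets one verify the three rank axioms of Definition~\ref{dfn:rank function} directly, reducing each to the single-propagator case: $c(\emptyset)=0$ is immediate; the unit-increase bound $c(A) \le c(A \cup x) \le c(A)+1$ holds because adding one end $x \in \cE(p)$ changes only the $p$-summand, and changes $\min\{|\cdot|, c(p)\}$ by $0$ or $1$; and submodularity follows by summing over $p$ the (elementary) submodularity of $A \mapsto \min\{|A \cap \cE(p)|, c(p)\}$.

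I do not expect a genuine obstacle here; the content is bookkeeping. The one point that must be handled with care is the multiset structure: ends must be distinguished by which propagator they belong to, so that $\cE$ really partitions over $\cP$ and the summands are independent. Getting this right (together with the inequality $c(p)\le|\cE(p)|$ ensuring each $U_{|\cE(p)|}^{c(p)}$ exists) is the whole substance of the argument.
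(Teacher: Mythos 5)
Your proposal is correct, but your primary route differs from the paper's. The paper proves the lemma by verifying the three axioms of Definition~\ref{dfn:rank function} head-on: $c(\emptyset)=0$, the unit-increase bound, and submodularity, where additivity over propagators is used only to reduce submodularity to a single propagator, after which one checks $\min\{|E\cup F|, c(p)\}+\min\{|E\cap F|, c(p)\}\le \min\{|E|,c(p)\}+\min\{|F|,c(p)\}$ by a case analysis on where $c(p)$ falls in the chain $|E\cap F|\le|E|\le|F|\le|E\cup F|$, using $|E|+|F|=|E\cup F|+|E\cap F|$. You instead identify $c$ outright as the rank function of $\bigoplus_{p\in\cP} U_{|\cE(p)|}^{c(p)}$, which is also sound: the multiset convention does make $\cE=\bigsqcup_p \cE(p)$ a genuine partition, $c(p)\le|\cE(p)|$ makes each uniform matroid well-defined, and the rewriting of \eqref{eq:capacityfunction} as a sum over all of $\cP$ is harmless since empty intersections contribute $0$. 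The payoff of your route is that it proves the lemma and Corollary~\ref{res:endmatroid} (which the paper states immediately afterward, with essentially your observation as its proof) in one stroke. The one small point to be careful about: the paper's Definition~\ref{dfn:matroid direct sum} defines direct sum as a decomposition property of an already-given matroid, whereas you need the converse construction — that gluing matroids on disjoint parts yields a matroid whose rank is the sum of the part-ranks. That fact is standard and easy (your fallback paragraph, which reduces each axiom to the single-propagator case, is in effect its proof, and is also essentially the paper's argument), but as written your direct-sum step leans on a definition pointing in the other direction, so a one-line justification would be warranted.
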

\begin{proof}
	We check the three requirements of a rank function of a matroid 
 (see Definition~\ref{dfn:rank function}).
 
	If $E \subset \cE$ is empty, then $c(E) = |E| = 0$. Otherwise, since $c(p)$ is positive, $c(E)$ is positive.
 
	For the second point, the definition of $c(E)$ shows that for any $x \in \cE \setminus E$, $c(E) \leq c(E \cup x) \leq c(E)+ 1$.
	
    For the third point, since $c(E)$ is additive in $p$ (i.e.\ $c(E) = \sum_{p \in \cP}c(E\cap \cE(p))$), we only need to prove the inequality \bas c (E \cup F) + c(E \cap F) \leq c(E) + c(F) \eas for single propagators. So let $p$ be a propagator and assume $E, F\subset \cE(p)$. Then \bas c (E \cup F) + c(E \cap F )  = 
	\min\{|E \cup F|, c(p)\} + \min\{|E \cap F|, c(p)\} \eas 
    Without loss of generality take $|E|\leq |F|$ so we have $|E\cap F| \leq |E|\leq |F|\leq |E\cup F|$. Considering each of the five possible positions for $c(p)$ in that last chain of inequalities one can directly compute in each case (using $|E|+|F|=|E\cup F|+|E\cap F|$ as needed) that \bas \min\{|E \cup F|, c(p)\} + \min\{|E \cap F|, c(p)\} \leq \min\{|E|, c(p)\} + \min\{|F|, c(p)\} \eas
    as desired.
\end{proof}

In this manner, we may define a matroid structure on the set of ends of a generalized Wilson loop diagram.  We call this matroid $\sfM_\cE$.

\begin{cor}\label{res:endmatroid}
	The matroid defined by the rank function $c(E)$ on the ground set $\cE$ is a direct sum of the uniform matroids $U_{|\cE(p)|}^{c(p)}$, where $U_n^r$ denotes the uniform matroid of rank $r$ on a set of size $n$, as in Definition \ref{dfn:uniformmatroid}.
\end{cor}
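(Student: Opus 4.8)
The plan is to argue entirely at the level of rank functions, since by Definition~\ref{dfn:rank function} a matroid is determined by its rank function, so that two matroids coincide precisely when their rank functions agree. By Lemma~\ref{res:capacityisrank} the capacity function $c$ is a valid rank function on $\cE$, so it defines the matroid $\sfM_\cE$, and it remains only to recognize its structure.

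First I would record the relevant partition of the ground set: the ends decompose as the multiset union $\cE = \bigsqcup_{p \in \cP} \cE(p)$ indexed by propagators. The defining formula \eqref{eq:capacityfunction} says exactly that $c(E) = \sum_{p \in \cP} c(E \cap \cE(p))$ for every $E \subseteq \cE$, that is, the rank function is additive across this partition. This is precisely the rank-additivity criterion for a direct sum recorded after Definition~\ref{dfn:matroid direct sum}. Since that criterion is stated there for a two-block partition, I would apply it inductively, splitting off one propagator at a time, to conclude $\sfM_\cE = \bigoplus_{p \in \cP} \sfM_\cE|_{\cE(p)}$, where each summand is the restriction of $\sfM_\cE$ to $\cE(p)$ in the sense of Definition~\ref{dfn:matroid restriction}.

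It then remains to identify each summand $\sfM_\cE|_{\cE(p)}$ with the uniform matroid $U_{|\cE(p)|}^{c(p)}$. By the single-propagator clause of Definition~\ref{dfn: cap stuff}, the rank of any $E \subseteq \cE(p)$ in this restriction is $c(E) = \min\{|E|, c(p)\}$, which is exactly the rank function of a uniform matroid of rank $c(p)$ on a set of size $|\cE(p)|$ as in Definition~\ref{dfn:uniformmatroid}; here $c(p) \leq |\cE(p)|$ holds because the capacity is at most the size of the propagator multiset by Definition~\ref{dfn:gWLD}, so the rank $c(p)$ is genuinely realized. Matching rank functions gives $\sfM_\cE|_{\cE(p)} = U_{|\cE(p)|}^{c(p)}$ and hence the claimed decomposition.

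I do not expect a serious obstacle: the whole statement falls out of the additive structure of \eqref{eq:capacityfunction} combined with the rank characterizations of direct sums and of uniform matroids. The only points requiring a little care are the inductive promotion of the two-block direct-sum criterion to the full partition over $\cP$, and confirming that the restricted rank function is read off as $\min\{|E|, c(p)\}$ rather than being truncated any further.
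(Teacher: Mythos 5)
Your proof is correct and takes essentially the same approach as the paper's own (one-line) proof, which likewise invokes the additivity of the capacity function across propagators together with the single-propagator formula $c(E)=\min\{|E|,c(p)\}$ for $E\subseteq\cE(p)$. You simply make explicit the details the paper leaves implicit, namely the inductive promotion of the two-block direct-sum criterion to the full partition over $\cP$ and the observation that $c(p)\le|\cE(p)|$ makes $U_{|\cE(p)|}^{c(p)}$ well defined.
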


\begin{proof}
	This follows from the fact that the capacity function is additive on the propagators, and that for any $E \subseteq \cE(P)$, $c(E) = \min\{|E  \cap \cE(p)|, c(p)\}$.
\end{proof}

Note that in this definition, we may write $\sfM_\cE = \oplus_{p \in \cP} U_{|\cE(p)|}^{c(p)}$. Each set of propagators $P \subseteq \cP$ is a flat of rank $c(P)$ in $\sfM_\cE$. 

\begin{dfn}\label{dfn:gWLDbipartite}
Let $W = (\cP, [n])$ be a generalized Wilson loop diagram. Define the bipartite graph $G^\cE_W$ as follows. One set of vertices is $\cE$ while the other is $[n]$. There is an edge in $G^\cE_W$ between the end $a \in \cE$ and the vertex $v\in [n]$ if and only if $v \in V(a)$. 
For each $p \in \cP$, the set $\cE(p)$ is given the uniform matroid structure of rank $c(p)$. The set $\cE$ is given the matroid structure $\sfM_\cE$, namely the matroid structure of a direct sum of the uniform matroids defined by each propagator.
\end{dfn}

\begin{eg}\label{eg:workingegbipartite}
	For the generalized Wilson loop diagram in Example \ref{eg:working example}, the associated bipartite graph $G^\cE_W$ is 
 \begin{center}
	\begin{tikzpicture}[scale=1.25]
		\labeledvertices{ p, p,p, r, r, r, q,q, s }{1,...,9}{1/1, 1/2, 2/3, 2/4, 3/8, 3/9, 4/1, 4/9, 5/5, 5/6, 6/6,6/7,7/1, 7/9, 8/4, 9/7 }
		\bipartitelabels{\cE}{[n]}
		%draw braces by hand
		\draw [decorate,decoration={brace,amplitude=8pt, mirror},xshift=-0.5cm,yshift=0pt]
		(-1,-.25) -- (-1,-1.75) node [midway,right,xshift=-1.2cm] {$\rk 2$};
		\draw [decorate,decoration={brace,amplitude=8pt, mirror},xshift=-0.5cm,yshift=0pt]
		(-1,-1.75) -- (-1,-3.25) node [midway,right,xshift=-1.2cm] {$\rk 1$};
		\draw [decorate,decoration={brace,amplitude=8pt, mirror},xshift=-0.5cm,yshift=0pt]
		(-1,-3.25) -- (-1,-4.25) node [midway,right,xshift=-1.2cm] {$\rk 1$};
		\draw [decorate,decoration={brace,amplitude=8pt, mirror},xshift=-0.5cm,yshift=0pt]
		(-1,-4.25) -- (-1,-4.75) node [midway,right,xshift=-1.2cm] {$\rk 1$};
	\end{tikzpicture}
 \end{center}

\end{eg}

As above, the neighbors of an end $a \in \cE$ are exactly its support $N(a) = V(a)$, and the neighbors of a vertex $v \in [n]$ are exactly the ends that it supports, $N(v) = \End(v)$.

We can now define the matroid of a generalized Wilson loop diagram.

\begin{dfn}\label{dfn:RadogWLD} Let $W=(\cP, [n])$ be a generalized Wilson loop diagram. The \emph{matroid of} $W$, denoted $\cM(W)$, is the Rado matroid of $G_W^\cE$ and $\sfM_\cE$. Namely, a subset $V \subseteq [n]$ is independent if and only if, \ba  \forall \; U \subseteq V,\quad c(\End(U))  \geq |U|\;. \label{eq:Radoindep}\ea \end{dfn}

As a concrete example of this matroid structure, we revisit the diagrams with Example \ref{eg:same support and cap}, and show that each diagram corresponds to a different matroid.

\begin{eg}\label{eg:same support and cap matroid}
	The three diagrams in Example \ref{eg:same support and cap} correspond to the following bipartite graphs:
		\begin{tikzpicture}[scale=1.25]
		\labeledvertices{p, p,p }{1,...,6}{1/1, 1/2, 2/3, 2/4, 3/5, 3/6}
		\bipartitelabels{\cE}{[n]}
		%draw braces by hand
		\draw [decorate,decoration={brace,amplitude=8pt, mirror},xshift=-0.5cm,yshift=0pt]
		(-1,-.25) -- (-1,-1.75) node [midway,right,xshift=-1.2cm] {$\rk 2$};
	\end{tikzpicture} $\qquad$
	\begin{tikzpicture}[scale=1.25]
		\labeledvertices{p, p,p,p }{1,...,6}{1/1, 1/2, 2/3, 3/4, 4/5, 4/6}
		\bipartitelabels{\cE}{[n]}
		%draw braces by hand
		\draw [decorate,decoration={brace,amplitude=8pt, mirror},xshift=-0.5cm,yshift=0pt]
		(-1,-.25) -- (-1,-2.25) node [midway,right,xshift=-1.2cm] {$\rk 2$};
\end{tikzpicture} $\qquad$
	\begin{tikzpicture}[scale=1.25]
		\labeledvertices{p, p,p }{1,...,6}{1/1, 1/6, 2/2, 2/3, 3/4, 3/5}
		\bipartitelabels{\cE}{[n]}
		%draw braces by hand
		\draw [decorate,decoration={brace,amplitude=8pt, mirror},xshift=-0.5cm,yshift=0pt]
		(-1,-.25) -- (-1,-1.75) node [midway,right,xshift=-1.2cm] {$\rk 2$};\end{tikzpicture}
	
Notice that in the first graph, the vertex sets $\{1, 2\}, \{3,4\}$ and $\{5,6\}$ are dependent. In the second the sets $\{1, 2\}, $ and $\{5,6\}$ are dependent while $\{3,4\}$ is independent. In the last, the sets $\{1, 2\}, \{3,4\}$ and $\{5,6\}$ are independent, but $\{1, 6\}, \{2,3\}$ and $\{4,5\}$ are dependent.
\end{eg}

\begin{rmk}\label{rmk:matroids on props M(W)}
    If we were to further generalize the generalized Wilson loop diagrams to have a matroid associated to each propagator as in Remark~\ref{rmk:matroids on props def}, then $\sfM_\cE$ would be the direct sum of these matroids instead of being the direct sum of uniform matroids determined by the capacity. $\cM(W)$ would then be defined exactly as above, using this more complicated $\sfM_\cE$. 
\end{rmk}

We next show that in the case where $W$ is an ordinary Wilson loop diagram, the Rado matroid associated to $G^\cE_W$ and $\sfM_\cE$ is the same as the transversal matroid associated to $G^\cP_W$. This is not a triviality since while transversal matroids are a special case of Rado matroids, here the underlying set has changed from the set of propagators to the set of ends, but the special nature of propagators of capacity 1 provides the link between them.

\begin{lem}\label{res:bothmatroidssame}
	Let $W$ be a generalized Wilson loop diagram where all propagators have capacity $1$. Then the condition that a subset $V \subseteq [n]$ is independent in $\cM(W)$ if and only if, \bas  \forall \; U \subseteq V,\quad c(\End(U))  \geq |U|\eas is equivalent to the condition that a subset $V \subseteq [n]$ is independent in $\cM(W)$ if and only if \bas  \forall \; U \subseteq V,\quad c(\Prop(U))  \geq |U|\;.\eas
\end{lem}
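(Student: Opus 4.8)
The plan is to prove something slightly stronger than the stated equivalence: under the capacity-$1$ hypothesis, the two rank quantities agree outright, namely $c(\End(U)) = c(\Prop(U))$ for every $U \subseteq [n]$. Once this pointwise equality is established, the two quantified conditions $\forall U \subseteq V,\ c(\End(U)) \geq |U|$ and $\forall U \subseteq V,\ c(\Prop(U)) \geq |U|$ become literally the same statement, so they carve out the same independent sets and the equivalence follows immediately.

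First I would compute the right-hand quantity. Since $\Prop(U)$ is a set of propagators and total capacity is additive across propagators (Definition~\ref{dfn: cap stuff}), with each $c(p) = 1$ by hypothesis we get $c(\Prop(U)) = \sum_{p \in \Prop(U)} c(p) = |\Prop(U)|$.

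Next I would compute the left-hand quantity using the capacity function on ends, Equation~\eqref{eq:capacityfunction}: $c(\End(U)) = \sum_{p \in \Prop(\End(U))} c(\End(U) \cap \cE(p))$. For each propagator $p$ appearing in this sum, the intersection $\End(U) \cap \cE(p)$ is nonempty precisely because $p \in \Prop(\End(U))$, and hence with $c(p)=1$ the truncation collapses: $c(\End(U) \cap \cE(p)) = \min\{|\End(U) \cap \cE(p)|, 1\} = 1$. Therefore $c(\End(U)) = |\Prop(\End(U))|$.

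The one substantive step is the set identity $\Prop(\End(U)) = \Prop(U)$, which follows directly from the definitions together with $V(p) = \bigcup_{e \in \cE(p)} V(e)$: a propagator $p$ lies in $\Prop(\End(U))$ exactly when it owns an end whose support meets $U$, and since $V(p)$ is the union of the supports of its ends, this occurs exactly when $V(p)$ meets $U$, i.e.\ when $p \in \Prop(U)$. Chaining the three observations gives $c(\End(U)) = |\Prop(\End(U))| = |\Prop(U)| = c(\Prop(U))$, which completes the proof. I do not expect a genuine obstacle here; the only point requiring care is recognizing that capacity $1$ turns the truncation $\min\{\,\cdot\,, c(p)\}$ into an indicator of nonemptiness, so that counting ends per propagator reduces to counting the contributing propagators themselves.
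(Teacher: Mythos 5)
Your proposal is correct and follows essentially the same route as the paper's proof: both exploit that capacity $1$ collapses $\min\{|\End(U)\cap\cE(p)|, c(p)\}$ to $1$, so that $c(\End(U)) = |\Prop(\End(U))| = |\Prop(U)| = c(\Prop(U))$. The only difference is that you make explicit the identity $\Prop(\End(U)) = \Prop(U)$, which the paper leaves implicit in the step ``Setting $E = \End(U)$ completes the proof.''
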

\begin{proof}
	The key to this proof is the fact that all propagators have capacity $1$. Therefore, for any $E \subseteq \cE$, \bas \sum_{p \in \Prop(E)} c(E \cap \cE(p)) = \sum_{p \in \Prop(E)} \min\{|E \cap \cE(p)|, c(p)\} = \sum_{p \in \Prop(E)} 1 \ = |\Prop(E)| \;.\eas Setting $E = \End(U)$ completes the proof.
\end{proof}

In particular, in the case that $W$ is an ordinary Wilson loop diagram, then the Rado matroid of $G^\cE_W$ and $\sfM_\cE$ defines the matroid of the ordinary Wilson loop diagram as given previously. 

\begin{cor}
\label{res:matroidWLDandgWLD}
	For $W$ an ordinary Wilson loop diagram, the transversal matroid of $W$ (Definition~\ref{dfn:Wtransveral}) is the same as the matroid $\cM(W)$ (Definition~\ref{dfn:RadogWLD}).
\end{cor}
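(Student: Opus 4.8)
The plan is to combine the two lemmas that immediately precede the corollary statement. By Corollary~\ref{res:matroidWLDandgWLD}'s hypothesis, $W$ is an ordinary Wilson loop diagram, which by Remark~\ref{rmk:ordinary WLD} means every propagator has capacity $1$ (and exactly two ends on non-consecutive edges, though only the capacity matters here). The matroid $\cM(W)$ is defined (Definition~\ref{dfn:RadogWLD}) as the Rado matroid of $G_W^\cE$ and $\sfM_\cE$, so a subset $V\subseteq[n]$ is independent in $\cM(W)$ exactly when $c(\End(U))\geq |U|$ for all $U\subseteq V$. Since all capacities are $1$, Lemma~\ref{res:bothmatroidssame} applies and rewrites this independence criterion as $c(\Prop(U))\geq |U|$ for all $U\subseteq V$, and again because every propagator has capacity $1$ we have $c(\Prop(U)) = |\Prop(U)|$.

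The remaining step is to identify this with the transversal matroid of $G_W^\cP$ from Definition~\ref{dfn:Wtransveral}. Recall from the discussion following that definition that the neighbors of a vertex set $U$ in $G_W^\cP$ are exactly the propagators supported on $U$, i.e.\ $N(U)=\Prop(U)$. By Hall's matching theorem (as used in Definition~\ref{dfn:transeverse matroid}), $V\subseteq[n]$ is independent in the transversal matroid $\cM_{G_W^\cP}$ if and only if $|N(U)|\geq |U|$ for every $U\subseteq V$, which is precisely the condition $|\Prop(U)|\geq|U|$ we arrived at above. Since the two matroids have the same ground set $[n]$ and the same independent sets, they are equal, proving the corollary.

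I would therefore write the proof as a short chain of equalities of independence conditions: start from the Rado independence condition of Definition~\ref{dfn:RadogWLD}, invoke Lemma~\ref{res:bothmatroidssame} to pass to $c(\Prop(U))$, use capacity $1$ to replace $c(\Prop(U))$ by $|\Prop(U)|=|N(U)|$, and conclude via Hall that this is the transversal matroid. There is essentially no obstacle here; the entire content has been front-loaded into Lemma~\ref{res:bothmatroidssame}, and the only thing to be careful about is correctly matching the neighborhood identity $N(U)=\Prop(U)$ in $G_W^\cP$ with the capacity computation, so that the two independence descriptions are literally the same condition on the same ground set. The corollary is, in effect, the formal restatement of the informal remark made right after Lemma~\ref{res:bothmatroidssame}.
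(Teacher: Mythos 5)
Your proposal is correct and matches the paper's intended argument exactly: the paper leaves Corollary~\ref{res:matroidWLDandgWLD} without an explicit proof precisely because it is the immediate combination of Lemma~\ref{res:bothmatroidssame} (capacity-$1$ reduction of the Rado condition to $|\Prop(U)|\geq |U|$) with the Hall's-theorem description of the transversal matroid and the identity $N(U)=\Prop(U)$ in $G^\cP_W$, which is the same chain you spell out. Your write-up is simply a fully explicit version of what the paper treats as evident.
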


\subsection{Properties of matroids associated to generalized Wilson loop diagrams\label{sec:prop}}

We want to impose some ``niceness'' conditions for generalized Wilson loop diagrams in order to more easily understand the independence structure of the associated matroid. In particular, we give a condition for when the rank of $\cM(W)$ is what would naively be expected from the capacities and a definition for when the support of end of propagators can be reduced without changing the underlying matroid structure. Finally, we give some examples of when different generalized Wilson loop diagrams give the same matroid. The resulting understanding will be useful when we move to considering which generalized Wilson loop diagrams give positroids.

\subsubsection{Minimal and capacity-ranked generalized Wilson loop diagrams \label{sec:capacityrank}} 

A minimal representation of a transversal matroid is one where the matrix has a minimal number of entries or equivalently the bipartite graph has a minimal number of edges \cite{basisshapeloci}.  This notion can then be inherited to obtain a notion of minimality for ordinary Wilson loop diagrams  for which there is a diagrammatic characterization, see Theorem 3.2 of \cite{basisshapeloci}.
Passing to the Rado case, we take the bipartite graph perspective as definitional and so  define a generalized Wilson loop diagram $W$ to be minimal if among all generalized Wilson loop diagrams with the same associated matroid, $W$ has the minimal number of edges in its bipartite graph.  We do not have a diagrammatic characterization of minimality for generalized Wilson loop diagrams, so we will instead work with the weaker condition that no edge of the bipartite graph can be removed without changing the matroid.  This we call \emph{local minimality}.

\begin{dfn} \label{dfn:local-min}   We say that any generalized Wilson loop diagram, $W$ or associated graph $G^\cE_W$ is \emph{locally minimal} if removing one of more edges of $G^\cE_W$ changes the matroid $\cM(W)$. \end{dfn} 

Later in Section \ref{sec:diagrammaticmoves} when we work with diagrammatic moves, we will define our moves to avoid causing non-minimal configurations among the active ends involved in the move, though this will not completely prevent non-minimal diagrams from appearing after our moves as non-minimalities can be caused elsewhere in the diagram. 

Diagrammatically, the situation in Definition~\ref{dfn:local-min} can take a few forms. If an end corresponds to an edge in $\{e_1, \ldots, e_n\}$ then its support has size 2 so it contributes two edges to the bipartite graph. Removing one of these edges corresponds to moving the end from this edge to one of the incident vertices. Generally this will change the matroid, but for particular configurations of other propagators and ends, it may leave the matroid unchanged, see Example~\ref{eg:min1}, Example \ref{eg: half props clear}, and Example \ref{eg:no multi-ends}. In these cases we say the original generalized Wilson loop diagram was \emph{not minimal}, as they do not satisfy Definition \ref{dfn:local-min}.

Removing all the edges attached to an end in $\cE$ in effect removes the end entirely from the diagram -- the vertex associated to this end in $G^\cE_W$ has become an isolated vertex, and thus never contributes to the rank function for $\cM(W)$. Consequently we can equally well take  the bipartite graph with the isolated vertices in $\cE$ removed, without any change to the matroid.  In view of this, whether we consider the bipartite graph with or without these vertices removed will be a matter of what is more convenient in a given situation.

A further special case of non-minimality for a generalized Wilson loop diagram $W=(\cP, [n])$ is the case where \emph{decreasing the capacity of a propagator preserves the matroid} $\cM(W)$. In the special case where we decrease the capacity all the way to 0 this is another way to look at removing a propagator. For an example of this, consider the the discussion in Example \ref{eg:contraction} in Section \ref{sec:contractions} on contractions of these matroids.

\begin{eg}\label{eg:min1}
    Let $W=(\{p,q\}, [5])$ with $p$ the propagator with only end $e_1$ and with $q$ the propagator with ends $e_1$ and $v_4$, as illustrated below. 
    \begin{align*}
    \begin{tikzpicture}[rotate=-67.5, line width=1, scale=1.5]
    	%define number of vertices
    	\def \n {5}
    	%draw circle with nodes
    	\draw circle(1)
    	\foreach \v in{1,...,\n}
    	{(360*\v/\n-360/\n+180:1)circle(.4pt)circle(.8pt)circle(1.2pt)circle(1.4pt) node[anchor=360/\n*\v-360/\n-67.5]{$\v$}};
    	%draw single-ended propagator
    	\draw[decorate,decoration={snake,amplitude=0.8mm}] (-0.914,-0.407) -- ++({atan(0.445)}:.25 cm);
    	%draw 2 ended propagator
    	\draw[decorate,decoration={snake,amplitude=0.8mm}] (-0.669,-0.743) -- (0.809,0.588);
    \end{tikzpicture}
\end{align*}
     The bases of $\cM(W)$ are $\{\{1,2\}, \{1,4\}, \{2,4\} \}$. Removing the edge between the end $e_1$ of $q$ and vertex $1$ in $G^\cE_W$ is equivalent to changing that end of $q$ from $e_1$ to $v_2$, as illustrated below. 
     \begin{align*}
     \begin{tikzpicture}[rotate=-67.5, line width=1, scale=1.5]
     	%define number of vertices
     	\def \n {5}
     	%draw circle with nodes
     	\draw circle(1)
     	\foreach \v in{1,...,\n}
     	{(360*\v/\n-360/\n+180:1)circle(.4pt)circle(.8pt)circle(1.2pt)circle(1.4pt) node[anchor=360/\n*\v-360/\n-67.5]{$\v$}};
     	%draw single-ended propagator
     	\draw[decorate,decoration={snake,amplitude=0.8mm}] (-0.809,-0.588) -- ++({atan(0.727)}:.25 cm);
     	%draw 2 ended propagator
     	\draw[decorate,decoration={snake,amplitude=0.8mm}] (-0.309,-0.951) -- (0.809,0.588);
     \end{tikzpicture}
 \end{align*}
     The bases remain the same and so the matroid remains unchanged.
\end{eg}

\begin{lem}\label{res:non min and co-loops}  
	Let $W$ be a locally minimal (Definition~\ref{dfn:local-min}) generalized Wilson loop diagram. If $x$ is a co-loop of $\cM(W)$, there is a unique end $a$ supported only on the vertex $x$: $\End(x) = a$ and $V(a) = x$.  Furthermore, if $a$ is not a single ended propagator, then the propagatar that $a$ is an end of has capacity greater than $1$ and there is another diagram corresponding to the same matroid formed by replacing the propagator containing $a$ with two propagators: a single-ended propagator with end $a$ and the rest of propagator which originally contained $a$ with capacity decreased by 1.
\end{lem}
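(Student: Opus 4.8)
The plan is to play the two hypotheses against each other: local minimality (Definition~\ref{dfn:local-min}) forbids deleting any edge of $G^\cE_W$ without changing $\cM(W)$, while the co-loop hypothesis (Definition~\ref{dfn:loopsandco-loops}) says $x$ lies in every basis of the Rado matroid $\cM(W)$ (Definition~\ref{dfn:RadogWLD}). I will use the co-loop condition in the equivalent form $x\notin\cl(S)$ for every $S\subseteq[n]\setminus\{x\}$; equivalently, whenever $U\subseteq[n]\setminus\{x\}$ is independent, so is $U\cup\{x\}$. Since $x$ is not a loop we have $c(\End(x))\ge 1$, so $\End(x)\neq\emptyset$. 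The argument then splits into three claims: (1) $|\End(x)|=1$; (2) the unique end $a$ satisfies $V(a)=\{x\}$; and (3) the capacity/replacement statement. The recurring device for (2) and (3) is that deleting an edge of $G^\cE_W$ only lowers capacities, so it changes $\cM(W)$ only if some independent set $V$ becomes dependent, witnessed by a \emph{tight} subset $U\subseteq V$ (meaning $c(\End(U))=|U|$) on which the deleted edge is essential; adjoining $x$ to such a $U$ and checking, via \eqref{eq:capacityfunction}, that $c(\End(U\cup\{x\}))=c(\End(U))=|U|$ then forces $x\in\cl(U)$, contradicting the co-loop property.

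For (1), I would establish a deletion criterion: deleting the edge $a$--$x$ affects $\End(S)$ only for sets $S\ni x$ with $S\cap V(a)=\{x\}$, so it leaves $\cM(W)$ unchanged unless there is a tight independent $U\ni x$ with $U\cap V(a)=\{x\}$ on which $a$ is essential. Tightness forces the ends covered by any $U$-perfect independent matching to be a basis of $\sfM_\cE|_{\End(U)}$, hence to span $\End(U)$; when $a$'s propagator is not over capacity on $\End(U)$ this forces $a$ itself to be covered, and since $U\cap V(a)=\{x\}$ it must be matched to $x$. Thus a non-deletable edge $a$--$x$ produces a tight witness $U_a$ in \emph{every} $U_a$-perfect matching of which $x$ is matched to $a$. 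If $|\End(x)|\ge 2$ and two edges $a_0$--$x$, $a_1$--$x$ were both non-deletable, extending the witnesses $U_0,U_1$ to bases $B_0,B_1$ (each containing $x$) would give maximum independent matchings $M_0,M_1$ with $x\to a_0$ and $x\to a_1$; I would then run an alternating-path exchange along $M_0\triangle M_1$ to reroute $x$ and produce a maximum independent matching avoiding $x$, i.e.\ a basis not containing $x$, contradicting that $x$ is a co-loop. Hence some edge at $x$ is deletable, contradicting local minimality, so $|\End(x)|=1$. \textbf{I expect this matroid--matching exchange to be the main obstacle:} one must verify that the capacity constraints recorded by $\sfM_\cE$ permit the reroute, and controlling that bookkeeping is the crux of the whole lemma.

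For (2), with $\End(x)=\{a\}$ in hand, suppose $V(a)=\{x,y\}$ with $y\neq x$ and delete the edge $a$--$y$, which affects only sets $U$ with $y\in U$, $x\notin U$. If this changed $\cM(W)$ there would be such a $U$, tight, with $a$ essential for its propagator $p$. But $\End(x)=\{a\}$ and $a\in\End(U)$ already (through $y$), so $\End(U\cup\{x\})=\End(U)$ and $c(\End(U\cup\{x\}))=c(\End(U))=|U|<|U\cup\{x\}|$, making $U\cup\{x\}$ dependent, i.e.\ $x\in\cl(U)$, which is impossible for a co-loop. Hence $a$--$y$ is deletable, contradicting local minimality, so $V(a)=\{x\}$.

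For (3), write $\cE(p)=\{a\}\sqcup E'$ with $E'\neq\emptyset$. If $c(p)=1$, then deleting the edges of any $b\in E'$ changes $\cM(W)$ only via a tight $U$ on which $b$ is essential, which forces $\End(U)\cap\cE(p)=\{b\}$ and hence $x\notin U$; there $p$ already contributes its full capacity $1$ through $b$, so adjoining $x$ (hence $a$) leaves $c(\End(U\cup\{x\}))=c(\End(U))=|U|$, again giving $x\in\cl(U)$. So no such $U$ exists, the edges of $b$ are deletable, and local minimality is violated; therefore $c(p)>1$. Finally, let $W'$ replace $p$ by a single-ended propagator with end $a$ and capacity $1$ together with a propagator on the ends $E'$ of capacity $c(p)-1$. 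Comparing capacity functions via \eqref{eq:capacityfunction}, the two agree except on end-sets $E$ with $a\notin E$ and $|E\cap E'|\ge c(p)$, where the new value is one smaller; since $a\in\End(U)$ iff $x\in U$, any discrepancy that could affect independence would come from a tight $U$ with $x\notin U$ and $p$ already saturated by $\End(U)\cap E'$. For such $U$, adjoining $x$ does not change $p$'s already-maximal contribution, so $c(\End(U\cup\{x\}))=c(\End(U))=|U|$ and $x\in\cl(U)$, contradicting the co-loop property. Hence no discrepancy affects independence and $\cM(W')=\cM(W)$, completing the proof.
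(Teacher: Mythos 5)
Your tight-witness/closure arguments for claims (2) and (3) are correct as written (and your capacity-function comparison justifying the splitting in (3) is a nice explicit verification of what the paper states only tersely), but claim (1) contains a genuine gap, and since (2) and (3) presuppose $\End(x)=\{a\}$, the proposal does not stand on its own. The alternating-path exchange you propose along $M_0\triangle M_1$ cannot be carried out: its intended output, a \emph{maximum} independent matching avoiding $x$, does not exist at all, because $x$ is a co-loop and every maximum independent matching covers a basis, hence covers $x$. Moreover, nothing in the exchange uses the non-deletability of the two edges $a_0$--$x$ and $a_1$--$x$, so if it worked it would apply verbatim to any co-loop supporting two ends --- for instance a single vertex carrying two single-ended capacity-$1$ propagators, where the half-path swaps visibly produce only a strictly smaller matching. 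So the reroute must fail, and its failure is not itself a contradiction; the ``bookkeeping'' you flag as the crux is not merely delicate but unresolvable along this route. Your witnesses $U_0,U_1$ only control $U_i$-perfect matchings, and there is no mechanism forcing them to interact (one cannot even assume $U_0\cup U_1$ is independent).

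The paper closes this gap with a different and simpler observation: local minimality plus the co-loop hypothesis forces \emph{every} edge at $x$ into \emph{every} maximum independent matching. Suppose some basis $B$ admits a matching avoiding the edge $e$ between $x$ and an end $a$, and let $G'=G^\cE_W - e$. Then the rank of $\cM(W')$ equals that of $\cM(W)$; since every basis of $\cM(W')$ is a basis of $\cM(W)$, the vertex $x$ remains a co-loop of $\cM(W')$; and for any basis $B''$ of $\cM(W)$, the set $I=B''\setminus x$ has unchanged neighborhood in $G'$ (only an edge at $x$ was removed), so $I$ stays independent in $\cM(W')$, and re-adding the co-loop $x$ shows $B''$ is a basis of $\cM(W')$. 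Hence $\cM(W')=\cM(W)$, contradicting local minimality, so $e$ lies in every maximum independent matching. From this, both uniqueness statements are immediate without any exchange argument: a matching uses at most one edge at $x$, so a second end supported on $x$, or a second supporting vertex of $a$, would give an edge never used in any maximum independent matching, which is therefore deletable --- again contradicting local minimality. Part (3) then also follows directly, since $a$ is always matched: if $c(p)=1$ the remaining ends of $p$ could never be matched and their edges would be deletable, and the split into a single-ended propagator at $a$ plus the remainder with capacity $c(p)-1$ leaves the set of independent matchings unchanged. I recommend replacing your exchange step with this argument; your closure-based treatments of (2) and (3) can then be kept as a legitimate alternative to the paper's matching-based derivations of those parts.
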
 
\begin{proof}
	To see this, first note that if we remove one or more edges of $G^{\cE}_W$ to obtain some $G'$ corresponding to a generalized Wilson loop diagram $W'$, then any independent matching of $G'$ is still an independent matching of $G^{\cE}_W$, so either $\cM(W')$ has strictly lower rank than $\cM(W)$ or the set of bases of $\cM(W')$ is a subset of the set of bases of $\cM(W)$.
	
	Now suppose $W$ is locally minimal (Definition~\ref{dfn:local-min}) and $x$ is a co-loop of $\cM(W)$. Let $a$ be an end supported on $x$ and let $e$ be the edge between $x$ and $a$ in $G^{\cE}_W$. If there is a basis of $\cM(W)$ that has a corresponding independent matching of $G^{\cE}_W$ that does not use $e$, then this would be an independent matching of $G'$ as well, where $G'$ is the bipartite graph without $e$, and hence $\rk(\cM(W))=\rk(\cM(W'))$. Consequently $x$ is in every basis of $\cM(W')$ and so remains a co-loop in $\cM(W')$. Take any basis $B$ of $\cM(W)$ and let $I=B\setminus x$. The neighborhood of $I$ is unchanged between $G^{\cE}_W$ and $G'$ so $I$ remains independent in $\cM(W')$, but adding a co-loop to an independent set gives an independent set, so $\cM(W')$ and $\cM(W)$ have the same bases and hence are the same matroid. This contradicts local minimality condition on $W$, so $e$ must be in every maximum independent matching of $G^{\cE}_W$. Then local minimality condition also implies that $a$ cannot be supported on any other vertex and that no other end can be supported on $x$ since the corresponding edges in the bipartite graph can never be used in any matching, as originally claimed.

    For the proof the final sentence of the statement, note that since $a$ is always matched, the described change can be made without changing the possible independent matchings. Note that, by local minimality, the propagator that $a$ belongs to must have capacity greater than one. Otherwise one could remove the edges in $G_W^\cE$ connecting the other ends to their supporting vertices. 
\end{proof}

A second key notion of niceness for a generalized Wilson loop diagram $W$ is when the rank of the matroid $\mathcal{M}(W)$ is the same as the sum of the capacities of the propagators, as captured in the following definition.
\begin{dfn} \label{dfn:capranked}
	A generalized Wilson loop diagram is \emph{capacity-ranked} if and only if  $\rk\left(\cM(W)\right) = c(\cP)$.
\end{dfn}

Next, we give an example of a diagram that is not capacity-ranked. 

\begin{eg}\label{eg:not-capranked}
	Suppose $W$ is a generalized Wilson loop diagram that has single-ended propagators, $p$, $q$ and $r$, supported on the vertices $v$, the edge defined by $v$ and $v+1$ and the vertex $v+1$ respectively. Since these are single-ended propagators, we equate the propagator with its unique end. Suppose for simplicity that $\Prop(\{v, v+1\}) = \End(\{v, v+1\}) =\{p, q, r\}$. 
	\begin{center}
	\begin{tikzpicture}[rotate=-67.5, line width=1, scale=1.25]
		%define number of vertices
		\def \n {6}
		%draw circle with nodes
		\draw circle(1)
		\foreach \v in{1,...,\n}
		{(360*\v/\n-360/\n+180:1) node[anchor=360/\n*\v-360/\n]{}};
		\draw (360*2/\n-360/\n+180:1)circle(.4pt)circle(.8pt)circle(1.2pt)circle(1.4pt) node[anchor=360/\n*2-360/\n-67.5]{$v$};
		\draw (360*3/\n-360/\n+180:1)circle(.4pt)circle(.8pt)circle(1.2pt)circle(1.4pt) node[anchor=360/\n*3-360/\n-67.5]{$v+1$};
		%draw single-ended propagator
		\draw[decorate,decoration={snake,amplitude=0.8mm}] (-0.500,-0.866) -- ++({atan(1.73)}:.25 cm);
		%draw single-ended propagator
		\draw[decorate,decoration={snake,amplitude=0.8mm}] (0,-1.00000000000000) -- ++({atan(3.14)}:.25 cm);
		%draw single-ended propagator
		\draw[decorate,decoration={snake,amplitude=0.8mm}] (0.500,-0.866) -- ++({atan(-1.73)}:-.25 cm);
	\end{tikzpicture}
 \end{center}
 
    The only basis of $\cM(W)$ is $\{v,v+1\}$ so $\rk (\cM(W))=2$. But $c(\cP)=3$ so $\cM(W)$ is not capacity-ranked and removing any one of $p$, $q$ or $r$ in the diagram does not change $\cM(W)$.
	
\end{eg}

Note that we can have diagrams that are capacity-ranked but not minimal such as Example \ref{eg:min1}. 

In the remainder of this paper, we frequently assume diagrams are capacity-ranked, however we will not make use of an explicit condition for capacity-rankedness. We can give such a condition at the cost of a digression into an auxiliary matroid, and while it is not necessary for the subsequent work, the interested reader can find these results in Lemma \ref{res:capranked} which gives a concrete condition for what it means for a set of propagators to be independent in the auxiliary matroid and Proposition \ref{res:propindepcorrect} which shows that this explicit condition is enough to check that the entire diagram is capacity-ranked. The reader uninterested in this condition can skip directly to the discussion of multiple representations beginning at Section~\ref{sec:manyreps}.

\begin{dfn}\label{dfn:propindep}
For a generalized Wilson loop diagram $W = (\cP, [n])$, let $\widetilde{\cP}$ be the multiset of propagators where each propagator $p$ appears with multiplicity $c(p)$. Define the bipartite graph $G_W^{\widetilde\cP}$ with vertex bipartition $\widetilde{\cP} \sqcup [n]$ and with an edge from any copy of the propagator $p$ to a vertex $v \in [n]$ if and only if $v \in V(p)$. Give $[n]$ the matroid structure of $\cM(W)$. Define $\widetilde M(W)$ to be the Rado matroid associated to $G_W^{\widetilde\cP}$ and $\cM(W)$.
\end{dfn}

Note that we've swapped which half of the bipartite graph plays which role in the Rado matroid between the definition of $\cM(W)$ and $\widetilde{M}(W)$.

Given a set of propagators $P\subseteq \cP$, we will write $\widetilde{P}$ for the subset of $\widetilde \cP$ consisting of all copies of the propagators in $p$. We will abuse notation lightly by saying that $P$ is \emph{independent in $\widetilde{M}(W)$} whenever $\widetilde{P}$ is an independent set in $\widetilde{M}(W)$.

\begin{lem}\label{res:capranked}
	For a generalized Wilson loop diagram $W = (\cP, [n])$, a propagator set $P \subseteq \cP$ is independent in $\widetilde{M}(W)$ if and only if 
	\bas \sum_{p \in P} c(p) \leq \rk_{\cM(W)}\left(V(P)\right) \;.\eas
\end{lem}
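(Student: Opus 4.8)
The plan is to prove the two implications separately, reading ``$P$ is independent in $\widetilde M(W)$'' through Definition~\ref{dfn:Rado}: it means that $\widetilde P$ admits a $\widetilde P$-perfect independent matching into $[n]$, equivalently that for every sub-multiset $\widetilde U\subseteq\widetilde P$ one has $|\widetilde U|\le \rk_{\cM(W)}(N(\widetilde U))$. The first thing I would record is that in $G_W^{\widetilde\cP}$ all copies of a propagator $p$ share the neighbourhood $V(p)$, so that for any $P'\subseteq\cP$ the full collection of copies $\widetilde{P'}$ satisfies $N(\widetilde{P'})=V(P')$ and $|\widetilde{P'}|=c(P')$. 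Hence the binding instances of the Rado inequality are precisely those where $\widetilde U$ consists of all copies of the propagators it meets, and the whole independence condition collapses to: for every $P'\subseteq P$, $c(P')\le \rk_{\cM(W)}(V(P'))$.

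The forward direction is then immediate: if $\widetilde P$ is independent, apply the Rado inequality to $\widetilde U=\widetilde P$ itself to obtain $c(P)=|\widetilde P|\le \rk_{\cM(W)}(N(\widetilde P))=\rk_{\cM(W)}(V(P))$, which is exactly the claimed inequality.

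For the converse I would assume $c(P)\le \rk_{\cM(W)}(V(P))$ and must promote this single inequality to the inequality for every $P'\subseteq P$. I would argue by contrapositive: supposing some $P'\subseteq P$ has $c(P')>\rk_{\cM(W)}(V(P'))$, I want to deduce $c(P)>\rk_{\cM(W)}(V(P))$. Writing $Q=P\setminus P'$ and using the deficiency form of Rado's theorem attached to Definition~\ref{dfn:RadogWLD}, namely $\rk_{\cM(W)}(V)=\min_{U\subseteq V}\bigl(|V\setminus U|+c(\End(U))\bigr)$, I would take a minimiser $U^{\ast}\subseteq V(P')$ for $\rk_{\cM(W)}(V(P'))$ and feed the test set $U^{\ast}\cup\bigl(V(P)\setminus V(P')\bigr)$ into the same formula for $\rk_{\cM(W)}(V(P))$. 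Since the complementary part $V(P)\setminus U^{\ast}\cup\cdots$ reduces to $V(P')\setminus U^{\ast}$, this bounds the rank increment by a capacity increment in $\sfM_\cE$ and reduces everything to the inequality $\rk_{\cM(W)}(V(P))-\rk_{\cM(W)}(V(P'))\le c(Q)$, i.e.\ that enlarging the support from $V(P')$ to $V(P)$ raises the rank by at most the total capacity of the newly added propagators.

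The main obstacle is exactly this last inequality, and it is where the structure of $\cM(W)$ must be used with care rather than naively. The clean heuristic is that, by Corollary~\ref{res:endmatroid}, any independent matching saturates at most $c(q)$ ends of each propagator $q$, so the propagators of $Q$ ought to contribute at most $c(Q)$ to the rank; I would try to make this precise by an augmentation argument comparing a maximum independent matching of $V(P')$ with one of $V(P)$. The delicate point, which I expect to be the crux, is that a vertex of $V(P)\setminus V(P')$ can be matched to an end of a propagator lying \emph{outside} $P$, so the capacity increment $c\bigl(\End(U^{\ast}\cup(V(P)\setminus V(P')))\bigr)-c\bigl(\End(U^{\ast})\bigr)$ can a priori pick up contributions from such external propagators. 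Controlling these is the heart of the matter: I anticipate that one must either choose $U^{\ast}$ to be closed/saturated (so that every external propagator already counted toward $V(P')$ is ``used up'' in $\End(U^{\ast})$), or invoke a global hypothesis on the diagram (such as capacity-rankedness, Definition~\ref{dfn:capranked}, or local minimality, Definition~\ref{dfn:local-min}) to exclude the degenerate configurations --- for instance several single-ended propagators crowded on a single vertex --- in which the rank increment can outrun the added capacity.
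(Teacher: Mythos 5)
Your forward implication and your reduction of Rado independence are exactly right: since all $c(p)$ copies of $p$ in $G_W^{\widetilde\cP}$ share the neighbourhood $V(p)$, any violating sub-multiset $\widetilde U\subseteq\widetilde P$ can be enlarged to all copies of $Q=\Prop(\widetilde U)$ without changing its neighbourhood, so independence of $\widetilde P$ is equivalent to $c(P')\le\rk_{\cM(W)}(V(P'))$ for \emph{every} $P'\subseteq P$, and taking $P'=P$ gives the ``only if'' half. The gap is the converse, which you reduce to the increment bound $\rk_{\cM(W)}(V(P))-\rk_{\cM(W)}(V(P'))\le c(P\setminus P')$ and then leave open, worrying about vertices of $V(P)\setminus V(P')$ being matched to ends of propagators outside $P$. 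That worry cannot be engineered away: the configuration you anticipate actually refutes the unqualified statement. On $[3]$, let $p$ have three ends on $v_1$ with $c(p)=2$ (this conforms to the conventions of Remark~\ref{rmk:drawing conventions}), let $q$ have ends on $v_2$ and $v_3$ with $c(q)=1$, and let $s$ be a single-ended capacity-$1$ propagator on $v_3$. For $P=\{p,q\}$ one checks $c(P)=3=\rk_{\cM(W)}(\{v_1,v_2,v_3\})=\rk_{\cM(W)}(V(P))$ (match $v_1$ to an end of $p$, $v_2$ to $q$, and $v_3$ to the external propagator $s$), yet $\widetilde P$ is dependent: the two copies of $p$ have joint neighbourhood $\{v_1\}$, which has rank $1<2$ in $\cM(W)$. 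So no choice of saturated minimiser $U^{\ast}$, and no appeal to local minimality or capacity-rankedness of $W$ alone, can complete your contrapositive; the ``if'' direction fails for general $W$ and $P$.

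You should also know that the paper's own proof founders at exactly the point you flagged, only less visibly: it quotes Rado's condition as the existence of $\widetilde Q\subseteq\widetilde P$ with $|\widetilde Q|>\rk_{\cM(W)}(V(P))$, silently replacing the correct neighbourhood $N(\widetilde Q)=V(\Prop(\widetilde Q))$ by $V(P)$, which is precisely what makes the subset inequalities disappear and collapses the criterion to the single inequality of the lemma. What survives, and is all that Proposition~\ref{res:propindepcorrect} in fact uses, is the universally quantified form: \emph{all} $P\subseteq\cP$ are independent in $\widetilde M(W)$ if and only if $c(P)\le\rk_{\cM(W)}(V(P))$ holds for all $P\subseteq\cP$. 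That version is immediate from your own reduction, because the subset inequalities attached to one fixed $P$ are themselves instances of the quantified family. So the honest verdict is: your proposal proves everything that is provable here (and everything the paper needs downstream), and the obstruction you identified is a genuine counterexample to the lemma as literally stated rather than a missing trick on your part.
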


\begin{proof}
	Suppose that $P \subset \cP$ is not independent in $\widetilde M(W)$. 
 Note that in $G^{\widetilde \cP}_W$, the neighbors of a set of propagators is the set of vertices that support those propagators $N(P) = V(P)$.
	
	Then by definition of a Rado matroid, Definition \ref{dfn:Rado}, $P \subset \cP$ is not an independent set in $\widetilde M(W)$ if and only if there is a subset $\widetilde{Q} \subseteq \widetilde P$ such that $|\widetilde Q| > \rk_{\cM(W)}\left(V(P)\right)$. For all $p \in \widetilde Q$, write $m(p)$ to indicate the multiplicity of $p$ in $\widetilde Q$, with zero indicating non-inclusion. In other words, writing $|\widetilde Q| = \sum_{p \in P} m(p)$, the set $P \subset \cP$ is independent if and only if for any set of multiplicities $0 \leq m(p) \leq c(p)$, \bas \sum_{p \in P} c(p) \leq \rk_{\cM(W)}\left (V(P)\right) \;. \eas Therefore, we must always have that $\sum_{p \in P} c(p) \leq \rk_{\cM(W)}\left (V(P)\right )$.
\end{proof}

\begin{prop}\label{res:propindepcorrect} Let $W = (\cP, [n])$ be a generalized Wilson loop diagram. $W$ is capacity-ranked if and only if all $P\subseteq \cP$ are independent in $\widetilde{M}(W)$.
\end{prop}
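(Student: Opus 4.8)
The plan is to observe that \emph{both} sides of the claimed equivalence are assertions that some matroid rank equals $c(\cP)$, and to bridge them through the single identity $\rk(\widetilde M(W)) = \rk(\cM(W))$. The ground set of $\widetilde M(W)$ is $\widetilde\cP$, whose size is $\sum_{p\in\cP} c(p) = c(\cP)$. Hence, by downward closure of independence, ``all $P\subseteq\cP$ are independent in $\widetilde M(W)$'' is equivalent to ``$\widetilde\cP$ is independent in $\widetilde M(W)$'', which is in turn equivalent to $\rk(\widetilde M(W)) = |\widetilde\cP| = c(\cP)$. On the other hand, $W$ being capacity-ranked means by definition $\rk(\cM(W)) = c(\cP)$. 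So once I establish $\rk(\widetilde M(W)) = \rk(\cM(W))$, the two conditions become literally the same equation and the proposition follows immediately.

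The substantive content is therefore the rank identity, which I would prove by two matching arguments. For $\rk(\widetilde M(W)) \le \rk(\cM(W))$: take an independent set $\widetilde Q$ of $\widetilde M(W)$; by Rado's theorem it admits a $\widetilde Q$-perfect independent matching in $G_W^{\widetilde\cP}$, and since $[n]$ carries the matroid $\cM(W)$ in this Rado construction, the covered vertices form a set of size $|\widetilde Q|$ that is independent in $\cM(W)$, forcing $|\widetilde Q| \le \rk(\cM(W))$. For the reverse inequality, I would start from an independent set $V\subseteq[n]$ of $\cM(W)$ of maximum size. Unfolding the Rado definition of $\cM(W)$, there is a $V$-perfect independent matching in $G_W^\cE$ sending each $v\in V$ to an end $a_v$ supported on $v$ with $\{a_v : v\in V\}$ independent in $\sfM_\cE$; by Corollary~\ref{res:endmatroid} this independence says exactly that each propagator $p$ receives at most $c(p)$ of the ends $a_v$. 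Writing $p_v$ for the propagator owning $a_v$ (so that $v\in V(p_v)$) and assigning to each $v$ a distinct copy of $p_v$ in $\widetilde\cP$, I obtain an independent matching in $G_W^{\widetilde\cP}$ covering all of $V$; its covered vertex set $V$ is independent in $\cM(W)$, so it certifies an independent set of size $|V| = \rk(\cM(W))$ in $\widetilde M(W)$, giving $\rk(\widetilde M(W)) \ge \rk(\cM(W))$.

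I expect the reverse inequality to be the main obstacle, because it requires translating the end-level certificate for $\cM(W)$ into a propagator-copy matching for $\widetilde M(W)$. The point that makes this translation work is precisely the uniform structure recorded in Corollary~\ref{res:endmatroid}: independence in $\sfM_\cE$ caps the number of matched ends of each propagator $p$ at $c(p)$, which is exactly the number of copies of $p$ available in $\widetilde\cP$, so the copies never run out. As a cross-check, one could instead combine Lemma~\ref{res:capranked} (which identifies ``all $P$ independent in $\widetilde M(W)$'' with the condition $c(P)\le\rk_{\cM(W)}(V(P))$ for all $P$) with the min-max Rado rank formula, but I find the direct matching identity more transparent and self-contained.
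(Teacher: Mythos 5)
Your proof is correct, and it takes a genuinely different route from the paper's. The paper argues by contraposition in both directions: if some $P\subseteq\cP$ is dependent in $\widetilde{M}(W)$, it uses the rank condition of Lemma~\ref{res:capranked} to get $\rk_{\cM(W)}(V(P)) < c(P)$, extends a maximal independent subset of $V(P)$ to a basis $V$ of $\cM(W)$, and derives a counting contradiction from $|V\setminus V(P)| > c(\cP)-c(P) \geq c(\End(V\setminus V(P)))$ should $|V|$ equal $c(\cP)$; the converse direction again routes through Lemma~\ref{res:capranked} applied to $P = \cP$. You instead reduce both sides of the equivalence to the single equation ``rank $= c(\cP)$'' and prove the clean identity $\rk(\widetilde{M}(W)) = \rk(\cM(W))$ by two explicit matching translations. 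Your reductions are sound: downward closure makes ``all $P\subseteq\cP$ independent'' equivalent to ``$\widetilde{\cP}$ independent,'' and $|\widetilde{\cP}| = \sum_{p\in\cP} c(p) = c(\cP)$. In the identity itself, the inequality $\rk(\widetilde{M}(W)) \leq \rk(\cM(W))$ is essentially free because the $Y$-side matroid of $G_W^{\widetilde\cP}$ is $\cM(W)$ itself, so an injective independent matching covers an independent subset of $[n]$ of the same size; and for the reverse inequality you correctly isolate the crucial point, namely that independence in $\sfM_\cE = \oplus_{p\in\cP} U^{c(p)}_{|\cE(p)|}$ (Corollary~\ref{res:endmatroid}) caps each propagator's matched ends at $c(p)$, which is exactly the number of copies of $p$ available in $\widetilde{\cP}$, while $v\in V(a_v)\subseteq V(p_v)$ guarantees the needed edges in $G_W^{\widetilde\cP}$, and the covered set $V$ is a basis of $\cM(W)$, so the resulting matching is independent. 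What your approach buys is a self-contained argument and a slightly stronger intermediate fact (the unconditional rank identity between the two Rado matroids); what the paper's approach buys is the quantitative per-subset criterion $c(P)\leq\rk_{\cM(W)}(V(P))$ from Lemma~\ref{res:capranked}, which the paper reuses in the discussion immediately following the proposition to characterize failures of capacity-rankedness.
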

\begin{proof} 
	First observe that $\rk(\cM(W)) \leq c(\cP)$. To see this note that by definition of a Rado matroid, Definition \ref{dfn:Rado}, $V\subseteq [n]$ is an independent in $\cM(W)$ if and only if $\forall U \subseteq V$, $|U| \leq c(\End(U))$, in which case $\rk(V) = |V|$. The largest possible value of $c(\End(V))$ is $c(\cP)$ so $\rk(\cM(W)) \leq c(\cP)$.
	
	Suppose that $P\subseteq \cP$ is not independent in $\widetilde M(W)$. Then, there is some subset $\widetilde Q \subseteq \widetilde P$ where each $p \in P$ has multiplicity $0 \leq m(p) \leq c(p)$ such that $\rk_{\cM(W)}(V(P)) < \sum_{p\in P} m(p) \leq c(P)$. 

	Take a subset of $V(P)$ which is independent and of maximal rank in $\cM(W)$ and extend it to a basis $V$ of $\cM(W)$. Suppose now for a contradiction that $V$ has size $c(\cP)$. By construction $|V \cap V(P)| = \rk_{\cM(W)}(V(P))$ and so using the previous paragraph \bas |V \setminus V(P)| \geq c(\cP) - \rk_{\cM(W)}(V(P)) > c(\cP) - c(P) \;.\eas  However, the set of propagators supported by $V \setminus V(P)$ is a subset of the complement of $P$ therefore, $|V \setminus V(P)| > c(\cP) - c(P) \geq c(\Prop(V \setminus V(P))) \geq c( \End(V \setminus V(P)))$. This contradicts $V$ being independent in $\cM(W)$. Therefore, the basis $V$ of $\cM(W)$ cannot be of size $c(\cP)$ and hence $\rk (\cM(W)) <c(\cP)$. 
	
	In the other direction, if $\rk (\cM(W)) < c(\cP)$ then $\rk_{\cM(W)}(V(\cP)) < c(\cP)$. Then, by Lemma \ref{res:capranked}, $\cP$ is not independent in $\widetilde M(W)$.
	
\end{proof}

Using the two previous results we see that $W=(\cP, [n])$ is not capacity-ranked if and only if some $P\subseteq \cP$ is dependent in $\widetilde{M}(W)$ which happens if and only if there are some $0\leq m(p)\leq c(p)$ with $\rk_{\cM(W)}(V(P)) < \sum_{p\in P}m(p)$. Since capacity is additive on propagators this last sum is less than or equal to $c(P)$. Conversely if there is some $P\subseteq \cP$ with $\rk_{\cM(W)}(V(P)) < c(P)$ then $m(p)$ can be selected so as to witness that $P$ is dependent in $\widetilde{M}(W)$ and hence $W$ is not capacity-ranked.

\subsubsection{Generalized Wilson loop diagrams representing the same matroid \label{sec:manyreps}}
While there is a unique Rado matroid associated to each generalized Wilson loop diagram, there is not a unique generalized Wilson loop diagram associated to each matroid of the form $\cM(W)$. There are several examples of this phenomenon that deserve special attention. We address these in this section, and establish some conventions for the diagrams and calculations considered in this paper. We have also seen some examples in Section \ref{sec:capacityrank} in the discussion of minimality. Non-uniqueness of representations for ordinary Wilson loop diagrams was studied in \cite{Wilsonloop, generalcombinatoricsI} and is exactly characterized in the admissible case.

The following example shows that if a generalized Wilson loop diagram has a propagator with the same number of ends as its capacity, it can be replaced by the same number of single-ended propagators.

\begin{eg} \label{eg:cap equal end num}
Consider the situation when a generalized Wilson loop diagram $W$ has a propagator $p$ with capacity equal to its number of ends. Then the diagram formed by replacing $p$ with the set of single-ended propagators corresponding to the ends of $p$, each necessarily with capacity 1, gives rise to the same matroid. This is because the propagator $p$ corresponds to the submatroid $U_{c(p)}^{c(p)}$ of $\cE$. Since $U_{c(p)}^{c(p)} = \oplus_{i= 1}^{c(p)} U_{1}^{1}$, we may replace $p$ with $c(p)$ single-ended propagators with the same ends. For instance, in the diagram below, since a uniform matroid of rank $4$ on $4$ elements is the same as the sum of four rank 1 uniform matroids on one element each, the two generalized Wilson loop diagrams on the left and right correspond to the same matroid. 

	\begin{tikzpicture}[rotate=-67.5, line width=1, scale=2,baseline=0] 
		%define number of vertices
		\def \n {9}
		%draw circle with nodes
		\draw circle(1)
		\foreach \v in{1,...,\n}
		{(360*\v/\n-360/\n+180:1)circle(.4pt)circle(.8pt)circle(1.2pt)circle(1.4pt) node[anchor=360/\n*\v-360/\n-67.5]{$\v$}};
		%draw 2 ended propagator
		\draw[decorate,decoration={snake,amplitude=0.8mm}] (-0.940,-0.342) -- (0.174,-0.985);
		%draw multi-edge propagator
		\foreach \x/\y in {0.766/-0.643,0.993/-0.116,0.058/0.998,-0.940/0.342}{\draw[decorate,decoration={snake,amplitude=0.8mm}] (0.219,0.145) -- (\x,\y);}
		%label multi-edge propagator
		\draw (0.219,0.145) node[shift = {(.15,0)}] {\small4};
		%draw 2 ended propagator
		\draw[decorate,decoration={snake,amplitude=0.8mm}] (0.993,0.116) -- (0.287,0.958);
		%draw single-ended propagator
		\draw[decorate,decoration={snake,amplitude=0.8mm}] (-0.500,0.866) -- ++({atan(-1.73)}:.25 cm);
	\end{tikzpicture} 
	\begin{tikzpicture}[scale=1.5,baseline=-3.5cm]
		\labeledvertices{ , ,,, , , ,,  }{1,...,9}{1/1, 1/9, 2/4, 2/5, 3/5, 3/6, 4/7, 4/8, 5/1, 5/2, 6/3,6/4,7/5, 7/6, 8/7, 8/8,9/8,9/9}
		\bipartitelabels{\cE}{[n]}
		%draw braces by hand
		\draw [decorate,decoration={brace,amplitude=8pt, mirror},xshift=-0.5cm,yshift=0pt]
		(-1,-.25) -- (-1,-2.25) node [midway,right,xshift=-1.2cm] {$\rk 4$};
		\draw [decorate,decoration={brace,amplitude=8pt, mirror},xshift=-0.5cm,yshift=0pt]
		(-1,-2.25) -- (-1,-3.25) node [midway,right,xshift=-1.2cm] {$\rk 1$};
		\draw [decorate,decoration={brace,amplitude=8pt, mirror},xshift=-0.5cm,yshift=0pt]
		(-1,-3.25) -- (-1,-4.25) node [midway,right,xshift=-1.2cm] {$\rk 1$};
		\draw [decorate,decoration={brace,amplitude=8pt, mirror},xshift=-0.5cm,yshift=0pt]
		(-1,-4.25) -- (-1,-4.75) node [midway,right,xshift=-1.2cm] {$\rk 1$};
	\end{tikzpicture} 
	\begin{tikzpicture}[rotate=-67.5, line width=1, scale=2,baseline=0]
		%define number of vertices
		\def \n {9}
		%draw circle with nodes
		\draw circle(1)
		\foreach \v in{1,...,\n}
		{(360*\v/\n-360/\n+180:1)circle(.4pt)circle(.8pt)circle(1.2pt)circle(1.4pt) node[anchor=360/\n*\v-360/\n-67.5]{$\v$}};
		%draw 2 ended propagator
		\draw[decorate,decoration={snake,amplitude=0.8mm}] (-0.940,-0.342) -- (0.174,-0.985);
		%draw single-ended propagator
		\draw[decorate,decoration={snake,amplitude=0.8mm}] (0.766,-0.643) -- ++({atan(-0.839)}:-.25 cm);
		%draw single-ended propagator
		\draw[decorate,decoration={snake,amplitude=0.8mm}] (0.993,-0.116) -- ++({atan(-0.117)}:-.25 cm);
		%draw 2 ended propagator
		\draw[decorate,decoration={snake,amplitude=0.8mm}] (0.993,0.116) -- (0.287,0.958);
		%draw single-ended propagator
		\draw[decorate,decoration={snake,amplitude=0.8mm}] (0.058,0.998) -- ++({atan(17.2)}:-.25 cm);
		%draw single-ended propagator
		\draw[decorate,decoration={snake,amplitude=0.8mm}] (-0.500,0.866) -- ++({atan(-1.73)}:.25 cm);
		%draw single-ended propagator
		\draw[decorate,decoration={snake,amplitude=0.8mm}] (-0.940,0.342) -- ++({atan(-0.364)}:.25 cm);
	\end{tikzpicture}

\end{eg}

Another useful equivalence that we use for the computations in Section \ref{sec:diagrammaticmoves} concerns single-ended propagators. Namely, if a single-ended propagator ends on an edge, $e_i$, an equivalent diagram has only the single-ended propagator on that edge, while all other ends that used to be on that edge are moved to one of the adjacent vertices. If the single-ended propagator ends on a vertex $v_i$, then all other ends on that vertex may be dropped and all remaining ends supported on that vertex may be pushed to their other support vertex without affecting the matroid structure.  We saw one special case of this in Lemma~\ref{res:non min and co-loops}.

\begin{eg} \label{eg: half props clear}
    Let $W = (\cP, [n])$ be a generalized Wilson loop diagram with single-ended propagators. Let $G_W^\cE$ be the associated bipartite graph.  Note that the single-ended propagators are co-loops in the matroid structure in the matroid $\sfM_\cE$ on the end set $\cE$ in $G_W^\cE$. This induces a special structure on these propagators in the Rado matroid $M(W)$.
    Let $p \in \cP$ be a single-ended propagator.
    
    In $G_W^\cE$, the single-ended propagator $p$ is connected to the set $V(p) \subseteq [n]$, where $V(p)$ is either a single vertex or a pair of adjacent vertices. 
    Define the set of ends that share support with $p$: \bas E_p = \{q \in \cP | V(q) = V(p)\}\;. \eas  Define $G_{W'}^\cE$ to be any graph derived from $G_W^\cE$ by removing one edge incident to $a$ (and an element of $V(p)$) from each end $a$ in $E_p \setminus p$. (Note, this removes all edges incident to $a$ if $|V(p)| = 1$.) 
    We claim that $G_W^\cE$ and every such $G_{W'}^\cE$ define the same matroid.

    Use $N_W$ and $N_{W'}$ for the set of neighbors in $G^\cE_W$ and $G^\cE_{W'}$ respectively. Whether $V(p) = \{i, i+1\}$, or $V(p) = \{i\}$, for any $U \subseteq [n]$ such that $|V(p) \cap U| = 0$ , $N_W(U) = N_{W'}(U)$ and $\rk_{\cE'}(N_W(U)) = \rk_{cE}(N_{W'}(U))$, $U$ is independent in $M(W')$ if and only if $U$ is independent in $M(W')$.
    
    If $|V(p)\cap U| = 2$ we are in the case where $|V(p)| =2$ and $V(p) \subseteq U$. In this case, $N_W(U) = N_{W'}(U)$. Since the rank function on the ends doesn't change, such a $U$ is independent in $M(W')$ if and only if $U$ is independent in $M(W')$. 
    
    Finally if  $|V(p) \cap U| = 1$, write $V(p)\cap U =\{i\}$. Then by the argument for when $|V(p) \cap U| = 0$, $U \setminus i$ is independent in $M(W')$ if and only if it is independent in $M(W')$. Moreover, $p$ is not a neighbor of $U \setminus i$ in either $G_{W'}^\cE$ or $G_{W}^\cE$. Since $p$ is a loop in $\cE$ and $\cE'$, adding $p$ to the neighbors of $U\setminus i$ in either $W$ or $W'$ increases the rank of both sets. Therefore, $U$ is independent in $\cM(W)$ or $\cM(W')$ if and only if $U \setminus i$ is independent in $\cM(W)$ or $\cM(W')$, and $U \setminus i$ is independent in $\cM(W)$ if and only if it is independent in $\cM(W')$. Therefore, $U$ is independent in $M(W')$ if and only if $U$ is independent in $M(W')$. 

\end{eg}

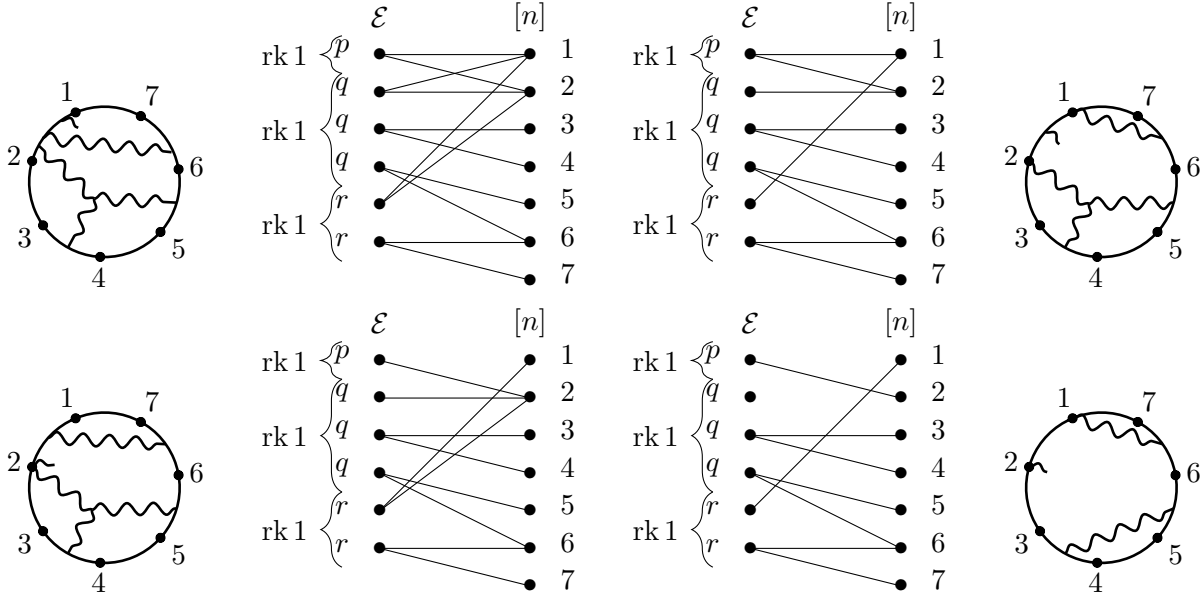
\begin{figure}
\begin{tikzpicture}[rotate=-67.5, line width=1, scale=1]
	%define number of vertices
	\def \n {7}
	%draw circle with nodes
	\draw circle(1)
	\foreach \v in{1,...,\n}
	{(360*\v/\n-360/\n+180:1)circle(.4pt)circle(.8pt)circle(1.2pt)circle(1.4pt) node[anchor=360/\n*\v-360/\n-67.5]{$\v$}};
	%draw single-ended propagator
	\draw[decorate,decoration={snake,amplitude=0.8mm}] (-0.975,-0.223) -- ++({atan(1)}:.25 cm);
	%draw multi-edge propagator
	\foreach \x/\y in {-0.782/-0.623,0.623/-0.782,0.623/0.782}{\draw[decorate,decoration={snake,amplitude=0.8mm}] (0.155,-0.208) -- (\x,\y);}
	%draw 2 ended propagator
	\draw[decorate,decoration={snake,amplitude=0.8mm}] (-0.855,-0.545) -- (-0.023,0.975);   
\end{tikzpicture}  \quad 
\begin{tikzpicture}[scale=1]
		\labeledvertices{ p, q, q, q,r, r }{1,...,7}{1/1, 1/2, 2/1, 2/2, 3/3, 3/4, 4/5, 4/6, 5/1, 5/2, 6/6, 6/7}
		\bipartitelabels{\cE}{[n]}
		%draw braces by hand
		\draw [decorate,decoration={brace,amplitude=8pt, mirror},xshift=-0.5cm,yshift=0pt]
		(-1,-.25) -- (-1,-.75) node [midway,right,xshift=-1.2cm] {$\rk 1$};
		\draw [decorate,decoration={brace,amplitude=8pt, mirror},xshift=-0.5cm,yshift=0pt]
		(-1,-.75) -- (-1,-2.25) node [midway,right,xshift=-1.2cm] {$\rk 1$};
		\draw [decorate,decoration={brace,amplitude=8pt, mirror},xshift=-0.5cm,yshift=0pt]
		(-1,-2.25) -- (-1,-3.25) node [midway,right,xshift=-1.2cm] {$\rk 1$};
	\end{tikzpicture}  \quad 
\begin{tikzpicture}[scale=1]
		\labeledvertices{ p, q, q, q,r, r }{1,...,7}{1/1, 1/2, 2/2, 3/3, 3/4, 4/5, 4/6, 5/1, 6/6, 6/7}
		\bipartitelabels{\cE}{[n]}
		%draw braces by hand
		\draw [decorate,decoration={brace,amplitude=8pt, mirror},xshift=-0.5cm,yshift=0pt]
		(-1,-.25) -- (-1,-.75) node [midway,right,xshift=-1.2cm] {$\rk 1$};
		\draw [decorate,decoration={brace,amplitude=8pt, mirror},xshift=-0.5cm,yshift=0pt]
		(-1,-.75) -- (-1,-2.25) node [midway,right,xshift=-1.2cm] {$\rk 1$};
		\draw [decorate,decoration={brace,amplitude=8pt, mirror},xshift=-0.5cm,yshift=0pt]
		(-1,-2.25) -- (-1,-3.25) node [midway,right,xshift=-1.2cm] {$\rk 1$};
	\end{tikzpicture} \quad
\begin{tikzpicture}[rotate=-67.5, line width=1, scale=1]
%define number of vertices
 \def \n {7}
%draw circle with nodes
 \draw circle(1)
 \foreach \v in{1,...,\n}
 {(360*\v/\n-360/\n+180:1)circle(.4pt)circle(.8pt)circle(1.2pt)circle(1.4pt) node[anchor=360/\n*\v-360/\n-67.5]{$\v$}};
%draw single-ended propagator
 \draw[decorate,decoration={snake,amplitude=0.8mm}] (-0.901,-0.434) -- ++({atan(0.482)}:.25 cm);
%draw multi-edge propagator
 \foreach \x/\y in {-0.623/-0.782,0.623/-0.782,0.623/0.782}{\draw[decorate,decoration={snake,amplitude=0.8mm}] (0.208,-0.261) -- (\x,\y);}
%draw 2 ended propagator
 \draw[decorate,decoration={snake,amplitude=0.8mm}] (-1.00000000000000,0) -- (-0.223,0.975);
\end{tikzpicture} \\
\begin{tikzpicture}[rotate=-67.5, line width=1, scale=1]
	%define number of vertices
	\def \n {7}
	%draw circle with nodes
	\draw circle(1)
	\foreach \v in{1,...,\n}
	{(360*\v/\n-360/\n+180:1)circle(.4pt)circle(.8pt)circle(1.2pt)circle(1.4pt) node[anchor=360/\n*\v-360/\n-67.5]{$\v$}};
	%draw single-ended propagator
	\draw[decorate,decoration={snake,amplitude=0.8mm}] (-0.623,-0.782) -- ++({atan(3.26)}:.3 cm);
	%draw multi-edge propagator
	\foreach \x/\y in {-0.623/-0.782,0.623/-0.782,0.623/0.782}{\draw[decorate,decoration={snake,amplitude=0.8mm}] (0.208,-0.261) -- (\x,\y);}
	%draw 2 ended propagator
	\draw[decorate,decoration={snake,amplitude=0.8mm}] (-0.901,-0.434) -- (-0.223,0.975);   
\end{tikzpicture} \quad 
\begin{tikzpicture}[scale=1]
		\labeledvertices{ p, q, q, q,r, r }{1,...,7}{1/2, 2/2, 3/3, 3/4, 4/5, 4/6, 5/1, 5/2, 6/6, 6/7}
		\bipartitelabels{\cE}{[n]}
		%draw braces by hand
		\draw [decorate,decoration={brace,amplitude=8pt, mirror},xshift=-0.5cm,yshift=0pt]
		(-1,-.25) -- (-1,-.75) node [midway,right,xshift=-1.2cm] {$\rk 1$};
		\draw [decorate,decoration={brace,amplitude=8pt, mirror},xshift=-0.5cm,yshift=0pt]
		(-1,-.75) -- (-1,-2.25) node [midway,right,xshift=-1.2cm] {$\rk 1$};
		\draw [decorate,decoration={brace,amplitude=8pt, mirror},xshift=-0.5cm,yshift=0pt]
		(-1,-2.25) -- (-1,-3.25) node [midway,right,xshift=-1.2cm] {$\rk 1$};
	\end{tikzpicture}  \quad 
\begin{tikzpicture}[scale=1]
		\labeledvertices{ p, q, q, q,r, r }{1,...,7}{1/2, 3/3, 3/4, 4/5, 4/6, 5/1, 6/6, 6/7}
		\bipartitelabels{\cE}{[n]}
		%draw braces by hand
		\draw [decorate,decoration={brace,amplitude=8pt, mirror},xshift=-0.5cm,yshift=0pt]
		(-1,-.25) -- (-1,-.75) node [midway,right,xshift=-1.2cm] {$\rk 1$};
		\draw [decorate,decoration={brace,amplitude=8pt, mirror},xshift=-0.5cm,yshift=0pt]
		(-1,-.75) -- (-1,-2.25) node [midway,right,xshift=-1.2cm] {$\rk 1$};
		\draw [decorate,decoration={brace,amplitude=8pt, mirror},xshift=-0.5cm,yshift=0pt]
		(-1,-2.25) -- (-1,-3.25) node [midway,right,xshift=-1.2cm] {$\rk 1$};
	\end{tikzpicture} \quad
\begin{tikzpicture}[rotate=-67.5, line width=1, scale=1]
%define number of vertices
 \def \n {7}
%draw circle with nodes
 \draw circle(1)
 \foreach \v in{1,...,\n}
 {(360*\v/\n-360/\n+180:1)circle(.4pt)circle(.8pt)circle(1.2pt)circle(1.4pt) node[anchor=360/\n*\v-360/\n-67.5]{$\v$}};
%draw single-ended propagator
 \draw[decorate,decoration={snake,amplitude=0.8mm}] (-0.623,-0.782) -- ++({atan(1.26)}:.25 cm);
%draw 2 ended propagator
 \draw[decorate,decoration={snake,amplitude=0.8mm}] (0.623,-0.782) -- (0.623,0.782);     
%draw 2 ended propagator
 \draw[decorate,decoration={snake,amplitude=0.8mm}] (-1.00000000000000,0) -- (-0.223,0.975);
\end{tikzpicture}

\caption{As described in Example \ref{eg: half props clear}, if a single-ended propagator has support $V(p) = \{i, i+1\}$, one may change the support of any ends sharing the support to either be $i$ or $i+1$ without changing the associated Rado matroid. If the single-ended propagator has support $i$, then one may remove $i$ from any end with $i$ in its support without changing the associated Rado matroid.} \label{fig:half props clear}
\end{figure}

\begin{rmk}\label{rmk:drawing conventions} 
	Because of the previous two examples, unless otherwise specified, we use the convention that a propagator with multiple ends always has capacity strictly less than its number of ends and that if a single-ended propagator is on an edge then no other end in $\cE$ has support that has $V(p)$ as a subset.
\end{rmk}

Another useful equivalence of representations considers the case when a propagator is a multiset  of elements of $\{v_1, e_1, \ldots, v_n, e_n\}$ that is not a set. In this case, we may replace multiple ends on $e_i$ with a single end on $v_i$ and another on $v_{i+1}$ and we may replace the multiple ends on $v_i$ with a single end on $v_i$. It is still useful to allow multiple ends on a given edge or vertex in our formalism, partly because our boundary moves of Section \ref{sec:diagrammaticmoves} can generate them and partly because resolving multiple ends on a propagator can result in crossings depending on the surrounding propagators.

\begin{eg} \label{eg:no multi-ends}  Specifically, let $W=(\cP, [n])$ be a capacity-ranked generalized Wilson loop diagram with a propagators $p$ with multiple ends on either an edge $e_i$ or a vertex $v_i$, and $G_W^\cE$ the corresponding bipartite graph. Define $W'$ to be the generalized Wilson loop diagram obtained by replacing propagator $p$ with propagator $p'$ where if $p$ has multiple ends on $v_i$ then $p'$ is $p$ except with only a single end on $v_i$, and if $p$ has multiple ends on $e_i$ then $p'$ is obtained from $p$ by removing the ends on $e_i$ and replacing them with one end on $v_i$ and one end on $v_{i+1}$. The capacity of $p'$ is unchanged from $p$.

The equivalence between $W$ and $W'$ can be seen by considering the matchings on the corresponding bipartite graphs.  At most one end can be matched to a vertex, so if $p$ has multiple ends on $v_i$ then at most one of them is ever used in any matching, so no change to matroid is caused by moving to $p'$.  Likewise at most two ends can be matched to the supporting vertices of an edge, so if $p$ has multiple ends on $e_i$ then at most two of them are ever used in any matching, so no change to the matroid is caused by moving to $p'$.

This also justifies that the capacity of $p'$ can remain unchanged from $p$, since $W$ was capacity-ranked, capacity-many ends must get matched in every maximal independent matching, and so this remains true for $p'$.

\end{eg}

Note that in this situation, if there are other propagators in $W$ that have ends on $e_i$, this move may lead to a diagram with crossing propagators. We will discuss this further in Section \ref{sec:gWLDtopositroid}.

\subsubsection{Restrictions of generalized Wilson loop diagrams}\label{sec:restrictions}

It is diagrammatically very natural to restrict the underlying set $[n]$ of a generalized Wilson loop diagram. Fortunately, this corresponds to a restriction operation on the associated the matroid and to a restriction on one side of the bipartite graph. For $V \subseteq [n]$ denote the induced subgraph on $V\sqcup N(V)$ by $G^\cE_W |_V$. 

Note that restricting an ordinary Wilson loop diagram will often result in generalized Wilson loop diagrams that are not ordinary Wilson loop diagrams since propagators may get restricted to have fewer than two ends or fewer than four supporting vertices.

Additionally, note that in the restrictions, $V$ is still a cyclically ordered set, with its ordering induced from $[n]$. The matroid structure on the vertices $N(V)$ in $G^\cE_W|_V$ is the restricted matroid $(\sf{M}_\cE)|_{N(V)}$. We define capacities on the restricted graph in the following way.

\begin{dfn} \label{dfn:restrictedcap}
	Let $p \in \cP$ be a propagator of a generalized Wilson loop diagram $W = (\cP, [n])$. For $V \subseteq [n]$, define $c_V(p)$ to be the smaller of its capacity and the number of ends the corresponding propagator has in $W|_V$: \bas c_V(p)  = \min \{c(p), |\{\End(V) \cap \cE(p)\}| \} \;.\eas
\end{dfn}

The precise relationship between the Rado matroid of a generalized Wilson loop diagram and its restriction is given as follows:

\begin{lem} \label{res:restrictedmatroid}
For $W = (\cP, [n])$ a generalized Wilson loop diagram, and $V \subset [n]$, the Rado matroid defined by $G^\cE_W|_V$ (with the matroid structure on $V$ given by taking $c_V$ as a rank function) is $\cM(W)|_V$.
\end{lem}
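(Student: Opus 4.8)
The plan is to prove equality of the two matroids by showing they share the ground set $V$ and have identical independent sets. Both sides are matroids on $V$ by construction: the restriction $\cM(W)|_V$ has ground set $V$ by Definition~\ref{dfn:matroid restriction}, and the Rado matroid of $G^\cE_W|_V$ has ground set $V$ since $V$ is the $X$-side of the induced bipartite graph. So it suffices to match independence of an arbitrary $U \subseteq V$.

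First I would unwind the right-hand side. By Definition~\ref{dfn:matroid restriction}, $U \subseteq V$ is independent in $\cM(W)|_V$ if and only if $U$ is independent in $\cM(W)$, which by Definition~\ref{dfn:RadogWLD} means $c(\End(U')) \geq |U'|$ for all $U' \subseteq U$. Here $\End(U')$ is the neighborhood $N(U')$ of $U'$ in $G^\cE_W$, and $c(\End(U')) = \rk_{\sfM_\cE}(\End(U'))$ since $c$ is the rank function of $\sfM_\cE$ (Corollary~\ref{res:endmatroid}). Then I would unwind the left-hand side: by Definition~\ref{dfn:Rado}, $U$ is independent in the Rado matroid of $G^\cE_W|_V$ with ground-set matroid $(\sfM_\cE)|_{\End(V)}$ if and only if $\rk_{(\sfM_\cE)|_{\End(V)}}(N_V(U')) \geq |U'|$ for all $U' \subseteq U$, where $N_V$ is the neighborhood in the induced subgraph. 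Two observations reconcile these conditions. First, for $U' \subseteq V$ the induced subgraph retains every edge from $U'$ into $N(V) = \End(V)$, and all neighbors of $U'$ in $G^\cE_W$ already lie in $\End(V)$, so $N_V(U') = N(U') = \End(U')$. Second, since $\End(U') \subseteq \End(V)$, restriction preserves the rank of this set, giving $\rk_{(\sfM_\cE)|_{\End(V)}}(\End(U')) = \rk_{\sfM_\cE}(\End(U')) = c(\End(U'))$. Combining the two, the independence conditions coincide for every $U \subseteq V$, so the matroids are equal.

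A small compatibility check folds into the argument: the lemma phrases the ground-set matroid on $N(V)$ via the rank function $c_V$ of Definition~\ref{dfn:restrictedcap}, whereas the proof uses $(\sfM_\cE)|_{\End(V)}$, so I would verify these agree. This follows because restriction commutes with direct sums and the restriction of each uniform summand $U_{|\cE(p)|}^{c(p)}$ to its ends lying in $\End(V)$ is the uniform matroid $U_{|\End(V)\cap\cE(p)|}^{\min\{c(p),\,|\End(V)\cap\cE(p)|\}} = U^{c_V(p)}$; summing over $p$ recovers precisely the matroid with rank function $c_V$. I do not anticipate a genuine obstacle: the statement is essentially the compatibility of the Rado construction with matroid restriction. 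The one point requiring care, and the hinge of the argument, is the second observation above — that restricting the end matroid leaves unchanged the rank of a neighborhood set $\End(U')$ already contained in $\End(V)$ — which is immediate from the definition of restriction but is what makes the two independence criteria line up.
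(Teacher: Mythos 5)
Your proposal is correct and follows essentially the same route as the paper: the paper's (much terser) proof likewise observes that independence in $\cM(W)|_V$ coincides, by the definition of matroid restriction, with the Rado independence condition on the induced subgraph $G^\cE_W|_V$. Your write-up simply makes explicit the details the paper leaves implicit — that neighborhoods of subsets of $V$ are unchanged in the induced subgraph, that restriction of $\sfM_\cE$ preserves ranks of sets contained in $\End(V)$, and that $(\sfM_\cE)|_{\End(V)}$ is exactly the direct sum of uniform matroids with ranks $c_V(p)$ — all of which are verified correctly.
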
	
\begin{proof}
	By definition of the matroid restriction, a set $I$ is independent in $\cM(W)|_V$ if and only if it can be written as $I = J \cap V$ for some independent set $J$ of $\cM(W)$. This is exactly the independence structure of the Rado matroid of $G^\cE_W|_V$ induced by this restriction of the graph.
\end{proof}

We may also define a Wilson loop-diagrammatic restriction to the set $V$, and denote it $W|_V$. If $p \in \cP$ has an end $a \in \cE(p)$, such that $V(a) \cap V =  \emptyset$, the corresponding propagator in $W|_V$ does not contain the end $a$. If $a$ has support intersecting, but not contained in $V$, then $a$ must be an edge $e_i$ and the intersection of its support with $V$ is either $v_i$ or $v_{i+1}$. In this case the corresponding propagator in $W|_V$ will have an end corresponding to $a$, but this end will be $v_i$ or $v_{i+1}$, whichever is in $v$. Note that this means that the propagators of $W|_V$, $\cP|_V$, need not be a subset of the propagators, $\cP$, of $W$. We may also split propagators which newly have number of ends equal to capacity in the manner of Remark~\ref{rmk:drawing conventions}, but this is not necessary. In this way we obtain $\cM(W|_V) = \cM(W)|_V$.

\begin{eg}\label{eg:restriction}
Consider the generalized Wilson loop diagram from Example \ref{eg:working example}. The associated bipartite graph $G_W^\cE$ from Example \ref{eg:workingegbipartite}. In this example, we take the restriction of this diagram to the vertex set $V  = [n] \setminus \{v_4, v_7, v_8\}$. Let $S = \{v_4, v_7, v_8\}$ indicate the set of vertices not in the restriction. The restricted graph, $G^\cE_W|_V$, can be built as follows: remove the vertices $S$ from $[n]$ and all edges adjacent to $S$, then remove any isolated vertices from $\cE$ (see below). 

\begin{align*}
\begin{tikzpicture}[scale=1.5,baseline=-2.75cm]
	\labeledvertices{ p, p,p, q, q, q, r }{1,...,3,5,6,9}{1/1, 1/2, 2/3,  3/9,  4/1, 4/9, 5/5, 5/6, 6/6,7/1, 7/9}
	\bipartitelabels{\cE}{[n]} 
	%draw braces by hand
	\draw [decorate,decoration={brace,amplitude=8pt, mirror},xshift=-0.5cm,yshift=0pt]
	(-1,-.25) -- (-1,-1.75) node [midway,right,xshift=-1.2cm] {$\rk 2$};
	\draw [decorate,decoration={brace,amplitude=8pt, mirror},xshift=-0.5cm,yshift=0pt]
	(-1,-1.75) -- (-1,-3.25) node [midway,right,xshift=-1.2cm] {$\rk 1$};
	\draw [decorate,decoration={brace,amplitude=8pt, mirror},xshift=-0.5cm,yshift=0pt]
	(-1,-3.25) -- (-1,-3.75) node [midway,right,xshift=-1.2cm] {$\rk 1$};
\end{tikzpicture} 
\qquad & \qquad 
\begin{tikzpicture}[rotate=-67.5, line width=1, scale=2,baseline=0]
	%define number of vertices
	\def \n {8}
	%draw circle with nodes
	\draw circle(1)
	\foreach \v in{1,2,3,5,6}
	{(360*\v/\n-360/\n+180:1)circle(.4pt)circle(.8pt)circle(1.2pt)circle(1.4pt) node[anchor=360/\n*\v-360/\n-67.5]{$\v$}};
	\draw (360 - 360/8 + 180:1)circle(.4pt)circle(.8pt)circle(1.2pt)circle(1.4pt) node[anchor=360-360/8]{$8$};
	%draw multi-edge propagator
	\foreach \x/\y in {-0.924/-0.383,0/-1.00000000000000,-0.707/0.707}{\draw[decorate,decoration={snake,amplitude=0.8mm}] (-0.544,-0.225) -- (\x,\y);}
	%label multi-edge propagator
	\draw (-0.544,-0.225) node[shift = {(.15,-.05)}] {\small2};
	%draw single-ended propagator
	\draw[decorate,decoration={snake,amplitude=0.8mm}] (-0.966,0.259) -- ++({atan(-0.268)}:.25 cm);
	%draw multi-edge propagator
	\foreach \x/\y in {0.924/0.383,0.707/0.707,-0.866/0.500}{\draw[decorate,decoration={snake,amplitude=0.8mm}] (0.255,0.530) -- (\x,\y);}
\end{tikzpicture}
\end{align*}

To get the restricted generalized Wilson loop diagram, remove the vertices in $S$ from the diagram, and change the ends such that instead of being supported on $V(e)$, it is supported on $V(e) \setminus S$. If $V(e) \subseteq S$, then this end is removed entirely. In this manner, the propagator $r$ is changed and the propagator $s$ is removed. 
\end{eg}

 \subsubsection{Contractions of generalized Wilson loop diagrams}\label{sec:contractions}

Contraction of the matroid winds up being diagrammatically nice in the case where the restriction to the same set is capacity-ranked, which fortunately, is the case that will be useful later. We will first consider a more general contraction operation on the bipartite graph and then make the connection with the matroids.

\begin{dfn}\label{dfn:contraction graph}
	Let $W$ be a generalized Wilson loop diagram with associated bipartite graph $G_W^\cE$ and matroid structure $\sfM_\cE$ on the ends. For any $S \subseteq [n]$, set $c_S(p)$ to be the capacity of the propagator $p$ restricted to the set $S$ (Definition~\ref{dfn:restrictedcap}). Then define the graph $G_{W/S}^{\cE}$ to be the induced bipartite graph on $([n] \setminus S) \sqcup \cE$ with the matroid structure $\sfM_{\cE \setminus S}$ obtained by giving the set of ends of a propagator $p$ a uniform matroid structure of rank $c(p)- c_S(p)$. 
\end{dfn}

On the level of the diagrams this corresponds to the following steps. First,  remove the vertices $S$ from the outer boundary circle (but otherwise leaving the cyclic order the same). Secondly, for each $a \in \cE$, remove or adjust its support as done in the case of restriction. Namely, if $a$'s support is contained in $S$ remove $a$, if $a$'s support intersects $S$ without being contained, adjust the end of $a$ to be supported on the single supporting vertex not in $S$. If $V(a)$ is disjoint from $S$, do nothing to the end. Finally, change the capacities of each propagator in $W$ to $c(p) - c_S(p)$. Any propagators with $0$ capacity are removed. We call this diagram $W/S$.

\begin{eg} \label{eg:contraction}
Consider again the generalized Wilson loop diagram from Example \ref{eg:working example} and its associated bipartite graph $G_W^\cE$ from Example \ref{eg:workingegbipartite}. In this example, we let $S = \{v_2, v_7\}$. Then the bipartite graph $G_{W/S}^{\cE}$ and generalized Wilson loop diagram $W/S$ are as follows:
\begin{align*}
	G_{W/S}^{\cE} = \begin{tikzpicture}[scale=1,baseline=-2.75cm]
		\labeledvertices{ p, p,p, q, q, q, r, r, s}{1,3,4,5,6,8,9}{1/1, 2/3, 2/4, 3/8, 3/9,  4/1, 4/9, 5/5, 5/6, 6/6,7/1, 7/9, 8/4}
		\bipartitelabels{\cE}{[n]} 
		%draw braces by hand
		\draw [decorate,decoration={brace,amplitude=8pt, mirror},xshift=-0.5cm,yshift=0pt]
		(-1,-.25) -- (-1,-1.75) node [midway,right,xshift=-1.2cm] {$\rk 1$};
		\draw [decorate,decoration={brace,amplitude=8pt, mirror},xshift=-0.5cm,yshift=0pt]
		(-1,-1.75) -- (-1,-3.25) node [midway,right,xshift=-1.2cm] {$\rk 0$};
		\draw [decorate,decoration={brace,amplitude=8pt, mirror},xshift=-0.5cm,yshift=0pt]
		(-1,-3.25) -- (-1,-4.25) node [midway,right,xshift=-1.2cm] {$\rk 1$};
	\end{tikzpicture} 
	\qquad  W/S = \begin{tikzpicture}[rotate=-67.5, line width=1, scale=1.5,baseline = 0cm]
		%define number of vertices
		\def \n {9}
		%draw circle with nodes
		\draw circle(1)
		\foreach \v in{1,3,4,5,6,8,9}
		{(360*\v/\n-360/\n+180:1)circle(.4pt)circle(.8pt)circle(1.2pt)circle(1.4pt) node[anchor=360/\n*\v-360/\n-67.5]{$\v$}};
		%draw multi-edge propagator
		\foreach \x/\y in {-1.00000000000000/0,0.174/-0.985,-0.500/0.866}{\draw[decorate,decoration={snake,amplitude=0.8mm}] (-0.6,-0.140) -- (\x,\y);}
		%draw 2 ended propagator
		\draw[decorate,decoration={snake,amplitude=0.8mm}] (0.500,-0.866) -- (-0.850,0.542);
	\end{tikzpicture} 
\end{align*} Note that since the propagators $q$ and $s$ have their capacities reduced to $0$, they do not appear in the diagram $W/S$. 
\end{eg}
Write $\cM(W/S)$ to indicate the Rado matroid associated to the diagram $W/S$ and the bipartitite graph $G_{W/S}^{\cE}$. Then we have the following result. 

\begin{lem}\label{res:contraction matroid}
	Let $W$ be a capacity-ranked generalized Wilson loop diagram. The equality $\cM(W/S) = \cM(W)/S$ holds if and only if $c_S(\cP) = \rk_{\cM(W)}(S)$.
\end{lem}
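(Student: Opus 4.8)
The plan is to show that $\cM(W/S)$ and $\cM(W)/S$ share the ground set $[n]\setminus S$ and then to compare their independent sets. The tools I would use throughout are the Rado rank formula $\rk_{\cM(W)}(A) = \min_{U\subseteq A}\big(|A\setminus U| + c(\End(U))\big)$, the matching characterization of independence from Definition~\ref{dfn:Rado}, and the basic contraction identity $\rk_{\cM(W)/S}(T) = \rk_{\cM(W)}(S\cup T) - \rk_{\cM(W)}(S)$ coming from Definition~\ref{dfn:matroid contraction}. A preliminary observation I would record first is that $\rk_{\cM(W)}(S)\le c(\End(S)) = c_S(\cP)$ always, since taking $U=S$ in the rank formula gives $c(\End(S)) = \sum_p \min\{|\End(S)\cap\cE(p)|, c(p)\} = \sum_p c_S(p) = c_S(\cP)$. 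Thus the hypothesis $c_S(\cP)=\rk_{\cM(W)}(S)$ says exactly that $S$ saturates all the capacity its incident propagators can offer (equivalently, that $W|_S$ is capacity-ranked), and the general inequality $\rk_{\cM(W)}(S)\le c_S(\cP)$ will drive the ``only if'' direction.

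For the ``only if'' direction I would argue purely by rank. If $\cM(W/S)=\cM(W)/S$ then their ranks agree. Since $W$ is capacity-ranked, $\rk_{\cM(W)}([n]) = c(\cP)$, so $\rk(\cM(W)/S) = c(\cP) - \rk_{\cM(W)}(S)$. On the other hand the total capacity available in $W/S$ is $\sum_p\big(c(p)-c_S(p)\big) = c(\cP)-c_S(\cP)$, and the bound $\rk(\cM(W))\le c(\cP)$ (established in the proof of Proposition~\ref{res:propindepcorrect}) applied to $W/S$ gives $\rk(\cM(W/S))\le c(\cP)-c_S(\cP)$. Combining, $c(\cP)-\rk_{\cM(W)}(S)\le c(\cP)-c_S(\cP)$, i.e.\ $c_S(\cP)\le \rk_{\cM(W)}(S)$; together with the preliminary inequality this forces equality.

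For the ``if'' direction I would prove the two inclusions of independent sets separately. The cleaner inclusion is that any $T$ independent in $\cM(W)/S$ is independent in $\cM(W/S)$. Here I would take a maximum independent matching $M$ of $G_W^\cE$ covering $T$ together with a rank-$\rk_{\cM(W)}(S)$ subset $I_S\subseteq S$ (possible since $T$ is independent over $S$). Writing $d_p$ for the number of ends of $p$ that $M$ matches into $S$, one has $d_p\le c_S(p)$ while $\sum_p d_p = |I_S| = \rk_{\cM(W)}(S) = c_S(\cP)= \sum_p c_S(p)$; the hypothesis therefore forces $d_p = c_S(p)$ for every $p$. Consequently $M$ uses at most $c(p)-c_S(p)$ ends of each $p$ on the vertices of $T$, and relabelling these ends as in the passage to $W/S$ produces an independent matching of $G_{W/S}^\cE$ covering $T$, so $T$ is independent in $\cM(W/S)$. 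For the reverse inclusion I would start from an independent matching $M_T$ of $G_{W/S}^\cE$ covering $T$ (using at most $c(p)-c_S(p)$ ends of each $p$) together with an $S$-saturating independent matching $M_S$ using exactly $c_S(p)$ ends of each $p$ (which the hypothesis provides), and try to assemble a single independent matching of $G_W^\cE$ on $S\cup T$ of size $c_S(\cP)+|T|$, which would witness $\rk_{\cM(W)}(S\cup T)=\rk_{\cM(W)}(S)+|T|$ and hence $T$ independent in $\cM(W)/S$.

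I expect this assembly to be the main obstacle. The two matchings may compete for the same end: an end on a boundary edge $e_i=\{v_i,v_{i+1}\}$ with $v_i\in S$ and $v_{i+1}\in T$ is incident to both sides, so $M_S$ may match it into $S$ while $M_T$, after relabelling, matches it to $v_{i+1}$. I would attempt to resolve these collisions by a symmetric-difference/augmenting-path analysis of $M_S\cup M_T$, which decomposes into paths whose endpoints are saturated $S$-vertices, covered $T$-vertices, and singly-used ends; along each path one reroutes so that every contested end is assigned to a single vertex. The per-propagator bookkeeping $d_p + (\text{$T$-usage})\le c_S(p)+\big(c(p)-c_S(p)\big)=c(p)$ keeps the reassembled matching independent, and the equality $c_S(\cP)=\rk_{\cM(W)}(S)$ is what I would invoke to argue that no covered vertex is lost in the rerouting. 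Verifying that the rerouting can always be completed without such a loss — i.e.\ controlling exactly how boundary-edge ends are shared — is the crux, and the step I would spend the most care on.
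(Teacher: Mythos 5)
Your ``only if'' direction and your first inclusion are both correct, and the ``only if'' argument is in fact a cleaner route than the paper's: where the paper, in the case $\rk_{\cM(W)}(S) < c_S(\cP)$, constructs an explicit set (vertices of $B\cap S^c$ matched to an over-used propagator) that is dependent in $\cM(W/S)$ but independent in $\cM(W)/S$, you simply compare ranks, $\rk(\cM(W)/S) = c(\cP) - \rk_{\cM(W)}(S)$ against $\rk(\cM(W/S)) \le c(\cP) - c_S(\cP)$, which yields the conclusion with no construction at all. Your Inclusion A is essentially the paper's argument for the equality case: your forced count $d_p = c_S(p)$ plays the role the paper extracts from capacity-rankedness, and you state it for arbitrary independent sets of the contraction where the paper treats only bases.

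The genuine gap is Inclusion B, which you flag but do not close, and your instinct that the rerouting is the crux is exactly right: it cannot be completed from the per-propagator bookkeeping alone. Take $n=3$, a single propagator $p$ of capacity $2$ with ends $a$ on the edge $e_1$ and $b$ on the vertex $v_3$, and $S=\{1\}$. Then $W$ is capacity-ranked ($\cM(W)$ has bases $\{1,3\}$ and $\{2,3\}$) and $c_S(\cP) = \min\{2,1\} = 1 = \rk_{\cM(W)}(S)$, yet $\{2\}$ is independent in $\cM(W/S)$ (match $2$ to $a$, whose support restricts to $\{2\}$ and whose propagator retains capacity $2-1=1$ in $W/S$), while $2$ is a loop in $\cM(W)/S$, since the only basis of $\cM(W)$ meeting $S$ is $\{1,3\}$. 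Here $M_S=\{a\text{--}1\}$ and $M_T=\{a\text{--}2\}$ collide on the straddling end $a$, your bookkeeping $d_p+n_p = 1+1 \le c(p)=2$ is satisfied, and yet no rerouting exists because neither vertex $1$ nor vertex $2$ has any other neighbor: a covered vertex is necessarily lost. The statement survives only because this configuration is excluded by the paper's standing convention (Remark~\ref{rmk:drawing conventions}): a multi-ended propagator must have capacity strictly less than its number of ends, and after the mandated splitting of Example~\ref{eg:cap equal end num} the end $a$ disappears from $W/S$, making $\{2\}$ a loop there as well. So any correct completion of your symmetric-difference argument must invoke this normalization (or an equivalent structural fact controlling which straddling ends $M_S$ can be made to avoid), which your proposal never does. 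Note also that you cannot simply defer to the paper here: its written proof of the equality case establishes only your Inclusion A, at the level of bases, and is silent on the reverse containment, so closing this step is genuinely required and is where the real content of the ``if'' direction lies.
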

\begin{proof}	
Write $S^c = [n]\setminus S$ for the complement of $S$.

By definition of the rank function and the capacity of the propagators, one always has that $\rk_{\cM(W)}(S) \leq \min\{|S|, c_S(\cP)\}$. In other words, we only need to check when $\rk_{\cM(W)}(S) = c_S(\cP)$ and when $\rk_{\cM(W)}(S) < c_S(\cP)$.

When $\rk_{\cM(W)}(S) = c_S(\cP)$, let $B$ be a basis of $\cM(W)$ with $\rk_{\cM(W)}(S) = | B \cap S|$. By construction, there is a $B$-perfect independent matching in $G_W^\cE$. In particular, there is $B \cap S^c$-perfect independent matching in $G_W^\cE$. By construction of $G_{W/S}^\cE$, this is also a matching on $B \cap S^c$ in $G_{W/S}^\cE$
It remains to check that it is also independent in $G_{W/S}^\cE$.

Since $W$ is capacity-ranked, for any $B$-perfect independent matching in $G_W^\cE$, the set $B \cap S^c$ is matched to exactly $c(p) - c_S(p)$ ends of $p$. Therefore, restricting the matching on $G_W^\cE$ to $G_{W/S}^\cE$, it remains independent. 

Next consider when $\rk_{\cM(W)}(S) < c_S(\cP)$. Again, let $B$ be a basis of $\cM(W)$ with $\rk_{\cM(W)}(S) = | B \cap F|< c_S(\cP)$. Since $W$ is capacity-ranked, there is a propagator $p \in \Prop(F)$ such that any $B$-perfect independent matching in $G_W^\cE$ connects the vertices of $B \cap S^c$ to strictly more than $c(p) - c_S(p)$ ends of $p$. Note that the rank of the ends of $p$ is $c(p) - c_S(p)$ in $G_{W/S}^\cE$:  $\rk_{\cE/S}(\cE(p)) = c(p) - c_S(p)$. Therefore, we have identified a subset of $B \cap S^c$, namely those vertices matched to ends of $p$ where the size of the set is strictly greater than the rank in $\cM(W/S)$. Therefore, this set is a dependent set in $\cM(W/S)$ but an independent set in $\cM(W)/S$. 
\end{proof}

\begin{eg} \label{eg:contraction condition}
In Example \ref{eg:contraction}, note that $\Prop(S)= \{p, q, s\}$ with $c_S(\cP) = 3$. Meanwhile, the set $S$ is independent in $\cM(W)$, therefore $\rk_{\cM(W)}(W) = 2$. In other words, in this example, one does not have $\rk_{\cM(W)}(S) = c_S(\cP)$. Therefore, we may not have that $\cM(W/S) = \cM(W)/S$. Specifically, the matroid $\cM(W)/S$ has rank $3$ while $\cM(W/S)$ has rank $2$. 
\end{eg}

The condition that $c_S(\cP) = \rk_{\cM(W)}(S)$ is particularly useful to us because all cyclic flats of $\cM(W)$ satisfy this condition. We state this as a lemma as we use it in the discussion of positivity of $\cM(W)$ in Section~\ref{sec:positive gWLD}.

\begin{lem}\label{res:cyclic flat prop ranked} 
	Let $W$ be a generalized Wilson loop diagram and $F$ a cyclic flat of $\cM(W)$. Then $c_F(\cP) = \rk_{\cM(W)}(F)$.
\end{lem}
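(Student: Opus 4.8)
The plan is to reduce the claim to the single inequality $\rk_{\cM(W)}(F) \ge c(\End(F))$ and then extract it from the structure of $F$ as a union of circuits. First I would unwind the definitions to check that $c_F(\cP) = c(\End(F))$: by Definition~\ref{dfn:restrictedcap}, $c_F(p) = \min\{c(p), |\End(F)\cap\cE(p)|\} = c(\End(F)\cap\cE(p))$, and summing over $p$ (propagators with no end on $F$ contributing $0$) gives $c_F(\cP) = \sum_{p} c(\End(F)\cap\cE(p)) = c(\End(F))$ by Equation~\eqref{eq:capacityfunction}. The inequality $\rk_{\cM(W)}(F) \le c(\End(F))$ is the easy direction already noted in the proof of Lemma~\ref{res:contraction matroid}: any independent $I \subseteq F$ has an $I$-perfect independent matching whose covered ends form a set of size $|I|$ that is independent in $\sfM_\cE$ and contained in $\End(F)$, so $|I| \le \rk_{\sfM_\cE}(\End(F)) = c(\End(F))$. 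Thus it remains to prove the reverse inequality.

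The key tool I would use is the deficiency (rank) form of Rado's theorem, which upgrades the Hall-type criterion stated before Definition~\ref{dfn:Rado} to the rank formula
\[ \rk_{\cM(W)}(A) = \min_{U \subseteq A}\bigl(|A\setminus U| + c(\End(U))\bigr) \qquad (A \subseteq [n]). \]
This is the standard consequence of Rado's theorem for the rank function of a Rado matroid (see \cite{OxleyMatroidBook}), with $c(\End(U)) = \rk_{\sfM_\cE}(N(U))$ playing the role of the rank of the neighbourhood.

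With the formula in hand, fix a subset $U^*\subseteq F$ achieving the minimum for $A = F$. The crucial observation is that if $U^* \ne F$, then every $v \in F\setminus U^*$ is a co-loop of the restricted matroid $\cM(W)|_F$: since $v \notin U^*$ we may use $U^*$ itself in the rank formula for $F\setminus v$ to get $\rk_{\cM(W)}(F\setminus v) \le |(F\setminus U^*)\setminus v| + c(\End(U^*)) = \rk_{\cM(W)}(F) - 1$, and since rank drops by at most one this is an equality, witnessing that $v$ lies in every basis of $\cM(W)|_F$. Now I invoke that $F$ is cyclic: being a union of circuits, every element of $F$ lies in a circuit contained in $F$, hence in a circuit of $\cM(W)|_F$, so $\cM(W)|_F$ has no co-loops. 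Therefore $U^* = F$, which forces $\rk_{\cM(W)}(F) = |F\setminus F| + c(\End(F)) = c(\End(F)) = c_F(\cP)$, as desired.

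The one step that requires care is the rank formula: the excerpt only records the matching (Hall-type) form of Rado's theorem, so I would either cite its deficiency form directly or derive it by applying the stated criterion to the induced structure. I would also remark that flatness of $F$ is never used in the argument — only that $F$ is a union of circuits — so the identity in fact holds for every cyclic set of $\cM(W)$, and cyclic flats are the special case we need.
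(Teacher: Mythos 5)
Your proof is correct, but it takes a genuinely different route from the paper's. The paper works directly with maximum independent matchings: assuming $\rk_{\cM(W)}(F) < c_F(\cP)$, it produces an uncovered end $a$ of some propagator on a covered vertex $x \in F$, and then runs explicit matching-exchange (augmenting) arguments to show that every $y \in F\setminus\{x\}$ lies in $\cl(U\setminus x)$, whence $\rk_{\cM(W)}(F) = \rk_{\cM(W)}(F\setminus x) + \rk_{\cM(W)}(x)$ and $x$ belongs to no circuit contained in $F$, contradicting cyclicity. You instead invoke the deficiency (rank) form of Rado's theorem, $\rk_{\cM(W)}(A) = \min_{U\subseteq A}\bigl(|A\setminus U| + c(\End(U))\bigr)$, and observe that any element of $F$ outside a minimizing $U^*$ would be a co-loop of $\cM(W)|_F$, which cyclicity forbids, forcing $U^* = F$. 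The pivot is the same in both arguments --- cyclicity is used exactly once, to rule out an element of $F$ avoiding every circuit in $F$ (your co-loop of the restriction is precisely the paper's $x$ with additive rank) --- but you replace the paper's hand-built matching exchanges with a citation to the standard rank formula for induced/Rado matroids, which the paper's background section does not state; it only records the Hall-type criterion of Definition~\ref{dfn:Rado}. That formula is indeed standard (it is the rank-function form of matroid induction, citable from Oxley or derivable from the stated criterion), so flagging it as the one step needing a reference is the right call; granting it, your proof is shorter and more transparent. Your preliminary reductions are also sound: $c_F(\cP) = c(\End(F))$ follows from Definition~\ref{dfn:restrictedcap} and Equation~\eqref{eq:capacityfunction}, and the easy inequality matches the paper's opening observation $\rk_{\cM(W)}(F) \le \min\{|F|, c_F(\cP)\}$ (cf.\ the proof of Lemma~\ref{res:contraction matroid}). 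Finally, your closing remark that only the union-of-circuits property is used, so the identity holds for every cyclic set, applies equally to the paper's proof, whose contradiction likewise never invokes flatness in an essential way.
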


\begin{proof}
	By definition of the rank function and the capacity of the propagators, one always has that $\rk_{\cM(W)}(F) \leq \min\{|F|, c_F(\cP)\}$. Therefore, we only need to check when $\rk_{\cM(W)}(F) < c_F(\cP)$. 
	
        Suppose now that $\rk_{\cM(W)}(F) < c_F(\cP)$. Let $\rk_{\cM(W)}(F) = r$ and let $M$ be a maximal independent matching in $G_W^\cE$ that matches $r$ elements of $F$: $|M \cap F | = r$. Let $U=M\cap F$ be the set of these elements. Since $r< c_F(\cP)$, we know that for any such $M$, there is a propagator $p \in \Prop(F)$ supported by $F$ such that $M$ does not match $c_F(p)$ vertices of $F$ to $c_F(p)$ ends of $p$. Let $a$ be an end of $p$ supported by $F$ that is not covered by $M$: $a \not \in M\cap \cE$; $a \in \End(F)\cap \cE(p)$. Let $x$ be a vertex in $F$ supporting $a$ that is covered by $M$: $x \in V(a) \cap F$ (if no such $x$ existed then $M$ could be extended contradicting maximality). Let $b$ be the end matching to $x$ by $M$.

        Consider now $\cl(U\setminus x)$. Note that since $U$ is independent and has closure $F$ we know that $x\not\in \cl(U\setminus x)$ and $\rk(\cl(U\setminus x))=r-1$.

        Take $y\in V, y\not\in \cl(x)$. If $y\neq x$ then $\{y,x\}$ is an independent set because of the following observations: $N(y)$ is nonempty since $y\not\in \cl(U\setminus x)$ and $F$ is a union of circuits, $|N(x)|>2$ by construction, and furthermore since $a$ is unsaturated by $M$, $\rk_\cE(N(x))\geq 2$.
        Additionally, for such a $y$, since $y\not\in\cl(U\setminus x)$, there must be a neighbor of $y$ which is either $b$ or is a neighbor of $x$ unsaturated by $M$. In either case we can make a new larger independent matching as follows. If $y$ has a neighbor which is a neighbor of $x$ unsaturated in $M$ then without loss of generality this neighbor is $a$ and then matching $a$ and $y$ gives a larger independent matching than $M$. On the other hand if $y$ has $b$ as a neighbor, then we can modify $M$ by matching $b$ to $y$ and $a$ to $x$ again giving a larger independent matching than $M$. In both cases we contradict the fact that $M$ is a maximum independent matching.

        Therefore, $x$ is the only element of $F$ which is not in $\cl(U\setminus x)$ and hence \bas \rk_{\cM(W)}(F) = \rk_{\cM(W)}(F\setminus x) + \rk_{\cM(W)}(x) \eas which shows that $x$ is not an element of any circuit contained in $F$ since $x$ is not itself a loop. Therefore, $F$ is not a cyclic flat.
\end{proof}

\section{Positroids and generalized Wilson loop diagrams\label{sec:gWLDtopositroid}}

We are now ready to consider which capacity-ranked generalized Wilson loop diagrams (i.e.\ those satisfying $\rk(\cM(W)) = c(\cP)$) correspond to positroids. Towards this end we give an algorithm for associating a Grassmann necklace to each generalized Wilson loop diagram. This algorithm is a direct generalization of the process of passing from an admissible ordinary Wilson loop diagram to a Grassmann necklace in the case where there are no crossing propagators, see Algorithm 2 \cite{reversingOh}. Then we give a diagrammatic classification of which generalized Wilson loop diagrams give rise to positroids. 

\subsection{Grassmann necklaces of Rado matroids \label{sec:GN of Rado}}

First we show that we can get the Grassmann necklace of the matroid $\cM(W)$ for any generalized Wilson loop diagram $W=(\cP, [n])$ by consistently choosing the next unsaturated element of $[n]$ that is connected to an unsaturated vertex of $\cE$. The key to showing this works is Lemma \ref{res:greedyGN}.

This algorithm will build each set $I_i$ in the Grassmann necklace by, for each $i$, incrementing $v\in [n]$ starting from $i$ and working cyclically.

\begin{algorithm}\label{algo:gengwld}
	Let $W=(\cP, [n])$ be a generalized Wilson loop diagram with associated matroid $\cM(W)$ and bipartite graph $G_W^\cE$. We define a sequence $I(W) = \{I_1, \ldots, I_n\}$ of $\rk(\cM(W))$-tuples of $[n]$ as follows:

    For each $i$:
	\begin{enumerate}
		\item To construct the element $I_i$, start at the vertex $i$ in $W$. Initialize the following parameters:
		\begin{enumerate}
			\item Set $I_i = \emptyset$.
			\item Set $v = i$. 
		\end{enumerate}  
		\item \label{step 2}Consider $G^\cE_W$. 
		\begin{enumerate} 
			\item If there exists an independent matching in $G^\cE_W$ such that $I_i\cup v$ is saturated then set $I_i := I_i \cup v$.
			\item If $v +1 = i$, return $I_i$. Otherwise, set $v = v+1$ (with indices interpreted cyclically as always).
			\item Repeat step \ref{step 2}.
		\end{enumerate}
	\end{enumerate}
\end{algorithm}

\begin{prop}\label{res:algogenwldgivesGN}
    The $I(W)$ constructed by Algorithm~\ref{algo:gengwld} is the Grassmann necklace of $\cM(W)$.
\end{prop}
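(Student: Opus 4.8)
The plan is to prove, for each fixed $i \in [n]$, that the set $I_i$ returned by Algorithm~\ref{algo:gengwld} coincides with the $i$-th entry of the Grassmann necklace of $\cM(W)$. Recall from the discussion following Definition~\ref{dfn:Grassmannnecklace} that this entry is the lexicographically minimal basis of $\cM(W)$ in the order $<_i$, and that Lemma~\ref{res:greedyGN} characterizes its elements greedily: $I_i^{(j+1)}$ is the first element of $[n]$ in the $<_i$ order that is independent of $\{I_i^{(1)}, \dots, I_i^{(j)}\}$. So it suffices to show that the algorithm is exactly the matroid greedy procedure for $\cM(W)$ relative to $<_i$, and then to invoke Lemma~\ref{res:greedyGN}.

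The first step, and the conceptual heart of the argument, is to reinterpret the inclusion test of Step~\ref{step 2}(a) purely in terms of the matroid $\cM(W)$. By Definition~\ref{dfn:saturation_closure}, a vertex on the $[n]$-side is saturated by a matching precisely when it is covered, so the existence of an independent matching of $G_W^\cE$ saturating every element of $I_i \cup v$ is exactly the existence of an $(I_i \cup v)$-perfect independent matching. By Definition~\ref{dfn:Rado}, applied to the Rado matroid $\cM(W)$ of Definition~\ref{dfn:RadogWLD}, this happens if and only if $I_i \cup v$ is independent in $\cM(W)$. Hence the algorithm adjoins $v$ to the accumulated set exactly when doing so preserves independence, as it scans the elements $i, i+1, \dots, i-1$ in the order $<_i$; this is precisely the greedy algorithm.

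The second step is to match the greedy output to the necklace entry by induction on the number of accepted elements, maintaining the invariant that, at the moment the scan has passed $I_i^{(j)}$, the accumulated set is exactly $\{I_i^{(1)}, \dots, I_i^{(j)}\}$. For the base case, $I_i^{(1)}$ is the first non-loop of $\cM(W)$ in $<_i$, which is the first element the scan accepts. For the inductive step, every element $a$ with $I_i^{(j)} <_i a <_i I_i^{(j+1)}$ is rejected, since by Lemma~\ref{res:greedyGN} the element $I_i^{(j+1)}$ is the first one independent of $\{I_i^{(1)}, \dots, I_i^{(j)}\}$, forcing $\{I_i^{(1)}, \dots, I_i^{(j)}\} \cup a$ to be dependent; and $I_i^{(j+1)}$ itself is then accepted. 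This closes the induction and gives $I_i = \{I_i^{(1)}, I_i^{(2)}, \dots\}$. Since $i$ was arbitrary, $I(W)$ is the Grassmann necklace of $\cM(W)$.

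I expect the main obstacle to be one of care rather than depth. Two bookkeeping points must be made airtight: first, the equivalence between ``saturated by an independent matching of $G_W^\cE$'' and ``independent in $\cM(W)$'', which is just the unwinding of Definitions~\ref{dfn:saturation_closure} and~\ref{dfn:Rado} described above; and second, the verification that elements rejected by the scan \emph{before} reaching $I_i^{(j)}$ remain dependent on $\{I_i^{(1)}, \dots, I_i^{(j)}\}$, so that the ``first independent element of all of $[n]$'' in Lemma~\ref{res:greedyGN} genuinely agrees with the next element the scan accepts. This last point follows from monotonicity of the rank function, namely that if $S \cup a$ is dependent then so is $T \cup a$ for any $T \supseteq S$. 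With these settled, the statement is a direct consequence of the Rado-matroid definition together with the greedy characterization of Lemma~\ref{res:greedyGN}.
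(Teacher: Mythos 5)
Your proposal is correct and follows essentially the same route as the paper's proof: both reduce the claim to Lemma~\ref{res:greedyGN} after observing that, by Definition~\ref{dfn:Rado} and the saturation convention of Definition~\ref{dfn:saturation_closure}, the matching test in Step~\ref{step 2} is exactly the test that $I_i \cup v$ is independent in $\cM(W)$, so the algorithm is the matroid greedy procedure in the $<_i$ order. Your write-up is more detailed (the explicit induction and the monotonicity remark make airtight what the paper leaves implicit, including that once $\rk(\cM(W))$ elements are accepted every later vertex is rejected), but there is no substantive difference in approach.
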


\begin{proof}
    Lemma~\ref{res:greedyGN} tells us that we can build the Grassmann necklace element $I_i$ for $\cM(W)$ by repeatedly taking the first element in the $<_i$ order which is independent of what we have so far. Once we reach as many elements in $I_i$ as the rank of the matroid, no more can be taken in this manner, so we are free to stop the algorithm when the index cycles back to $i$, as the algorithm does.
\end{proof}

In the case of a generalized Wilson loop diagram $W$ that both has no crossings and another technical property we define in Definition \ref{dfn:admissiblegWLD}, this algorithm simplifies to a process that can be read directly and efficiently off the diagram in a manner which generalizes the algorithm for admissible ordinary Wilson loop diagrams. It is this simplified version of the algorithm (Algorithm \ref{algo:non-crossgwld}) that we implement in the work discussed in Section~\ref{sec:diagrammaticmoves}. %

Recall from Definitions \ref{dfn:new def of prop order at edge or vertex} and \ref{dfn:global cyclic end order} that if there are no crossing propagators then there is a natural cyclic ordering of the ends of a generalized Wilson loop diagram. When the diagram is drawn with no internal intersecting propagators, this cyclic ordering corresponds to ordering the ends in a counterclockwise manner (Remark~\ref{remark drawing is non-crossing}).

We will need to \emph{mark} ends when they have been used in the algorithm or are no longer available to be used on account of the capacity.

\begin{algorithm}\label{algo:non-crossgwld}
	Let $W=(\cP, [n])$ be a generalized Wilson loop diagram with associated matroid $\cM(W)$. Assume that $W$ has no crossing propagators. We define a sequence $I(W) = \{I_1, \ldots, I_n\}$ of subsets of $[n]$ as follows:

    For each $i$:
	\begin{enumerate}
		\item To construct the element $I_i$, start at the vertex $i$ in $W$. Initialize the following parameters:
		\begin{enumerate}
			\item Set $I_i = \emptyset$. 
			\item Set $v = i$. 
		\end{enumerate}  
		\item \label{loop step of alg2} If $v$ supports an unmarked end,  let $a$ be the smallest such end in the linear ordering given by restricting the cyclic ordering of  Definition \ref{dfn:global cyclic end order} to the edge before $v$, $v$ itself, and the edge after $v$:
		\begin{enumerate}
			\item If there is a single-ended propagator $r$ supported only on the vertex $v$, set $a$ to be $r$. (If there is more than one such $r$, pick one.) 
                \item Set $I_i = I_i \cup v$.
                \item Mark the end $a$.
                \item If any propagator $p$ has $c(p)$ marked ends, mark all ends of $p$.
			\item If $v+1 = i$, return $I_i$ and say that $W$ is not admissible if $|I_i|\neq c(\cP)$. \linebreak Otherwise, repeat step \ref{loop step of alg2} with $v = v+1$. 
		\end{enumerate}	
        Else ($v$ does not support an unmarked end):
		\begin{enumerate}
			\item Do not change $I_i$.
			\item If $v+1 = i$, return $I_i$ and say that $W$ is not admissible if $|I_i| \neq c(\cP)$. \linebreak Otherwise, repeat step \ref{loop step of alg2} with $v = v+1$. 
		\end{enumerate}
		
	\end{enumerate} 
\end{algorithm}

Note that this algorithm only works for diagrams with non-crossing propagators, as otherwise, there is not a canonical ordering that allows one to choose the clockwise-most end, as illustrated in Figure \ref{eg:manyrepscrossing}.

\begin{dfn}\label{dfn:admissiblegWLD}
Let $W=(\cP, [n])$ be a non-crossing generalized Wilson loop diagram. If all the $I_i$ returned by Algorithm~\ref{algo:non-crossgwld} have size $|I_i| = c(\cP)$ for all $i \in [n]$, then $W$ is \emph{admissible}.
\end{dfn}

Note that by this definition, if a generalized Wilson loop diagram is not capacity-ranked, it is not admissible. 

A useful alternate characterization of admissibility is as follows.
\begin{lem}\label{res:admissible local def}
    Let $W=(\cP, [n])$ be a non-crossing generalized Wilson loop diagram. $W$ is admissible if and only if at each step in the construction of each $I_i$ in Algorithm~\ref{algo:non-crossgwld}, any propagator $p$ supported entirely in the cyclic interval $[i, v]$ has had all of its ends marked by step $v$ in the construction of $I_i$.
\end{lem}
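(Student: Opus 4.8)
The plan is to reduce both directions to a single bookkeeping observation about the way Algorithm~\ref{algo:non-crossgwld} marks ends. Fix $i$ and, throughout the construction of $I_i$, call the marking of the chosen end $a$ in the main branch (where $v$ supports an unmarked end and we set $I_i := I_i \cup v$) an \emph{active} marking, as opposed to the ends that get marked in bulk in step (d) when some propagator reaches its capacity. Two facts are immediate from the algorithm. First, each vertex placed in $I_i$ produces exactly one active marking and each active marking marks an end of a single propagator, so writing $a_p$ for the number of active markings of ends of $p$, we have $|I_i| = \sum_{p \in \cP} a_p$. Second, once $p$ has $c(p)$ marked ends all of its ends become marked, so thereafter no end of $p$ can be actively marked; hence $a_p \le c(p)$, and moreover $p$ is fully marked at the end of constructing $I_i$ if and only if $a_p = c(p)$ (checking the cases $|\cE(p)| = c(p)$ and $|\cE(p)| > c(p)$ separately). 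In particular $|I_i| = \sum_p a_p \le c(\cP)$, with equality exactly when every propagator is fully marked. The geometric input I would add is that an end of $p$ can only be actively marked while the sweep is processing a vertex of $V(p)$; consequently, if $V(p) \subseteq [i,v]$, then every active marking of $p$ has already occurred by step $v$, since the vertices processed after step $v$ are exactly those in the complementary cyclic arc, which is disjoint from $V(p)$.

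For the direction that the local condition implies admissibility, I would apply the local condition at the final step $v = i-1$: here $[i,i-1]$ is all of $[n]$, so every propagator is supported in the interval and hence fully marked by the end of the construction, giving $a_p = c(p)$ for all $p$ and thus $|I_i| = c(\cP)$. As this holds for every $i$, $W$ is admissible.

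For the converse I would argue by contraposition. Suppose the local condition fails, so that for some $i$, some step $v$, and some propagator $p$ with $V(p) \subseteq [i,v]$, the propagator $p$ still has an unmarked end after step $v$. Since $p$ is not fully marked it has strictly fewer than $c(p)$ marked ends, hence strictly fewer than $c(p)$ active markings, at step $v$; and by the geometric input no further active marking of $p$ can occur after step $v$. Therefore $a_p < c(p)$ at the end of the construction of $I_i$, giving $|I_i| = \sum_q a_q \le c(\cP) - (c(p) - a_p) < c(\cP)$, so $W$ is not admissible.

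The only step requiring genuine care, and the one I expect to be the main obstacle, is the bookkeeping underlying the equivalence that $p$ is fully marked if and only if $a_p = c(p)$, together with the identity $|I_i| = \sum_p a_p$: one must track precisely when the bulk marking in step (d) fires relative to the active markings, and confirm that the single-ended-propagator tie-breaking rule in step (a) affects only \emph{which} end is marked and never the counts $a_p$. Once these are pinned down, both implications are short.
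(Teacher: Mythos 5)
Your proof is correct and takes essentially the same approach as the paper's: your notion of an \emph{active} marking is exactly what the paper calls a propagator \emph{contributing} an end, and both of your directions (evaluating the local condition at the final step $v = i-1$ for sufficiency, and the contrapositive count showing that a propagator with $V(p)\subseteq[i,v]$ and an unmarked end forces $a_p < c(p)$ and hence $|I_i| < c(\cP)$) mirror the paper's argument. Your explicit bookkeeping separating active from bulk markings and verifying that $p$ is fully marked if and only if $a_p = c(p)$ simply makes precise what the paper asserts tersely ("each propagator can contribute at most capacity-many ends").
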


\begin{proof}
    Say a propagator \emph{contributes} an end to the Grassmann necklace element $I_i$ when that end is the $a$ marked in the first part of Step~\ref{loop step of alg2}.

    Suppose the condition of the lemma holds, i.e.\ any propagator $p$ supported entirely in the cyclic interval $[i, v]$ has had all of its ends marked by step $v$ of the construction of $I_i$. Then for each $I_i$ by the end of the construction of all propagators will have all ends marked. So all propagators will have contributed capacity-many elements to $I_i$ and hence $W$ will be admissible. Suppose the condition does not hold. Then in the construction of some $I_i$ at some step all the supporting vertices of some propagator will have passed without having contributed capacity-many ends. Since no further ends of this propagator can be encountered by the algorithm while constructing $I_i$, this propagator cannot contribute capacity-many ends to the final $I_i$. Each propagator can contribute at most capacity-many ends, so as a whole $I_i$ must have fewer than $c(\cP)$ elements. Hence $W$ is not admissible. 
\end{proof}

Admissibility is a condition that fails when there is a high density of ends that are badly positioned. In that case the second part of the equivalence in Lemma~\ref{res:admissible local def} will fail to hold and hence the diagram will not be admissible. Unfortunately, we do not have a nice non-algorithmic characterization of admissibility, though one special case is that when the diagram as a whole is not capacity-ranked then it is not admissible. The choice of the word admissible was deliberate since while admissibility in ordinary Wilson loop diagrams has a definition of a very different character (as an explicit density condition in part (4) of Definition \ref{dfn:admissibleWLD}), its role in assuring the correct behaviour of the Grassmann necklace algorithm is the same.  In particular, note that an admissible ordinary Wilson loop diagram is also admissible in this more general sense.

\begin{eg}\label{eg:admissiblity}
To understand the difference between the two algorithms presented here, consider the two generalized Wilson loop diagrams and their associated bipartite graphs below. 

\bas	W_1= \begin{tikzpicture}[rotate=-67.5, line width=1, scale=1.5,baseline=0]
	%define number of vertices
	\def \n {6}
	%draw circle with nodes
	\draw circle(1)
	\foreach \v in{1,...,\n}
	{(360*\v/\n-360/\n+180:1)circle(.4pt)circle(.8pt)circle(1.2pt)circle(1.4pt) node[anchor=360/\n*\v-360/\n-67.5]{$\v$}};
	%draw multi-edge propagator
	\foreach \x/\y in {-0.975/-0.2,-0.850/-0.542,-0.750/-0.666,0/1.00000000000000}{\draw[decorate,decoration={snake,amplitude=0.8mm}] (-0.610,-0.052) -- (\x,\y);}
	%label multi-edge propagator
	\draw (-0.610,-0.052) node[shift = {(.15,.20)}] {\small3};
	%draw 2 ended propagator
	\draw[decorate,decoration={snake,amplitude=0.8mm}] (-0.600,-0.8) -- (0.866,-0.500);
\end{tikzpicture} \quad G_{W_1}^\cE = \begin{tikzpicture}[scale=1.5,baseline=-2.625cm]
			\labeledvertices{ p_1, p_2, p_3,p_4,q_1,q_2}{1,...,6}{1/1, 1/2, 2/1, 2/2, 3/1, 3/2, 4/5, 4/6, 5/1, 5/2, 6/3, 6/4}
			\bipartitelabels{\cE}{[n]}
			\draw [decorate,decoration={brace,amplitude=8pt, mirror},xshift=-0.5cm,yshift=0pt]
			(-1,-.25) -- (-1,-2.25) node [midway,right,xshift=-1.2cm] {$\rk 3$};
			\draw [decorate,decoration={brace,amplitude=8pt, mirror},xshift=-0.5cm,yshift=0pt]
			(-1,-2.25) -- (-1,-3.25) node [midway,right,xshift=-1.2cm] {$\rk 1$};
					\end{tikzpicture} \eas 
\bas 	W_2= \begin{tikzpicture}[rotate=-67.5, line width=1, scale=1.5,baseline=0]
	%define number of vertices
	\def \n {6}
	%draw circle with nodes
	\draw circle(1)
	\foreach \v in{1,...,\n}
	{(360*\v/\n-360/\n+180:1)circle(.4pt)circle(.8pt)circle(1.2pt)circle(1.4pt) node[anchor=360/\n*\v-360/\n-67.5]{$\v$}};
	%draw multi-edge propagator
	\foreach \x/\y in {-0.950/-0.35,-0.850/-0.5, -0.600/-0.8, 0.866/-0.500}{\draw[decorate,decoration={snake,amplitude=0.8mm}] (-0.410,-0.52) -- (\x,\y);}
	%label multi-edge propagator
	\draw (-0.410,-0.52) node[shift = {(.15,.10)}] {\small3};
	%draw 2 ended propagator
	\draw[decorate,decoration={snake,amplitude=0.8mm}] (-0.975, -0.2) -- (0.0,1.000);
\end{tikzpicture} \quad G_{W_2}^\cE = \begin{tikzpicture}[scale=1.5,baseline=-2.625cm]
	\labeledvertices{ p_1, p_2, p_3,p_4,q_1,q_2}{1,...,6}{1/1, 1/2, 2/1, 2/2, 3/1, 3/2, 4/3, 4/4, 5/1, 5/2, 6/5, 6/6}
	\bipartitelabels{\cE}{[n]}
	\draw [decorate,decoration={brace,amplitude=8pt, mirror},xshift=-0.5cm,yshift=0pt]
	(-1,-.25) -- (-1,-2.25) node [midway,right,xshift=-1.2cm] {$\rk 3$};
	\draw [decorate,decoration={brace,amplitude=8pt, mirror},xshift=-0.5cm,yshift=0pt]
	(-1,-2.25) -- (-1,-3.25) node [midway,right,xshift=-1.2cm] {$\rk 1$};
\end{tikzpicture}\eas
	Observe that both these diagrams correspond to the same matroid. Indeed, by Example \ref{eg:cap equal end num}, Example \ref{eg: half props clear}, and Example \ref{eg:no multi-ends} both $W_1$ and $W_2$ represent the same matroid as the following diagram.
\bas
 \begin{tikzpicture}[rotate=-67.5, line width=1, scale=1.5]
		%define number of vertices
		\def \n {6}
		%draw circle with nodes
		\draw circle(1)
		\foreach \v in{1,...,\n}
		{(360*\v/\n-360/\n+180:1)circle(.4pt)circle(.8pt)circle(1.2pt)circle(1.4pt) node[anchor=360/\n*\v-360/\n-67.5]{$\v$}};
		%draw single-ended propagator
		\draw[decorate,decoration={snake,amplitude=0.8mm}] (-1.00000000000000,0) -- ++({atan(0)}:.25 cm);
		%draw single-ended propagator
		\draw[decorate,decoration={snake,amplitude=0.8mm}] (-0.500,-0.866) -- ++({atan(1.73)}:.25 cm);
		%draw single-ended propagator
		\draw[decorate,decoration={snake,amplitude=0.8mm}] (0.866,-0.500) -- ++({atan(-0.577)}:-.25 cm);
		%draw single-ended propagator
		\draw[decorate,decoration={snake,amplitude=0.8mm}] (0,1.00000000000000) -- ++({atan(-2.7)}:.25 cm);
	\end{tikzpicture} \eas

In both $W_1$ and $W_2$, the ends $p_1, p_2, p_3, q_1$ are ends on the edge $e_1$. In $W_1$ these ends appear in the linear order $p_1, p_2, p_3, q_1$ on $e_1$. In $W_2$ they appear in the linear order $q_1, p_1, p_2, p_3$ on $e_1$. For both diagrams, Algorithm \ref{algo:gengwld} gives the same Grassmann necklace: \bas \cI(W_1) = \cI(W_2) = \{ 1235, 2351, 3512, 4512, 5123, 6123\}\;. \eas

To understand Algorithm \ref{algo:non-crossgwld}, when calculating $I_1$ for $W_1$, when $v = 1$, the edge $p_1$ is marked. At $v=2$, $p_2$ is marked, at $v = 3$, $q_2$ is marked, then one notes that $q$, which has capacity $c(q) = 1$, has one marked end, and therefore all ends of $q$ are marked. Namely, $q_1$ is also marked. At $v=4$ there are no unmarked ends. At $v=5$, $p_4$ is marked, which means that, with $c(p)= 3$, there are the same number of marked ends as capacity, so all ends of $p$ are marked. Namely, $p_3$ is also marked. At $v = 6$, there are no remaining unmarked ends in the diagram at all, so there are not unmarked ends supported on $v$. Therefore $I_1 = 1235$.

For $W_2$, $q_1$, and thus $q_2$ is marked at $v=1$, $p_1$ is marked at $v=2$, $p_4$ is marked at $v = 3$ and there are no unmarked ends supported on $v = 4,5,$ or $6$. Therefore, $I_1 = 123$. 

This shows that Algorithm \ref{algo:non-crossgwld} produces different results for $I_1$ for $W_1$ and $W_2$. One may further check that the algorithms agree for all other $i$. Therefore, $W_1$ is admissible while $W_2$ is not. 
\end{eg}

We are now ready to prove that Algorithm \ref{algo:non-crossgwld} is equivalent to Algorithm \ref{algo:gengwld} in the case when the diagram contains no crossing propagators and is admissible. 

\begin{lem}\label{res:algo-noncross admissible gives GN}
	When a generalized Wilson loop diagram $W$ has no crossing propagators and is admissible, then Algorithm \ref{algo:non-crossgwld} produces the Grassmann necklace of $\cM(W)$.
\end{lem}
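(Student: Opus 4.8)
The plan is to show that Algorithm \ref{algo:non-crossgwld} produces, for each $i$, exactly the set $I_i$ built by Algorithm \ref{algo:gengwld}; the conclusion then follows from Proposition \ref{res:algogenwldgivesGN}. Fix $i$ and run both algorithms in lockstep over the vertices $v = i, i+1, \ldots, i-1$ in the $<_i$ order. The central object I would track is a greedy independent matching $M$ of $G^\cE_W$: each time a vertex $u$ is added to $I_i$ by Algorithm \ref{algo:non-crossgwld}, record the edge from $u$ to the end $a$ that gets marked. The invariant to maintain is that, after processing each vertex, (i) the two algorithms have built the same partial set $I_i$, and (ii) the ends marked by Algorithm \ref{algo:non-crossgwld} are exactly the saturated ends $\cl_{\sfM_\cE}(M\cap\cE)$ of this matching in the sense of Definition \ref{dfn:saturation_closure}. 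Invariant (ii) is where the step ``if any propagator $p$ has $c(p)$ marked ends, mark all ends of $p$'' is used: by Corollary \ref{res:endmatroid} the matroid $\sfM_\cE$ is a direct sum of the uniform matroids $U_{|\cE(p)|}^{c(p)}$, and in such a summand the closure of a set of ends is the whole end set of $p$ precisely once $c(p)$ of them are chosen, so marking matches taking closure summand by summand.

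Granting the invariant, the per-vertex decisions coincide. For the forward direction, if $v$ supports an unmarked end $a$ then $a\notin\cl(M\cap\cE)$, so $M\cup\{(v,a)\}$ is again an independent matching and now saturates $I_i\cup v$; hence $I_i\cup v$ is independent in $\cM(W)$ by Definition \ref{dfn:Rado} and Algorithm \ref{algo:gengwld} also adds $v$. The matching and its closure then update exactly as the marking prescribes, preserving the invariant (here the prioritization of a single-ended propagator in Step~\ref{loop step of alg2} is the correct greedy choice, since such an end can only ever be matched at $v$; it does not change the resulting closure and so preserves the invariant). The substance of the proof is the reverse direction: when every end on $v$ is marked, i.e.\ $v$ has all its neighbours in $\cl(M\cap\cE)$, I must show $I_i\cup v$ is genuinely dependent, so that Algorithm \ref{algo:gengwld} also skips $v$, rather than merely unsaturated by this particular $M$. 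Indeed, by the lexicographic characterization of Lemma \ref{res:greedyGN}, agreement with the true Grassmann necklace is equivalent to never wrongly skipping a vertex, so this reverse direction cannot be circumvented.

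This reverse direction is where both non-crossing and admissibility are essential, and it is the main obstacle. It is false for general non-crossing diagrams: the greedy matching built from the non-crossing order can waste a propagator's capacity, as in the inadmissible diagram $W_2$ of Example \ref{eg:admissiblity}, where at $v=5$ all ends are marked yet $I_1\cup 5$ is independent. The plan is to prove that for an admissible non-crossing diagram the greedy matching $M$ is in fact a \emph{maximum} independent matching on each cyclic-interval restriction $G^\cE_W|_{[i,v]}$, so that a vertex has all its ends saturated precisely when it lies in $\cl_{\cM(W)}(I_i)$, whence $v\in\cl_{\cM(W)}(I_i)$. I would argue this by contradiction: a strictly larger independent matching $M'$ covering $I_i\cup v$ yields an alternating augmenting walk between $M$ and $M'$ beginning at $v$, and tracing this walk, the non-crossing order of Definition \ref{dfn:new def of prop order at edge or vertex} together with the canonical left-to-right comparison of ends from Lemma \ref{res:non-crossing pairs} forces the rerouted ends to interleave, producing a crossing and a contradiction. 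Admissibility enters through Lemma \ref{res:admissible local def}: it guarantees that every propagator whose support lies in the processed interval $[i,v]$ has already had all of its ends marked, which is exactly what rules out the wasted-capacity phenomenon and keeps the capacity bookkeeping along the walk tight. Concretely, I expect the cleanest finish is to convert the obstruction into an explicit tight set: a subset $U\ni v$ of $I_i\cup v$ assembled from the saturated propagators supporting $v$ together with the $I_i$-vertices greedily matched into them, which by the non-crossing interval structure and admissibility satisfies $c(\End(U)) = |U|-1$, exhibiting $I_i\cup v$ as dependent. Verifying that this $U$ has the claimed deficiency — that non-crossing makes the relevant supports nest into one interval and admissibility makes the capacities sum with deficiency exactly one — is the technical heart of the argument.
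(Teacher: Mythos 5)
Your framing — lockstep induction, the invariant that the marked ends are exactly $\cl(M\cap\cE)$ in $\sfM_\cE$ for the greedy matching $M$, and the forward direction (an unmarked end at $v$ lets you extend $M$, so Algorithm~\ref{algo:gengwld} also adds $v$) — coincides with the paper's proof, and your diagnosis is exactly right: the reverse direction is the substance, it genuinely needs both hypotheses, and your appeal to $W_2$ of Example~\ref{eg:admissiblity} correctly shows it fails without admissibility. But that reverse direction is where your proposal has a real gap: both of your proposed mechanisms are announced rather than executed. The augmenting-walk argument (``tracing this walk \ldots forces the rerouted ends to interleave, producing a crossing'') is never carried out, and it is not clear it works as stated: the walk can pass through propagators whose marked-but-unmatched ends lie back in $[i,v-1]$, where unmarked ends of other, not-yet-finished propagators are interspersed, and the non-crossing order alone yields no interleaving contradiction there — admissibility must enter at a specific point of the walk, and you never say where. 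Likewise the tight set you describe (``the saturated propagators supporting $v$ together with the $I_i$-vertices greedily matched into them'') is not closed under the capacity bookkeeping: the vertices matched into those propagators may support \emph{further} marked propagators whose ends then also lie in $\End(U)$ and contribute extra capacity, so $c(\End(U))=|U|-1$ can fail for this $U$; you explicitly defer exactly this verification as ``the technical heart.''

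The missing ingredient is the localization step the paper supplies. Choose $j$ so that $[j,v]$ is the maximal interval ending at $v$ on which \emph{every} supported end is marked, and take the full interval set $U=(I_i\cap[j,v-1])\cup\{v\}$ — taking all vertices of the interval, rather than only those attached to propagators at $v$, is what closes off the leakage above. The key claim, where both hypotheses do their work, is: if $b$ is a marked-but-unmatched end supported on $[j,v]$, then the \emph{first} matched end $b'$ of the same propagator is also supported on $[j,v-1]$; for otherwise an unmarked end of some other propagator $r$ sits between $b'$ and $b$, and by admissibility (Lemma~\ref{res:admissible local def}) $r$ must have an end marked later in the run, hence lying after $b$, giving four interleaved ends and a crossing. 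Since the greedy matching proceeds in order, all $c(p)$ matched ends of such a propagator then lie in $I_i\cap[j,v-1]$, whence $c(\End(U))=|I_i\cap[j,v-1]|=|U|-1$ and $I_i\cup v$ is dependent, so Algorithm~\ref{algo:gengwld} also skips $v$. Your sketch names all the right ingredients but does not contain this argument; without it (or a completed substitute for the walk argument, which would in effect have to reprove the same localization), the proof is incomplete.
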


\begin{proof}
We prove this result by showing that Algorithm \ref{algo:non-crossgwld} is equivalent to Algorithm \ref{algo:gengwld} under the non-crossing and admissibility conditions. The key to this proof lies in the fact that with these assumptions there is an unmarked end of a propagator supported on a vertex $v$ in Algorithm \ref{algo:non-crossgwld} if and only if there is an independent $I_i \cup v$ matching at that point in the algorithm. 

Suppose that both algorithms start at the vertex $i$. That is, in both cases, one is constructing the Grassmann necklace element $I_i$. 

At each step of both algorithms we have have a matching saturating the $I_i$ that has been constructed so far. In Algorithm~\ref{algo:gengwld} this matching exists by assumption from the previous iteration. In Algorithm~\ref{algo:non-crossgwld} this matching is the one which, from each previous iteration, matches the $a$ of that iteration to the $v$ of that iteration.

We proceed by induction on $v$. For the base case, in both algorithms, $I_i^{(1)}$ is the first vertex in the $<_i$ ordering that supports any ends of $W$, so the algorithms agree on $v=i$ up to $v = I_i^{(1)}$.

Now proceed with the inductive case. 
Note that in Algorithm \ref{algo:non-crossgwld}, at each step, after the end $a$ is marked, the closure of the entire marked set (in $\sfM_\cE$) is also marked. Assume that, for some $i <_i v$, both algorithms match the same subset $I_i = \{I_i^{(1)} \ldots I_i^{(j)}<v \} \subset [n]$ on the steps up to but not including step $v$ and now we are at step $v$. This means that $I_i$ is independent in $\cM(W)$. Let $E$ be the set of marked ends in $W$ by Algorithm \ref{algo:non-crossgwld} when it reaches $v$. 

Suppose that there is an unmarked neighbor of $v$ at step $v$ in Algorithm~\ref{algo:non-crossgwld}. Because the set $E$ is a flat in $M_\cE$, any unmarked neighbor of $v$ is independent of $N(I_i)$ in the matroid $M_\cE$ at that point. Specifically, the end $a$ identified in Algorithm \ref{algo:non-crossgwld} is independent of $N(I_i)$. Therefore, we may define an independent $I_i \cup v$-perfect matching, $M$, in $G_W^\cE$ where each vertex is matched to the end initially marked at that stage of the algorithm. Therefore, since a matching exists on $I_i \cup v$, $v$ is the next element in the Grassmann necklace according to Algorithm \ref{algo:gengwld}.

Now suppose that at step $v$ in Algorithm~\ref{algo:non-crossgwld} there are no unmarked ends supported at $v$. 
There may be some unmarked ends supported on $[i, v-1]$. By admissibility, the propagators with these ends will have to be picked up later in the algorithm. Let $j$ be the first vertex of $[i,v-1]$ which does not support an unmarked end. 

Consider the interval $[j,v]$. By construction all ends supported on $[j,v]$ are marked. Suppose $b$ is an end supported on $[j,v]$ that was marked but not matched. Let $b'$ be the first end of the same propagator as $b$ that was matched in constructing $I_i$. We claim that $b'$ is supported on $[j,v-1]$. Suppose not. Then an unmarked end would lie between $b'$ and $b$, but the propagator containing the unmarked end must, by admissibility, have at least one end that gets marked later in the algorithm and so lies after $b$. This contradicts non-crossingness.

Let $U = (I_i\cap [j,v-1]) \cup \{v\}$. By the observation above, the rank of the ends supported on $U$ is $|I_i\cap [j,v-1]|$ since every propagator with an unmatched end in $[j,v]$ has capacity-many matched ends in $I_i\cap [j,v-1]$. Therefore $U$ is dependent in $\cM(W)$ and so $I_i\cup \{v\}$ is dependent in $\cM(W)$. Thus Algorithm~\ref{algo:gengwld} also does not add $v$ at this step of the algorithm. 

Both cases together show that Algorithm~\ref{algo:non-crossgwld} adds $v$ to $I_i$ if and only if 
Algorithm~\ref{algo:gengwld} adds $v$ to $I_i$. By induction both algorithms generate the same $I_i$.

\end{proof}

\subsection{Positive generalized Wilson loop diagrams \label{sec:positive gWLD}} 
We show that if a capacity-ranked generalized Wilson loop diagram has all non-crossing propagators, then the associated matroid is a positroid. 

In this section, we assume that all propagators with $c(q) = |\End(q)|$ have a single end, which we can do without loss of generality by Example~\ref{eg:cap equal end num}. We also restrict ourselves to capacity-ranked Wilson loop diagrams.

Before we prove results about how the positivity of $\cM(W)$ corresponds to the crossing structure of its positroids, we give a different result about the positivity of $\cM(W)$ in terms of dependent connected flats of $\cM(W)$. We begin with some useful lemmas and definitions.

Recall the definition of direct sum and connectivity of matroids from Definition~\ref{dfn:matroid direct sum}.

\begin{lem} \label{res:disjointprops}
	Let $W = (\cP, [n])$ be a generalized Wilson loop diagram and suppose $S \subseteq [n]$ can be partitioned $S = S_1 \sqcup S_2$ such that the supported propagators are also partitioned, $\Prop(S) = \Prop(S_1) \sqcup \Prop(S_2)$ (either of which may be empty). Then $\cM(W)|_S$ is a disconnected matroid: $\cM(W)|_S =  \cM(W)|_{S_1} \oplus \cM(W)|_{S_2}$. 
\end{lem}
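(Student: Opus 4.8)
The plan is to reduce the statement to the behaviour of Rado matroids on disconnected bipartite graphs, i.e.\ to Lemma~\ref{res:rado disconnected}, by passing through the fact that restricting $\cM(W)$ is computed by restricting the bipartite graph (Lemma~\ref{res:restrictedmatroid}). First I would replace $\cM(W)|_S$ by the Rado matroid of the induced bipartite graph $G^\cE_W|_S$, whose left vertex set is $S$, whose right vertex set is $N(S) = \End(S)$, and whose right-hand matroid structure is the restricted end-matroid $(\sfM_\cE)|_{\End(S)}$. The goal is then to exhibit this restricted graph and matroid as a disjoint union/direct sum along the partition $S = S_1 \sqcup S_2$.

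The crux is to show that $G^\cE_W|_S$ disconnects along the partition, and the one observation that drives everything is that no end can straddle $S_1$ and $S_2$. Indeed, an end $a$ is attached to a unique propagator $p$ and has support $V(a) \subseteq V(p)$; if $V(a)$ met both $S_1$ and $S_2$, then $V(p)$ would meet both, forcing $p \in \Prop(S_1) \cap \Prop(S_2) = \emptyset$, a contradiction. More precisely, if $a$ is an end of $p \in \Prop(S_i)$ with $V(a) \cap S \neq \emptyset$, then $V(a) \cap S \subseteq S_i$ (otherwise $p$ would also lie in the other part). Hence I can partition $\End(S) = Y_1 \sqcup Y_2$, where $Y_i$ consists of those ends in $\End(S)$ belonging to propagators in $\Prop(S_i)$; this is a genuine disjoint partition precisely because $\Prop(S) = \Prop(S_1) \sqcup \Prop(S_2)$. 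By the no-straddling remark every edge of $G^\cE_W|_S$ runs between $S_i$ and $Y_i$, so $G^\cE_W|_S = G_1 \sqcup G_2$ with $V(G_i) = S_i \sqcup Y_i$.

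Next I would verify the two hypotheses of Lemma~\ref{res:rado disconnected}. The graph is disconnected as just shown. For the matroid side, recall from Corollary~\ref{res:endmatroid} that $\sfM_\cE = \bigoplus_{p \in \cP} U_{|\cE(p)|}^{c(p)}$; restricting a direct sum to $\End(S)$ gives the direct sum of the restrictions of the summands, and grouping these summands according to whether $p \in \Prop(S_1)$ or $p \in \Prop(S_2)$ (summands with $p \notin \Prop(S)$ contribute nothing, having no ends in $\End(S)$) yields $(\sfM_\cE)|_{\End(S)} = \sfM_{Y_1} \oplus \sfM_{Y_2}$ with $\sfM_{Y_i} = (\sfM_\cE)|_{Y_i}$. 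Lemma~\ref{res:rado disconnected} then gives $\cM(W)|_S = \cM_{G_1, \sfM_{Y_1}} \oplus \cM_{G_2, \sfM_{Y_2}}$.

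Finally I would identify each summand with $\cM(W)|_{S_i}$, which amounts to checking $Y_i = \End(S_i)$ so that $G_i = G^\cE_W|_{S_i}$ carrying its restricted end-matroid, exactly the data computing $\cM(W)|_{S_i}$ via Lemma~\ref{res:restrictedmatroid}. The inclusion $\End(S_i) \subseteq Y_i$ is immediate since $S_i \subseteq S$, and the reverse inclusion uses the no-straddling observation once more: an end of $p \in \Prop(S_i)$ meeting $S$ meets $S_i$, hence lies in $\End(S_i)$. I expect the main obstacle to be nothing more than this bookkeeping about ends near the partition boundary; once the no-straddling fact is pinned down, the disconnection of the bipartite graph and the two invocations of the restriction and disconnection lemmas are routine.
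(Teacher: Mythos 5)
Your proposal is correct and takes essentially the same route as the paper's proof: identify $\cM(W)|_S$ with the Rado matroid of $G^\cE_W|_S$ via Lemma~\ref{res:restrictedmatroid}, observe that this graph disconnects along $(S_1 \sqcup \End(S_1)) \sqcup (S_2 \sqcup \End(S_2))$ while $\sfM_\cE|_{\End(S)}$ splits as a direct sum because capacity is additive over propagators, and then invoke Lemma~\ref{res:rado disconnected}. The only difference is expository: you make explicit the no-straddling observation for ends and the identification $Y_i = \End(S_i)$, details the paper's terse three-sentence proof leaves implicit.
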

\begin{proof}
The bipartite graph $G^\cE_W|_S$ (as defined in Section~\ref{sec:restrictions}) is disconnected under the vertex partition $(S_1\sqcup N(S_1)) \sqcup (S_2 \sqcup N(S_2))$ by the assumptions on the propagators. The matroid $\sfM_\cE|_{N(S)}$ is then a direct sum of $\sfM_\cE|_{N(S_1)}$ and $\sfM_\cE|_{N(S_2)}$ since the capacity is additive on propagators. Lemma~\ref{res:rado disconnected} then gives the result.

\end{proof}

\begin{rmk}
	Note that Lemma \ref{res:disjointprops} does not generalize to allow for a partition of ends. If the ends are partitioned then the bipartite graph is disconnected in the appropriate way, but the matroid $\sfM_Y$ is not necessarily a direct sum and so $\cM(W)|_S$ will not typically be a direct sum in this circumstance. 
	
	For example, for the diagram in Example \ref{eg:working example}, we may set $S = [9]$ and $S_1 = [8,2]$ and $S_2 = [3,7]$. In this case, the sets $S_1$ and $S_2$ form a partition of $[n]$ and the set $\End(S_1)$ and $\End(S_2)$ form a partition of $\cE(\cP)$. However, notice that $c_{S_1}(p) = 2$, $c_{S_1}(q) = 1$, $c_{S_1}(r) = 1$, $c_{S_1}(s) = 0$ while $c_{S_2}(p) = 1$, $c_{S_2}(q) = 1$, $c_{S_2}(r) = 1$, $c_{S_2}(s) = 1$.
\end{rmk}

In the sequel, we make use of the fact that removing vertices that do not support a propagator is the same as removing loops from the ground set of a matroid, so it doesn't change the independence structure of the remaining vertices, as the next result shows. The generalized Wilson loop diagrams where all vertices support at least one propagator gives us some nice flats to work with and the restriction to such diagrams is crucial for reducing the calculation of Section~\ref{sec:diagrammaticmoves} to a finite computation. 

\begin{lem}\label{res:ignorenon-supporting} 
	Let $W$ be a generalized Wilson loop diagram with a set of vertices, $F$, that do not support any propagators. Then the matroid structure of $W$ is given by $\cM(W) = M(W|_{F^c}) \oplus M(W|_F)$ where $M(W|_F)$ is a matroid of rank $0$.
\end{lem}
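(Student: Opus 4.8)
The plan is to obtain this as a direct consequence of the disconnectedness result Lemma~\ref{res:disjointprops} together with the restriction identity from Section~\ref{sec:restrictions}. First I would record the basic consequence of the hypothesis: since the vertices in $F$ support no propagators, $\Prop(F) = \emptyset$, and hence $\End(F) = \emptyset$ as well. In the bipartite graph $G_W^\cE$ this means every vertex of $F$ is isolated; equivalently, for each $v \in F$ we have $c(\End(\{v\})) = c(\emptyset) = 0 < 1 = |\{v\}|$, so by the independence criterion \eqref{eq:Radoindep} each such $v$ is a loop of $\cM(W)$.

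Next I would apply Lemma~\ref{res:disjointprops} to the partition $[n] = F^c \sqcup F$. This is a legitimate instance of that lemma: because $\Prop(F) = \emptyset$, we have $\Prop([n]) = \Prop(F^c) \sqcup \Prop(F)$ with the second block empty, which the lemma explicitly permits. This immediately gives the direct sum decomposition
\[
\cM(W) = \cM(W)|_{F^c} \oplus \cM(W)|_F .
\]
It then remains only to identify the two summands with the matroids of the diagrammatic restrictions. By Lemma~\ref{res:restrictedmatroid}, together with the observation $\cM(W|_V) = \cM(W)|_V$ established in the discussion of the diagrammatic restriction $W|_V$, we obtain $\cM(W)|_{F^c} = M(W|_{F^c})$ and $\cM(W)|_F = M(W|_F)$, as claimed.

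Finally I would verify that $M(W|_F)$ has rank $0$. Since $\End(F) = \emptyset$, no end has support meeting $F$, so under diagrammatic restriction every end is deleted and $W|_F$ contains no propagators; equivalently, every vertex of $F$ is a loop, so the matroid on $F$ has rank $0$. I do not expect any genuine obstacle here, as the statement is essentially a corollary: the only point requiring care is confirming that the empty-block case of Lemma~\ref{res:disjointprops} applies and that the restriction identity $\cM(W|_V) = \cM(W)|_V$ may be invoked to rewrite the restricted matroids as matroids of the restricted diagrams.
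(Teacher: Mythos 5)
Your proof is correct, but it takes a genuinely more systematic route than the paper's. The paper disposes of the lemma in three sentences: each vertex of $F$ supports no ends, hence is a loop of $\cM(W)$ and lies in no basis, so the bases of $\cM(W)$ coincide with the bases of the restriction to $F^c$ --- which is precisely the claimed direct sum with a rank-$0$ summand, with no appeal to any earlier lemma. You instead derive the splitting from the general disconnectedness machinery: you verify the hypothesis $\Prop([n]) = \Prop(F^c)\sqcup\Prop(F)$ with the second block empty (which Lemma~\ref{res:disjointprops} explicitly permits, and which holds automatically since $\Prop(F)=\emptyset$ forces the intersection $\Prop(F^c)\cap\Prop(F)$ to be empty and every propagator has nonempty support because capacities are positive), invoke Lemma~\ref{res:disjointprops} --- itself resting on Lemma~\ref{res:rado disconnected} --- and then identify the two summands as matroids of the restricted diagrams via Lemma~\ref{res:restrictedmatroid} and the identity $\cM(W|_V)=\cM(W)|_V$, before checking the rank-$0$ claim directly from $\End(F)=\emptyset$. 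Both arguments hinge on the same observation that vertices of $F$ are loops; what your version buys is modularity --- each summand is exhibited explicitly as the matroid of a restricted diagram, and no separate reasoning about bases is needed --- at the cost of invoking the full Rado direct-sum apparatus where a one-line loop-deletion argument suffices. Every step you take is legitimate, so this is a valid alternative proof rather than a gap.
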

\begin{proof}
	Note that if $W$ has any vertices that do not support any propagators, these vertices have rank $0$; they are not contained in any basis of $\cM$. That is, they are loops. Therefore, the set of bases of a generalized Wilson loop diagram with vertices of rank $0$ is the same as the set of bases of the same matroid restricted to the vertices of rank $1$. 
\end{proof}

Specifically, excluding vertices which do not support any propagators does not change the associated Grassmann necklace, and therefore does not change whether or not the the matroid is a positroid. Namely, a generalized Wilson loop diagram gives rise to a positroid if and only if the same generalized Wilson loop diagram with vertices not supporting any propagators removed gives rise to a positroid.

The following set of flats and their properties play a central role in our main result.

\begin{dfn} \label{dfn:condepflat} 
	Let $W  = (\cP, n)$ be a capacity-ranked generalized Wilson loop diagram such that every vertex supports a propagator. Let \bas \cF= \{ F \subset [n] : \rk(F)<\rk(\cM(W)); \; F \textrm{ connected dependent flat of } \cM(W) \} \eas be the set of connected dependent flats of $\cM(W)$ which are not of full rank.
\end{dfn}

\begin{rmk}\label{rmk:cFconnectedcyclic} 
	Note that the flats in $\cF$ are all connected cyclic flats. To see this, for $F \in \cF$, let $C \subseteq F$ be the largest union of circuits (with respect to number of elements) contained in $F$, which is non-empty since $F$ is dependent. Then $F \setminus C$ is independent (otherwise any circuit in $F \setminus C$ would be in $C$), and any $y \in F\setminus C$ satisfies $\rk(C \cup y) = \rk(C) +1$ (otherwise, $y$ would be contained in $C$). Together, this means that $\rk(F \setminus C) + \rk(C) = \rk(F)$, violating connectivity, unless $C = F$. 
\end{rmk}

Note that if $W$ has vertices that do not support any propagator, these vertices are of rank $0$ and hence are elements of every flat. Therefore, any flat in $\cM(W)$ is disconnected. This explains the reason for restricting to generalized Wilson loop diagrams where all vertices support at least one propagator in the statements below.
 
One nice property of the set $\cF$ is that one can associate to each connected cyclic flat $F \in \cF$, a non-empty set of propagators $Q_F$ whose capacities when restricted to $F$ are strictly less than their capacities in the full matroid.

\begin{dfn} \label{dfn:QF}
	For each $F \in \cF$, let $Q_F$ be the set of propagators of $W$ whose capacities change when restricted to $F$: \bas Q_F = \{q \in \cP : c_F(q) < c(q) \}\;.\eas 
\end{dfn}
Note that a propagator not supported by the vertices in $F$ is automatically in $Q_F$ because the capacity drops to $0$ in this case.
\begin{lem}\label{res:QFnonempty}
	For a capacity-ranked generalized Wilson loop diagram $W = (\cP, [n])$ where all the vertices support a propagator, fix any $F \in \cF$. The associated set $Q_F$ is non-empty.
\end{lem}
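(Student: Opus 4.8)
The plan is to argue by contradiction, leveraging the two defining hypotheses on $W$: that it is capacity-ranked and that $F$ is a flat of less than full rank. First I would suppose $Q_F = \emptyset$, which by Definition~\ref{dfn:QF} means $c_F(q) = c(q)$ for every propagator $q \in \cP$. Summing over all propagators immediately gives $c_F(\cP) = \sum_{q \in \cP} c_F(q) = \sum_{q \in \cP} c(q) = c(\cP)$.

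The next step is to compute $\rk_{\cM(W)}(F)$ via the cyclic flat machinery already in place. Since $F \in \cF$, Remark~\ref{rmk:cFconnectedcyclic} tells us that $F$ is in fact a connected \emph{cyclic} flat, so Lemma~\ref{res:cyclic flat prop ranked} applies and yields $\rk_{\cM(W)}(F) = c_F(\cP)$. Combining this with the previous step and with the capacity-ranked hypothesis $\rk(\cM(W)) = c(\cP)$ (Definition~\ref{dfn:capranked}), I would conclude
\[
\rk_{\cM(W)}(F) = c_F(\cP) = c(\cP) = \rk(\cM(W)),
\]
i.e.\ $F$ has full rank. This contradicts the requirement $\rk(F) < \rk(\cM(W))$ built into the definition of $\cF$ (Definition~\ref{dfn:condepflat}), completing the argument and forcing $Q_F \neq \emptyset$.

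There is essentially no computational obstacle here; the only point requiring genuine care is ensuring that Lemma~\ref{res:cyclic flat prop ranked} is applicable, which is precisely why I would invoke Remark~\ref{rmk:cFconnectedcyclic} up front to certify that $F$ is cyclic rather than merely a dependent flat. The underlying intuition worth recording is that $Q_F$ records exactly the ``capacity deficit'' of $F$: if no propagator lost any capacity upon restriction to $F$, then $F$ would already realize the full capacity $c(\cP)$ and hence be spanning, whereas membership in $\cF$ explicitly forbids a spanning flat. Thus the nonemptiness of $Q_F$ is really just a repackaging of the fact that a proper cyclic flat cannot attain the full capacity of a capacity-ranked diagram.
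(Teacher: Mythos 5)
Your proof is correct and is essentially the paper's argument run in contrapositive: the paper's own proof likewise reduces to the identity $c_F(\cP) = \rk_{\cM(W)}(F)$ (valid because $F \in \cF$ is a cyclic flat, via Remark~\ref{rmk:cFconnectedcyclic} and Lemma~\ref{res:cyclic flat prop ranked}) combined with capacity-rankedness, though it proceeds directly and splits into the cases $\Prop(F) = \cP$ and $\Prop(F) \neq \cP$. Your contradiction framing cleanly subsumes the paper's easy case (if $Q_F = \emptyset$ then every propagator has $c_F(q) = c(q) \geq 1$, forcing $\Prop(F) = \cP$ automatically) and makes explicit the appeal to Lemma~\ref{res:cyclic flat prop ranked} that the paper's first case uses only implicitly.
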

\begin{proof}
	There are two cases to consider: when the vertices in $F$ support all the propagators of $W$ ($\Prop(F) = \cP$) and when they do not ($\Prop(F) \neq \cP$).
	
	In the second case, there is a propagator $q \in \cP  \setminus \Prop(F)$. The capacity of $q$ when restricted to $F$ is $0$, but by definition, the propagator has at least capacity $1$ in the associated matroid. Therefore $c_F(q) < c(q)$. In fact, the set $Q_F$ always contains all propagators without any supporting vertices in the set $F$.
	
	In the first case, we have that $\Prop(F) = \cP$. Since $\Prop(F) = \cP$ and $W$ is capacity-ranked, we know that $\rk([n]) = c(\cP) = \sum_{q \in \cP} c(q)$. Since $F$ does not have full rank, we have $\rk(F) < c(\cP)$, so 
 \bas \sum_{q \in \cP} c_F(q) = c_F(\cP) = \rk(F) < c(\cP) = \sum_{q \in \cP} c(q)\;. \eas Since $\sum_{q \in \cP} c_F(q) < \sum_{q \in \cP} c(q)$ is a strict inequality, there is some $q \in \cP$ such that $c_F(q) < c(q)$. This propagator makes $Q_F$ non-empty.
\end{proof}

Recall that Lemma \ref{res:cyclic flat prop ranked} applies in this case, so that $c_F(\cP) = \rk_{\cM(W)}(F)$ holds for each $F \in \cF$.

For $q\in Q_F$, note that, since only single-ended propagators satisfy $c(q) = |\End(q)|$, for multi-ended propagators, $c_F(q) < c(q ) < |\End(q)|$. In other words, if $q\in Q_F$ is multi-ended, it has at least $2$ ends, $q_i$ and $q_j$ with support disjoint from $F$: $V(q_i) \cap F = \emptyset$ and $V(q_j) \cap F = \emptyset$. 

The following result is a useful corollary of Lemma \ref{res:non min and co-loops}.

\begin{lem}\label{res:no multi-ended q co-loops}
	If $W$ is locally minimal and $x$ is a co-loop in $\cM(W)$, then $x$ only supports a single-ended propagator, which is in $Q_F$.
\end{lem}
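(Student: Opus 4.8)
The plan is to deduce this quickly from Lemma~\ref{res:non min and co-loops} together with the fact that a co-loop can lie in no cyclic flat. First I would apply Lemma~\ref{res:non min and co-loops}: since $W$ is locally minimal and $x$ is a co-loop of $\cM(W)$, there is a unique end $a$ with $\End(x) = \{a\}$ and $V(a) = \{x\}$, so $x$ supports exactly one propagator $p$, namely the one containing $a$. For the single-endedness, I would invoke the second half of Lemma~\ref{res:non min and co-loops}: if $p$ were multi-ended then $c(p)>1$ and, without changing $\cM(W)$, we may split $p$ into a single-ended propagator with end $a$ and a remainder of capacity $c(p)-1$; taking this representative, $x$ supports only the single-ended propagator $p$ with $V(p)=\{x\}$ and $c(p)=1$.

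The core of the argument is then to show $p \in Q_F$ for every $F\in\cF$. The key elementary fact I would first establish is that a co-loop belongs to no circuit: if $x$ lay in a circuit $C$, then $C\setminus x$ would be independent of size $|C|-1$, and since adjoining a co-loop raises the rank by one we would obtain $\rk(C)=|C|$, contradicting that a circuit is dependent. Since by Remark~\ref{rmk:cFconnectedcyclic} every $F\in\cF$ is a cyclic flat, hence a union of circuits, and $x$ is in no circuit, it follows that $x\notin F$.

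Finally, because $V(p)=\{x\}$ and $x\notin F$, no end of $p$ is supported on $F$, so Definition~\ref{dfn:restrictedcap} gives $c_F(p)=0<1=c(p)$, and Definition~\ref{dfn:QF} then places $p\in Q_F$, completing the proof. I expect the only genuinely delicate point to be the single-endedness clause: a locally minimal $W$ can a priori carry a multi-ended propagator through the co-loop $x$, so the hard part is invoking the splitting operation of Lemma~\ref{res:non min and co-loops} as a without-loss-of-generality move and being careful that the remaining $Q_F$ argument relies only on the matroid data (that $x$ is a co-loop and that $p$ has support $\{x\}$) rather than on local minimality of the split diagram.
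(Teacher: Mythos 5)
Your proof is correct and follows essentially the same route as the paper's: Lemma~\ref{res:non min and co-loops} to reduce to a single-ended propagator supported only on $x$, the observation that the co-loop $x$ lies in no $F\in\cF$, and the capacity drop $c_F(p)=0<c(p)$ to conclude $p\in Q_F$. You are in fact more careful than the paper on two points it elides — deriving $x\notin F$ via the circuit argument and Remark~\ref{rmk:cFconnectedcyclic} (the paper simply asserts it from $F$ being a connected dependent flat), and explicitly treating the splitting clause of Lemma~\ref{res:non min and co-loops} as a without-loss-of-generality passage to an equivalent diagram, which is indeed needed for the single-endedness claim to be literally true.
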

\begin{proof}
	Since $x$ is a co-loop and $F$ is a connected dependent flat, $x \not \in F$.
	By Lemma \ref{res:non min and co-loops}, we know that $x$ supports a single propagator, which is a single-ended propagator with support only on $x$. Since this propagator does not have support in $F$, it is in $Q_F$. 
\end{proof}

\begin{dfn} \label{dfn:Intset}
	For each $F \in \cF$, define $\Int_F$ to be the set of cyclic intervals $[a,b] \subseteq [n]$ that contain $F$.
\end{dfn} 

The point here is that $F$ is a collection of cyclic intervals and we are interested in those cyclic intervals which contain all the intervals of $F$.

We state our main result. For examples, see Example \ref{eg:exampleFQ_F}.

\begin{thm} \label{res:posivityonflats}
Let $W = (\cP, [n])$ be a locally minimal, capacity-ranked generalized Wilson loop diagram, where every vertex supports a propagator. 

The matroid $\cM(W)$ is a positroid if and only if, for all $F \in \cF$ and $q \in Q_F$, there is a cyclic interval $[a,b] \in \Int_F$ such that the ends of $q$ are supported only by the vertices of $F$ in $[a,b]$: \ba \cE(q) \cap \End([a,b] \setminus F) \subseteq    \emptyset\;. \label{eq:supportonlyinF}\ea 
\end{thm}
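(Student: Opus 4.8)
The plan is to route everything through the Grassmann necklace $\cI$ of $\cM(W)$, using that $\cM(W)$ is a positroid exactly when $\cB = \cB_\cI$ (see \eqref{eq:basisfromGN}). Since $\cB \subseteq \cB_\cI$ always holds, the content is to show that condition \eqref{eq:supportonlyinF} is equivalent to the nonexistence of a dependent set $B$ with $|B| = \rk(\cM(W))$ satisfying $I_i \leq_i B$ for all $i$. A dependent $B$ contains a circuit, so it overfills some cyclic flat: taking $H = \cl(C)$ for a circuit $C \subseteq B$ gives a cyclic flat with $|B \cap H| > \rk_{\cM(W)}(H)$. I would therefore recast the statement as: $\cM(W)$ is a positroid if and only if no cyclic flat admits a necklace-dominating overfilling set, and then localize to \emph{connected} cyclic flats, which by Remark \ref{rmk:cFconnectedcyclic} are precisely the members of $\cF$ (full-rank flats never obstruct, and disconnected cyclic flats split into connected pieces by Lemma \ref{res:disjointprops}, with overfilling the whole forcing overfilling of a piece).

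For the forward direction I would argue the contrapositive. Suppose \eqref{eq:supportonlyinF} fails for some $F \in \cF$ and $q \in Q_F$: for every cyclic interval $[a,b] \supseteq F$ the propagator $q$ has an end in $[a,b]\setminus F$. Because $q \in Q_F$ has strictly fewer than $c(q)$ ends on $F$, its ends outside $F$ cannot all lie in a single gap of $F$, so $q$ has ends in at least two distinct gaps of $F$. I would take two such straddling ends of $q$ together with two vertices of $F$ separating those gaps, producing a cyclic interleaving $a < b < c < d$ that is the familiar crossing obstruction. Using Lemma \ref{res:cyclic flat prop ranked} to pin $\rk_{\cM(W)}(F) = c_F(\cP)$ and the greedy description of the necklace from Lemma \ref{res:greedyGN} (equivalently Algorithm \ref{algo:gengwld}), I would then build a size-$\rk(\cM(W))$ set $B$ that overfills $F$ yet remains Gale-$\geq_i$ every $I_i$, certifying $\cB_\cI \neq \cB$.

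For the converse I would induct on $n$ (or on $|\cP|$), peeling off a connected cyclic flat $F \in \cF$. Condition \eqref{eq:supportonlyinF} confines, for each $q \in Q_F$, the outside ends of $q$ to a single gap of $F$; assembling these per-propagator choices shows $F$ meets the rest of the diagram non-crossingly, so the connected components of $\cM(W)$ lie in disjoint or nested cyclic intervals. Lemma \ref{res:cyclic flat prop ranked} gives $c_F(\cP) = \rk_{\cM(W)}(F)$, so both the restriction and the contraction along $F$ are capacity-ranked, and Lemmas \ref{res:restrictedmatroid} and \ref{res:contraction matroid} present $\cM(W)|_F$ and $\cM(W)/F$ as matroids of smaller generalized Wilson loop diagrams $W|_F$ and $W/S$ that again satisfy \eqref{eq:supportonlyinF}. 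The inductive hypothesis makes these positroids, and the non-crossing placement guaranteed by the condition lets me glue their Grassmann necklaces into a valid necklace for $\cM(W)$, yielding $\cB_\cI = \cB$.

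I expect sufficiency to be the main obstacle. The delicate feature is that \eqref{eq:supportonlyinF} is a purely local, per-pair $(F,q)$ statement in which the interval $[a,b]$ may depend on $q$, whereas positivity is global; indeed a flat $\{1,3\}$ carrying single-ended propagators at $2$ and $4$ shows that $F$ need not itself be a cyclic interval for $\cM(W)$ to be a positroid, so I cannot simply require each $F \in \cF$ to be an interval. Converting the freedom to pick a different gap for each propagator into one coherent non-crossing decomposition of $[n]$, and then verifying that the glued necklace is genuinely the necklace of $\cM(W)$ and not of some strictly larger positroid, is where the careful bookkeeping lives; here I anticipate needing local minimality together with Lemma \ref{res:no multi-ended q co-loops} to control the co-loops sitting in the gaps of $F$.
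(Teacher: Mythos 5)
Your converse direction (condition \eqref{eq:supportonlyinF} implies positroid) rests on an unproved gluing step, and that step cannot be supplied by positivity of the pieces alone. Knowing that $\cM(W)|_F$ and $\cM(W)/F$ are positroids does not imply that $\cM(W)$ is one: consider the rank-$2$ matroid on $[4]$ in which $\{1,3\}$ and $\{2,4\}$ are parallel pairs (two capacity-$1$ propagators with interleaved ends). Restricting to the cyclic flat $F=\{1,3\}$ and contracting by it both give positroids, yet the matroid is not a positroid. So the entire content of your induction sits inside the sentence ``glue their Grassmann necklaces into a valid necklace for $\cM(W)$, yielding $\cB_\cI = \cB$,'' which is precisely the assertion to be proved and for which you supply no mechanism; any such gluing lemma would have to invoke \eqref{eq:supportonlyinF} for all flats simultaneously, not just the one being peeled. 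The paper avoids induction entirely here (Lemma~\ref{res:cond5 implies positroid}): given a dependent set $B$ of size $\rk(\cM(W))$, it takes a circuit $U\subseteq B$, sets $F=\cl(U)\in\cF$, uses \eqref{eq:supportonlyinF} together with Lemmas~\ref{res:contraction matroid}, \ref{res:cyclic flat prop ranked} and \ref{res:disjointprops} to decompose the contraction as $\cM(W/F)=\bigoplus_i \cM(W/F)|_{F_i^c}$, counts the excess rank $r_i$ of each gap, and exhibits an index $a_{i+1}$ at which $I_{a_{i+1}}\not\leq_{a_{i+1}} B$. Note also that this direction requires no local minimality, contrary to where you anticipate needing it.

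In the forward (contrapositive) direction your sketch stops exactly where the work begins: ``build a size-$\rk(\cM(W))$ set that remains Gale-$\geq_i$ every $I_i$'' is the whole difficulty, and the interleaving pattern $a<c<b<d$ by itself does not produce such a set. The paper's Lemma~\ref{res:positroid implies cond5} first uses local minimality via Lemma~\ref{res:non min and co-loops} to reduce to the case that $q$ has no ends on co-loops --- this, not your gluing step, is where local minimality enters --- then chooses a circuit $C\subseteq F$ with elements $y_1,y_2$ separating the two gaps carrying ends of $q$, runs an iterated exchange on ``blocked''/``opposite'' ends of $q$ to obtain a well-configured basis, and finally builds the dependent set $D=(B_1\setminus x_1)\cup y_2$ together with the companion basis $B_2'=(B_1\setminus\{y_1,x_1\})\cup\{y_2,x_2\}$, because no single basis Gale-dominates from every starting index: the verification of $I_\alpha\leq_\alpha D$ splits according to whether $\alpha\in[y_2+1,x_1]$ or $\alpha\in[y_1+1,x_2]$. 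None of this machinery, nor a substitute for it, appears in your proposal. A smaller but real error: you justify that $q$ has ends in two distinct gaps by the capacity drop $c_F(q)<c(q)$, but that drop only guarantees ends outside $F$; it is the failure of \eqref{eq:supportonlyinF} (every interval in $\Int_F$ meets an end of $q$ outside $F$) that forces ends in two distinct gaps.
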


We will prove the two directions separately. In fact only one of the directions will require local minimality in hypothesis, and the direction which does not require it is the direction which we use in Section~\ref{sec:diagrammaticmoves}. First a small lemma regarding what ends are possible for a $q$ in $Q_F$.
\begin{lem}\label{res:q ends in or out}
    Let $W$ be a capacity-ranked generalized Wilson loop diagram such that every vertex supports a propagator and let $F\in \cF$. For every $q\in Q_F$ every end of $q$ is either entirely supported in $F$ or is entirely supported outside of $F$.
\end{lem}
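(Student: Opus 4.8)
The plan is to argue by contradiction, reducing to edge ends and then showing that a ``split'' edge end of a propagator $q\in Q_F$ would force the restricted matroid $\cM(W)|_F$ to be disconnected, contradicting the fact that $F\in\cF$ is connected. First I would dispose of the easy case: an end supported on a single vertex $v_i$ is trivially either entirely in $F$ (when $v_i\in F$) or entirely outside it, so the only ends that could be partially supported are edge ends. Suppose then, for contradiction, that some $q\in Q_F$ has an edge end $a=e_j$ with support $\{v_j,v_{j+1}\}$ meeting $F$ in exactly one vertex; without loss of generality $v_j\in F$ and $v_{j+1}\notin F$, the reversed case being symmetric.

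The key observation I would establish is that membership in $Q_F$ makes the ends of $q$ behave \emph{freely} inside $F$. Since $q\in Q_F$ we have $c_F(q)<c(q)$ by Definition~\ref{dfn:QF}, and unwinding the definition of $c_F$ the number of ends of $q$ meeting $F$ equals $c_F(q)$, which is strictly less than $c(q)$. Restricting the uniform $q$-component $U^{c(q)}_{|\cE(q)|}$ of $\sfM_\cE$ to this set of $c_F(q)<c(q)$ ends therefore yields a free matroid $U^{c_F(q)}_{c_F(q)}$, so every end of $q$ meeting $F$ is a coloop of $\sfM_\cE|_{\End(F)}$; in particular $e_j$ is, as it meets $F$ via $v_j$. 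Moreover, in the restricted bipartite graph $G^\cE_W|_F$ the end $e_j$ is adjacent only to $v_j$, since its other supporting vertex $v_{j+1}$ lies outside $F$.

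Next I would upgrade this to a statement about $\cM(W)|_F$, which by Lemma~\ref{res:restrictedmatroid} is the Rado matroid of $G^\cE_W|_F$ together with $\sfM_\cE|_{\End(F)}$. Given any independent $I\subseteq F\setminus\{v_j\}$ with an $I$-perfect independent matching $M$, the end $e_j$ is unused by $M$ (its only neighbour $v_j$ is not in $I$), and since $e_j$ is a coloop on the end side it may be adjoined, so $M\cup\{v_j e_j\}$ is an independent matching and $I\cup\{v_j\}$ is independent. This shows $\rk(F)=\rk(F\setminus v_j)+1$, i.e.\ that $v_j$ is a coloop of $\cM(W)|_F$. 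Because $F$ is dependent we have $|F|\ge 2$, so $\cM(W)|_F=\cM(W)|_{F\setminus v_j}\oplus U^1_1$ splits as a direct sum (Definition~\ref{dfn:matroid direct sum}), contradicting that $F$ is connected. Notice that this argument uses only that $F$ is connected and dependent, not flatness directly, and in particular does not require Lemma~\ref{res:cyclic flat prop ranked}.

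The main obstacle is really the single idea the proof turns on: recognizing that being in $Q_F$ is precisely the condition making the ends of $q$ coloops of $\sfM_\cE|_{\End(F)}$, so that a one-sided edge end becomes a coloop of the end-matroid that is adjacent to only one vertex of $F$, whence $v_j$ itself is a coloop and connectedness is violated. The remaining care is bookkeeping: checking that $e_j$ genuinely lies among the $c_F(q)$ ends of $q$ meeting $F$, and that restricting the direct-sum matroid $\sfM_\cE$ to $\End(F)$ identifies coloops componentwise.
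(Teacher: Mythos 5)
Your proof is correct and follows essentially the same route as the paper's: from $c_F(q)<c(q)$ you deduce that the ends of $q$ meeting $F$ form a free (all-coloop) component of the restricted end matroid, extend any independent matching on $F\setminus\{v_j\}$ by the edge joining $v_j$ to $e_j$ to obtain $\rk(F)=\rk(F\setminus v_j)+\rk(v_j)$, and conclude that $\cM(W)|_F$ is disconnected, contradicting $F\in\cF$. Your side remarks (the vertex-end case is trivial, and flatness, hence Lemma~\ref{res:cyclic flat prop ranked}, is not actually needed) are accurate refinements of the same argument rather than a different approach.
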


\begin{proof}
    Suppose otherwise, so $q$ has an end $a$ on an edge with one vertex in $F$ and one vertex not in $F$. Let $v$ be the vertex in $F$. Since the capacity of $q$ drops in $F$, $q$ has at most capacity-many ends supported on $F$, and hence the ends of $q$ are independent when restricted to $F$. Now we claim $\rk(F) = \rk(F\setminus \{v\}) + \rk(v)$. For this, note that in going from the bipartite graph $G^\cE_W |_{F\setminus \{v\}}$ to the bipartite graph $G^\cE_W|_{F}$ we add $v$ on one side of the bipartition and add $a$ (and possibly other ends) on the other side of the bipartition. Since the ends of $q$ restricted to $F$ are independent, we can extend any independent matching on $G^\cE_W|_{F\setminus \{v\}}$ to $G^\cE_W|_F$ by adding an edge between $a$ and $v$. This gives the desired additivity of the ranks and implies that $F$ is disconnected. This contradicts $F\in \cF$, proving the lemma.
\end{proof}

\begin{lem}\label{res:cond5 implies positroid}
    Let $W$ be a capacity-ranked generalized Wilson loop diagram with every vertex supporting a propagator. Furthermore assume that the condition \eqref{eq:supportonlyinF} holds for all $F\in \cF$ and all $q\in Q_F$. Then $\cM(W)$ is a positroid.
\end{lem}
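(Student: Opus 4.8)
The plan is to argue by contraposition using the cyclic-flat description of positroids: a matroid on a cyclically ordered ground set fails to be a positroid exactly when some cyclic flat $F$ is \emph{crossing}, in the sense that the partition of $[n]$ formed by the connected components of $\cM(W)|_F$ together with the connected components of the contraction $\cM(W)/F$ is a crossing set partition. Assuming $\cM(W)$ is not a positroid, I would extract from such a crossing a connected flat $F' \in \cF$ together with a propagator $q \in Q_{F'}$ whose ends lie in two distinct gaps (maximal cyclic intervals of $[n] \setminus F'$). This immediately contradicts \eqref{eq:supportonlyinF}: every interval $[a,b] \in \Int_{F'}$ has connected complement, so it omits at most one gap of $F'$, and therefore no such interval can satisfy $\cE(q) \cap \End([a,b] \setminus F') = \emptyset$ once $q$ has ends in two gaps. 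The hypothesis that every vertex supports a propagator is what makes $\cF$ the set of connected cyclic flats (Remark \ref{rmk:cFconnectedcyclic}) and keeps every flat we meet connected rather than forced-disconnected by rank-$0$ vertices.

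I would treat the two ways a crossing can arise and show they funnel into the same mechanism. Suppose first that two connected components $F_1, F_2$ of $\cM(W)|_F$ cross. As a flat of the flat $F$, each $F_i$ is a flat of $\cM(W)$; a crossing component has at least two elements, so it is connected and dependent and thus lies in $\cF$ (its rank is $< \rk(\cM(W))$ because $F$ splits into at least two positive-rank pieces). Because $F_1$ and $F_2$ are distinct components, no propagator supported inside $F$ meets both, so any propagator witnessing the connectivity of $F_2$ that joins two vertices of $F_2$ lying in different gaps of $F_1$ has no end on $F_1$; hence $c_{F_1}(q) = 0 < c(q)$, giving $q \in Q_{F_1}$, and by Lemma \ref{res:q ends in or out} its relevant ends lie wholly outside $F_1$, i.e.\ in two distinct gaps. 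That a single such straddling propagator exists (rather than the two gaps being bridged only by a long chain) is obtained by walking along a connecting chain in $\cM(W)|_{F}$ and noting that each intermediate vertex lies in one gap, so some individual link must cross from one gap to another.

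For crossings involving a contraction component $G$ of $\cM(W)/F$, I would run the same chain argument inside the contraction. Here the bookkeeping rests on Lemma \ref{res:cyclic flat prop ranked}, which gives $c_F(\cP) = \rk(F)$ and so validates that $\cM(W)/F$ is computed by the reduced capacities of Definition \ref{dfn:contraction graph}. A propagator transmits connectivity in $\cM(W)/F$ only if its contracted capacity $c(q) - c_F(q)$ is positive, that is, only if $q \in Q_F$; the contraction-loop propagators, which have $c_F(q) = c(q)$, carry no connectivity and may be ignored. Thus a component $G$ spanning two gaps of $F$ again forces some $q \in Q_F$ to have ends in two distinct gaps, and Lemma \ref{res:q ends in or out} confines those ends to the gaps, producing the same contradiction with \eqref{eq:supportonlyinF}.

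The step I expect to be the main obstacle is precisely this passage between matroid connectivity and end-support combinatorics: proving that a crossing among the components of a cyclic flat genuinely forces a \emph{single} propagator to straddle two gaps of some $F' \in \cF$. The chain argument is the crux, and making it airtight requires verifying that all intermediate saturating vertices stay outside $F'$, that contraction-loop propagators are correctly discarded so that the straddling link always has positive contracted capacity (hence lies in $Q_{F'}$), and that the extracted $F'$ really is a connected, dependent, non-full-rank flat so that \eqref{eq:supportonlyinF} applies to it. Once this correspondence is secured, the contradiction with \eqref{eq:supportonlyinF} is immediate and the lemma follows.
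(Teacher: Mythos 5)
Your route is genuinely different from the paper's. The paper proves the lemma directly and self-containedly: it shows every dependent set $B$ of size $\rk(\cM(W))$ is Gale-excluded by some Grassmann necklace element, using \eqref{eq:supportonlyinF} \emph{positively} to obtain the decomposition $\cM(W/F)=\bigoplus_i \cM(W/F)|_{F_i^c}$ over the gaps (via Lemma~\ref{res:disjointprops}) and then counting elements of $B$ against the necklace in cyclic intervals. You instead argue the contrapositive through an external characterization of positroids. As stated, your characterization (the combined partition by components of $\cM(W)|_F$ \emph{and} of $\cM(W)/F$ is non-crossing for every cyclic flat $F$) is neither proved by you nor stated in the paper; what the paper cites (Theorem~\ref{res:Boninposcharacterization}) involves only the components of the contraction $\cM(W)/F$ for connected flats $F$. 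Your stronger version is exactly what drags in the extra case where your argument breaks.

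The genuine gap is in the restriction-components case: you assert that ``because $F_1$ and $F_2$ are distinct components, no propagator supported inside $F$ meets both,'' and you need this to conclude that the straddling propagator satisfies $c_{F_1}(q)=0$ and hence lies in $Q_{F_1}$. This is unjustified: Lemma~\ref{res:disjointprops} says only that a partition of propagators forces a direct sum, not the converse. Rank additivity across $F_1\sqcup F_2$ is perfectly compatible with a propagator having ends on both sides --- for instance a capacity-$2$ propagator with one end on each side contributes rank one to each part and imposes no dependence tying them together. In particular your straddling $q$ could have $c(q)$ ends inside $F_1$, giving $c_{F_1}(q)=c(q)$, $q\notin Q_{F_1}$, and no contradiction with \eqref{eq:supportonlyinF}. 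The repair is to invoke Bonin's theorem in the form the paper records: its flats $F$ are connected with at least two elements, hence automatically dependent, cyclic, and of non-full rank (so $F\in\cF$ given that no vertex is a loop), and its condition concerns only components of $\cM(W)/F$; then only your contraction case is needed, and that part of your argument is sound --- Lemmas~\ref{res:cyclic flat prop ranked} and \ref{res:contraction matroid} identify $\cM(W)/F$ with $\cM(W/F)$, whose propagator set is exactly $Q_F$; the contrapositive of Lemma~\ref{res:disjointprops} extracts a single $q\in Q_F$ supported in two gaps from a connected component meeting two gaps; Lemma~\ref{res:q ends in or out} places its ends outside $F$; and since the complement of any $[a,b]\in\Int_F$ is a single cyclic interval inside one gap, every such interval contains an end of $q$ in $[a,b]\setminus F$, contradicting \eqref{eq:supportonlyinF}. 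With that substitution your proof goes through, at the cost of outsourcing the hard geometry to Bonin's theorem, whereas the paper's argument is independent of it (as the paper notes, Theorem~\ref{res:posivityonflats} and Bonin's result were derived independently).
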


Note that this lemma does not assume that $W$ is locally minimal.

\begin{proof}
Write $\cM(W) = ([n], \cB)$, with $\cB$ the set of bases. Write $\rk(\cM(W)) = k$. Since $W$ is capacity-ranked, we also have that $c(\cP) = k$. Let $\cI_W$ be the Grassmann necklace of $\cM(W)$, (which can be obtained through Algorithm~\ref{algo:gengwld}), and let $\cB_{\cI_W}$ the set of bases defined by $\cI_W$. Recall that the matroid $\cM(W)$ is a positroid if and only if $\cB_{\cI_W} = \cB$, where $\cB$ is the set of bases defining $\cM(W)$ \cite{Oh, Postnikov}. Furthermore, as $\cB_{\cI_W} \supseteq \cB$ always holds, $\cM(W)$ is a positroid if and only if $\cB_{\cI_W} \setminus \cB$ is empty. So, to show that $\cM(W)$ is a positroid, we will show that for any dependent set $B$ of size $k = \rk(\cM(W))$, there is an index $v \in [n]$ such that $I_v \not \leq_v B$. That is, if $B\not \in \cB$, then it is not an element of the set of bases defined by the Grassmann necklace $\cB_{\cI_W}$. Therefore, $\cM(W)$ is a positroid.

Now we proceed with the proof. Let $B$ be a set of size $k$ that is not in $\cB$. Let $U$ be a circuit in $B$, and $F = \cl(U)$ a connected cyclic flat. Note that $U$ being a circuit forces $F$ to be connnected. $F$ is not full rank as it is the closure of a circuit of a dependent set that is too small to contain a basis. Therefore, $F \in \cF$. Fix $q \in Q_F$. By hypothesis, there exists an interval $[a, b]$ containing $F$ such that the condition in \eqref{eq:supportonlyinF} holds. Without loss of generality, we may assume that $a, b \in F$, that is, the interval $[a, b]$ does not contain a smaller interval satisfying \eqref{eq:supportonlyinF}. There are two cases to consider: when $F = [a, b]$ and when $F \subsetneq [a, b]$. 

Recall from Theorem \ref{res:greedyGN} and Algorithm \ref{algo:gengwld} that the elements of the Grassmann necklace $\cI_W$ are formed by moving through the set $[n]$ starting at $a$, and including a vertex in the set if and only if the new vertex is independent of the set constructed so far. Therefore, in the first case, if $F = [a,b]$, the first $\rk([a,b]) = |U|-1$ elements of $I_a$ lie in the interval $[a,b]$. By construction, $B$ contains at least $|U|$ elements in $[a,b]$. Therefore, in the $<_a$ order, $ B^{(|U|)} \leq_a b <_a  I_a^{(|U|)}$ (using the notation of Definition \ref{dfn:Gale order}). Therefore, $I_a \not \leq_a B$ in the Gale ordering.

In the second case, $F \subsetneq [a,b]$. In this case, write $F$ as the partition of cyclic intervals: $F  = F_1 \sqcup \ldots \sqcup F_d$, with $F_i = [a_i, b_i]$. Similarly, denote the cyclic intervals in the complement as $F_i^c = [b_i+1, a_{i+1}-1]$. That is, one alternates $F_i, F_i^c, F_{i+1}$, etc. Note that $a_1 = a$ and $b_d = b$. Furthermore, we may have $a_i = b_i$, but necessarily $b_i + 1 < a_{i+1}$. Since $d \geq 2$, and $F$ is a flat, every element of $[a,b] \setminus F$, supports a propagator $p$, such that $c_F(p) < c(p)$. (Otherwise, for $x \in [a,b] \setminus F$, we could not have $\rk (F \cup x) > \rk (F)$.) 

Since $F$ is a cyclic flat, by Lemma \ref{res:contraction matroid} and Lemma \ref{res:cyclic flat prop ranked}, we may write $\cM(W)/F = \cM(W/F)$. We claim that since condition \eqref{eq:supportonlyinF} of Theorem \ref{res:posivityonflats} holds, the cyclic intervals in the complement of $F$ are independent of each other. That is, $\cM(W/F) = \bigoplus_{i = 1}^d \cM(W/F)|_{F_i^c} $. To see this, note the set of propagators in the contracted diagram, $W/F$, is exactly $Q_F$, with the capacity of $q$ in $W/F$ being $\tilde{c}(q) = c(q) - c_F(q) > 0$, since if $q \not \in Q_F$ then the capacity is reduced to $0$ in $W/F$, which is equivalent to removing the propagator and some vertices from the diagram. 

For any two sets, $F_j^c$ and $F_l^c$, any propagators supported by both are exactly those where $c_F(p) = c(p)$. However, these propagators are removed in the diagram $W/F$. Therefore, by Lemma \ref{res:disjointprops}, $\cM(W/F)|_{(F_j^c \cup F_l^c)} = \cM(W/F)_{F_j^c} \oplus \cM(W/F)_{F_l^c}$. Since the $F_j^c$ are pairwise independent in $\cM(W/F)$, we have that \bas \cM(W/F) = \bigoplus_{i = 1}^d \cM(W/F)|_{F_i^c} \eas as desired.

Let $r_i =  \rk(\cM(W/F)|_{F_i^c})$. We may write that $\rk(\cM(W)) = \rk_{\cM(W)}(F) + \sum_{i=1}^d r_i$. Furthermore, we may think of the $r_i$ as the ``excess rank'' of $F_i^c$ compared to $F$. That is, $r_i = \rk_{\cM(W)}(F \cup F_i^c) - \rk_{\cM(W)}(F)$. In other words, each $I_a \in \cI_W$ has at least $r_i$ elements in $F_i^c$. On the other hand $B$ contains at least $|U| = \rk(F)+1$ elements of of $F$ and so contains at most $k-\rk(F)-1$ elements of $F^c$. Therefore at least one of the intervals of $F^c$, contains fewer than $r_i$ elements of $B$. Let this interval be $F_i^c$.

Consider now $I_{a_{i+1}}$. The set $B$ has fewer elements in $[b_i+1, a_{i+1}-1]$ than $I_{a_{i+1}}$ does. Let $\ell$ and $m$ be the number of these elements respectively with $\ell<m$. These are the last $\ell$ and $m$ elements in $B$ and $I_{a_{i+1}}$ respectively in the $<_{a_{i+1}}$ order.
Therefore, in the notation of  \ref{dfn:Gale order}, $B^{(k-m +1)} <_{a_{i+1}} b_i+1 \leq_{a_{i+1}} I_{a_{i+1}}^{(k-m +1)}$. Therefore $I_{a_{i+1}} \not\leq_{a_{i+1}} B$ in the Gale ordering.
\end{proof}

\begin{lem}\label{res:positroid implies cond5}
    With notation and assumptions as in Theorem~\ref{res:posivityonflats}, assume that condition \eqref{eq:supportonlyinF} does not hold for some $F\in \cF$ and $q\in Q_F$. Then $\cM(W)$ is not a  positroid. 
\end{lem}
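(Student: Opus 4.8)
The plan is to prove the statement in its contrapositive-as-stated form: assuming \eqref{eq:supportonlyinF} fails for some $F\in\cF$ and $q\in Q_F$, I will exhibit a set $B\subseteq[n]$ with $|B|=\rk(\cM(W))=k$ that is \emph{dependent} in $\cM(W)$ but nonetheless satisfies $I_i\le_i B$ for all $i$, where $\cI_W=(I_1,\dots,I_n)$ is the Grassmann necklace of $\cM(W)$. Since $\cM(W)$ is a positroid if and only if $\cB_{\cI_W}=\cB$, and $\cB\subseteq\cB_{\cI_W}$ always holds, producing such a $B\in\cB_{\cI_W}\setminus\cB$ shows $\cM(W)$ is not a positroid. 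This mirrors, in reverse, the strategy of Lemma~\ref{res:cond5 implies positroid}, which found an index violating Gale domination for every dependent $B$; here I must instead build one dependent $B$ that dominates at all indices simultaneously.

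The first step is to translate the failure of \eqref{eq:supportonlyinF} into a concrete crossing. Writing $F$ as its cyclic arcs $F_1,\dots,F_d$ with intervening gaps, Lemma~\ref{res:q ends in or out} guarantees that every end of $q$ lies entirely inside $F$ or entirely inside a single gap. For a minimal enclosing interval $[a,b]\in\Int_F$ with $a,b\in F$, the set $[a,b]\setminus F$ is exactly the union of all gaps except the one we wrap around, so \eqref{eq:supportonlyinF} holds for some interval precisely when all ends of $q$ outside $F$ lie in one common gap. Because $q\in Q_F$ forces $c_F(q)<c(q)$ and hence at least one end of $q$ outside $F$, the failure of \eqref{eq:supportonlyinF} means $q$ has ends in two distinct gaps $G_\alpha\neq G_\beta$. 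Fix $x\in G_\alpha$ and $y\in G_\beta$ each supporting an end of $q$. Since arcs and gaps alternate, there are elements of $F$ strictly between $x$ and $y$ on both cyclic sides, so $\{x,y\}$ interleaves $F$; this interleaving is the obstruction to positivity.

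Next I would build the witness $B$ around this crossing. The guiding model is the rank-two configuration where $W$ has two capacity-one single-ended propagators on $\{v_1,v_3\}$ and $\{v_2,v_4\}$: there the cyclic flats $\{1,3\}$ and $\{2,4\}$ cross, and $B=\{2,4\}$ is itself a dependent member of the envelope. In general $B$ should be a size-$k$ set containing $x$ and $y$ obtained by extending a dependency that over-subscribes $q$ across the two gaps; one first isolates a dependent set $T$ recording that $x$ and $y$ compete for $q$'s remaining capacity once $q$'s use on $F$ is accounted for (here $c_F(q)<c(q)$ together with capacity-rankedness certifies that $q$ is already saturated on $F$), and then extends $T$ to size $k$, so that $B\notin\cB$ is automatic as supersets of dependent sets are dependent. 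Crucially, $T$ must be found inside $\cM(W)$ itself and not in the contraction $\cM(W)/F=\cM(W/F)$: contracting $F$ deletes the arcs separating $G_\alpha$ and $G_\beta$, turning the crossing of $q$ into a harmless propagator spanning two cyclic intervals, and indeed $\cM(W/F)$ is typically a positroid even when $\cM(W)$ is not (in the model case it is $U_{1}^{2}$). Thus the non-positivity genuinely lives in the joint $F$--$q$ configuration and cannot be simplified away by contraction.

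Finally, the crux is verifying $I_i\le_i B$ for every $i\in[n]$. I would use the greedy description of Lemma~\ref{res:greedyGN}, under which $I_i$ is read off by scanning $[n]$ in the $<_i$ order and taking each vertex independent of those already chosen. Via the Gale order, $I_i\le_i B$ is equivalent to $|I_i\cap[i,t]|\ge|B\cap[i,t]|$ for every initial $<_i$-segment $[i,t]$, i.e.\ $B$ must never front-load more elements than the greedy necklace. The interleaving of $x,y$ with the arcs of $F$ is exactly what makes this hold at the dangerous starting indices, namely those lying in the arcs separating $G_\alpha$ from $G_\beta$: since $x$ and $y$ sit late within their gaps and $F$ is a flat that the greedy scan must pay for first, the two crossing elements of $B$ land weakly later than the matching necklace elements, keeping every initial-segment count of $B$ below that of $I_i$. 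I expect this simultaneous domination at all $n$ indices — together with arranging that the $q$-overload dependency in $T$ is not rescued by the other propagators meeting $x$ and $y$ — to be the main obstacle; by contrast the envelope reformulation and the crossing reduction are comparatively routine.
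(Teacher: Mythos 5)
Your high-level strategy is the right one and matches the paper's: exhibit a single dependent set of size $k$ that Gale-dominates every Grassmann necklace element $I_\alpha$, and your first reduction (Lemma~\ref{res:q ends in or out} plus the minimal-interval reading of \eqref{eq:supportonlyinF} forces $q$ to have ends in two distinct gaps of $F$) is exactly how the paper begins. But the two decisive steps are gapped. First, your dependent witness is built from a ``$q$-overload'' dependency: $x$ and $y$ competing for $q$'s remaining capacity over a spanning set of $F$. This is not generally valid: $q \in Q_F$ only gives $c(q) - c_F(q) \geq 1$, and if the spare capacity is $2$ or more, or if $x$ and $y$ support ends of other propagators, no such dependency exists --- you acknowledge the rescue problem but give no mechanism to rule it out. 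The paper sidesteps this entirely: the dependence of its witness $D$ comes from a circuit $C \subseteq F$ (guaranteed because $F$ is a dependent cyclic flat, chosen via connectivity so that $y_1, y_2 \in C$ separate the two gap ends of $q$), and the ends of $q$ in the two gaps are used for a different purpose --- to certify, through the ``well-configured basis'' machinery, that both exchanges $(B \setminus x_1) \cup x_2$ and the swap of $y_1$ for $y_2$ inside $F$ produce genuine bases.

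Second, your verification of $I_\alpha \leq_\alpha B$ for all $\alpha$ is asserted rather than proved: the claims that $x$ and $y$ ``sit late within their gaps'' (they were chosen arbitrarily) and that ``the greedy scan must pay for $F$ first'' have no justification, and a direct initial-segment count against Lemma~\ref{res:greedyGN} would require controlling all $n$ necklace elements simultaneously. The paper's argument replaces this global count with a local observation: $D$ is obtained from one of two genuine bases by removing an element and inserting one that is \emph{later} in the order $<_\alpha$, which immediately gives $I_\alpha \leq_\alpha (\text{basis}) \leq_\alpha D$; the cyclic arrangement $y_1 < x_1 < y_2 < x_2$ ensures the two swap directions together cover every starting index $\alpha$. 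Finally, note that your sketch never invokes local minimality, which this direction of Theorem~\ref{res:posivityonflats} genuinely needs: if the gap ends of $q$ sit on co-loops, no exchange can remove those vertices from a basis, and the paper must first apply Lemma~\ref{res:non min and co-loops} to split $q$ off its co-loop ends before the construction can run. A proof that never uses this hypothesis cannot be correct as stated, since non-minimal ends can witness a merely apparent failure of \eqref{eq:supportonlyinF}.
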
 

\begin{proof}
    As in the proof of Lemma~\ref{res:cond5 implies positroid}, write $\cM(W) = ([n], \cB)$, with $\cB$ the set of bases. Write $\rk(\cM(W)) = k = c(\cP)$. Let $\cI_W$ be the Grassmann necklace of $\cM(W)$, and $\cB_{\cI_W}$ the set of bases defined by $\cI_W$. Recall again that $\cM(W)$ is not a positroid if and only if $\cB_{\cI_W} \setminus \cB$ is not empty.

    Now we proceed with the proof. Concretely, we construct a dependent set $D$ of size $k$ such that $I_a \leq_a D$. In other words, the set $D$ is a dependent set in $\cM(W) = ([n], \cB)$ and is also in the set of bases defined by the Grassmann necklace $\cI_W$, or equivalently that $\cB_{\cI_W} \supsetneq \cB$, implying that $\cM(W)$ is not a positroid.
    
    Take $F\in \cF$ and $q\in Q_F$ be such that \eqref{eq:supportonlyinF} does not hold. Then $F$ must consist of at least two disjoint cyclic intervals and $q$ must be supported on non-co-loops on at least two disjoint cyclic intervals of $F^c$.
    
    By Lemma~\ref{res:non min and co-loops} if $q_k$ is an end of $q$ supported on a co-loop $x_k$, then $q_k$ must be matched to that co-loop in any independent matching. Therefore, we do not change the matroid by replacing $q$ by $q-q_k$ with capacity decreased by $1$ along with a single-ended propagator at $x_k$. This also does not change that the generalized Wilson loop diagram is capacity-ranked and locally minimal, or that $q$ has more ends than propagators, nor does it change condition \eqref{eq:supportonlyinF}, so without loss of generality we may suppose $q$ is not supported on any co-loops.

    Choose a circuit $C\subseteq F$ with the property that there exists $y_1,y_2\in C$ such that $q$ has at least one end supported on $F^c\cap [y_1+1, y_2-1]$ and $F^c\cap [y_2+1, y_1-1]$. Such a $C$ must exist since, for any two ends $q_1$ and $q_2$ of $q$ supported on disjoint cyclic intervals of $F^c$, we have that $q_1$ and $q_2$ partition $F$ into two parts: the part, $F_1$, from $q_1$ to $q_2$ and the part, $F_2$, from $q_2$ to $q_1$. Since $F$ is connected there is a circuit $C$ such that $C$ has elements in both $F_1$ and $F_2$ (otherwise every circuit in $F$ lies solely in $F_1$ or solely in $F_2$ and so $F_1$ and $F_2$ are disconnected).

    Note that $C\setminus y_1$ and $C\setminus y_2$ are both independent.
    By the independence exchange axiom, Equation \eqref{eq:indepexch}, we can extend $C\setminus y_1$ to a set $A$ with $\cl(A)=F$.
    
     We want to build a basis $B$ of $\cM(W)$ and an independent matching $M$ on $G^\cE_W$ corresponding to $B$ with the following properties:
     \begin{itemize}
     \item $A\subseteq B$,
     \item there is one end $q_1$ of $q$ outside $F$ which is matched to a vertex $x_1$ by $M$,
     \item there is another end $q_2$ of $q$ outside of $F$ that is not matched by $M$ and which has a supporting vertex $x_2$ that is not matched by $M$, \item one of $q_1$ and $q_2$ is supported in $F^c_1 = F^c\cap [y_1+1, y_2-1]$ and the other is supported in $F^c_2 = [y_2+1, y_1-1]$.
     \end{itemize}
     If $B$ has these properties then say $B$ is \emph{well-configured}.

    For any well-configured $B$, observe that by Lemma~\ref{res:contraction matroid} and Lemma~\ref{res:q ends in or out} since $B$ is full rank inside $F$, every propagator with ends both inside and outside of $F$ will have all its ends inside $F$ matched by $M$ to vertices inside $F$. Thus $A\subseteq B$ can be replaced by any other independent set with closure $F$ without changing that $B$ is a basis and without changing the part of $M$ matching vertices outside of $F$. Consequently $(B\setminus y_1) \cup y_2$ is also a well-configured basis with a matching that agrees with $M$ outside of $F$.
    
    Now we will build a well-configured $B$. To start take a basis $B$ of $\cM(W)$ which extends $A$ and an independent matching $M$ on $G^\cE_W$ corresponding to $B$. Since $W$ is capacity-ranked, not all ends of $q$ outside of $F$ are matched in $M$. By hypothesis there are ends of $q$ in $F^c_1$ and in $F^c_2$.
    If $B$ is well-configured then we are done. If not then since $q$ does have ends in both $F^c_1$ and $F^c_2$ and at least one of these ends is matched and at least one not matched, then there must still be a matched end $q_i$ and an unmatched end $q_j$ with one of $q_i, q_j$ in $F^c_1$ and the other in $F^c_2$. But since $B$ is not well-configured, all the supporting vertices of $q_j$ must be matched by $M$. In this situation call $q_j$ the blocked end and $q_i$ the opposite end. Let $x_j$ be a supporting vertex of $q_j$; by hypothesis $x_j\in B$. 

    Let us take stock of what we know about the structure of $W$, $q$, $B$ and $M$: $W$ is capacity-ranked, $A\subseteq B$, $q$ is not supported on any co-loops, $M$ is a maximum independent matching of $G^\cE_W$, $M$ agrees with $A$ on $F$, and there is a blocked end and an opposite end as described above.

    Let $W'$ be the generalized Wilson loop diagram resulting from removing the blocked end $q_j$ while keeping the capacity of $q$ unchanged and let $G'$ the corresponding bipartite graph. $M$ is still a maximum independent matching of $G'$ and so $\cM(W')$ and $\rk_{\cM(W')}(F)$ are both unchanged from $W$ and $W'$ is still capacity-ranked. Furthermore if $x$ is not a co-loop in $\cM(W)$ then it is still not a co-loop in $\cM(W')$ because given any circuit $C$ that $x$ belongs to in $\cM(W)$, the neighbors of $C$ in $G'$ have weakly smaller rank than in $G^\cE_W$ and hence $C$ remains a circuit in $\cM(W')$. Therefore, using basis exchange inside $\cM(W')$ we can remove $x_j$ from $B$ obtaining a new basis $B'$. We still have $A\subseteq B'$. We know there is an independent matching $M'$ in $G'$ associated to $B'$ and since $B$ was full rank on $F$, we know that the new element added to $B'$ was not in $F$.
    
    If $B'$ is well-configured as a basis of $W$ then we are done. If not, then given that $M'$ does not match $x_j$ or $q_j$ but by capacity-rankedness must match some end of $q$ outside of $F$, then it must be the case that the end of $q$ outside of $F$ that it matches, call it $q_i'$, is in the same $F^c_i$ as $q_j$, and all the ends of $q$ in other $F^c_j$ (of which there is at least one call it $q_j'$) are unmatched but have all their supporting vertices matched.

   Therefore with $q_j'$ as the blocked end and $q_i'$ as the opposite end we are in the same situation as before and with the same overall rank and with $F^c_1$ and $F^c_2$ both each containing ends of $q$. Continue likewise by removing $q_j'$. Following this procedure at each step we either obtain a $B$ which is well-configured as a basis of $W$, or we obtain new blocked and opposite ends while maintaining he same overall rank. Since $q$ has finitely many ends, this process must eventually result in a well-configured basis $B$ of $W$.

   Let $B_1$ now be a well-configured basis of $W$ with $x_1$, $x_2$, $q_1$, $q_2$ as in the definition of well-configured. Without loss of generality say $x_1\in F^c_2$ and $x_2\in F^c_1$. By construction $B_1' = (B\setminus x_1) \cup x_2$ is also a basis and has a corresponding matching obtained by removing the edge between $x_1$ and $q_1$ in $M$ and replacing it by the edge between $x_2$ and $q_2$. As observed above $B_2 = (B\setminus y_1) \cup y_2$ is also well-configured with the same matching on $F^c$ and so $B_2' = (B\setminus \{y_1, x_1\}) \cup \{y_2, x_2\}$ is also a basis for the same reason as $B_1'$.

   Using this notation, we build a set $D$ that is dependent in $\cM(W)$ and show that it is an independent set in $\cB_{\cI_W}$, showing that $\cM(W)$ is not a positroid. Let $D=(B\setminus x_1)\cup y_2$. Then $C\subseteq D$ so $D$ is dependent in $\cM(W)$. 

    The rest of the proof follows from the observation that for any independent set $B \in \cM(W)$, $I_\alpha <_\alpha B$ in the Gale ordering for all $\alpha \in [n]$. Note that in the cyclically ordered set $[n]$, we have that $x_i < y_i < x_j < y_j$. Therefore, for any $\alpha \in [n]$, we may find a valid situation where a $y$ vertex comes after and $x$ vertex. Therefore, we have that \bas I_\alpha <_\alpha \begin{cases} B_i <_\alpha D = (B_1 \setminus x_1) \cup y_2 & \textrm{if } x_1 <_\alpha y_1 \\    
	B'_2 <_\alpha D = (B'_2 \setminus x_2) \cup y_1 & \textrm{if } x_1 <_\alpha y_2 \\  
\end{cases} \;. \eas In the first case, $\alpha \in [y_2+1 , x_1]$. In the second case, $\alpha \in [y_1+1 , x_2]$, where, by construction, $y_1 < x_1 < y_2 <x_2$. Therefore, every $\alpha \in [n]$ is covered by one of the cases. Note that since in every case, one removes a smaller element in the appropriate linear order with a larger element, the Gale ordering holds. 

Therefore, we have found a dependent set $D$ that is in the set of bases defined by the Grassmann necklace of $\cM(W)$, $\cB_{\cI_W}$, making $\cM(W)$ not a positroid. 
   \end{proof}

\begin{proof}[Proof of Theorem~\ref{res:posivityonflats}]
The proof of Theorem~\ref{res:posivityonflats} then follows from Lemmas~\ref{res:cond5 implies positroid} and \ref{res:positroid implies cond5}.  

Note that the lemmas vacuously apply even when $\cF$ is empty, however in this case there is also an elementary argument: if $\cF$ is empty, every connected dependent flat of $\cM(W)$ has full rank, so by Lemma \ref{res:flatcondpos}, $\cM(W)$ is uniform, and thus a positroid.
\end{proof}

\begin{eg}\label{eg:exampleFQ_F}
Let $W$ be the generalized Wilson Loop diagram from Example \ref{eg:working example}. Note that it is not locally minimal. Namely, one can remove the end of $q$ on the edge $e_6$ without changing the independence structure of $\cM(W)$ (see Example~\ref{eg: half props clear}). One may check that the resulting diagram after making this change is locally minimal. 
	\[\begin{tikzpicture}[rotate=-67.5, line width=1, scale=1.5]
		%define number of vertices
		\def \n {9}
		%draw circle with nodes
		\draw circle(1)
		\foreach \v in{1,...,\n}
		{(360*\v/\n-360/\n+180:1)circle(.4pt)circle(.8pt)circle(1.2pt)circle(1.4pt) node[anchor=360/\n*\v-360/\n-67.5]{$\v$}};
		%draw multi-edge propagator
		\foreach \x/\y in {-0.940/-0.342,0.058/-0.998,-0.500/0.866}{\draw[decorate,decoration={snake,amplitude=0.8mm}] (-0.661,-0.358) -- (\x,\y);}
		%label multi-edge propagator
		\draw (-0.661,-0.358) node[shift = {(.15,-.15)}] {\small2};
		%draw 2 ended propagator
		\draw[dashed, decorate,decoration={snake,amplitude=0.8mm}] (0.500,-0.866) -- (-0.973,0.231);
		%draw 2 ended propagator
		\draw[dotted, decorate,decoration={snake,amplitude=0.8mm}] (1.0,0) -- (-0.894,0.449);
		%draw single-ended propagator
		\draw[decorate,decoration={snake,amplitude=0.8mm}] (0.500,0.866) -- ++({atan(1.73)}:-.25 cm);
	\end{tikzpicture}\]
	One may further check that the associated matroid structure remains unchanged.

We present a few flats and sets of propagators $Q_F$ for the modified diagram above, with $p$ and $s$ draw with solid lines ($s$ being a single-ended propagator), $q$ with a dashed line, and $r$ with a dotted line. One may also check that $W$ is capacity-ranked.

Let $F = \{4,5,6,8,9\}$. Note that these vertices support the propagators $\Prop(F) = \{p, q, r\}$. However, notice that $F$ only supports one end of $p$. In fact, $0< c_F(p) < c(p)$. First, we show that $F$ is a flat by checking that $\rk(F \cup x) > \rk(F)$ for every $x \in F^c$. The element $7$ supports only $s$, and is the only support of $s$. Since $W$ is capacity-ranked, $7$ is a co-loop, and is independent of $F$. Similarly, the vertices in $[1,3]$ all support an end of $q$ that is distinct from the end supported in $F$. Therefore, $\rk(F\cup x) = \rk(F) + 1$. Note that the subset $\{5, 6\}$ is a circuit, so $F$ is a dependent flat. It remains to check that $F$ is connected, which can be done by exhaustively checking that no subset $S \subsetneq F$ of $S$ does not satisfy $\rk(S) + \rk(F \setminus S) = \rk(F)$. Furthermore, $F$ contains none of the vertices supporting $s$ ($c_F(s) = 0$). Therefore, $Q_F = \{p, s\}$. The support of $s$ is in a single cyclic interval in the complement of $F$, and the support of the two ends of $p$ not supported by $F$ are in another. 

Let $F' = \{ 2, 3, 4, 8\}$ which is a flat supporting the propagators $\Prop(F') = \{p, q\}$. To check that $F'$ is a flat, we check that adding any vertex outside of $F'$ to $F'$ increases its rank. As before $7$ is a co-loop, and therefore is independent of $F'$. The vertices $\{1, 5, 6, 9\}$ support $r$, which is not in $\Prop(F')$. Therefore, these vertices are independent of $F'$. The propagators $r$ and $s$ are not at all supported by these vertices, and therefore $Q_{F'} = \{ r, s\}$. In this case, there are two connected components of $F' = \{[2,4], \{8\}\}$. However, note that any cyclic interval containing $F'$ contains vertices of $r$ outside of $F'$. For example, consider the intervals $A = [8, 4]$ and $B = [2,8]$. The vertex set of $q$, $V(q) = \{1, 5, 6, 9\}$ intersects both $A \setminus F = \{ 1, 9\}$ and $B \setminus F = \{ 5, 6\}$. Therefore, by Theorem \ref{res:posivityonflats}, this diagram does not correspond to a positroid. 

Finally, to illustrate the argument given in the proof of Lemma \ref{res:positroid implies cond5}, when condition \eqref{eq:supportonlyinF} does not hold, let $F'_1 = [2,4]$ and $F'_2 = \{8\}$. Then $v = 5 = x$ and $y = 3$, with $I_5 = 57891$ and $B = 78913$. For all $\alpha$ where $x <_\alpha y$, i.e.\ for $\alpha \in [4,5]$, one can check that $I_\alpha <_\alpha I_v <_\alpha B$. The possibilities for $z$ are $\{1, 9\}$. For all $\alpha$ where $\alpha <_\alpha z <_\alpha y <_\alpha x$, i.e.\ $\alpha \in [6, 1]$, $I_\alpha^{(i)} \leq_\alpha I_v^{(i)}$ while $I_v^{(i)} \leq_\alpha 1$. The next element of $I_\alpha$ is $z$. The insertion of this index allows the subsequent indices in $I_\alpha$ to not be greater than the corresponding indices of $B$. Finally, for $\alpha \in [2, 3]$, where $z <_\alpha \alpha <_\alpha y <_\alpha x$, $I_\alpha$ $I_\alpha^{(i)} \leq_\alpha I_v^{(i)}$ while $I_v^{(i)} \leq_\alpha 3$. The next index in $I_\alpha$ is then $y$, which allows the subsequent indices in $I_\alpha$ to not be greater than the corresponding indices of $B$.
\end{eg}

We can use this to give useful conditions on which generalized Wilson loop diagrams give rise to positroids.

\begin{cor} \label{res:non-crosspos}
	If $W$ is a capacity-ranked generalized Wilson loop diagram with no crossing propagators, then $\cM(W)$ is a positroid.
\end{cor}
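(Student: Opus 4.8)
The plan is to obtain the corollary directly from Lemma~\ref{res:cond5 implies positroid}, which already guarantees that $\cM(W)$ is a positroid as soon as condition~\eqref{eq:supportonlyinF} holds for every $F\in\cF$ and every $q\in Q_F$, and which crucially does \emph{not} require local minimality. So the entire task is to verify \eqref{eq:supportonlyinF} using only non-crossingness. First I would install the standing setup of Section~\ref{sec:positive gWLD}: by Lemma~\ref{res:ignorenon-supporting} I may delete the vertices supporting no propagator, since they are loops and this changes neither whether $\cM(W)$ is a positroid, nor (being a restriction that removes no ends) non-crossingness or capacity-rankedness; and by Example~\ref{eg:cap equal end num} I may assume every propagator with $c(q)=|\End(q)|$ is single-ended, as replacing such a propagator by single-ended ones preserves the matroid and introduces no crossings. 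Thus I may assume every vertex supports a propagator.

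Next, fix $F\in\cF$ and $q\in Q_F$. Recall that $F$ is a disjoint union of cyclic intervals and that \eqref{eq:supportonlyinF} amounts to the statement that the ends of $q$ lying outside $F$ all sit in a single gap of $F^c$ (equivalently, the cyclic interval in $\Int_F$ that fills every other gap avoids them). If $q$ is single-ended, then $c_F(q)<c(q)=1$ forces its unique end out of $F$, trivially into one gap, so \eqref{eq:supportonlyinF} holds. So assume $q$ is multi-ended; then $c_F(q)<c(q)<|\End(q)|$, and by Lemma~\ref{res:q ends in or out} each end of $q$ lies wholly inside or wholly outside $F$. Suppose for contradiction that $q$ has ends outside $F$ in two distinct gaps, say $c_1$ in one and $c_2$ in the other. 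These two ends cut the circle into two arcs, partitioning $F=A_1\sqcup A_2$; since $c_1$ and $c_2$ lie in distinct gaps, which are separated by intervals of $F$, both $A_1$ and $A_2$ are non-empty. My goal is to force a crossing.

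Since $F\in\cF$, the matroid $\cM(W)|_F$ is connected, so Lemma~\ref{res:disjointprops} (applied to $F=A_1\sqcup A_2$) yields a propagator $p$ supported on both $A_1$ and $A_2$. If $p\neq q$, then $p$ has an end $a^{*}$ with support meeting $A_1$ and an end $b^{*}$ with support meeting $A_2$; together with $c_1,c_2$ these are four elements of $\{v_1,e_1,\ldots,v_n,e_n\}$, distinct because $a^{*},b^{*}$ have support meeting $F$ while $c_1,c_2$ are supported outside $F$. A short check of positions (handling the case where $a^{*}$ or $b^{*}$ is an edge straddling $F$ and a gap) shows they occur cyclically as $c_1,a^{*},c_2,b^{*}$, i.e.\ $a^{*}<c_2<b^{*}<c_1<a^{*}$, which is exactly the first type of crossing of $p$ and $q$ in Definition~\ref{dfn:non-crossing} --- contradicting non-crossingness. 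The main obstacle is precisely the possibility that the \emph{only} propagator joining $A_1$ and $A_2$ is $q$ itself, so that no distinct $p$ is available. To rule this out I would exploit that $q\in Q_F$ is multi-ended: here $c_F(q)$ equals the number of ends of $q$ inside $F$, so in $\sfM_\cE|_{N(F)}$ those ends carry a uniform matroid of full rank, i.e.\ a free matroid, which splits as a direct sum across the $A_1$-ends and $A_2$-ends (each $q$-end in $F$ lying wholly in one of $A_1,A_2$, since the arcs are separated by a gap). Consequently, if $\Prop(A_1)\cap\Prop(A_2)=\{q\}$, then $G^\cE_W|_F$ disconnects into an $A_1$-part and an $A_2$-part with $\sfM_\cE|_{N(F)}$ splitting compatibly, and Lemma~\ref{res:rado disconnected} gives $\cM(W)|_F=\cM(W)|_{A_1}\oplus\cM(W)|_{A_2}$, contradicting connectivity of $F$. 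Hence a connecting propagator $p\neq q$ always exists, the crossing is unavoidable, \eqref{eq:supportonlyinF} holds in all cases, and Lemma~\ref{res:cond5 implies positroid} completes the proof.
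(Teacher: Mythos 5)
Your proof is correct and follows essentially the same route as the paper's: reduce to the case where every vertex supports a propagator via Lemma~\ref{res:ignorenon-supporting}, invoke Lemma~\ref{res:cond5 implies positroid}, and, assuming some $q\in Q_F$ has ends in two distinct gaps of $F^c$, use Lemma~\ref{res:q ends in or out} and the connectivity of $F$ to produce a propagator $p\neq q$ whose ends interleave with those of $q$, yielding a forbidden crossing. Your direct-sum elaboration (via Lemma~\ref{res:rado disconnected}) of why the connecting propagator can be chosen distinct from $q$ simply fleshes out the paper's one-line justification that $q$'s ends restricted to $F$ form an independent set and so cannot account for the connectivity of $F$.
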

\begin{proof}
    By Lemma \ref{res:ignorenon-supporting}, if $W$ has a non-empty set of vertices, $V$, that do not support any propagators, we may write $\cM(W) = \cM(W)|_V \oplus \cM(W)|_{V^c}$. Therefore, without loss of generality, we may pass to $W|_{V^c}$, where all the vertices support propagators, and call this $W$. 

    Suppose that $\cM(W)$ is not a positroid, so by Lemma~\ref{res:cond5 implies positroid} there is some $F\in \cF$ and $q\in Q_F$ which do not satisfy condition \eqref{eq:supportonlyinF}. Therefore $q$ has at least two ends $a_1$ and $a_2$ supported in different cyclic intervals of the complement of $F$. By Lemma~\ref{res:q ends in or out} $a_1$ and $a_2$ must not also be supported in $F$. Therefore $F$ is partitioned into two disjoint parts, the part strictly between $a_1$ and $a_2$ in the cyclic order and the part strictly between $a_2$ and $a_1$ in the cyclic order. Since $F$ is connected, some propagator $p$ must have an end in each of these two parts and we can choose $p\neq q$ since $q$ restricted to $F$ is independent so does not lead to the connectivity of $F$. Let $b_1$ and $b_2$ be these two ends. Then $a_1, b_1, a_2, b_2$ give a crossing of $p$ and $q$.

    Taking the contrapositive, if $W$ has no crossing propagators then $\cM(W)$ is a positroid.

\end{proof}

The converse is partially true.

\begin{cor} \label{res:nonposflatcross}
	Let $W$ be a locally minimal capacity-ranked generalized Wilson loop diagram. Suppose $W$ has two crossing propagators, $p$ and $q$. Suppose further that there is $F\in\cF$ such that $q \in Q_F$ but $p \not \in Q_F$. Specifically, let $q$ have two ends, $q_1$ and $q_2$ not in $\End(F)$. If $\End(F)$ contains an end of $p$ that falls between $q_1$ and $q_2$ in the cyclic order on ends (Definition \ref{dfn:global cyclic end order}) and another end of $p$ falls between $q_2$ and $q_1$, then $\cM(W)$ is not a positroid. 
\end{cor}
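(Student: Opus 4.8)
The plan is to deduce non-positivity from the ``only if'' direction of Theorem~\ref{res:posivityonflats}, namely Lemma~\ref{res:positroid implies cond5}: it suffices to exhibit some $F\in\cF$ and $q\in Q_F$ for which condition~\eqref{eq:supportonlyinF} fails, and the given $F$ and $q$ are the natural candidates. Concretely, I would show that for \emph{every} cyclic interval $[a,b]\in\Int_F$ one has $\cE(q)\cap\End([a,b]\setminus F)\neq\emptyset$. Since $q_1,q_2\notin\End(F)$, their supports are disjoint from $F$ (consistent with Lemma~\ref{res:q ends in or out}), so it is enough to prove that every cyclic interval $[a,b]\supseteq F$ contains in its vertex set a support vertex of $q_1$ or of $q_2$; such a vertex then lies in $[a,b]\setminus F$ and witnesses $q_1\in\End([a,b]\setminus F)$ or $q_2\in\End([a,b]\setminus F)$. (We may assume every vertex supports a propagator: the existence of the connected flat $F$ with $\End(F)\neq\emptyset$ forbids loops, which are exactly the non-supporting vertices and lie in every flat; alternatively one reduces to this case as in the proof of Corollary~\ref{res:non-crosspos} via Lemma~\ref{res:ignorenon-supporting}.)

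The key step is a cyclic-order argument carried out on the ground set $C=\{v_1,e_1,\ldots,v_n,e_n\}$ rather than on the ends. First I would record that $q_1,p_1,q_2,p_2$ sit on four \emph{distinct} elements of $C$: the supports of $q_1,q_2$ are disjoint from $F$ while those of $p_1,p_2$ meet $F$, so no $q$-end and no $p$-end can share a $C$-element; and the interlacing hypothesis forbids $q_1,q_2$ (resp.\ $p_1,p_2$) from sharing a $C$-element, since the ends on a common element of $C$ form a consecutive block in the order of Definition~\ref{dfn:global cyclic end order}, so no $p$-end on another element could fall strictly between $q_1$ and $q_2$. Because the cyclic order on ends refines the cyclic order on $C$, these four distinct elements $c_{q_1},c_{p_1},c_{q_2},c_{p_2}$ therefore occur in the strict cyclic order $c_{q_1}<c_{p_1}<c_{q_2}<c_{p_2}$ on $C$.

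Now fix $[a,b]\supseteq F$ and suppose, for contradiction, that $V(q_1),V(q_2)\subseteq[b+1,a-1]$. Then $c_{q_1},c_{q_2}$ lie in the $C$-arc $J=\{v_{b+1},e_{b+1},\ldots,v_{a-1}\}$ consisting of those elements of $C$ whose support lies entirely in $[b+1,a-1]$. On the other hand $p_1,p_2\in\End(F)$, so $V(p_1)$ and $V(p_2)$ each meet $F\subseteq[a,b]$; hence $c_{p_1},c_{p_2}$ have support meeting $[a,b]$ and so lie in the complementary arc $K=C\setminus J$. Since $J$ and $K$ are complementary arcs of the cycle $C$, the two points $c_{q_1},c_{q_2}$ of $J$ are consecutive among our four and cannot be separated in cyclic order by the two points $c_{p_1},c_{p_2}$ of $K$, contradicting $c_{q_1}<c_{p_1}<c_{q_2}<c_{p_2}$. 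Thus every $[a,b]\supseteq F$ meets $V(q_1)\cup V(q_2)$, condition~\eqref{eq:supportonlyinF} fails for $F$ and $q$, and Lemma~\ref{res:positroid implies cond5} yields that $\cM(W)$ is not a positroid.

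The main obstacle I anticipate is the bookkeeping at the endpoints of $[a,b]$: translating ``$[a,b]$ contains $F$ but avoids the $q$-supports'' into clean membership of $C$-elements in the complementary arcs $J$ and $K$, especially for ends sitting on edges, whose support is a straddling pair of vertices. Working on $C$ instead of on $[n]$, and phrasing everything through the dichotomy ``support entirely inside $[b+1,a-1]$'' versus ``support meets $[a,b]$'', is exactly what I expect to make this subtlety disappear, since it reduces the whole argument to the impossibility of two points of one arc being separated by two points of the complementary arc.
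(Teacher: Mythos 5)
Your proof is correct and takes essentially the same route as the paper's: both conclude via the failure of condition \eqref{eq:supportonlyinF} for this $F$ and $q$ together with Lemma~\ref{res:positroid implies cond5}, the key point being that the interleaving of the $p$-ends in $\End(F)$ with $q_1$ and $q_2$ forces every cyclic interval containing $F$ to contain a supporting vertex of $q_1$ or $q_2$, necessarily outside $F$. Your arc argument on $C$ (and the reduction to the case where every vertex supports a propagator) simply makes explicit the steps the paper's two-line proof leaves implicit.
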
 
\begin{proof}
	
	Since $\End(F)$ contains ends of $p$ both between the ends $q_1$ and $q_2$ and between $q_2$ and $q_1$, any interval $[a,b]$ containing $F$ will contain at least one vertex supporting either $q_1$ or $q_2$. However, by construction, these vertices are not in $F$. Therefore, $\cM(W)$ is not a positroid.
\end{proof}

\begin{eg}\label{eg:nonposflatcross}
	A simple example of the statement of Corollary \ref{res:nonposflatcross} is when all ends of $p$ are in $\cF$. Then one is guaranteed that two cyclic intervals containing vertices supporting $q$ are in any cyclic interval containing $F$. For example, consider the following diagram: 
	\begin{center}
	\begin{tikzpicture}[rotate=-67.5, line width=1, scale=2]
		%define number of vertices
		\def \n {12}
		%draw circle with nodes
		\draw circle(1)
		\foreach[count = \vi] \v in{1,...,9, A, B, C}
		{(360*\vi/\n-360/\n+180:1)circle(.4pt)circle(.8pt)circle(1.2pt)circle(1.4pt) node[anchor=360/\n*\vi-360/\n-67.5]{$\v$}};
		%draw multi-edge propagator
		\foreach \x/\y in {-0.966/-0.259,-0.707/-0.707,0.707/-0.707,0.259/0.966}{\draw[dotted, decorate,decoration={snake,amplitude=0.8mm}] (-0.177,-0.177) -- (\x,\y);}   
		%label multi-edge propagator
		\draw (-0.177,-0.177) node[shift = {(.15,.2)}] {\small3};
		%draw multi-edge propagator
		\foreach \x/\y in {-0.259/-0.966,0.966/0.259,-0.707/0.707}{\draw[decorate,decoration={snake,amplitude=0.8mm}] (.1,.1) -- (\x,\y);}
		%label multi-edge propagator
		\draw (0,0) node[shift = {(.15,-.4)}] {\small2};
	\end{tikzpicture}
	\end{center}
	
	with $p$ the propagator with $3$ ends and $q$ the propagator with $4$. Let $F = \{3, 4, 7, 8, B, C\}$. In this case $V(p) = F$ and $q \in Q_{F}$. Any cyclic interval containing $F$ contains either the vertices $\{1,2\}$, $\{5,6\}$ or $\{9, A\}$ all of which support ends of $q$ and are not in $F$. 
	
\end{eg}
	
Note that this does not preclude diagrams with crossing propagators from corresponding to positroids. Indeed, in the case of ordinary Wilson loop diagrams, if there was subdiagram with crossing propagators that corresponded to a uniform matroid, then the original Wilson loop diagram would still correspond to a positroid \cite{generalcombinatoricsI}. While a complete characterization of this in the case of generalized Wilson loop diagram is difficult, we give a few examples of diagrams with crossing propagators which still define a positroid.

\begin{eg}\label{eg:crossing_pos} 
For example, we give a diagram for which the condition of Corollary \ref{res:non-crosspos} is not satisfied, that is for every connected cyclic flat, $F$, the propagators in $Q_F$ have ends supported in exactly one cyclic interval in the complement. Furthermore, one may check by hand that this diagram gives rise to a positroid. The diagram in question is the following:
\begin{align*}
\begin{tikzpicture}[rotate=-67.5, line width=1, scale=1.5]
	%define number of vertices
	\def \n {12}
	%draw circle with nodes
	\draw circle(1)
	\foreach \v in{1,...,\n}
	{(360*\v/\n-360/\n+180:1)circle(.4pt)circle(.8pt)circle(1.2pt)circle(1.4pt) node[anchor=360/\n*\v-360/\n-67.5]{$\v$}};
	%draw multi-edge propagator
	\foreach \x/\y in {-0.985/-0.174,0.766/-0.643,0.174/0.985,-0.707/0.707}{\draw[dashed, decorate,decoration={snake,amplitude=0.8mm}] (-0.188,0.219) -- (\x,\y);}
	%label multi-edge propagator
	\draw (-0.188,0.219) node[shift = {(.15,-.2)}] {\small3};
	%draw multi-edge propagator
	\foreach \x/\y in {-0.940/-0.342,0.643/-0.766,0.985/0.174}{\draw[dotted, decorate,decoration={snake,amplitude=0.8mm}] (0.029,-0.311) -- (\x,\y);}
	%draw single-ended propagator
	\draw[decorate,decoration={snake,amplitude=0.8mm}] (-0.259,-0.966) -- ++({atan(3.73)}:.25 cm);
	%draw 2 ended propagator
	\draw[decorate,decoration={snake,amplitude=0.8mm}] (0.940,0.342) -- (0.342,0.940);
\end{tikzpicture}
\end{align*}
The only cyclic flat that is not a cyclic interval is $F = \{ 1,2,5, 6\}$. Let $q$ be the propagator indicated by the dashed line and $p$ the propagator indicated by a dotted line. Then $q \in Q_F$, and $F$ supports two ends of $p$. However, all the ends of $q$ not supported by vertices in $F$ are in one cyclic interval in the complement of $F$, namely, $[7,12]$. Therefore, by Lemma \ref{res:cond5 implies positroid}, $\cM(W)$ is a positroid, even though $W$ has crossing propagators.
\end{eg}

We conclude by noting that there is a recent closely related and independently derived result characterizing when a matroid is a positroid. We restate it here in slightly different language than in the original.

\begin{thm} [Theorem 3.1, \cite{bonin23}]\label{res:Boninposcharacterization}
	Let $M$ be a matroid on the cyclically ordered ground set $[n]$ with no loops. Then $M$ is a positroid if and only if, for each proper connected flat $F$ with $|F| \geq 2$ and each connected component $K$ of the contraction $M/F$, the ground set of $K$ is a subset of a cyclic interval that is disjoint from $F$.
\end{thm}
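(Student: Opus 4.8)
The plan is to prove this characterization through Grassmann necklaces and the Gale order, following the same architecture as our proof of Theorem~\ref{res:posivityonflats} but working intrinsically with $M$ rather than with propagators and capacities. Recall that $M = ([n], \cB)$ is a positroid if and only if $\cB = \cB_\cI$ for $\cI$ its associated Grassmann necklace, and since $\cB \subseteq \cB_\cI$ always holds, this is equivalent to $\cB_\cI \setminus \cB = \emptyset$; by Lemma~\ref{res:greedyGN} each necklace element $I_\alpha$ is built greedily in the $<_\alpha$ order. The first step is a reduction: because $M$ has no loops, a connected flat $F$ with $|F| \geq 2$ must be dependent (an independent flat of size $\geq 2$ splits as a rank-additive direct sum and is disconnected), and ``proper'' forces $\rk(F) < \rk(M)$, so by the argument of Remark~\ref{rmk:cFconnectedcyclic} such an $F$ is a connected cyclic flat. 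Thus the hypothesis ranges exactly over the flats in a set analogous to $\cF$. Moreover the connected components of $M/F$ partition the complement $F^c$, and Bonin's condition says precisely that this partition refines the partition of $F^c$ into the cyclic ``gaps'' of $F$ (each component lying in a single gap, which is a cyclic interval disjoint from $F$).

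For the necessity direction I would prove the contrapositive, mirroring Lemma~\ref{res:positroid implies cond5}. Assume the condition fails for some proper connected cyclic flat $F$, so some connected component $K$ of $M/F$ meets at least two cyclic gaps of $F$; fix elements $x_1, x_2 \in K$ in distinct gaps. Using that $K$ is connected and that $M|_F$ is connected, I would choose a circuit $C \subseteq F$ with elements on both arcs of $F$ cut out by $x_1$ and $x_2$, build a ``well-configured'' basis $B$ extending an independent set with closure $F$ (one gap-element of $K$ matched and another, together with a supporting element, unmatched in the associated representation), and then perform the cyclic swaps on $C$ and on $\{x_1,x_2\}$ to produce a dependent set $D$ of size $\rk(M)$ with $I_\alpha \leq_\alpha D$ for all $\alpha$. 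This exhibits $D \in \cB_\cI \setminus \cB$. Here the role played in our setting by the propagator $q$ and its ends in distinct gaps is taken over directly by the component $K$ and the elements $x_1, x_2$.

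For the sufficiency direction I would mirror Lemma~\ref{res:cond5 implies positroid}. Take any dependent $B$ with $|B| = \rk(M) = k$, let $U \subseteq B$ be a circuit, and set $F = \cl(U)$, a proper connected cyclic flat of rank $< k$. By hypothesis each connected component of $M/F$ lies in a single gap, so rank is additive across the gap partition and $M/F = \bigoplus_i (M/F)|_{F_i^c}$, where the $F_i^c$ are the gaps (Definition~\ref{dfn:matroid direct sum}). One then runs the counting argument: if $F$ is a single cyclic interval $[a,b]$, the greedy necklace $I_a$ exhausts $F$ before $B$ can, forcing $B^{(|U|)} \leq_a b <_a I_a^{(|U|)}$ and hence $I_a \not\leq_a B$; if $F$ consists of $d \geq 2$ intervals, then comparing $|B \cap F| \geq |U| = \rk(F)+1$ against the excess ranks $r_i = \rk(F \cup F_i^c) - \rk(F)$ shows some gap $F_i^c$ contains fewer than $r_i$ elements of $B$, and starting the necklace at the first vertex $a_{i+1}$ of the following interval of $F$ yields $I_{a_{i+1}} \not\leq_{a_{i+1}} B$. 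In either case $B \notin \cB_\cI$, so $\cB_\cI = \cB$ and $M$ is a positroid.

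The main obstacle is the necessity direction, specifically reconstructing the well-configured basis purely from matroid axioms. In our work we leaned on the explicit bipartite-graph and independent-matching model of $\cM(W)$ to manipulate matchings edge by edge and to run the inductive ``unblocking'' procedure; without that model one must reproduce that procedure using only independence exchange together with the connectivity of $M|_F$ and of $K$, tracking how removing a blocked element of $K$ preserves non-co-loop status and overall rank. A secondary technical point is making the reduction of the first paragraph fully rigorous, so that the condition over all proper connected flats with $|F| \geq 2$ is genuinely equivalent to the condition over connected cyclic flats and no independent or rank-splitting flat imposes an extra constraint; this is exactly where the loopless hypothesis is essential, since a loop would lie in every flat and trivially disconnect every contraction.
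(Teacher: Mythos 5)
First, a point of orientation: the paper does not prove this statement at all --- it is quoted from \cite{bonin23} --- so the only in-paper material to compare with is the proof of the diagrammatic analogue, Theorem~\ref{res:posivityonflats} (via Lemmas~\ref{res:cond5 implies positroid} and~\ref{res:positroid implies cond5}), which is exactly what you adapt. Your reduction is correct: with no loops, a connected flat with $|F|\geq 2$ is dependent (an independent flat restricts to a free matroid, hence is disconnected), properness of a flat forces $\rk(F)<\rk(M)$, and the argument of Remark~\ref{rmk:cFconnectedcyclic} makes such an $F$ cyclic. Your sufficiency direction is sound, and is in fact \emph{simpler} intrinsically than in the paper: since $M/F$ is by definition the direct sum of its connected components, the hypothesis that each component sits inside one cyclic gap of $F$ gives $M/F=\bigoplus_i (M/F)|_{F_i^c}$ immediately, with no need for analogues of Lemmas~\ref{res:contraction matroid}, \ref{res:cyclic flat prop ranked} and~\ref{res:disjointprops}; the greedy/Gale-order counting (Lemma~\ref{res:greedyGN}) then runs verbatim. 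You should just record, as the paper does in one line, that $F=\cl(U)$ is connected (every $x\in\cl(U)$ shares a fundamental circuit with elements of the circuit $U$), that $|F|\geq 2$ (no loops), and that $F$ is proper, so the hypothesis genuinely applies to it.

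The genuine gap is where you flag it yourself, in the necessity direction, and as written it is more than a technicality: a ``well-configured'' basis with ``one gap-element of $K$ matched and another, together with a supporting element, unmatched in the associated representation'' is meaningless for an abstract $M$ --- there is no bipartite graph, no ends, no supporting vertices --- and the paper's blocked/opposite-end induction cannot be transplanted, since it manipulates matchings edge by edge. But you do not need any of it: that machinery exists in Lemma~\ref{res:positroid implies cond5} only because $\cM(W)$ is presented as a Rado matroid. Intrinsically, use the standard fact that a matroid on at least two elements is connected if and only if every pair of its elements lies on a common circuit. A violating component $K$ of $M/F$ necessarily has $|K|\geq 2$ (singleton components never violate the condition), so pick $x_1,x_2\in K$ in distinct gaps and a circuit $C'$ of $M/F$ with $x_1,x_2\in C'$; extend $C'\setminus x_2$ to a basis $B'$ of $M/F$, fix a basis $A$ of $M|_F$ containing $C\setminus y_1$ (with $C\subseteq F$ your circuit meeting both arcs of $F$ determined by $x_1,x_2$), and set $B=A\cup B'$, a basis of $M$. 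Fundamental-circuit exchange inside $C'$ and inside $C$ shows that $(B\setminus x_1)\cup x_2$, $(B\setminus y_2)\cup y_1$, and the double swap are all bases as well --- precisely the quadruple the paper's Gale-order case analysis consumes --- and the choice of $C$ gives the cyclic interleaving $y_1<x_1<y_2<x_2$ that the case analysis requires, so the dependent set obtained by trading a gap element for the missing circuit element dominates every necklace element. With this substitution (circuits through two elements of $K$ in place of matched/unmatched ends) your outline closes into a complete proof; without it, the necessity direction as you wrote it does not get off the ground for a general matroid.
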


We bring the reader's attention to this result because it helps illuminate the otherwise abstruse condition in the definition of $Q_F$.

\begin{rmk} 
	Theorem \ref{res:posivityonflats} is a diagrammatic version of Theorem \ref{res:Boninposcharacterization} for generalized Wilson loop diagrams. To see this, first note that in both statements of the two theorems, one restricts to matroids without loops, or, equivalently by Lemma \ref{res:ignorenon-supporting}, generalized Wilson loop diagrams where every vertex supports a propagator.
	
	Secondly, note that the condition in Theorem \ref{res:Boninposcharacterization} that $F$ be a proper connected flat with $|F| \geq 2$ forces $F$ to be a proper connected cyclic flat, i.e.\ $\cF$ in Theorem \ref{res:posivityonflats}. Specifically, if $|F| = 1$ then it is an independent flat, and thus excluded from consideration in both cases.
	
	Finally, in both theorems, one is interested in certain subsets of the ground set of $M$ to be contained in a single cyclic interval disjoint from $F$. In Theorem \ref{res:Boninposcharacterization}, these sets are the ground sets of the connected components of $M/F$. In Theorem~\ref{res:posivityonflats} these are the supports of the vertices of $q \in Q_F$ that lie outside of $F$. However, by Definition \ref{dfn:contraction graph} note that the propagator set $Q_F$ is exactly those propagators remaining in the diagram corresponding to the contracting by $F$: $W/F$. Since $F$ is a connected cyclic flat, Lemma \ref{res:cyclic flat prop ranked} shows that $c_F(\cP) = \rk_{\cM(W)}(F)$ and Lemma \ref{res:contraction matroid} shows that the corresponding matroid structure is indeed the contraction $\cM(W/F) = \cM(W)/F$.
	
	The connected components of $\cM(W/F)$ can either be single vertices or connected cyclic flats. Single vertices by default are in a single cyclic interval in the complement of $F$, and Theorem \ref{res:posivityonflats} demands that the connected cyclic flat also be in a single cyclic interval in the complement of $F$.
\end{rmk}

\section{Diagrammatic boundary moves}
\label{sec:diagrammaticmoves} 

As discussed in Sections \ref{sec:SYM} and \ref{sec:positroid} we have both mathematical and physical motivations for understanding the boundaries of positroid cells associated to admissible Wilson loop diagrams. It was generally understood by practitioners in the area that some boundaries could be obtained from the diagram by sliding an end of a propagator to an adjacent vertex, or by sliding two adjacent ends of two propagators together \cite{Wilsonloop, Amplituhedronsquared, HeslopStewart}. However, these moves are far from giving even all codimension one boundaries \cite{casestudy, cancellation}, and they could not be readily iterated since they moved outside the framework of admissible ordinary Wilson loop diagrams.

In this section, we advance the understanding of boundaries of Wilson loop diagrams by giving a diagrammatic calculus to compute boundaries of generalized non-crossing Wilson loop diagrams.  Our main results are that our boundary calculus is sufficiently powerful to, when iterated, obtain all boundaries of all codimensions of admissible ordinary Wilson loop diagrams with two propagators (Theorem \ref{thm:boundarymovesgenerate}), and to obtain all boundaries of codimension one of admissible ordinary Wilson loop diagrams with three propagators (Section \ref{sec:boundaries of more than 2 props}).  This vastly exceeds what was previously known.  

Having suitable objects on which to represent boundaries and on which to iterate boundary moves was our original motivation for introducing generalized Wilson loop diagrams. However, the boundary calculus we have developed so far remains insufficient for obtaining all boundaries of all codimensions even of admissible ordinary Wilson loop diagrams, and we discuss potential extensions in Remark \ref{rmk:matroids on props def} and Examples \ref{eg:missingcodim1boundary1} and \ref{eg:missingcodim1boundary2}, with further consideration of ways forward in the conclusion.

We begin by giving the family of diagrammatic moves that form our diagrammatic boundary calculus. 

To be precise, we have 5 classes of moves, which we call the slide move, the merge move, the merge and split move, the X-merge and split move, and the retraction move. These are laid out in Section \ref{sec:moves}. In Section \ref{sec:movesrealizeboundaries}, we show that if we start with any generalized Wilson loop diagram with non-crossing propagators, then the moves will result in a boundary of the positroid one began with. In Section \ref{sec:diagrammaticboundaries}, we prove that for any admissible ordinary Wilson loop diagram with two propagators, our boundary calculus gives all the boundaries of all codimensions.

Table \ref{table:moves} summarizes the different types of moves, the collective application of which we refer to as the graphical calculus.

\begin{longtable}{|c|}
	\caption{Glossary of boundary moves	\label{table:moves}} \\
	
	\hline
	Slide \\
	\hline
	\begin{tikzpicture}[line width=1, scale=2,baseline=0]
		\clip (1.25,0)+(90:1.25cm) arc[start angle = 90, end angle =270,radius=1.25cm] --(1.25,0);
		\draw (0,0)+(30:1cm) arc[start angle = 30, end angle =60,radius=1cm];
		\draw (1,0) arc (0:10:1);
		\draw (1,0) arc (0:-30:1);
		\def \n {10}
		\draw foreach \v in{5.75,7.25}{(360*\v/\n-360/\n+180:1)circle(.4pt)circle(.8pt)circle(1.2pt)node[anchor=360/\n*\v-360/\n] {}};
		\draw (360*5.75/\n-360/\n+180:1) node[anchor=360/\n*6-360/\n]{$i$};
		\draw (360*7.25/\n-360/\n+180:1) node[anchor=360/\n*7-360/\n]{$i+1$};
		\draw (360*6.55/\n-360/\n+180:1) node[rotate=360*6.55/\n-360/\n+90] {$\cdots$};
		\draw (360*8/\n-360/\n+180:1) node[rotate=360*8/\n-360/\n+90] {$\cdots$};
		\draw (360*4.75/\n-360/\n+180:1) node[rotate=360*4.75/\n-360/\n+90] {$\cdots$};
		\draw (.25,.45)+(360*4.25/\n-360/\n+180:.9) node[rotate=360*4.25/\n-360/\n] {$\scriptstyle \cdots$};
		\draw[decorate,decoration={snake}] (-1,0) -- (360*6/\n-360/\n+180:1) node[pos=.75,above]{$p$};
		\draw[decorate,decoration={snake}] (-.776,-.643) -- (.985,-.174);
		\draw[decorate,decoration={snake}] (.342,-.940) -- (.985,-.174);
	\end{tikzpicture}
	\quad $\rightarrow$ \qquad
	\begin{tikzpicture}[line width=1, scale=2,baseline=0]
		\clip (1.25,0)+(90:1.25cm) arc[start angle = 90, end angle =270,radius=1.25cm] --(1.25,0);
		\draw (0,0)+(30:1cm) arc[start angle = 30, end angle =60,radius=1cm];
		\draw (1,0) arc (0:10:1);
		\draw (1,0) arc (0:-30:1);
		\def \n {10}
		\draw foreach \v in{5.75,7.25}{(360*\v/\n-360/\n+180:1)circle(.4pt)circle(.8pt)circle(1.2pt)node[anchor=360/\n*\v-360/\n] {}};
		\draw (360*5.75/\n-360/\n+180:1) node[anchor=360/\n*6-360/\n]{$i$};
		\draw (360*7.25/\n-360/\n+180:1) node[anchor=360/\n*7-360/\n]{$i+1$};
		\draw (360*6.55/\n-360/\n+180:1) node[rotate=360*6.55/\n-360/\n+90] {$\cdots$};
		\draw (360*8/\n-360/\n+180:1) node[rotate=360*8/\n-360/\n+90] {$\cdots$};
		\draw (360*4.75/\n-360/\n+180:1) node[rotate=360*4.75/\n-360/\n+90] {$\cdots$};
		\draw (.25,.45)+(360*4.25/\n-360/\n+180:.9) node[rotate=360*4.25/\n-360/\n] {$\scriptstyle \cdots$};
		\draw[decorate,decoration={snake}] (-1,0) -- (.985,-.174) node[pos=.75,above]{$p$};
		\draw[decorate,decoration={snake}] (-.776,-.643) -- (.985,-.174);
		\draw[decorate,decoration={snake}] (.342,-.940) -- (.985,-.174);
	\end{tikzpicture} \\
	\hline
	Merge \\
	\hline
	On edge: \quad \begin{tikzpicture}[line width=1, scale=2,baseline=0]
		\draw (-1,0) arc (180:86:1);
		\draw (1,0) arc (0:60:1);
		\draw (1,0) arc (0:-30:1);
		\draw (0,-1) arc (270:310:1);
		\draw (0,-1) arc (270:220:1);
		
		\def \n {10}
		\draw foreach \v in{5.3,7.53}{(360*\v/\n-360/\n+180:1)circle(.4pt)circle(.8pt)circle(1.2pt)node[anchor=360/\n*\v-360/\n] {}};
		\draw (360*5.3/\n-360/\n+180:1) node[right, anchor=(360*5.3/\n-360/\n]{$i$};
		\draw (360*7.53/\n-360/\n+180:1) node[anchor=360*7.53/\n-360/\n]{$i+1$};
		\draw (360*1.5/\n-360/\n+180:1) node[rotate=360*1.5/\n-360/\n+90] {$\cdots$};
		\draw (360*8.0625/\n-360/\n+180:1) node[rotate=360*8.0625/\n-360/\n+90] {$\cdots$};
		\draw (360*4.875/\n-360/\n+180:1) node[rotate=360*4.875/\n-360/\n+90] {$\cdots$};
		
		\draw (360*9.75/\n-360/\n+180:.9) node[rotate=360*9.75/\n-360/\n+90] {$\cdots$};
		\draw (360*3.25/\n-360/\n+180:.9) node[rotate=360*3.25/\n-360/\n+90] {$\cdots$};
		\foreach \x/\y in {0.924/0.383,-0.309/0.951,-0.951/0.309}{\draw[decorate,decoration={snake}] (-0.115,0.556) -- (\x,\y);}
		\foreach \x/\y in {-0.588/-0.809,0.309/-0.951,.991/0.131}{\draw[decorate,decoration={snake}] (0.233,-0.517) -- (\x,\y);}
		\draw (-0.115,0.556) node[shift={(.15,-0.45)}] {\tiny$c(p)= |\cE(p)| -1$};
		\draw (0.233,-0.517) node[shift={(-.75,0.25)}] {\tiny$c(q)= |\cE(q)| -1$};
		\draw[white, line width=5] (.866,.5) arc (30:50:1);
		\draw[white, line width=5] (1,0) arc (0:-20:1);
		\draw (360*5.6875/\n-360/\n+180:1) node[rotate=360*5.6875/\n-360/\n+90] {$\cdots$};
		\draw (360*7.0625/\n-360/\n+180:1) node[rotate=360*7.0625/\n-360/\n+90] {$\cdots$};
	\end{tikzpicture}
	\qquad $\rightarrow$ \qquad
	\begin{tikzpicture}[line width=1, scale=2,baseline=0]
		\draw (-1,0) arc (180:86:1);
		\draw (1,0) arc (0:60:1);
		\draw (1,0) arc (0:-30:1);
		\draw (0,-1) arc (270:310:1);
		\draw (0,-1) arc (270:220:1);
		\def \n {10}
		\draw foreach \v in{5.3,7.53}{(360*\v/\n-360/\n+180:1)circle(.4pt)circle(.8pt)circle(1.2pt)node[anchor=360/\n*\v-360/\n] {}};
		\draw (360*5.3/\n-360/\n+180:1) node[right, anchor=(360*5.3/\n-360/\n]{$i$};
		\draw (360*7.53/\n-360/\n+180:1) node[anchor=360*7.53/\n-360/\n]{$i+1$};
		\draw (360*1.5/\n-360/\n+180:1) node[rotate=360*1.5/\n-360/\n+90] {$\cdots$};
		\draw (360*8.0625/\n-360/\n+180:1) node[rotate=360*8.0625/\n-360/\n+90] {$\cdots$};
		\draw (360*4.875/\n-360/\n+180:1) node[rotate=360*4.875/\n-360/\n+90] {$\cdots$};
		\draw (360*9.75/\n-360/\n+180:.9) node[rotate=360*9.75/\n-360/\n+90] {$\cdots$};
		\draw (360*3.25/\n-360/\n+180:.9) node[rotate=360*3.25/\n-360/\n+90] {$\cdots$};
		\foreach \x/\y in {-0.588/-0.809,0.309/-0.951,0.951/0.309,-0.309/0.951,-0.951/0.309}{\draw[decorate,decoration={snake}] (-0.118,-0.038) -- (\x,\y);}
		\draw (-0.118,-0.038) node[shift={(.95,-.15)}] {\scriptsize$c(p)+c(q)$};
		\draw[white, line width=5] (.866,.5) arc (30:50:1);
		\draw[white, line width=5] (1,0) arc (0:-20:1);
		\draw (360*5.6875/\n-360/\n+180:1) node[rotate=360*5.6875/\n-360/\n+90] {$\cdots$};
		\draw (360*7.0625/\n-360/\n+180:1) node[rotate=360*7.0625/\n-360/\n+90] {$\cdots$};
	\end{tikzpicture}\\ \\
	On vertex: $\quad$ \begin{tikzpicture}[line width=1, scale=2,baseline=0]
		\draw (-1,0) arc (180:86:1);
		\draw (1,0) arc (0:60:1);
		\draw (1,0) arc (0:-30:1);
		\draw (0,-1) arc (270:310:1);
		\draw (0,-1) arc (270:220:1);
		\def \n {10}
		\draw foreach \v in{5.3,7.53}{(360*\v/\n-360/\n+180:1)circle(.4pt)circle(.8pt)circle(1.2pt)node[anchor=360/\n*\v-360/\n] {}};
		\draw (360*5.3/\n-360/\n+180:1) node[right, anchor=(360*5.3/\n-360/\n]{$i$};
		\draw (360*7.53/\n-360/\n+180:1) node[anchor=360*7.53/\n-360/\n]{$i+1$};
		\draw (360*1.5/\n-360/\n+180:1) node[rotate=360*1.5/\n-360/\n+90] {$\cdots$};
		\draw (360*8.0625/\n-360/\n+180:1) node[rotate=360*8.0625/\n-360/\n+90] {$\cdots$};
		\draw (360*4.875/\n-360/\n+180:1) node[rotate=360*4.875/\n-360/\n+90] {$\cdots$};
		\draw (360*9.75/\n-360/\n+180:.9) node[rotate=360*9.75/\n-360/\n+90] {$\cdots$};
		\draw (360*3.25/\n-360/\n+180:.9) node[rotate=360*3.25/\n-360/\n+90] {$\cdots$};
		\foreach \x/\y in {1/0,-0.309/0.951,-0.951/0.309}{\draw[decorate,decoration={snake}] (-0.115,0.556) -- (\x,\y);}
		\foreach \x/\y in {-0.588/-0.809,0.309/-0.951,.906/-0.423}{\draw[decorate,decoration={snake}] (0.233,-0.517) -- (\x,\y);}
		\draw (-0.115,0.556) node [shift={(-.25,-0.5)}] {\tiny $c(p)= |\cE(p)| -1$};
		\draw (0.233,-0.517) node [shift={(.2,0.3)}] {\tiny $c(q)= |\cE(q)| -1$};
		\draw[white, line width=5] (.866,.5) arc (30:50:1);
		\draw (360*7.0625/\n-360/\n+180:1) node[rotate=360*7.0625/\n-360/\n+90] {$\cdots$};
	\end{tikzpicture} $\qquad \rightarrow \qquad$
	\begin{tikzpicture}[line width=1, scale=2,baseline=0]
		\draw (-1,0) arc (180:86:1);
		\draw (1,0) arc (0:60:1);
		\draw (1,0) arc (0:-30:1);
		\draw (0,-1) arc (270:310:1);
		\draw (0,-1) arc (270:220:1);
		\def \n {10}
		\draw foreach \v in{5.3,7.53}{(360*\v/\n-360/\n+180:1)circle(.4pt)circle(.8pt)circle(1.2pt)node[anchor=360/\n*\v-360/\n] {}};
		\draw (360*5.3/\n-360/\n+180:1) node[right, anchor=(360*5.3/\n-360/\n]{$i$};
		\draw (360*7.53/\n-360/\n+180:1) node[anchor=360*7.53/\n-360/\n]{$i+1$};
		\draw (360*1.5/\n-360/\n+180:1) node[rotate=360*1.5/\n-360/\n+90] {$\cdots$};
		\draw (360*8.0625/\n-360/\n+180:1) node[rotate=360*8.0625/\n-360/\n+90] {$\cdots$};
		\draw (360*4.875/\n-360/\n+180:1) node[rotate=360*4.875/\n-360/\n+90] {$\cdots$};
		\draw (360*9.75/\n-360/\n+180:.9) node[rotate=360*9.75/\n-360/\n+90] {$\cdots$};
		\draw (360*3.25/\n-360/\n+180:.9) node[rotate=360*3.25/\n-360/\n+90] {$\cdots$};
		\foreach \x/\y in {-0.588/-0.809,0.309/-0.951,.906/-0.423,-0.309/0.951,-0.951/0.309}{\draw[decorate,decoration={snake}] (-0.118,-0.038) -- (\x,\y);}
		\draw (-0.118,-0.038) node[shift={(.75,0.2)}] {\scriptsize$c(p)+c(q)$};
		\draw[white, line width=5] (.866,.5) arc (30:50:1);
		\draw (360*7.0625/\n-360/\n+180:1) node[rotate=360*7.0625/\n-360/\n+90] {$\cdots$};
	\end{tikzpicture}\\
	
	\hline
	Merge and Split \\
	\hline
	\begin{tikzpicture}[line width=1, scale=2,baseline=0]
		\draw (-1,0) arc (180:90:1);
		\draw (1,0) arc (0:45:1);
		\draw (1,0) arc (0:-15:1);
		\draw (0,-1) arc (270:310:1);
		\draw (0,-1) arc (270:220:1);
		\def \n {10}
		\draw foreach \v in{6,7}{(360*\v/\n-360/\n+180:1)circle(.4pt)circle(.8pt)circle(1.2pt)node[anchor=360/\n*\v-360/\n] {}};
		\draw (360*6/\n-360/\n+180:1) node[anchor=360/\n*6-360/\n]{$i$};
		\draw (360*7/\n-360/\n+180:1) node[anchor=360/\n*7-360/\n]{$i+1$};
		\draw (360*1.5/\n-360/\n+180:1) node[rotate=360*1.5/\n-360/\n+90] {$\cdots$};
		\draw (360*1.5/\n-360/\n:.65) node[rotate=360*1.5/\n-360/\n+90] {$\cdots$};	
		\draw (360*7.875/\n-360/\n+180:1) node[rotate=360*7.875/\n-360/\n+90] {$\cdots$};
		\draw (360*5/\n-360/\n+180:1) node[rotate=360*5/\n-360/\n+90] {$\cdots$};
		\draw (360*9.75/\n-360/\n+180:.9) node[rotate=360*9.75/\n-360/\n+90] {$\cdots$};
		\draw (360*3.25/\n-360/\n+180:.9) node[rotate=360*3.25/\n-360/\n+90] {$\cdots$};
		\foreach \x/\y in {0.914/0.407,-0.309/0.951,-0.951/0.309}{\draw[decorate,decoration={snake}] (-0.115,0.556) -- (\x,\y);}
		\foreach \x/\y in {-0.588/-0.809,0.309/-0.951,0.978/0.208}{\draw[decorate,decoration={snake}] (0.233,-0.517) -- (\x,\y);}
		\draw (-0.115,0.54) node [below] {\tiny$c(p_1)$};
		\draw (0.233,-0.4) node [above] {\tiny$c(p_m)$};
	\end{tikzpicture} 
	\quad $\rightarrow$ 
	\begin{tikzpicture}[line width=1, scale=2,baseline=0]
		\draw (-1,0) arc (180:90:1);
		\draw (1,0) arc (0:45:1);
		\draw (1,0) arc (0:-15:1);
		\draw (0,-1) arc (270:310:1);
		\draw (0,-1) arc (270:220:1);
		\def \n {10}
		\foreach \x/\y in {-0.588/-0.809,0.309/-0.951,1.00000000000000/0,-0.309/0.951,-0.951/0.309}{\draw[decorate,decoration={snake}] (-0.108,-0.10) -- (\x,\y);}
		\draw[decorate,decoration={snake}] (0.951,0.309) -- ++({atan(0.325}:-.45cm);
		\draw (360*6/\n-360/\n+180:1) circle(.4pt)circle(.8pt)circle(1.2pt) node[anchor=360/\n*6-360/\n]{$i$};
		\draw (360*7/\n-360/\n+180:1) circle(.4pt)circle(.8pt)circle(1.2pt) node[anchor=360/\n*7-360/\n]{$i+1$};
		\draw (360*1.5/\n-360/\n+180:1) node[rotate=360*1.5/\n-360/\n+90] {$\cdots$};
		\draw (360*1.5/\n-360/\n+180:.65) node[rotate=360*1.5/\n-360/\n+90] {$\cdots$};
		\draw (360*7.875/\n-360/\n+180:1) node[rotate=360*7.875/\n-360/\n+90] {$\cdots$};
		\draw (360*5/\n-360/\n+180:1) node[rotate=360*5/\n-360/\n+90] {$\cdots$};
		\draw (360*9.75/\n-360/\n+180:.9) node[rotate=360*9.75/\n-360/\n+90] {$\cdots$};
		\draw (360*3.25/\n-360/\n+180:.9) node[rotate=360*3.25/\n-360/\n+90] {$\cdots$};
		\draw (0,-0.25) node [right] {\scriptsize$\sum_i^mc(p_i)-1$};
	\end{tikzpicture} \\
	\hline
	X-Merge and Split \\
	\hline
	\begin{tikzpicture}[line width=1, scale=2, baseline=0]
		\draw (-1,0) arc (180:90:1);
		\draw (1,0) arc (0:45:1);
		\draw (1,0) arc (0:-15:1);
		\draw (0,-1) arc (270:310:1);
		\draw (0,-1) arc (270:220:1);
		\def \n {20}
		\draw (360*12/\n-360/\n+180:1)circle(.4pt)circle(.8pt)circle(1.2pt) node[anchor=360/\n*12-360/\n]{$i$};
		\draw (360*11/\n-360/\n+180:1)circle(.4pt)circle(.8pt)circle(1.2pt) node[anchor=360/\n*11-360/\n]{$i-1$};
		\draw (360*13/\n-360/\n+180:1)circle(.4pt)circle(.8pt)circle(1.2pt) node[anchor=360/\n*13-360/\n]{$i+1$};
		\draw (360*2/\n-360/\n+180:1) node[rotate=360*2/\n-360/\n+90] {$\cdots$};
		\draw (360*14.75/\n-360/\n+180:1) node[rotate=360*14.75/\n-360/\n+90] {$\cdots$};
		\draw (360*9/\n-360/\n+180:1) node[rotate=360*9/\n-360/\n+90] {$\cdots$};
		\draw (360*18/\n-360/\n+180:.9) node[rotate=360*18/\n-360/\n+90] {$\cdots$};
		\draw (360*5/\n-360/\n+180:.9) node[rotate=360*5/\n-360/\n+90] {$\cdots$};
		\foreach \x/\y in {0.891/0.454,-0.309/0.951,-0.891/0.454}{\draw[decorate,decoration={snake}] (-0.103,0.620) -- (\x,\y);}
		\foreach \x/\y in {-0.707/-0.707,0/-1.00000000000000,0.988/0.156}{\draw[decorate,decoration={snake}] (0.094,-0.517) -- (\x,\y);}
	\end{tikzpicture}
	\quad $ \rightarrow$ 
	\begin{tikzpicture}[line width=1, scale=2,baseline=0]
		\def \n {20}
		\draw (-1,0) arc (180:90:1);
		\draw (1,0) arc (0:45:1);
		\draw (1,0) arc (0:-15:1);
		\draw (0,-1) arc (270:310:1);
		\draw (0,-1) arc (270:220:1);
		\def \n {20}
		\draw (360*12/\n-360/\n+180:1)circle(.4pt)circle(.8pt)circle(1.2pt) node[anchor=360/\n*12-360/\n]{$i$};
		\draw (360*11/\n-360/\n+180:1)circle(.4pt)circle(.8pt)circle(1.2pt) node[anchor=360/\n*11-360/\n]{$i-1$};
		\draw (360*13/\n-360/\n+180:1)circle(.4pt)circle(.8pt)circle(1.2pt) node[anchor=360/\n*13-360/\n]{$i+1$};
		\draw (360*2/\n-360/\n+180:1) node[rotate=360*2/\n-360/\n+90] {$\cdots$};
		\draw (360*14.75/\n-360/\n+180:1) node[rotate=360*14.75/\n-360/\n+90] {$\cdots$};
		\draw (360*9/\n-360/\n+180:1) node[rotate=360*9/\n-360/\n+90] {$\cdots$};
		\draw (360*18/\n-360/\n+180:.9) node[rotate=360*18/\n-360/\n+90] {$\cdots$};
		\draw (360*5/\n-360/\n+180:.9) node[rotate=360*5/\n-360/\n+90] {$\cdots$};
		\foreach \x/\y in {-0.707/-0.707,0/-1.00000000000000,1.00000000000000/0,0.809/0.588,-0.309/0.951,-0.891/0.454}{\draw[decorate,decoration={snake}] (-0.016,0.048) -- (\x,\y);}
		\draw[decorate,decoration={snake}] (0.951,0.309) -- ++({atan(0.325}:-.5 cm);
	\end{tikzpicture}\\
	\hline
	Retract or Split\\
	\hline
	\begin{tikzpicture}[line width=1, scale=2,baseline=0]
		\draw (-1,0) arc (180:90:1);
		\draw (1,0) arc (0:45:1);
		\draw (1,0) arc (0:-15:1);
		\draw (0,-1) arc (270:310:1);
		\draw (0,-1) arc (270:220:1);
		\def \n {10}
		\foreach \x/\y in {0.951/0.309,0/-1.00000000000000,-0.309/0.951,-0.951/0.309}{\draw[decorate,decoration={snake}] (-0.077,0.142) -- (\x,\y);}
		\draw (0.1,0.2)node[above] {\scriptsize $c(p)$};
		\draw (360*6.5/\n-360/\n+180:1)circle(.4pt)circle(.8pt)circle(1.2pt) node[anchor=360/\n*6.5-360/\n]{$i$};
		\draw (360*1.5/\n-360/\n+180:1) node[rotate=360*1.5/\n-360/\n+90] {$\cdots$};
		\draw (360*7.875/\n-360/\n+180:1) node[rotate=360*7.875/\n-360/\n+90] {$\cdots$};
		\draw (360*5/\n-360/\n+180:1) node[rotate=360*5/\n-360/\n+90] {$\cdots$};
		\draw (360*1.5/\n-360/\n+180:.3) node[rotate=360*1.5/\n-360/\n+100] {$\cdots$};
	\end{tikzpicture} \quad $ \rightarrow$ \quad $\begin{cases}
		\begin{tikzpicture}[line width=1, scale=2,baseline=0]
			\draw (-1,0) arc (180:90:1);
			\draw (1,0) arc (0:45:1);
			\draw (1,0) arc (0:-15:1);
			\draw (0,-1) arc (270:310:1);
			\draw (0,-1) arc (270:220:1);
			\def \n {10}
			\foreach \x/\y in {0/-1.00000000000000,-0.309/0.951,-0.951/0.309}{\draw[decorate,decoration={snake}] (-0.420,0.087) -- (\x,\y);}
			\draw (-0.420,0.087) node[right] {\scriptsize $c(p)-1$};
			\draw[decorate,decoration={snake}] (0.951,0.309) -- ++({atan(0.325}:-.6 cm);
			\draw (360*6.5/\n-360/\n+180:1)circle(.4pt)circle(.8pt)circle(1.2pt) node[anchor=360/\n*6.5-360/\n]{$i$};
			\draw (360*1.5/\n-360/\n+180:1) node[rotate=360*1.5/\n-360/\n+90] {$\cdots$};
			\draw (360*7.875/\n-360/\n+180:1) node[rotate=360*7.875/\n-360/\n+90] {$\cdots$};
			\draw (360*5/\n-360/\n+180:1) node[rotate=360*5/\n-360/\n+90] {$\cdots$};
			\draw (360*1.5/\n-360/\n+180:.6) node[rotate=360*1.5/\n-360/\n+100] {$\cdots$};
		\end{tikzpicture} & \text{ if } c(p)>1 \\
		\begin{tikzpicture}[line width=1, scale=2,baseline=0]
			\draw (-1,0) arc (180:90:1);
			\draw (1,0) arc (0:45:1);
			\draw (1,0) arc (0:-15:1);
			\draw (0,-1) arc (270:310:1);
			\draw (0,-1) arc (270:220:1);
			\def \n {10}
			\foreach \x/\y in {0/-1.00000000000000,-0.309/0.951,-0.951/0.309}{\draw[decorate,decoration={snake}] (-0.420,0.087) -- (\x,\y);}
			\draw (360*6.5/\n-360/\n+180:1)circle(.4pt)circle(.8pt)circle(1.2pt) node[anchor=360/\n*6.5-360/\n]{$i$};
			\draw (360*1.5/\n-360/\n+180:1) node[rotate=360*1.5/\n-360/\n+90] {$\cdots$};
			\draw (360*7.875/\n-360/\n+180:1) node[rotate=360*7.875/\n-360/\n+90] {$\cdots$};
			\draw (360*5/\n-360/\n+180:1) node[rotate=360*5/\n-360/\n+90] {$\cdots$};
			\draw (360*1.5/\n-360/\n+180:.6) node[rotate=360*1.5/\n-360/\n+100] {$\cdots$};
		\end{tikzpicture}& \text{ if } c(p)=1
	\end{cases}$\\
	\hline
\end{longtable}

\subsection{Boundary moves \label{sec:moves}} 
In this section, we lay out our five main diagrammatic moves.

In Section \ref{sec:movesrealizeboundaries} we prove that each move, when applied to the appropriate generalized non-crossing Wilson loop diagram, gives a boundary of the corresponding positroid. Therefore, we will call these diagrammatic operations boundary moves.

Each of the moves focuses locally on a set of ends supported on at most two adjacent edges and ignores the rest of the diagram.
Note also that any given generalized (non-crossing) Wilson loop diagram and any given propagator may have many different boundary moves that can be applied to it.

\subsubsection{The slide move\label{sec:slide}}
% global assumptions
Let $W=(\cP, [n])$ be a non-crossing generalized Wilson loop diagram.  Without loss of generality,  by Remark~\ref{rmk:drawing conventions}, we may assume that if a single-ended propagator is supported on an edge, then it is the only propagator supported on that edge.

% assumptions about local configuration
Let $p \in \cP$ be a propagator with an end $a$ supported on an edge $e_i$, i.e.\ with support $V(a) = \{ i, i+1\}$. Furthermore, suppose $p$ is either the first or the last propagator on the edge $e_i$ in the cyclic ordering of ends given in Definition \ref{dfn:new def of prop order at edge or vertex}. In other words, among the propagators supported on $e_i$, $p$ is either closest to the vertex $i$ or closest to the vertex $i+1$ (or both).

% operational definition of move
If $p$ is the first propagator on $e_i$, then the result of the \emph{slide move} applied to the end $a$ moving in the direction of $i$ is the generalized Wilson loop diagram obtained from $W$ by changing the support of $a$ to $V(a)= \{i\}$. We say the move slides $a$ over to $i$, as illustrated in the following picture. 

$$\begin{tikzpicture}[line width=1, scale=2,baseline=0]
\clip (1.25,0)+(90:1.25cm) arc[start angle = 90, end angle =270,radius=1.25cm] --(1.25,0);
\draw (0,0)+(30:1cm) arc[start angle = 30, end angle =60,radius=1cm];
\draw (1,0) arc (0:10:1);
\draw (1,0) arc (0:-30:1);
\def \n {10}
\draw foreach \v in{5.75,7.25}{(360*\v/\n-360/\n+180:1)circle(.4pt)circle(.8pt)circle(1.2pt)node[anchor=360/\n*\v-360/\n] {}};
\draw (360*5.75/\n-360/\n+180:1) node[anchor=360/\n*6-360/\n]{$i$};
\draw (360*7.25/\n-360/\n+180:1) node[anchor=360/\n*7-360/\n]{$i+1$};
\draw (360*6.55/\n-360/\n+180:1) node[rotate=360*6.55/\n-360/\n+90] {$\cdots$};
\draw (360*8/\n-360/\n+180:1) node[rotate=360*8/\n-360/\n+90] {$\cdots$};
\draw (360*4.75/\n-360/\n+180:1) node[rotate=360*4.75/\n-360/\n+90] {$\cdots$};
\draw[decorate,decoration={snake}] (-1,0) -- (360*6/\n-360/\n+180:1) node[pos=.75,above]{$p$};
\end{tikzpicture}
\quad \rightarrow \qquad
\begin{tikzpicture}[line width=1, scale=2,baseline=0]
\clip (1.25,0)+(90:1.25cm) arc[start angle = 90, end angle =270,radius=1.25cm] --(1.25,0);
\draw (0,0)+(30:1cm) arc[start angle = 30, end angle =60,radius=1cm];
\draw (1,0) arc (0:10:1);
\draw (1,0) arc (0:-30:1);
\def \n {10}
\draw foreach \v in{5.75,7.25}{(360*\v/\n-360/\n+180:1)circle(.4pt)circle(.8pt)circle(1.2pt)node[anchor=360/\n*\v-360/\n] {}};
\draw (360*5.75/\n-360/\n+180:1) node[anchor=360/\n*6-360/\n]{$i$};
\draw (360*7.25/\n-360/\n+180:1) node[anchor=360/\n*7-360/\n]{$i+1$};
\draw (360*6.55/\n-360/\n+180:1) node[rotate=360*6.55/\n-360/\n+90] {$\cdots$};
\draw (360*8/\n-360/\n+180:1) node[rotate=360*8/\n-360/\n+90] {$\cdots$};
\draw (360*4.75/\n-360/\n+180:1) node[rotate=360*4.75/\n-360/\n+90] {$\cdots$};
\draw[decorate,decoration={snake}] (-1,0) -- (.985,-.174) node[pos=.75,above]{$p$};
\end{tikzpicture}
$$ 

Similarly, if $p$ is the last end on $e_i$, then the result of the \emph{slide move} applied to the end $a$ and moving in the direction of $i+1$ is the generalized Wilson loop diagram obtained from $W$ by changing the support of $a$ to $V(a)= \{i+1\}$ and we say the move slides $a$ to $i+1$.

% comments on move
Note that the vertex to which we are sliding the propagator $p$ may have arbitrarily many (or no) other ends supported on it.

Note as well that by restricting the application of the move to those propagators closest to a vertex the resulting generalized Wilson loop diagram is guaranteed to be non-crossing. 

\subsubsection{The merge move\label{sec:merge}} 
% global assumptions
Let $W$ be a non-crossing generalized Wilson loop diagram.  
% assumptions about local configuration
Let $p,q \in \cP$ be two distinct propagators each with capacity one less than the number of ends: $c(p) = |\cE(p)|-1$ and $c(q) = |\cE(q)|-1$. Let $p_e$ and $q_e$ be ends of $p$ and $q$ respectively which are consecutive in the end order of Definition~\ref{dfn:global cyclic end order} and which are either 
\begin{itemize}
    \item both on the same edge $e_i$, 
    \item both on the same vertex $v_i$,
\item or one on an edge $e_i$ and the other on an adjacent vertex $v_i$ or $v_{i+1}$.
\end{itemize}
Additionally, suppose that if either $p_e$ or $q_e$ are on the edge $e_i$ then no other end of the same propagator is on $e_i$, $v_i$ or $v_{i+1}$. 

% operational definition of move
For two propagators $p$ and $q$ satisfying these conditions, applying the \emph{merge move} to the ends $p_e$ and $q_e$ results in the generalized Wilson loop diagram that is obtained from $W$ by replacing $p$ and $q$ by a new propagator $r$ with capacity $c(r)=c(p)+c(q)$ and with ends $(\cE(p)\setminus p_e)\sqcup (\cE(q)\setminus p_q) \sqcup \{r_e\}$ where $r_e$ is a new end with support $V(r_e) = V(p_e)\cap V(q_e)$. The merge move on two propagators $p$ and $q$ is illustrated in the following pictures, first in the case where $p_e$ and $q_e$ are on $e_i$.

$$
\begin{tikzpicture}[line width=1, scale=2,baseline=0]
	\draw (-1,0) arc (180:86:1);
	\draw (1,0) arc (0:60:1);
	\draw (1,0) arc (0:-30:1);
	\draw (0,-1) arc (270:310:1);
	\draw (0,-1) arc (270:220:1);
	
	\def \n {10}
	\draw foreach \v in{5.3,7.53}{(360*\v/\n-360/\n+180:1)circle(.4pt)circle(.8pt)circle(1.2pt)node[anchor=360/\n*\v-360/\n] {}};
	\draw (360*5.3/\n-360/\n+180:1) node[right, anchor=(360*5.3/\n-360/\n]{$i$};
	\draw (360*7.53/\n-360/\n+180:1) node[anchor=360*7.53/\n-360/\n]{$i+1$};
 \draw (360*7.0625/\n-360/\n+180:1) node[rotate=360*7.0625/\n-360/\n+90] {$\cdots$};
	\draw (360*1.5/\n-360/\n+180:1) node[rotate=360*1.5/\n-360/\n+90] {$\cdots$};
	\draw (360*8.0625/\n-360/\n+180:1) node[rotate=360*8.0625/\n-360/\n+90] {$\cdots$};
	\draw (360*4.875/\n-360/\n+180:1) node[rotate=360*4.875/\n-360/\n+90] {$\cdots$};
	
	\draw (360*9.75/\n-360/\n+180:.9) node[rotate=360*9.75/\n-360/\n+90] {$\cdots$};
	\draw (360*3.25/\n-360/\n+180:.9) node[rotate=360*3.25/\n-360/\n+90] {$\cdots$};
	\foreach \x/\y in {0.924/0.383,-0.309/0.951,-0.951/0.309}{\draw[decorate,decoration={snake}] (-0.115,0.556) -- (\x,\y);}
	\foreach \x/\y in {-0.588/-0.809,0.309/-0.951,.991/0.131}{\draw[decorate,decoration={snake}] (0.233,-0.517) -- (\x,\y);}
	\draw (-0.115,0.556) node[shift={(.15,-0.45)}] {\tiny$c(p)= |\cE(p)| -1$};
	\draw (0.233,-0.517) node[shift={(-.75,0.25)}] {\tiny$c(q)= |\cE(q)| -1$};
	\draw[white, line width=5] (.866,.5) arc (30:50:1);
	\draw[white, line width=5] (1,0) arc (0:-20:1);
	\draw (360*5.6875/\n-360/\n+180:1) node[rotate=360*5.6875/\n-360/\n+90] {$\cdots$};
	\draw (360*7.0625/\n-360/\n+180:1) node[rotate=360*7.0625/\n-360/\n+90] {$\cdots$};
\end{tikzpicture}
\qquad \rightarrow \qquad 
	\begin{tikzpicture}[line width=1, scale=2,baseline=0]
	\draw (-1,0) arc (180:86:1);
	\draw (1,0) arc (0:60:1);
	\draw (1,0) arc (0:-30:1);
	\draw (0,-1) arc (270:310:1);
	\draw (0,-1) arc (270:220:1);
	\def \n {10}
	\draw foreach \v in{5.3,7.53}{(360*\v/\n-360/\n+180:1)circle(.4pt)circle(.8pt)circle(1.2pt)node[anchor=360/\n*\v-360/\n] {}};
	\draw (360*5.3/\n-360/\n+180:1) node[right, anchor=(360*5.3/\n-360/\n]{$i$};
	\draw (360*7.53/\n-360/\n+180:1) node[anchor=360*7.53/\n-360/\n]{$i+1$};
	\draw (360*1.5/\n-360/\n+180:1) node[rotate=360*1.5/\n-360/\n+90] {$\cdots$};
	\draw (360*8.0625/\n-360/\n+180:1) node[rotate=360*8.0625/\n-360/\n+90] {$\cdots$};
	\draw (360*4.875/\n-360/\n+180:1) node[rotate=360*4.875/\n-360/\n+90] {$\cdots$};
	\draw (360*9.75/\n-360/\n+180:.9) node[rotate=360*9.75/\n-360/\n+90] {$\cdots$};
	\draw (360*3.25/\n-360/\n+180:.9) node[rotate=360*3.25/\n-360/\n+90] {$\cdots$};
	\foreach \x/\y in {-0.588/-0.809,0.309/-0.951,0.951/0.309,-0.309/0.951,-0.951/0.309}{\draw[decorate,decoration={snake}] (-0.118,-0.038) -- (\x,\y);}
	\draw (-0.118,-0.038) node[shift={(.95,-.15)}] {\scriptsize$c(p)+c(q)$};
	\draw[white, line width=5] (.866,.5) arc (30:50:1);
	\draw[white, line width=5] (1,0) arc (0:-20:1);
	\draw (360*5.6875/\n-360/\n+180:1) node[rotate=360*5.6875/\n-360/\n+90] {$\cdots$};
	\draw (360*7.0625/\n-360/\n+180:1) node[rotate=360*7.0625/\n-360/\n+90] {$\cdots$};
\end{tikzpicture},
$$
and second when $p_e$ is supported on $e_i$ and $q_e$ is supported on $v_i$. 
$$\begin{tikzpicture}[line width=1, scale=2,baseline=0]
	\draw (-1,0) arc (180:86:1);
	\draw (1,0) arc (0:60:1);
	\draw (1,0) arc (0:-30:1);
	\draw (0,-1) arc (270:310:1);
	\draw (0,-1) arc (270:220:1);
	\def \n {10}
	\draw foreach \v in{5.3,7.53}{(360*\v/\n-360/\n+180:1)circle(.4pt)circle(.8pt)circle(1.2pt)node[anchor=360/\n*\v-360/\n] {}};
	\draw (360*5.3/\n-360/\n+180:1) node[right, anchor=(360*5.3/\n-360/\n]{$i$};
	\draw (360*7.53/\n-360/\n+180:1) node[anchor=360*7.53/\n-360/\n]{$i+1$};
	\draw (360*1.5/\n-360/\n+180:1) node[rotate=360*1.5/\n-360/\n+90] {$\cdots$};
	\draw (360*8.0625/\n-360/\n+180:1) node[rotate=360*8.0625/\n-360/\n+90] {$\cdots$};
	\draw (360*4.875/\n-360/\n+180:1) node[rotate=360*4.875/\n-360/\n+90] {$\cdots$};
	\draw (360*9.75/\n-360/\n+180:.9) node[rotate=360*9.75/\n-360/\n+90] {$\cdots$};
	\draw (360*3.25/\n-360/\n+180:.9) node[rotate=360*3.25/\n-360/\n+90] {$\cdots$};
	\foreach \x/\y in {1/0,-0.309/0.951,-0.951/0.309}{\draw[decorate,decoration={snake}] (-0.115,0.556) -- (\x,\y);}
	\foreach \x/\y in {-0.588/-0.809,0.309/-0.951,.906/-0.423}{\draw[decorate,decoration={snake}] (0.233,-0.517) -- (\x,\y);}
	\draw (-0.115,0.556) node [shift={(-.25,-0.5)}] {\tiny $c(p)= |\cE(p)| -1$};
	\draw (0.233,-0.517) node [shift={(.2,0.3)}] {\tiny $c(q)= |\cE(q)| -1$};
	\draw[white, line width=5] (.866,.5) arc (30:50:1);
	\draw (360*7.0625/\n-360/\n+180:1) node[rotate=360*7.0625/\n-360/\n+90] {$\cdots$};
\end{tikzpicture} \qquad \rightarrow \qquad
\begin{tikzpicture}[line width=1, scale=2,baseline=0]
	\draw (-1,0) arc (180:86:1);
	\draw (1,0) arc (0:60:1);
	\draw (1,0) arc (0:-30:1);
	\draw (0,-1) arc (270:310:1);
	\draw (0,-1) arc (270:220:1);
	\def \n {10}
	\draw foreach \v in{5.3,7.53}{(360*\v/\n-360/\n+180:1)circle(.4pt)circle(.8pt)circle(1.2pt)node[anchor=360/\n*\v-360/\n] {}};
	\draw (360*5.3/\n-360/\n+180:1) node[right, anchor=(360*5.3/\n-360/\n]{$i$};
	\draw (360*7.53/\n-360/\n+180:1) node[anchor=360*7.53/\n-360/\n]{$i+1$};
	\draw (360*1.5/\n-360/\n+180:1) node[rotate=360*1.5/\n-360/\n+90] {$\cdots$};
	\draw (360*8.0625/\n-360/\n+180:1) node[rotate=360*8.0625/\n-360/\n+90] {$\cdots$};
	\draw (360*4.875/\n-360/\n+180:1) node[rotate=360*4.875/\n-360/\n+90] {$\cdots$};
	\draw (360*9.75/\n-360/\n+180:.9) node[rotate=360*9.75/\n-360/\n+90] {$\cdots$};
	\draw (360*3.25/\n-360/\n+180:.9) node[rotate=360*3.25/\n-360/\n+90] {$\cdots$};
	\foreach \x/\y in {-0.588/-0.809,0.309/-0.951,.906/-0.423,-0.309/0.951,-0.951/0.309}{\draw[decorate,decoration={snake}] (-0.118,-0.038) -- (\x,\y);}
	\draw (-0.118,-0.038) node[shift={(.75,0.2)}] {\scriptsize$c(p)+c(q)$};
	\draw[white, line width=5] (.866,.5) arc (30:50:1);
	\draw (360*7.0625/\n-360/\n+180:1) node[rotate=360*7.0625/\n-360/\n+90] {$\cdots$};
\end{tikzpicture}.
$$

% comments on move
This move can be visualized by sliding the two ends that are being merged together, similarly to the slide move, and then identifying them.  If one end is on an edge and the other on a vertex then the end on the edge is slid to the vertex, explaining why $V(r_e)= V(p_e)\cap V(q_e)$.

Note as well that by restricting the application of the move to adjacent propagators the resulting generalized Wilson loop diagram is guaranteed to be non-crossing. 

\subsubsection{The merge and split move \label{sec:merge and split}} 

% global assumptions
As before, let $W = (\cP, [n])$ be a generalized Wilson loop diagram with no crossing propagators.  Assume, without loss of generality by Remark~\ref{rmk:drawing conventions}, that if a single-ended propagator is supported on an edge, then it is the only propagator supported on that edge. 

% assumptions about local configuration
Let $e_i$ be an edge that supports multiple propagators each with exactly one end on $e_i$.  Further suppose each of those propagators has capacity one less than its number of ends.  Let $P$ be the set of propagators supported on $e_i$.  By assumption $|P|>1$ and $P$ contains no single-ended propagators.  For $p\in P$, write $p_e$ for the end of $p$ on $e_i$. 

% operational definition of move
For $P$ as above, the result of the \emph{merge and split move} applied to the ends on $e_i$ is the generalized Wilson loop diagram obtained from $W$ by replacing the propagators of $P$ with two new propagators $r$ and $s$ defined as follows.  The propagator $r$ has capacity $(\sum_{p \in P} c(p))-1$ and ends $\bigsqcup_{p \in P}(\cE(p)\setminus p_e) \sqcup \{r_e\}$ where $r_e$ is a new end with support $\{v_i\}$.  The propagator $s$ has capacity $1$ and has its single end on the edge $e_i$.
This is illustrated in the following picture. 

$$
	\begin{tikzpicture}[line width=1, scale=2,baseline=0]
	\draw (-1,0) arc (180:90:1);
	\draw (1,0) arc (0:45:1);
	\draw (1,0) arc (0:-15:1);
	\draw (0,-1) arc (270:310:1);
	\draw (0,-1) arc (270:220:1);
	\def \n {10}
	\draw foreach \v in{6,7}{(360*\v/\n-360/\n+180:1)circle(.4pt)circle(.8pt)circle(1.2pt)node[anchor=360/\n*\v-360/\n] {}};
	\draw (360*6/\n-360/\n+180:1) node[anchor=360/\n*6-360/\n]{$i$};
	\draw (360*7/\n-360/\n+180:1) node[anchor=360/\n*7-360/\n]{$i+1$};
	\draw (360*1.5/\n-360/\n+180:1) node[rotate=360*1.5/\n-360/\n+90] {$\cdots$};
	\draw (360*1.5/\n-360/\n:.65) node[rotate=360*1.5/\n-360/\n+90] {$\cdots$};	
	\draw (360*7.875/\n-360/\n+180:1) node[rotate=360*7.875/\n-360/\n+90] {$\cdots$};
	\draw (360*5/\n-360/\n+180:1) node[rotate=360*5/\n-360/\n+90] {$\cdots$};
	\draw (360*9.75/\n-360/\n+180:.9) node[rotate=360*9.75/\n-360/\n+90] {$\cdots$};
	\draw (360*3.25/\n-360/\n+180:.9) node[rotate=360*3.25/\n-360/\n+90] {$\cdots$};
	\foreach \x/\y in {0.914/0.407,-0.309/0.951,-0.951/0.309}{\draw[decorate,decoration={snake}] (-0.115,0.556) -- (\x,\y);}
	\foreach \x/\y in {-0.588/-0.809,0.309/-0.951,0.978/0.208}{\draw[decorate,decoration={snake}] (0.233,-0.517) -- (\x,\y);}
	\draw (-0.115,0.54) node [below] {\tiny$c(p_1)$};
	\draw (0.233,-0.4) node [above] {\tiny$c(p_m)$};
\end{tikzpicture} 
\quad \rightarrow
\begin{tikzpicture}[line width=1, scale=2,baseline=0]
	\draw (-1,0) arc (180:90:1);
	\draw (1,0) arc (0:45:1);
	\draw (1,0) arc (0:-15:1);
	\draw (0,-1) arc (270:310:1);
	\draw (0,-1) arc (270:220:1);
	\def \n {10}
	\foreach \x/\y in {-0.588/-0.809,0.309/-0.951,1.00000000000000/0,-0.309/0.951,-0.951/0.309}{\draw[decorate,decoration={snake}] (-0.108,-0.10) -- (\x,\y);}
	\draw[decorate,decoration={snake}] (0.951,0.309) -- ++({atan(0.325}:-.45cm);
	\draw (360*6/\n-360/\n+180:1) circle(.4pt)circle(.8pt)circle(1.2pt) node[anchor=360/\n*6-360/\n]{$i$};
	\draw (360*7/\n-360/\n+180:1) circle(.4pt)circle(.8pt)circle(1.2pt) node[anchor=360/\n*7-360/\n]{$i+1$};
	\draw (360*1.5/\n-360/\n+180:1) node[rotate=360*1.5/\n-360/\n+90] {$\cdots$};
	\draw (360*1.5/\n-360/\n+180:.65) node[rotate=360*1.5/\n-360/\n+90] {$\cdots$};
	\draw (360*7.875/\n-360/\n+180:1) node[rotate=360*7.875/\n-360/\n+90] {$\cdots$};
	\draw (360*5/\n-360/\n+180:1) node[rotate=360*5/\n-360/\n+90] {$\cdots$};
	\draw (360*9.75/\n-360/\n+180:.9) node[rotate=360*9.75/\n-360/\n+90] {$\cdots$};
	\draw (360*3.25/\n-360/\n+180:.9) node[rotate=360*3.25/\n-360/\n+90] {$\cdots$};
	\draw (0,-0.25) node [right] {\scriptsize$\sum_i^mc(p_i)-1$};
\end{tikzpicture}
$$

% comments on move
Note that the new diagram has that $s$ is the only propagator on its edge, so no further rewriting via Remark~\ref{rmk:drawing conventions} is necessary.

Note also that since $W$ has no crossing propagators, and the merge and split move combines all propagators with ends on the same edge, it is merging a set of adjacent propagators. Therefore, the diagram resulting from this move has no crossing propagators.

\subsubsection{The X-merge and split move \label{sec:X-merge and split}} 

% global assumptions
Let $W$ be a generalized Wilson loop diagram with no crossing propagators.
% assumptions about local configuration
Let $e_{i-1}$ and $e_{i}$ be two adjacent edges of the diagram, each supporting a single multi-ended propagator of capacity 1. Furthermore, suppose the vertex $v_{i}$ between the two supports no propagators. Let $p$ and $q$ be the propagators supported on $e_{i-1}$ and $e_{i}$ respectively, and $p_e$ and $q_e$ the ends on those edges.  

Note that this is the only move in which we must have no additional ends on the vertices and edges involved in the move ($v_{i-1}, v_{i}$ and $v_{i+1}$).  

% operational definition of move
Suppose $p$ and $q$ are as above.  Then the result of the \emph{X-merge and split move} applied to the ends $p_e$ and $q_e$ is the generalized Wilson loop diagram obtained from $W$ by replacing the propagators $p$ and $q$ with two new propagators $r$ and $s$ both of capacity $1$ and with ends defined as follows.  The propagator $r$ has ends $\cE(p) \sqcup \cE(p)$ where additionally the support of $p_e$ is changed to $\{v_{i-1}\}$ and the support of $q_e$ is changed to $\{v_{i+1}\}$.  The propagator $s$ has its single end on the vertex $i$.
This is illustrated in the following picture. 

$$
	\begin{tikzpicture}[line width=1, scale=2, baseline=0]
	\draw (-1,0) arc (180:90:1);
	\draw (1,0) arc (0:45:1);
	\draw (1,0) arc (0:-15:1);
	\draw (0,-1) arc (270:310:1);
	\draw (0,-1) arc (270:220:1);
	\def \n {20}
	\draw (360*12/\n-360/\n+180:1)circle(.4pt)circle(.8pt)circle(1.2pt) node[anchor=360/\n*12-360/\n]{$i$};
	\draw (360*11/\n-360/\n+180:1)circle(.4pt)circle(.8pt)circle(1.2pt) node[anchor=360/\n*11-360/\n]{$i-1$};
	\draw (360*13/\n-360/\n+180:1)circle(.4pt)circle(.8pt)circle(1.2pt) node[anchor=360/\n*13-360/\n]{$i+1$};
	\draw (360*2/\n-360/\n+180:1) node[rotate=360*2/\n-360/\n+90] {$\cdots$};
	\draw (360*14.75/\n-360/\n+180:1) node[rotate=360*14.75/\n-360/\n+90] {$\cdots$};
	\draw (360*9/\n-360/\n+180:1) node[rotate=360*9/\n-360/\n+90] {$\cdots$};
	\draw (360*18/\n-360/\n+180:.9) node[rotate=360*18/\n-360/\n+90] {$\cdots$};
	\draw (360*5/\n-360/\n+180:.9) node[rotate=360*5/\n-360/\n+90] {$\cdots$};
	\foreach \x/\y in {0.891/0.454,-0.309/0.951,-0.891/0.454}{\draw[decorate,decoration={snake}] (-0.103,0.620) -- (\x,\y);}
	\foreach \x/\y in {-0.707/-0.707,0/-1.00000000000000,0.988/0.156}{\draw[decorate,decoration={snake}] (0.094,-0.517) -- (\x,\y);}
\end{tikzpicture}
\quad  \rightarrow
\begin{tikzpicture}[line width=1, scale=2,baseline=0]
	\def \n {20}
	\draw (-1,0) arc (180:90:1);
	\draw (1,0) arc (0:45:1);
	\draw (1,0) arc (0:-15:1);
	\draw (0,-1) arc (270:310:1);
	\draw (0,-1) arc (270:220:1);
	\def \n {20}
	\draw (360*12/\n-360/\n+180:1)circle(.4pt)circle(.8pt)circle(1.2pt) node[anchor=360/\n*12-360/\n]{$i$};
	\draw (360*11/\n-360/\n+180:1)circle(.4pt)circle(.8pt)circle(1.2pt) node[anchor=360/\n*11-360/\n]{$i-1$};
	\draw (360*13/\n-360/\n+180:1)circle(.4pt)circle(.8pt)circle(1.2pt) node[anchor=360/\n*13-360/\n]{$i+1$};
	\draw (360*2/\n-360/\n+180:1) node[rotate=360*2/\n-360/\n+90] {$\cdots$};
	\draw (360*14.75/\n-360/\n+180:1) node[rotate=360*14.75/\n-360/\n+90] {$\cdots$};
	\draw (360*9/\n-360/\n+180:1) node[rotate=360*9/\n-360/\n+90] {$\cdots$};
	\draw (360*18/\n-360/\n+180:.9) node[rotate=360*18/\n-360/\n+90] {$\cdots$};
	\draw (360*5/\n-360/\n+180:.9) node[rotate=360*5/\n-360/\n+90] {$\cdots$};
	\foreach \x/\y in {-0.707/-0.707,0/-1.00000000000000,1.00000000000000/0,0.809/0.588,-0.309/0.951,-0.891/0.454}{\draw[decorate,decoration={snake}] (-0.016,0.048) -- (\x,\y);}
	\draw[decorate,decoration={snake}] (0.951,0.309) -- ++({atan(0.325}:-.5 cm);
\end{tikzpicture}
$$

% comments on move
We may visualize the change in support of the ends $p_e$ and $q_e$ as sliding them to the vertices $v_{i-1}$ and $v_{i+1}$ respectively.

Note that unlike the merge or the merge and split move, the number of ends of $r$ is the sum of the number of ends of $p$ and $q$: $|\cE(r)| = |\cE(p)| + |\cE(q)|$. The propagators $s$ is a single-ended propagator on the vertex $v_i$. This is the only move where the number of ends of the diagram increases.

Since $W$ has no crossing propagators, and $p$ and $q$ are the only propagators on $e_{i-1}$ and $e_i$ respectively, the X-merge and split does not generate any crossing propagators in the image.  

\subsubsection{Retract or split move\label{sec:retraction}} 
% global assumptions
Let $W$ be a generalized Wilson loop diagram.
% assumptions about local configuration
Let $v_i$ be a vertex on which there is an end $a$ of a multi-ended propagator $p$.

% operational definition of move
If $p$ has capacity one, then the result of applying this move to $p$ at $v_i$ is the generalized Wilson loop diagram obtained from $W$ by removing the end $a$ from $p$.  In this case the move is called \emph{retraction}.  If $p$ has capacity greater than one then the result of applying this move to $p$ at $v_i$ is the generalized Wilson loop diagram obtained from $W$ by removing the end $a$ from $p$, decreasing the capacity of $p$ by $1$ and adding a new propagator of capacity $1$ whose only end is $a$.  In this case the move is called \emph{split}. This is illustrated in the following picture.

$$	\begin{tikzpicture}[line width=1, scale=2,baseline=0]
	\draw (-1,0) arc (180:90:1);
	\draw (1,0) arc (0:45:1);
	\draw (1,0) arc (0:-15:1);
	\draw (0,-1) arc (270:310:1);
	\draw (0,-1) arc (270:220:1);
	\def \n {10}
	\foreach \x/\y in {0.951/0.309,0/-1.00000000000000,-0.309/0.951,-0.951/0.309}{\draw[decorate,decoration={snake}] (-0.077,0.142) -- (\x,\y);}
	\draw (0.1,0.2)node[above] {\scriptsize $c(p)$};
	\draw (360*6.5/\n-360/\n+180:1)circle(.4pt)circle(.8pt)circle(1.2pt) node[anchor=360/\n*6.5-360/\n]{$i$};
	\draw (360*1.5/\n-360/\n+180:1) node[rotate=360*1.5/\n-360/\n+90] {$\cdots$};
	\draw (360*7.875/\n-360/\n+180:1) node[rotate=360*7.875/\n-360/\n+90] {$\cdots$};
	\draw (360*5/\n-360/\n+180:1) node[rotate=360*5/\n-360/\n+90] {$\cdots$};
	\draw (360*1.5/\n-360/\n+180:.3) node[rotate=360*1.5/\n-360/\n+100] {$\cdots$};
\end{tikzpicture} \quad  \rightarrow \quad \begin{cases}
	\begin{tikzpicture}[line width=1, scale=2,baseline=0]
		\draw (-1,0) arc (180:90:1);
		\draw (1,0) arc (0:45:1);
		\draw (1,0) arc (0:-15:1);
		\draw (0,-1) arc (270:310:1);
		\draw (0,-1) arc (270:220:1);
		\def \n {10}
		\foreach \x/\y in {0/-1.00000000000000,-0.309/0.951,-0.951/0.309}{\draw[decorate,decoration={snake}] (-0.420,0.087) -- (\x,\y);}
		\draw (-0.420,0.087) node[right] {\scriptsize $c(p)-1$};
		\draw[decorate,decoration={snake}] (0.951,0.309) -- ++({atan(0.325}:-.6 cm);
		\draw (360*6.5/\n-360/\n+180:1)circle(.4pt)circle(.8pt)circle(1.2pt) node[anchor=360/\n*6.5-360/\n]{$i$};
		\draw (360*1.5/\n-360/\n+180:1) node[rotate=360*1.5/\n-360/\n+90] {$\cdots$};
		\draw (360*7.875/\n-360/\n+180:1) node[rotate=360*7.875/\n-360/\n+90] {$\cdots$};
		\draw (360*5/\n-360/\n+180:1) node[rotate=360*5/\n-360/\n+90] {$\cdots$};
		\draw (360*1.5/\n-360/\n+180:.6) node[rotate=360*1.5/\n-360/\n+100] {$\cdots$};
	\end{tikzpicture} & \text{ if } c(p)>1 \\
	\begin{tikzpicture}[line width=1, scale=2,baseline=0]
		\draw (-1,0) arc (180:90:1);
		\draw (1,0) arc (0:45:1);
		\draw (1,0) arc (0:-15:1);
		\draw (0,-1) arc (270:310:1);
		\draw (0,-1) arc (270:220:1);
		\def \n {10}
		\foreach \x/\y in {0/-1.00000000000000,-0.309/0.951,-0.951/0.309}{\draw[decorate,decoration={snake}] (-0.420,0.087) -- (\x,\y);}
		%\draw (-0.420,0.087) node[right] {\scriptsize $c(p)-1$};
		%\draw[decorate,decoration={snake}] (0.951,0.309) -- ++({atan(0.325}:-.6 cm);
		\draw (360*6.5/\n-360/\n+180:1)circle(.4pt)circle(.8pt)circle(1.2pt) node[anchor=360/\n*6.5-360/\n]{$i$};
		\draw (360*1.5/\n-360/\n+180:1) node[rotate=360*1.5/\n-360/\n+90] {$\cdots$};
		\draw (360*7.875/\n-360/\n+180:1) node[rotate=360*7.875/\n-360/\n+90] {$\cdots$};
		\draw (360*5/\n-360/\n+180:1) node[rotate=360*5/\n-360/\n+90] {$\cdots$};
		\draw (360*1.5/\n-360/\n+180:.6) node[rotate=360*1.5/\n-360/\n+100] {$\cdots$};
	\end{tikzpicture}& \text{ if } c(p)=1
\end{cases}
$$

% comments on move
Note that if there is another propagator, $q$, supported on the vertex $v_i$, and the capacity of $p$ is greater than one, then the image of the retract or split move may not satisfy the drawing convention laid out in Remark \ref{rmk:drawing conventions}. However, as shown in Example \ref{eg: half props clear}, there is an equivalent representation of the image where $q$ is not supported on $v_i$. 

Since $W$ has no crossing propagators, there are no crossing propagators in the image of the retract or split move.

\subsection{Diagrammatic boundary moves realize positroid boundaries}
\label{sec:movesrealizeboundaries} 

In this section, we show that, when applied to capacity ranked and locally minimal diagrams,the diagrammatic moves defined in the previous section all give rise either to boundaries of the positroids associated to the original generalized Wilson loop diagram, or result in an equivalent representation of the diagram. Along the way, we show that if, after applying the moves to a capacity ranked generalized Wilson loop diagram, the rank of the matroid associated to the resulting diagram is the same as the capacity of the original diagram, then the same conclusion holds. Recall that the boundary moves can only be applied to generalized Wilson loop diagrams with non-crossing propagators. The version of the result with the rank assumption on the resulting diagram is algorithmically beneficial as checking the rank on the resulting diagram is faster than checking local minimality, as we will discuss later.

For the purposes of this section, let $W$ be the generalized Wilson loop diagram to which the boundary moves are applied, i.e.\ $W$ has no crossing propagator, and let $W'$ be the diagram resulting from the application of such a move. The structure of this section is as follows. First we show that if $W$ is locally minimal generalized Wilson loop diagram, then after the application of a boundary move, the rank of the resulting matroid does not drop: $\cM(W) \leq \cM(W')$. Then we show that that for each boundary move, the basis sets of $\cM(W')$ are independent in $\cM(W)$. These two results combined show that the rank of the matroids associated to $W$ and $W'$ have the same rank: $\rk \cM(W) = \rk \cM(W')$. 
This implies that $W'$ is also capacity ranked, and so, since the moves preserve non-crossingness, the matroid associated to $W'$ is also a positroid. Then we put these results together to show that if $W$ is capacity ranked and either $W$ is locally minimal, or $\rk \cM(W) = \rk \cM(W')$, then the positroid cell associated to $\cM(W')$ is a boundary of the cell associated to $\cM(W)$. 

We begin by setting some notation that we use throughout the proofs of the above results. Let $U \subset [n]$ be the set of vertices in $W$ affected by a boundary move. Namely, \begin{itemize}
	\item for a slide move on the edge $e_i$, $U = \{i, i+1\}$ 
	\item for a merge move on the edge $e_i$, $U = \{i, i+1\}$
	\item for a merge move on the vertex $v_i$, $U = \{i\}$
	\item for the merge and split move on the edge $e_i$, $U = \{i, i+1\}$
	\item for the X - merge and split move on the edges $e_{i-1}$ and $e_i$, $U = \{i-1, i, i+1\}$
	\item for a retract or split move at the vertex $v_i$, $U = \{i\}$.
\end{itemize}

Note that, for each move, by construction, the matroids associated to the diagrams, restricted to the complement of $U$ are equal: \ba \cM(W)|_{U^c} = \cM(W')|_{U^c} \label{eq:restricted matroids same} \;.\ea This follows from Lemma \ref{res:restrictedmatroid} and the following equality of bipartite graphs: $G_{W'}^{\cE'}|_{U^c} = G_{W}^\cE|_{U^c}$. 

We show that after applying a boundary move to a locally minimal diagram, the rank of matroid associated to the resulting diagram is no less than the rank of the original diagram.

\begin{prop} \label{res:rank does not drop}
	Let $W$ be a locally minimal generalized Wilson loop diagram with non-crossing propagators. Let $W'$ be the diagram resulting from the application of a boundary move to $W$. Then \bas \rk \cM(W) \leq \rk \cM(W'). \eas
\end{prop}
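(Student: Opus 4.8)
The plan is to compute both ranks as sizes of maximum independent matchings and to transfer a maximum independent matching of $G_W^\cE$ across the move. Recall from Definition~\ref{dfn:Rado} that $\rk \cM(W)$ equals the largest size of an independent matching in $G_W^\cE$ with respect to the end matroid $\sfM_\cE$, where size counts the covered vertices of $[n]$. The key consequence of local minimality I would isolate first is: \emph{in a locally minimal diagram every edge of $G_W^\cE$ lies in some maximum independent matching}. Indeed, if an edge $a\to v$ lay in no maximum independent matching, then deleting it would leave the set of maximum independent matchings unchanged (deletion cannot enlarge the maximum, and every maximum matching survives), hence leave the bases and rank, and so the matroid $\cM(W)$, unchanged; this contradicts Definition~\ref{dfn:local-min} and is the dichotomy already used in the proof of Lemma~\ref{res:non min and co-loops}. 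Since every move fixes the vertex set $[n]$ and, by \eqref{eq:restricted matroids same}, fixes the matroid away from $U$, it then suffices to produce for each move one maximum independent matching $M$ of $G_W^\cE$ that remains an independent matching of $G_{W'}^{\cE'}$, which gives $\rk\cM(W')\ge |M| = \rk\cM(W)$.

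For the moves that locally delete a single edge or regroup a single end, the allowed-edge property settles matters at once. For the slide move, which only removes $a\to i+1$ and leaves $\sfM_\cE$ unchanged, I would choose $M$ through the surviving edge $a\to i$; since $M$ matches $a$ only once it cannot use $a\to i+1$, so $M$ persists in $G_{W'}^{\cE'}$. For the retraction case ($c(p)=1$, deleting the end $a$) I would pick another end $a'$ of the multi-ended propagator $p$ and take $M$ through an edge at $a'$; as $c(p)=1$ this $M$ covers no other end of $p$, in particular not $a$, so it survives the deletion of $a$. For the split case ($c(p)>1$) I would take $M$ through $a\to i$ and reassign $a$ to the new capacity-one propagator $s$, leaving the remaining covered ends with $p$; the bounds $1\le c(s)$ and (number of remaining covered $p$-ends)$\le c(p)-1$ make $M$ independent for the new end matroid.

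The main obstacle is the merge-type family (merge, merge-and-split, X-merge-and-split), where ends $p_e,q_e$ are identified into a single end $r_e$ whose support is the \emph{intersection} $V(p_e)\cap V(q_e)$, with total capacity preserved ($c(r)=c(p)+c(q)$, and for the split variants one unit transferred to a fresh single-ended propagator $s$). Transferring $M$ verbatim fails exactly when both $p_e$ and $q_e$ are covered, since the two matched vertices then compete for the single end $r_e$, and the side condition that the other ends of $p,q$ avoid $e_i,v_i,v_{i+1}$ blocks the displaced vertex from being re-covered. The resolution I would pursue is to use the allowed-edge property to choose $M$ avoiding the clash: in the edge–vertex case, taking $M$ through $p_e\to i$ forces $q_e$ (supported only on $v_i$) to be uncovered, so at most one end maps to $r_e$ and it does so from the shared vertex $i\in V(r_e)$; in the edge–edge case one argues, via the allowed edges at the necessarily uncovered ends $p',q'$ together with consecutiveness and $c(p)=|\cE(p)|-1$, that some maximum matching covers at most one of $p_e,q_e$. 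The merge-and-split and X-merge-and-split moves are then handled identically after accounting for $s$, whose capacity restores the unit lost in forming $r$; in the X-case the hypothesis that the middle vertex supports nothing guarantees that $v_i$ is free to absorb $s$ in the matching. Establishing the existence of a compatible maximum matching in the edge–edge merge is where I expect the real work to lie.
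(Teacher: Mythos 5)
Your allowed-edge lemma is correct: it is exactly the contrapositive of the dichotomy at the start of the paper's Lemma~\ref{res:non min and co-loops} (deleting an edge used by no maximum independent matching leaves the set of maximum independent matchings, hence the bases, hence $\cM(W)$ unchanged, contradicting Definition~\ref{dfn:local-min}), and your treatments of the slide, retract, split, and edge--vertex merge cases are sound --- in the split case you even avoid the paper's separate co-loop discussion, since forcing $a\to i$ into a maximum matching and reassigning $a$ to $s$ works whether or not $i$ is a co-loop. But there is a genuine gap exactly where you flagged it: the edge--edge merge. Your argument requires a \emph{maximum} independent matching covering at most one of $p_e,q_e$, and the allowed-edge property does not deliver this; it lets you force a single chosen edge into some maximum matching but gives no control over the rest of that matching, and the local rotation you sketch at the uncovered ends $p',q'$ can fail because every supporting vertex of $p'$ may already be matched, while alternating-path repairs can simply re-cover $p_e$ or $q_e$. (As an aside, your deferral of merge-and-split to this same claim is unnecessary: there the extra propagator $s$ can absorb a second covered merging end, one end going to $r_e$ at $v_i$ and the other to the end of $s$ on $e_i$, so that case closes by direct transfer; the plain merge on an edge is where the gap is real.)

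The missing claim is true, but the clean route to it is a different mechanism from matching surgery, and it is the device the paper's proof is built on: restriction and basis extension. With $U=\{i,i+1\}$, take a basis $B$ of the restriction $\cM(W)|_{U^c}$ and extend it to a basis of $\cM(W)$; by maximality of $B$ inside $U^c$ the extension $V$ lies in $U$, and $V\neq U$ because if $B\cup U$ were a basis then any basis omitting $i$ would have at most $|B|+1<\rk\cM(W)$ elements, making $i$ and $i+1$ co-loops, which local minimality forbids via Lemma~\ref{res:non min and co-loops} (each of $i,i+1$ supports the edge-end $p_e$, whose support is not a single vertex). Hence $|V|\le 1$, so the independent matching saturating $B\cup V$ uses at most one vertex of $U$; since $p_e$ and $q_e$ are supported only inside $U$, at most one of them is covered, and the matching transfers to $G_{W'}^{\cE'}$ by rerouting that one edge to $r_e$. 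In short, your architecture --- transfer one well-chosen maximum matching, using local minimality through the allowed-edge lemma --- is viable and in places tidier than the paper's case analysis, but closing the edge--edge merge requires the restriction-and-extension argument (or an equivalent), not the allowed-edge lemma alone, and that is precisely the step your proposal leaves open.
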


\begin{proof}
	With notation as above, since $W$ is locally minimal, the only case where an element $x \in U$ could be a co-loop of $\cM(W)$ is in the retract or split move when $c(p)>1$. To see this, note that in all other cases, either $x$ doesn't satisfy $\End(x) = \{a\}$ and $V(a) = x$ or the propagator that $a$ belongs to has capacity $1$. Thus by Lemma \ref{res:non min and co-loops}, $x$ is not a co-loop.
	
	If the restricted matroid $\cM(W)|_{U^c}$ has the same rank as the full matroid, then we are done, since restricting a matroid can only decrease the rank. That is, in this case, by \eqref{eq:restricted matroids same}, \bas \rk \cM(W) = \rk \cM(W)|_{U^c} = \rk \cM(W')|_{U^c} \leq \rk \cM(W')\;.  \eas 
    %CASE WHERE RESTRICTED RANK = FULL RANK
	
	%CASE RESTRICTED RANK < FULL RANK 
 It remains to check the when $\rk \cM(W)|_{U^c} < \rk \cM(W)$.  Let $B$ be a basis of $\cM(W)|_{U^c}$. We extend this to a basis $B \cup V$ of $\cM(W)$, with $V\subseteq U$. If the elements of $U$ are not co-loops, we know that $B \cup V$ cannot be $B \cup U$, otherwise, by Definition \ref{dfn:matroid restriction} the elements of $U$ would be in every basis of $\cM(W)$, contradicting the fact that elements of $U$ are not co-loops. Therefore, if the elements of $U$ are not co-loops $V\subsetneq U$.
	
	Next, for each of the moves, we show that either we must have $\rk \cM(W)|_{U^c} = \rk \cM(W)$, or that we can choose the basis extension so that $B \cup V$ is both a basis of $\cM(W)$ and is independent in $\cM(W')$, showing that $ \rk \cM(W) \leq \rk \cM(W')$.  

    For the remainder of this argument use the notation for the propagators and ends involved in the move as established in Section~\ref{sec:moves}.
    
	%SUBCASE: U HAS A CO-LOOP 
 First consider the retract or split move where $U = i$ is a co-loop. By local minimality implies that $c(p)>1$. By the arguments in Lemma \ref{res:non min and co-loops}, the diagram resulting from the split move has the same matroid, $\cM(W)$, and we are done. 
	
	%SUBCASE: U HAS NO COLOOP; |U| =1 
 For all other configurations, the vertices of $U$ are not co-loops. If $|U| = 1$, as in the case of the retract and split (when $U$ is not a co-loop) or the merge at a vertex, since $B' \subsetneq B \cup U$, we have that $B' = B$, and therefore we must be in the case where the two ranks are the same, which has been addressed above. 
 %(i.e.\ MUST BE IN CASE 1)

     %SUBCASE: U HAS NO COLOOP; |U| > 1 
	It remains to check the cases where $|U| > 1$. 
	\begin{itemize}
		\item \textbf{Slide move} We know that $V$ is a singleton consisting of one vertex of $U$. Let $M$ be an independent matching of $B \cup V$ in $G_W^\cE$. Note that since the other vertex of $U$ is not used in $M$, if $a$ is matched to the vertex of $V$, it can  instead be matched to the other vertex of $U$.  Therefore, doing so if necessary so that $V$ is the end $a$ is slid to, $M$ is also an independent matching in $G_{W'}^{\cE'}$. Thus $B \cup V$ is independent in $\cM(W')$.
		\item \textbf{Merge move on $e_i$} In this case $V$ is the singleton set consisting of one vertex from $U$. Let $M$ be an independent matching of $B \cup V$ in $G_W^\cE$. If $V$ is matched to $p_e$ or $q_e$ in $M$, replace this with an edge from $V$ to the end $r_e$. This gives an independent $B \cup V$-perfect matching in $G_{W'}^{\cE'}$. Thus $B \cup V$ is independent in $\cM(W')$.
		\item \textbf{Merge and split move} We know that $V$ is a singleton consisting of one vertex of $U$.  Let $M$ be an independent matching of $B \cup V$ in $G_W^\cE$. Note that since the other vertex of $U$ is not used in $M$, if an end supported on $e_i$ is matched to the vertex of $V$, it can instead be matched to the other vertex of $U$. Therefore, doing so if necessary so that $V$ is the vertex in $U$ supporting $r$ in $W'$, we can adjust $M$ to a matching for $W'$ as follows. If $V$ is matched to and end supported on $e_i$ in $M$, replace this with an edge from $V$ to the single end of $s$. This gives an independent $B \cup V$-perfect matching in $G_{W'}^{\cE'}$. Thus $B \cup V$ is independent in $\cM(W')$.
		\item \textbf{X-merge and split move} 
        We know that $V$ is a subset of $U$ of size $1$ or $2$.  Let $M$ be an independent matching of $B \cup V$ in $G_W^\cE$.  If $V=\{i-1, i+1\}$ one of which is matched to an end of $p$ or $q$, then since $i$ is not used in $M$ either the end of $p$ or $q$ matched to $V$ can be instead matched to $i$.  Therefore, doing so if necessary, we can adjust the matching for $W'$ as follows.  If $i$ is matched to an end of $p$ or $q$, replace this with a edge from $i$ to the end of $s$, anything else matched to $p$ or $q$ is then matched to the corresponding end $r$.  At most one end of $r$ is matched by this process in view of the reduction when $V=\{i-1, i+1\}$.  This gives an independent $B \cup V$-perfect matching in $G_{W'}^{\cE'}$. Thus $B \cup V$ is independent in $\cM(W')$.
	\end{itemize}

Thus we have shown that in every single case, $ \rk \cM(W) \leq \rk \cM(W')$.
\end{proof}

Next, we show that for each of the moves, every basis of $\cM(W')$ is independent in $\cM(W)$. This gives the key towards showing the boundary result.

\begin{prop}\label{res:rank does not increase}
	Let $W$ be a generalized Wilson loop diagram with no crossing propagators. Let $W'$ be the diagram resulting from applying any boundary move. Then every basis of $\cM(W')$ is independent in $\cM(W)$. 
\end{prop}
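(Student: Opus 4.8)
The plan is to argue directly in terms of independent matchings on the bipartite graphs. Recall (Definitions~\ref{dfn:Rado} and~\ref{dfn:RadogWLD}) that a set $B\subseteq [n]$ is independent in $\cM(W')$ exactly when there is a $B$-perfect independent matching $M'$ in $G_{W'}^{\cE'}$. So I would fix a basis $B$ of $\cM(W')$ together with such an $M'$, and manufacture from it a $B$-perfect independent matching $M$ in $G_W^\cE$, which certifies that $B$ is independent in $\cM(W)$. The leverage is that every move is local: by \eqref{eq:restricted matroids same} and the explicit descriptions in Section~\ref{sec:moves}, the two bipartite graphs agree outside the active vertex set $U$, and every propagator other than the ones rewritten by the move is untouched. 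Hence every edge of $M'$ that is incident neither to an end created or merged by the move nor to a vertex of $U$ can be copied into $M$ verbatim; only the part of $M'$ supported on the rewritten ends needs to be re-routed, and the verification reduces to a finite, local bookkeeping carried out separately for each of the five moves.

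For the slide move the re-routing is vacuous: the slid end $a$ has, in $W$, a support containing its support in $W'$, so $M'$ is already a matching of $G_W^\cE$ and independence is unchanged. The retract-or-split move is the reverse of a legal split, so any $M'$ lifts immediately. The genuine work is in the three merging moves. For the merge move, the ends of the new propagator $r$ matched by $M'$ fall into three groups---those inherited from $p$, those inherited from $q$, and possibly the single new end $r_e$; I send the first two groups back to the corresponding ends of $p$ and $q$, and if $r_e$ is matched I re-route it to $p_e$ or $q_e$ (both reach the relevant vertex since $V(r_e)=V(p_e)\cap V(q_e)$). That the resulting assignment respects the capacities of $p$ and $q$ \emph{separately} follows because the rank of the uniform matroid on $\cE(r)$ is exactly $c(p)+c(q)$, so the matched ends of $r$ number at most $c(p)+c(q)$; an exhaustion of cases then shows at least one of $p,q$ has a free slot whenever $r_e$ is used.

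The merge-and-split and X-merge-and-split moves are handled in the same spirit but require the most care, and this is where I expect the main obstacle to lie. Here several edge- and vertex-ends (the end $r_e$, the end of $s$, and in the X-case two edge-ends) must be re-routed simultaneously onto the original edge-ends $p_{j,e}$ at the vertices of $U$, where each $p_{j,e}$ is usable for only one of those vertices. The crux is a counting argument: because the merged propagator has capacity one (respectively two, in the X-case) below its number of ends, any independent matching of $W'$ leaves unmatched capacity among the propagators of $P$, and one wants this slack to absorb precisely the ends that were carried by $r_e$ and $s$. The delicate point, which I would treat as the heart of the proof, is to show that this slack can always be distributed across \emph{distinct} propagators reaching the vertices that need cover---tracking, for each merged propagator, how many of its ends $M'$ matches against its capacity, and using the non-crossing hypothesis to force the slack onto propagators whose edge-ends actually touch the uncovered vertices of $U$. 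Once this distribution is justified, $B$ is independent in $\cM(W)$ for every move, completing the argument.
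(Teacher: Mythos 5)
Your overall strategy is the same as the paper's: fix a basis $B$ of $\cM(W')$ together with a $B$-perfect independent matching, use locality of the moves (the bipartite graphs and the untouched propagators agree outside the active set $U$, cf.\ \eqref{eq:restricted matroids same}) to copy the matching verbatim away from the move, and then re-route the matching near $U$ case by case, checking that no propagator of $W$ exceeds its capacity. Your treatment of the slide, retract-or-split, and merge moves matches the paper's: in particular, for the merge move the paper argues exactly as you do, that if $r_e$ is matched then at most $c(p)+c(q)-1$ other ends of $r$ are matched, so at least one of $p,q$ has a free slot, and $V(r_e)=V(p_e)\cap V(q_e)$ lets the relevant vertex be re-routed to $p_e$ or $q_e$.

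The gap is in the merge-and-split case (and, as you lump them together, the X-merge-and-split case, though there $c(r)=c(s)=1$ makes the bookkeeping small and the paper's bullet cases dispose of it directly). You correctly identify the delicate point --- when both $r_e$ and the end of $s$ are matched to the two vertices of $U$, one must cover those vertices by edge-ends $p_{j,e}$ of \emph{distinct} propagators each having a free capacity slot --- but you only announce that you would prove the slack ``can always be distributed across distinct propagators,'' invoking non-crossingness; no argument is given, so the heart of the proof is missing. Moreover, the claim does not follow from the counting you set up: independence in the uniform matroid on $\cE(r)$ bounds only the \emph{total} number of matched non-$r_e$ ends by $\sum_{p\in P}c(p)-2$, and the deficit of two can concentrate on a single propagator. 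Concretely, take $P=\{p,q\}$ with $c(p)=c(q)=2$, each with one end on $e_i$ and two further ends on distant vertices; a maximal independent matching of $W'$ may match $r_e$, the end of $s$, and \emph{both} non-edge ends of $p$, leaving all the slack on $q$, so that only the single end $q_e$ is available to cover the two vertices of $U$ back in $G_W^{\cE}$, and no local re-routing of the kind you describe succeeds. For comparison, the paper's proof at this point asserts outright that in this situation at most $c(p)-1$ vertices are matched to ends of $p$ and at most $c(q)-1$ to ends of $q$ --- a per-propagator bound stronger than the total count yields --- and it makes no use of the non-crossing hypothesis anywhere in this proposition. So the step you flag as the heart of the argument is precisely the step requiring a genuine proof (or a rebalancing of the matching away from $U$, or an argument excluding such bases), and your proposal does not supply one.
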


\begin{proof}		
	By \eqref{eq:restricted matroids same}, we know that the matroids associated to $W$ and $W'$, restricted to the complement of $U$ are equal. Let $B$ be a basis of $\cM(W')$ and let $M$ be a $B$-perfect independent matching in $G_{W'}^{\cE'}$. We know that the subset of $M$ that covers vertices in $U^c$ is independent in both $G_{W'}^{\cE'}$ and $G_{W}^{\cE}$.  For each boundary move we show how $M$ can be changed to form a $B$-perfect independent matching in $G_W^{\cE}$. 
	
	Since $M$ is $B$-perfect and independent in $G_{W'}^{\cE'}$, we know that, for any propagator $p$ in $W'$, $M$ matches at most $c(p)$ ends of $p$ in $W'$. The proofs that the matchings in $G_W^{\cE}$ are $B$-perfect and independent comes from checking that, for any propagator $p$ in $W$, $M$ matches at most $c(p)$ ends of $p$ in $W$.

     For the remainder of this argument use the notation for the propagators and ends involved in each move as established in Section~\ref{sec:moves}.
     
	\begin{itemize}
		\item \textbf{Slide move}: Any $B$-perfect independent matching  in $G_{W'}^{\cE'}$ is a $B$-perfect independent matching in $G_W^{\cE}$, since the supports of corresponding propagators are subsets.
		\item \textbf{Merge move}: If $M$ matches $v \in U$ to the end $r_e$,  then, either $M$ matches at most $c(p)$ vertices in $B \setminus v$ to ends of $p$ and at most $c(q) -1 $ vertices in $B \setminus v$ to ends of $q$ in $G_W^{\cE}$ or analogously with $p$ and $q$ reversed. Then in $G_W^{\cE}$, match the vertex $v$ with the end $q_e$ or $p_e$ as appropriate. All other edges in the matching $M$ remain the same. 
		\item \textbf{Merge and split move}: All edges of $M$ that do not cover vertices in $U$ do not change between $G_W^{\cE}$ and $G_{W'}^{\cE}$. 
			\begin{itemize} 
			 	\item If $M$ matches $v \in U$ to the single end of $s$ or to the end $r_e$, then, in $G_W^{\cE}$, $M$ either matches at most $c(p)$ vertices in $B \setminus v$ to ends of $p$ and at most $c(q) -1 $ vertices in $B \setminus v$ to ends of $q$ in $G_W^{\cE}$ or analogously with $p$ and $q$ reversed. Then in $G_W^{\cE}$, match the vertex $v$ with the end $q_e$ or $p_e$ as appropriate.
			 	\item If $M$ matches the two vertices in $U$ both to the single end of $s$ and to the end $r_e$, then, in $G_W^{\cE}$, $M$ matches at most $c(p)-1$ vertices in $B \setminus v$ to ends of $p$ and at most $c(q) -1 $ vertices in $B \setminus v$ to ends of $q$ in $G_W^{\cE}$. Then in $G_W^{\cE}$, match the vertices of $U$ to the ends $p_e$ and $q_e$.
			\end{itemize} 
		\item \textbf{X-Merge and Split}: The matching $M$ must match $i$ to $s$ in $G_{W'}^{\cE'}$. 
		\begin{itemize} 
			\item If $i+1$ and $i-1$ are not matched to an end of $r$ in $G_{W'}^{\cE'}$, then there at most one vertex in $U^c$ that is matched to an end of $r$. If that end corresponds to an end of $p$ in $G_{W}^{\cE}$ then match $i$ to an end of $q$. Otherwise, match $i$ with an end of $p$. If no end of $r$ is matched to an element of $U^c$, match $i$ to either the end of $p$ or $q$ supported on $i$. Leave all other edges of $M$ to be the same in the two graphs.
			\item Otherwise, exactly one of $i-1$ or $i+1$ is matched to an end of $r$ in $G_{W'}^{\cE'}$. Depending on which is matched, match $i-1$ to an end of $p$ and $i$ to an end of $q$ or $i+1$ to and end of $q$ and $i$ to an end of $p$ in $G_{W}^{\cE}$. Leave all other edges of $M$ to be the same in the two graphs.
		\end{itemize}
		\item \textbf{Retract or Split}: For any matching independent $B$-perfect matching $M$ in $G_{W'}^{\cE'}$, keep all edges of $M$ covering vertices of $U^c$ the same.
		\begin{itemize}
			\item If $k = c(p) > 1$ in $W$, then the matching $M$ must match $i$ to the single ended propagator on $i$ in $G_{W'}^{\cE'}$.  Furthermore, there are at most $k -1$ vertices of $U^c$ matched to ends of $p$ in $G_{W'}^{\cE'}$. Therefore, in $G_{W}^{\cE}$, match $i$ to an end of $p$.
		\end{itemize} If $k = c(p) > 1$ in $W$, $i$ is not covered by the matching $M$. 
	\end{itemize}
	
	Therefore, we have shown that if $B$ is a basis of $\cM(W')$ then it is an independent set of in $\cM(W')$. 
	
	\end{proof} 

We can put Propositions \ref{res:rank does not drop} and  \ref{res:rank does not increase} together to show that the ranks of $\cM(W)$ and $\cM(W')$ are the same.

\begin{cor} \label{res:ranks same}
	If $W$ is a locally minimal generalized Wilson loop diagram with non-crossing propagators, the ranks of $\cM(W)$ and $\cM(W')$ are the same: $\rk \cM(W) = \rk \cM(W')$.
\end{cor}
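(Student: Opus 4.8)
The plan is to combine the two preceding propositions, each of which controls one direction of the desired equality, so that the corollary falls out by sandwiching. Proposition~\ref{res:rank does not drop} already supplies one inequality, namely $\rk\cM(W) \leq \rk\cM(W')$, and this is precisely the step that consumes the local minimality hypothesis on $W$. For the reverse inequality I would invoke Proposition~\ref{res:rank does not increase}: pick any basis $B$ of $\cM(W')$, so that $|B| = \rk\cM(W')$. That proposition asserts that every basis of $\cM(W')$ is an independent set in $\cM(W)$, and since an independent set has cardinality at most the rank of the matroid, we obtain $\rk\cM(W') = |B| \leq \rk\cM(W)$.

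Putting the two inequalities together gives $\rk\cM(W) = \rk\cM(W')$, which is the claim. The only bookkeeping worth flagging is that local minimality is needed only for Proposition~\ref{res:rank does not drop}, whereas Proposition~\ref{res:rank does not increase} holds for any non-crossing $W$; this asymmetry is what makes the ``rank does not increase'' direction the more robust one to rely on in the later boundary computations. There is no genuine obstacle in proving the corollary itself: it is a formal consequence of the two bracketing inequalities, and all the combinatorial content—the explicit surgery on independent matchings in $G_W^{\cE}$ and $G_{W'}^{\cE'}$ carried out case by case across the five boundary moves—has already been discharged in establishing those two propositions. Hence the proof amounts to citing them in the correct order and noting that a basis realizes the rank.
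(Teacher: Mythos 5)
Your proposal is correct and matches the paper's own proof exactly: the paper likewise derives $\rk \cM(W') \leq \rk \cM(W)$ from Proposition~\ref{res:rank does not increase} (a basis of $\cM(W')$ is independent in $\cM(W)$, hence bounded by its rank) and combines this with Proposition~\ref{res:rank does not drop} to get equality. Your observation that local minimality is consumed only in the latter proposition is also consistent with how the paper deploys these results.
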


\begin{proof}
	Proposition \ref{res:rank does not increase} implies that $\rk \cM(W') \leq \rk \cM(W)$, so  combined with Proposition \ref{res:rank does not drop}, gives $\rk \cM(W) = \rk \cM(W')$.
\end{proof}

We are now ready to prove the main result of this section, that the positroid associated to $\cM(W')$ is a boundary of the positroid associated to $\cM(W)$ under certain conditions on $W$ or $W'$. Before stating this result, we recall from Definition \ref{dfn:positriodboundary} that a positroid cell defined by the matroid $M'$ is a boundary of the positroid cell defined by the matroid $M$ if and only if both positroids have the same rank ($\rk M = \rk M'$) and the set of bases of one is contained in the set of bases of the other: $\cB(M') \subseteq \cB(M)$. 

\begin{thm}\label{res:boundary cells}
Let $W$ be a capacity ranked generalized Wilson loop diagram with no crossing propagators and $W'$ the diagram resulting from the application of a boundary move to $W$.  If either $W$ is locally minimal or $\rk \cM(W) = \rk \cM(W')$, then the positroid cell associated to the $\cM(W')$ is a boundary of the positroid cell associated to  $\cM(W)$.
\end{thm}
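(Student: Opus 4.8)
The plan is to verify directly the two requirements of Definition~\ref{dfn:positriodboundary}: that $\cM(W)$ and $\cM(W')$ are positroids of equal rank, and that $\cB(\cM(W'))\subseteq \cB(\cM(W))$.

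First I would establish the rank equality $\rk\cM(W)=\rk\cM(W')$, treating the two alternative hypotheses uniformly. If $W$ is locally minimal, this is exactly Corollary~\ref{res:ranks same}; if instead we are handed $\rk\cM(W)=\rk\cM(W')$ as the hypothesis, there is nothing to prove. The point worth flagging is that only the inequality $\rk\cM(W)\le \rk\cM(W')$ coming from Proposition~\ref{res:rank does not drop} uses local minimality, whereas the reverse inequality, implicit in Proposition~\ref{res:rank does not increase}, holds under non-crossingness alone. This is precisely why the bare rank assumption can stand in for local minimality in the second branch of the hypothesis.

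Next I would argue that both matroids are positroids of the same rank. Since $W$ is capacity-ranked with no crossing propagators, $\cM(W)$ is a positroid by Corollary~\ref{res:non-crosspos}, and $\rk\cM(W)=c(\cP)$. For $\cM(W')$ I would inspect the five moves of Section~\ref{sec:moves} in turn to check that the total capacity is conserved, i.e.\ $c(\cP')=c(\cP)$: the slide move changes no capacity; the merge move replaces $p,q$ by a single $r$ with $c(r)=c(p)+c(q)$; the merge-and-split and X-merge-and-split moves each peel off one capacity-$1$ propagator while lowering the remaining capacity by exactly $1$; and the retract-or-split move either leaves the capacity of $p$ at $1$ or lowers it by $1$ while adding a capacity-$1$ propagator carrying the removed end. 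Combining $c(\cP')=c(\cP)$ with the rank equality and the capacity-rankedness of $W$ gives $\rk\cM(W')=c(\cP)=c(\cP')$, so $W'$ is itself capacity-ranked. As each move preserves non-crossingness (noted move-by-move in Section~\ref{sec:moves}), Corollary~\ref{res:non-crosspos} applies again to make $\cM(W')$ a positroid, of the same rank as $\cM(W)$.

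Finally I would settle the basis containment. Proposition~\ref{res:rank does not increase} tells us that every basis $B$ of $\cM(W')$ is independent in $\cM(W)$; since $|B|=\rk\cM(W')=\rk\cM(W)$, such a $B$ is automatically a basis of $\cM(W)$, whence $\cB(\cM(W'))\subseteq \cB(\cM(W))$. With equal ranks, both matroids positroids, and this containment, Definition~\ref{dfn:positriodboundary} yields that the cell of $\cM(W')$ is a boundary cell of the cell of $\cM(W)$. The only step demanding genuine care is the capacity bookkeeping that transfers capacity-rankedness to $W'$; everything else is a direct assembly of the preceding propositions, so I expect the main (modest) obstacle to be making that conservation-of-capacity check airtight across all five move families.
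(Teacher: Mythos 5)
Your proposal is correct and follows essentially the same route as the paper's own proof: positivity of both $\cM(W)$ and $\cM(W')$ via Corollary~\ref{res:non-crosspos} (using conservation of total capacity and preservation of non-crossingness to transfer capacity-rankedness to $W'$), rank equality by hypothesis or Corollary~\ref{res:ranks same}, and the basis containment $\cB(\cM(W'))\subseteq\cB(\cM(W))$ from Proposition~\ref{res:rank does not increase} together with Definition~\ref{dfn:positriodboundary}. Your explicit move-by-move capacity bookkeeping is merely asserted in one sentence in the paper, and your check of it is accurate.
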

\begin{proof}
We first observe that $\cM(W)$ and $\cM(W')$ are both positroids.  For the former,  by Corollary \ref{res:non-crosspos}, since $W$ is capacity ranked and has no crossing propagators $\cM(W)$ is a positroid. By construction, for any of the boundary moves, when we begin with a non-crossing diagram then $W'$ also has no crossing propagators. Furthermore, the sum of the capacities of the propagators in $W'$ is the same as the sum of the capacities of the propagators in $W$. Therefore, either directly by hypothesis or by Corollary \ref{res:ranks same} if $W$ is locally minimal, $W'$ is also capacity ranked. Therefore, by Corollary \ref{res:non-crosspos}, $\cM(W')$ is a positroid. 

By Proposition \ref{res:rank does not increase}, we see that if $B$ is a basis of $\cM(W')$ then it is independent in $\cM(W)$. However, since both matroids have the same rank (either by hypothesis or by Corollary~\ref{res:non-crosspos}), $B$ is a basis of $\cM(W)$. In other words, there is a containment of the sets of bases: $\cB(\cM(W')) \subseteq \cB(\cM(W))$. Therefore, by Definition \ref{dfn:positriodboundary}, the positroid cell associated to the $\cM(W')$ is a boundary of the positroid cell associated to  $\cM(W)$.
\end{proof}

Algorithmically, it is more convenient to apply a move to $W$ without checking local minimality and then simply check that the rank of $\cM(W')$ is the same as that of $\cM(W)$ as this simply involves applying Algorithm \ref{algo:non-crossgwld} to $W'$ once, while checking local minimality involves, for each edge of $G_{W}^{\cE}$, building the adjusted diagram and applying Algorithm \ref{algo:non-crossgwld} to it in order to check that the Grassmann necklace is unchanged from the Grassmann necklace of $W$.  The amount of work is multiplied by the number of edges of the bipartite graph.  

The code discussed in the subsequent sections uses Algorithm~\ref{algo:non-crossgwld} to generate the Grassmann necklaces.  Note that to obtain a valid Grassmann necklace via Algorithm~\ref{algo:non-crossgwld}, the generalized Wilson loop diagram must be admissible in the sense of Definition~\ref{dfn:admissiblegWLD}.  However, Algorithm~\ref{algo:non-crossgwld} also provides the test for admissibility: if all of the elements of what should be Grassmann necklace elements are not the same size, then the generalized Wilson loop diagram was not admissible.  Likewise, as noted above applying Algorithm~\ref{algo:non-crossgwld} also provides the test that the rank of $\cM(W')$ is the same as the rank of $\cM(W)$ since the size of each  Grassmann necklace element is the rank of the matroid.  In running the code, if any non-admissible diagrams were generated then an error is raised on account of the invalid Grassmann necklace. Additionally the rank of each resulting diagram is checked.  Note that neither problem ever arises in the calculations of Sections~\ref{sec:diagrammaticboundaries} and \ref{sec:boundaries of more than 2 props} since the starting diagrams are very well-behaved in those cases.

\subsection{Diagrammatic moves generate boundaries of admissible Wilson loop diagrams with two propagators}\label{sec:diagrammaticboundaries}

In this section we prove that the diagrammatic moves from Section \ref{sec:moves} generate all the boundaries of all codimensions of admissible ordinary Wilson loop diagrams with two propagators (as opposed to the previous section that demonstrated that all the moves generated boundaries). The proof is an explicit calculation that requires computer assistance, which we have implemented in the accompanying Python files and now explain in detail. 

First we describe how to check whether diagrammatic moves generate all boundaries for a fixed Wilson loop diagram. This involves enumerating all possible boundaries of the positroid associated to an admissible ordinary Wilson loop diagram (Section \ref{sec:enumeratingboundariesviaRWs}) and then comparing with the boundaries that are produced via diagrammatic operations (Section \ref{sec:boundariesfrommoves}). Then we explain why it suffices to check a finite set of admissible ordinary Wilson loop diagrams to claim our result. 

\subsubsection{Enumerating the boundaries of the positroid cell associated to a Wilson loop diagram}
\label{sec:enumeratingboundariesviaRWs}
Let $W$ be a fixed admissible ordinary Wilson loop diagram. For the moment we will not make any assumptions about its vertices or propagators. By construction, these Wilson loop diagrams do not have crossing propagators. Therefore, by Corollary \ref{res:non-crosspos} the associated matroid $\cM(W)$ is a positroid. As mentioned briefly in Section \ref{sec:background}, there are many equivalent combinatorial structures that encode positroids. The most critical such structure for us is the reduced word/subword pair of a permutation, the reason being that there exists a straightforward algorithm to compute the reduced word/subword pairs corresponding to boundaries of the original positroid \cite{Towers}. As far as we are aware, this is the only combinatorial representation that gives us direct access to the positroid boundaries.

However, it is not clear how to compute the boundaries of the positroid associated to a Wilson loop diagram directly from the diagram itself. Instead, starting from a Wilson loop diagram we compute the reduced word/subword pair by passing through several intermediate combinatorial objects, namely Grassmann necklaces and Le diagrams. In this manner one can enumerate the reduced word/subword pairs representing all boundaries of a Wilson loop diagrams. 

Furthermore, among these combinatorial objects (Wilson loop diagrams, Grassmann necklaces, Le diagrams, and reduced word/subword pairs), it is only the Le diagrams from which one can readily read off the {\it dimension} of the boundary.\footnote{For the reader already familiar with Le diagrams, the dimension is simply the number of nonzero boxes in the tableau.} In other words, one may know that a word/subword pair is a boundary of another, but not of what codimension. From a geometric point of view it is useful to organize the boundaries of a positroid by codimension, and therefore as a last step in enumerating the boundaries of a Wilson loop diagram we convert them back to Le diagrams.

It is this sequence of steps that makes it possible to compare the (Le diagram) boundaries represented by generalized Wilson loop diagram obtained from diagrammatic operations on $W$ with the full list of (Le diagrams representing) boundaries of the positroid associated to $W$. As this procedure is somewhat convoluted we have provided a visual aid for the reader in the form of Figure~\ref{fig:wldboundariesfromRWs}.

\begin{figure}[h]

\begin{tikzcd}
\begin{tikzpicture}[line width=1]
\node[draw,rectangle, minimum size=1.5cm,align=center] at (0,0) {Wilson loop\\ diagram};
\end{tikzpicture} \ar[drr,dashed,swap, "\text{Enumerate boundaries }"] \ar[r] & \begin{tikzpicture}[line width=1]
\node[draw,rectangle, minimum size=1.5cm,align=center] at (0,0) {Grassmann\\ necklace};
\end{tikzpicture} \ar[r] & \begin{tikzpicture}[line width=1]
\node[draw,rectangle, minimum size=1.5cm,align=center] at (0,0) {Le\\ diagram};
\end{tikzpicture} \ar[r] & \begin{tikzpicture}[line width=1]
\node[draw,rectangle, minimum size=1.5cm,align=center] at (0,0) {Reduced word/subword \\ pair of permutation};
\end{tikzpicture} \ar[d,"\text{Enumerate boundaries}"] \\
& & \begin{tikzpicture}[line width=1]
\node[draw,rectangle, minimum size=1.5cm,align=center] at (0,0) {Le\\ diagram};
\end{tikzpicture} & \ar[l] \begin{tikzpicture}[line width=1]
\node[draw,rectangle, minimum size=1.5cm,align=center] at (0,0) {Reduced word/subword \\ pair of permutation};
\end{tikzpicture}
\end{tikzcd}
\caption{Enumerating the Le diagrams corresponding to the boundaries of the positroid associated to a Wilson loop diagram requires passing through several intermediate combinatorial objects.}
\label{fig:wldboundariesfromRWs}
\end{figure}
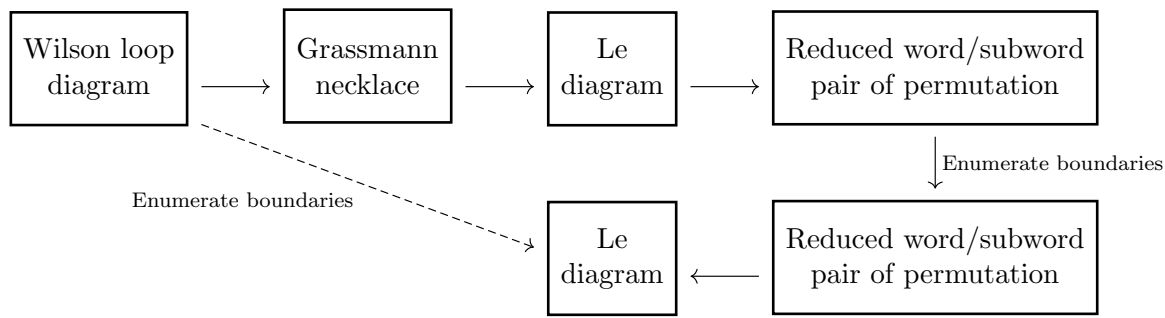

In the Python code accompanying this article a list of Le diagrams representing boundaries of a Wilson loop diagram $W$ can be computed by calling the function

\begin{lstlisting}[language=Python]
LeBoundariesByCodim(W)
\end{lstlisting}
which outputs a list whose $i$th entry is a list of Le diagrams corresponding to codimension $i$ boundaries of $W$. 

\subsubsection{Computing boundaries of a positroid cell via diagrammatic operations on generalized Wilson loop diagrams}\label{sec:boundariesfrommoves}

Given an admissible ordinary Wilson loop diagram $W$ of dimension $d$, it is straightforward to compute the generalized Wilson loop diagrams which correspond to codimension 1 boundaries. First we apply all possible diagrammatic moves in all possible ways (slide, merge, merge and split, X-merge and split, and retract or split) to $W$ to obtain a list of generalized Wilson loop diagrams. Then we throw out any diagrams with dimension not equal to $d-1$. 
This achieves two aims, first if any move does not change the dimension, that is the move results in another representation of the starting diagram, then it is removed, though the boundary calculus is sufficiently well-behaved on admissible ordinary Wilson loop diagrams that this does not occur in our main computations.  Second, this removes any boundary with codimension larger than 1. Later when we iterate this process, this will ensure that all of the boundaries are obtained by a sequence of moves where the codimension drops by 1 at each step, which is not strictly necessary, but is convenient.

As mentioned briefly in Section \ref{sec:enumeratingboundariesviaRWs}, it is not clear how to read off the dimension directly from the diagram. Therefore, discarding the diagrams that are not codimension 1 boundaries requires converting all the boundary diagrams to Le diagrams as an intermediate step. This leaves us with a list of generalized Wilson loop diagrams corresponding to codimension 1 boundaries of the positroid $\cM(W)$, although not necessarily uniquely, see Remark \ref{rmk:nonuniquenessofgWLDsasboundaries}. 

The function \lstinline{Codim1DiagrammaticBoundaries} implements the aforementioned calculation of generalized Wilson loop diagrams corresponding to codimension 1 boundaries in Python. 

\begin{lstlisting}[columns=fullflexible,language=Python]
def Codim1DiagrammaticBoundaries(W):
    # compute the dimension of W
    d = dim(W) 
    # list the generalized Wilson loop diagrams obtained from applying moves
    diagrammatic_boundaries = AllSlideMoves(W)+AllMergeMoves(W)+AllMergeAndSplitMoves(W)+AllXMergeAndSplitMoves(W)+AllRetractOrSplitMoves(W)
    # remove diagrams which fail to have codimension 1
    codim1_diagrammatic_boundaries = [gWLD for gWLD in diagrammatic_boundaries if my_flatten(gWLDtoLe(gWLD)).count(1) == d-1]
    # remove diagrams whose Le diagram fails to have the right rank
    codim1_boundary_gWLDs_with_correct_rank = [bdyWLD for bdyWLD in codim1_diagrammatic_boundary_gWLDs if len(gWLDtoLe(bdyWLD)) == rank]
    return codim1_diagrammatic_boundaries
\end{lstlisting}

The generalized Wilson loop diagrams corresponding to codimension 2 boundaries are then computed recursively from those corresponding to codimension 1 boundaries in the same manner, and so on until one obtains generalized Wilson loop diagrams corresponding to codimension $d$ boundaries. 
\begin{lstlisting}[language=Python]
def DiagrammaticBoundariesByCodim(gWLD): 
    # starting at codimension 0
    d = dim(gWLD)
    bdyWLDs = [[gWLD]]
    while d > 0:
        fixed_dim_bdyWLDs = []
        for bdy in bdyWLDs[-1]:
            # compute the next higher codimension diagrams
            fixed_dim_bdyWLDs += Codim1DiagrammaticBoundaries(bdy)
        # throw out multiple instances of the same diagram
		DeleteDuplicates(fixed_dim_bdyWLDs)
        # add the list of codimension 1 diagrams to the last place in the list 
		  bdyWLDs += [fixed_dim_bdyWLDs]
        # increase the codimension by 1
		d -= 1
	return bdyWLDs
\end{lstlisting}

\begin{rmk}
\label{rmk:nonuniquenessofgWLDsasboundaries}
    In general, different sequences of diagrammatic moves can result in the same generalized Wilson loop diagram. For example, both ways of sliding one end of a propagator to and then retracting it from a vertex will always yield the same boundary:
    $$\begin{tikzcd}
        & \begin{tikzpicture}[rotate=-67.5, line width=1, scale=.75,baseline=0]
%define number of vertices 
 \def \n {6}
%draw circle with nodes 
 \draw circle(1) 
 \foreach \v in{1,...,\n} 
 {(360*\v/\n-360/\n+180:1)circle(.4pt)circle(.8pt)circle(1.2pt)circle(1.4pt) node[anchor=360/\n*\v-360/\n-67.5]{$\v$}};
%draw 2 ended propagator 
 \draw[decorate,decoration={snake,amplitude=0.8mm}] (-0.866,-0.500) -- (0,1.00000000000000);
%draw 2 ended propagator 
 \draw[decorate,decoration={snake,amplitude=0.8mm}] (0,-1.00000000000000) -- (0.866,0.500);
\end{tikzpicture} \ar[dl] \ar[dr] & \\\begin{tikzpicture}[rotate=-67.5, line width=1, scale=0.75]
%define number of vertices 
 \def \n {6}
%draw circle with nodes 
 \draw circle(1) 
 \foreach \v in{1,...,\n} 
 {(360*\v/\n-360/\n+180:1)circle(.4pt)circle(.8pt)circle(1.2pt)circle(1.4pt) node[anchor=360/\n*\v-360/\n-67.5]{$\scriptstyle \v$}};
%draw 2 ended propagator 
 \draw[decorate,decoration={snake,amplitude=0.8mm}] (-0.866,-0.500) -- (0.500,0.866);
%draw 2 ended propagator 
 \draw[decorate,decoration={snake,amplitude=0.8mm}] (0,-1.00000000000000) -- (0.866,0.500);
\end{tikzpicture} \ar[dr] &  &\begin{tikzpicture}[rotate=-67.5, line width=1, scale=0.75]
%define number of vertices 
 \def \n {6}
%draw circle with nodes 
 \draw circle(1) 
 \foreach \v in{1,...,\n} 
 {(360*\v/\n-360/\n+180:1)circle(.4pt)circle(.8pt)circle(1.2pt)circle(1.4pt) node[anchor=360/\n*\v-360/\n-67.5]{$\scriptstyle \v$}};
%draw 2 ended propagator 
 \draw[decorate,decoration={snake,amplitude=0.8mm}] (-0.866,-0.500) -- (-0.500,0.866);
%draw 2 ended propagator 
 \draw[decorate,decoration={snake,amplitude=0.8mm}] (0,-1.00000000000000) -- (0.866,0.500);
\end{tikzpicture} \ar[dl] \\
& \begin{tikzpicture}[rotate=-67.5, line width=1, scale=0.75]
%define number of vertices 
 \def \n {6}
%draw circle with nodes 
 \draw circle(1) 
 \foreach \v in{1,...,\n} 
 {(360*\v/\n-360/\n+180:1)circle(.4pt)circle(.8pt)circle(1.2pt)circle(1.4pt) node[anchor=360/\n*\v-360/\n-67.5]{$\scriptstyle \v$}};
%draw single-ended propagator 
 \draw[decorate,decoration={snake,amplitude=0.8mm}] (-0.866,-0.500) -- ++({atan(0.577)}:.6 cm);
%draw 2 ended propagator 
 \draw[decorate,decoration={snake,amplitude=0.8mm}] (0,-1.00000000000000) -- (0.866,0.500);
\end{tikzpicture} & 
    \end{tikzcd}$$
 Furthermore, different generalized Wilson loop diagrams may represent the same boundary. Take for example the $6$-dimensional Wilson loop diagram
$$\begin{tikzpicture}[rotate=-67.5, line width=1, scale=.75,baseline=0]
%define number of vertices 
 \def \n {7}
%draw circle with nodes 
 \draw circle(1) 
 \foreach \v in{1,...,\n} 
 {(360*\v/\n-360/\n+180:1)circle(.4pt)circle(.8pt)circle(1.2pt)circle(1.4pt) node[anchor=360/\n*\v-360/\n-67.5]{$\v$}};
%draw 2 ended propagator 
 \draw[decorate,decoration={snake,amplitude=0.8mm}] (0.500,-0.866) -- (-0.901,0.434);
%draw 2 ended propagator 
 \draw[decorate,decoration={snake,amplitude=0.8mm}] (0.733,-0.680) -- (0.623,0.782);
\end{tikzpicture},
$$
from which one can obtain the following two codimension 4 generalized Wilson loop diagrams

\begin{align*}
    \begin{tikzpicture}[rotate=-67.5, line width=1, scale=.75,baseline=0]
%define number of vertices 
 \def \n {7}
%draw circle with nodes 
 \draw circle(1) 
 \foreach \v in{1,...,\n} 
 {(360*\v/\n-360/\n+180:1)circle(.4pt)circle(.8pt)circle(1.2pt)circle(1.4pt) node[anchor=360/\n*\v-360/\n-67.5]{$\v$}};
%draw 2 ended propagator 
 \draw[decorate,decoration={snake,amplitude=0.8mm}] (0.901,-0.434) -- (-0.623,0.782);
%draw 2 ended propagator 
 \draw[decorate,decoration={snake,amplitude=0.8mm}] (0.901,-0.434) -- (0.901,0.434);
\end{tikzpicture}
\qquad &\qquad
\begin{tikzpicture}[rotate=-67.5, line width=1, scale=.75,baseline=0]
%define number of vertices 
 \def \n {7}
%draw circle with nodes 
 \draw circle(1) 
 \foreach \v in{1,...,\n} 
 {(360*\v/\n-360/\n+180:1)circle(.4pt)circle(.8pt)circle(1.2pt)circle(1.4pt) node[anchor=360/\n*\v-360/\n-67.5]{$\v$}};
%draw multi-edge propagator 
 \foreach \x/\y in {0.901/-0.434,0.901/0.434,-0.623/0.782}{\draw[decorate,decoration={snake,amplitude=0.8mm}] (0.393,0.261) -- (\x,\y);}
%label multi-edge propagator 
 \draw (0.393,0.261) node[shift = {(.2,.1)}] {\small 2};
\end{tikzpicture}.
\end{align*}

By applying Algorithm \ref{algo:non-crossgwld} one can check that these two generalized Wilson loop diagrams correspond to the same Grassmann necklace and hence boundary.

\end{rmk}

The structure of algebraic relations between sequences of diagrammatic moves giving the same diagram or the same boundary remains a subject for future investigation, see Section~\ref{sec:finale}.
However,
in order to determine whether the diagrammatic moves generate all boundaries in a given situation and hence to prove Theorem~\ref{thm:boundarymovesgenerate} we do not need to characterize when the same boundary appears multiple times and so need only keep track of their associated positroids, which as in Section \ref{sec:enumeratingboundariesviaRWs} we parametrize via Le diagrams.

Calling the function \lstinline{LeBoundariesFromMovesByCodim(W)} defined below returns a codimension-sorted list of the boundaries derived from applying the graphical calculus to the ordinary Wilson loop diagram $W$. 

\begin{lstlisting}[language=Python]
def LeBoundariesFromMovesByCodim(W):
    # apply the graphical calculus to compute the boundary diagrams
	diagrammatic_boundary_gWLDs = DiagrammaticBoundariesByCodim(W)
    # create a list to be populated with Le diagrams arising from the diagrams
	boundary_Les = []
    # for each codimension
	for i in range(len(diagrammatic_boundary_gWLDs)):
		fixed_dim_boundary_Les = []
        # compute the Le diagram associated to the diagrams with that codimension
        # add it to the list of Le diagrams if it hasn't already appeared
		[fixed_dim_boundary_Les.append(gWLDtoLe(bdy)) for bdy in diagrammatic_boundary_gWLDs[i] if gWLDtoLe(bdy) not in fixed_dim_boundary_Les]
        # store the list of Le diagrams of the current codimension
		boundary_Les += [fixed_dim_boundary_Les]
	return boundary_Les
\end{lstlisting}

The output Le diagrams can be compared codimension by codimension to the complete list of boundaries described in the previous Section \ref{sec:enumeratingboundariesviaRWs}. For a fixed Wilson loop diagram the graphical calculus generates all of the boundaries using the method outlined above if

\begin{lstlisting}[language=Python]
NumLeBoundariesFromMovesByCodim(W) == NumLeBoundariesByCodim(W)
\end{lstlisting}
returns \lstinline|True|.

\begin{rmk}
We know that throwing out the higher codimension diagrams at each stage does not result in missing any boundaries in our main computation (Theorem \ref{thm:boundarymovesgenerate}) since we show that we obtain all of them through our process, namely taking iterated codimension 1 boundaries.   
\end{rmk}

The next example illustrates the method to compute boundaries via the graphical calculus described within this section.

\begin{eg}
\label{eg:boundariesfrommoves}
Consider the following admissible ordinary Wilson loop diagram $W$ with two propagators on $n=6$ vertices.
\begin{align*}
    W = \begin{tikzpicture}[rotate=-67.5, line width=1, scale=1.25,baseline=0]
%define number of vertices 
 \def \n {6}
%draw circle with nodes 
 \draw circle(1) 
 \foreach \v in{1,...,\n} 
 {(360*\v/\n-360/\n+180:1)circle(.4pt)circle(.8pt)circle(1.2pt)circle(1.4pt) node[anchor=360/\n*\v-360/\n-67.5]{$\v$}};
%draw 2 ended propagator 
 \draw[decorate,decoration={snake,amplitude=0.8mm}] (-0.866,-0.500) -- (0,1.00000000000000);
%draw 2 ended propagator 
 \draw[decorate,decoration={snake,amplitude=0.8mm}] (0,-1.00000000000000) -- (0.866,0.500);
\end{tikzpicture}
\end{align*}
By running \lstinline{LeBoundariesByCodim(W)} one can check that the number of boundaries at each dimension are given by the entries in the table below.
\begin{center}
\begin{tabular}{c|c|c|c|c|c|c}
Codimension & 1 & 2 & 3 & 4 & 5 & 6\\
\hline
$\#$ Boundaries of $W$ & 8 & 28 & 56 & 67 & 44 & 13
\end{tabular}
\end{center}

We exhibit a generalized Wilson loop diagram representing each boundary obtained from the sequential application of diagrammatic moves to $W$. Per Remark \ref{rmk:nonuniquenessofgWLDsasboundaries}, in general there may be multiple generalized Wilson loop diagrams representing the same boundary; we have simply picked a representative to illustrate each. While all of the codimension 1 boundaries are generated by slide moves, one can see instances of the merge, X-merge and split, and retract or split moves by inspecting the codimension 2 boundaries. Note that the regular merge and split move can never be applied to  this $W$ or to any of its boundary diagrams since there will never be two propagators ending on the same edge.

\textbf{Codimension 1 Boundaries}
\begin{align*}
    % [inline block 0: 216 envs, 123908 chars -> data_tex | \begin{tikzpicture}[rotate=-67.5, line width=1, scale=0.5] %define number of vertices ...]

\end{align*}
\end{eg}

\subsubsection{Generating boundaries of all ordinary Wilson loop diagrams with two propagators}

Now that we have explained how to analyze the boundaries of a positroid associated to an admissible ordinary Wilson loop diagram via the graphical calculus we are ready to state our second main result.

Let $\text{WLD}(k,n)$ denote the set of admissible ordinary $k$-propagator Wilson loop diagrams on $n$ vertices. 

\begin{thm}
\label{thm:boundarymovesgenerate}
The five boundary moves (slide, merge, merge and split, X-merge and split, and retract or split) applied iteratively generate all boundaries of the positroid cell associated to an admissible ordinary Wilson loop diagram with two propagators.
\end{thm}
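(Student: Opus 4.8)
The plan is to combine the already-established fact that every boundary move produces a genuine positroid boundary (Theorem~\ref{res:boundary cells}) with an explicit finite computation verifying the converse, namely that every boundary is reached. The only content requiring argument beyond the computation itself is the reduction of the infinite family $\bigcup_n \text{WLD}(2,n)$ to a finite list of representatives, so I would organize the proof in two stages: first a finiteness reduction, and then the statement that the accompanying code verifies equality of the two boundary lists for each representative.

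For the finiteness reduction, I would first observe that each admissible ordinary Wilson loop diagram with two propagators has at most four edges bearing ends (two per propagator, possibly fewer after coincidences), and that every one of the five moves acts only within the \emph{active region} consisting of these end-bearing edges together with their incident vertices. Indeed, a slide moves an end from an edge to an adjacent vertex, a merge places the merged end on $V(p_e)\cap V(q_e)$, the merge-and-split and X-merge-and-split produce ends on edges and vertices already incident to the participating edges, and retract/split only deletes an end or splits off a single-ended propagator at an already occupied vertex. None of these reach a vertex at cyclic distance greater than one from an originally end-bearing edge, and the active region is stable under iteration of the moves. Consequently any vertex that supports no propagator and lies outside the active region is \emph{inert}: it is a loop of $\cM(W)$ (Lemma~\ref{res:ignorenon-supporting}) that is never a slide target and never receives an end under any sequence of moves.

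Next I would use that inserting or deleting such an inert loop vertex induces compatible bijections on both sides of the comparison. On the geometric side, if $\cM(W') = \cM(W)\oplus(\text{loop})$ then every boundary of $\cM(W')$ also has that position as a loop, so deleting the loop biject\-s the boundaries of $\cM(W')$ with those of $\cM(W)$; since adjoining a zero column preserves both positivity and rank, this is a codimension-preserving bijection of positroid boundaries. On the diagrammatic side, the same loop-insertion bijection carries the move-generated diagrams of $W$ to those of $W'$, because no move involves the inert vertex and inserting it into a gap only subdivides an end-free edge. Hence the statement ``the moves generate all boundaries'' holds for $W'$ if and only if it holds for $W$. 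Combined with the cyclic symmetry of the ground set and the reversal symmetry of diagrams, this reduces the claim to those two-propagator admissible diagrams in which every gap between end-bearing edges has the minimal size forced by non-crossingness and by the ordinary-diagram constraint that ends lie on non-consecutive, non-overlapping edges. Up to rotation and reflection there are only finitely many such diagrams, each on a bounded number of vertices.

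Finally, for each representative $W$ I would run the two enumerations of Sections~\ref{sec:enumeratingboundariesviaRWs} and \ref{sec:boundariesfrommoves}: \lstinline{LeBoundariesByCodim(W)}, which lists every boundary of $\cM(W)$ sorted by codimension via reduced word/subword pairs, and \lstinline{LeBoundariesFromMovesByCodim(W)}, which lists the Le diagrams of the boundaries obtained by iterating the codimension-one diagrammatic moves (each a genuine boundary by Theorem~\ref{res:boundary cells}, with non-crossingness and positroid-ness preserved by Corollary~\ref{res:non-crosspos}). It then suffices to check that the two codimension-sorted lists have equal cardinality, i.e.\ that \lstinline{NumLeBoundariesFromMovesByCodim(W) == NumLeBoundariesByCodim(W)} returns \lstinline{True}; agreement at every codimension simultaneously certifies that the iterated codimension-one procedure, which is all the code performs, already recovers the entire boundary poset, so that no boundary is lost by discarding higher-codimension diagrams at each step. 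I expect the main obstacle to be precisely the finiteness reduction — in particular a rigorous verification that the active region never grows under iteration and that the inert-loop bijection is genuinely compatible with the move structure — since once finiteness is in hand the remaining equality is the mechanical check carried out by the provided Python implementation.
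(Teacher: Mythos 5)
Your proposal is correct and follows essentially the same route as the paper: the paper also reduces to a finite check (observing that with two propagators any diagram on $n\geq 9$ vertices has a vertex supporting no propagator, which by Lemma~\ref{res:ignorenon-supporting} splits off a rank-$0$ summand affecting neither boundaries nor moves, so it suffices to test $5\leq n\leq 8$) and then invokes Theorem~\ref{res:boundary cells} together with the computational comparison of the two codimension-sorted boundary counts via \texttt{CheckMovesGenerateBoundaries}. Your active-region/inert-loop bijection is a more elaborate (and slightly more careful) version of this reduction -- in particular you make explicit the compatibility of loop insertion with the move structure, which the paper leaves implicit -- but it is the same decomposition and the same finite verification.
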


\begin{proof}

Since the number of vertices $n$ of a Wilson loop diagram with two propagators can be arbitrarily large, there are a priori infinitely many ordinary Wilson loop diagrams with two propagators to check. However, since each propagator of an ordinary Wilson loop diagram is supported by at most 4 vertices, every ordinary Wilson loop diagram with two propagators and $n\ge 9$ vertices contains at least one vertex which does not support any propagators. Let $F$ be the set of all vertices in the diagram not supporting any propagators, which is nonempty. By Lemma \ref{res:ignorenon-supporting}, the positroid $\cM(W)$ can be written as the direct sum of matroids as follows: $\cM(W) = \cM(W|_{F^c}) + \cM(W)|_{F}$. Here, $\cM(W)|_{F}$ is a matroid of rank $0$ and $\cM(W|_{F^c})$ is the matroid associated to the diagram formed by removing all the vertices that do not support propagators from $W$. 
Since summands of rank 0 do not affect the boundaries (since they intersect trivially with all bases, see Definition~\ref{dfn:positriodboundary}), it suffices to check that the graphical calculus generates the boundaries for Wilson loop diagrams in the range $n \le 8$.

Additionally, since we are working with admissible ordinary Wilson loop diagrams, by definition, no diagram has a set of propagators $P$ with $V(P) < |P| + 3$.  Therefore, with $2$ propagators, one need not consider fewer than $5$ vertices.

Therefore, it suffices to check that the graphical calculus generates the boundaries for admissible ordinary Wilson loop diagrams in the range $5 \leq n \leq 8$, of which there are finitely many according to the following table. 

\begin{center}
\begin{tabular}{c|c|c|c|c}
$n$ & 5 & 6 & 7 & 8 \\
\hline
$\# WLD(2,$n$)$ & 5 & 21 & 56 & 120 \\ 
\end{tabular}
\end{center}

Note that since we know every move does give a boundary (Theorem~\ref{res:boundary cells}), it suffices to check that the number of boundaries that we obtain at each codimension is correct.

The result then follows from comparing the boundaries produced by the graphical calculus to the full list of boundaries for each Wilson loop diagram in this range as outlined in Section \ref{sec:boundariesfrommoves}. This can be implemented by calling the following function \lstinline{CheckMovesGenerateBoundaries()} and verifying that it returns \lstinline{True}. We did this and obtained the expected \lstinline{True} answer.

\begin{lstlisting}[language=Python]
def CheckMovesGenerateBoundaries():
    # fix the number of propagators k
    k = 2
    # for each choice of number of vertices n between 5 and 8
    for n in range(5,8):
		# enumerate all the ordinary two-propagator Wilson loop diagrams on n vertices
		WLDs_at_fixed_num_vertices = allWLDasgWLDs(k,n)
		# for every two-propagator ordinary Wilson loop diagram on n vertices
		for wld in WLDs_at_fixed_num_vertices:
			# check whether the graphical calculus produces all the boundaries
			if NumBoundariesFromMovesByCodim(gWLD) != NumLeBoundariesByCodim(gWLD)
				return False
	return True
\end{lstlisting}
\end{proof}

\subsection{Graphical calculus for boundaries of Wilson loop diagrams more than two propagators \label{sec:boundaries of more than 2 props}}

Moving to boundaries of admissible ordinary Wilson loop diagrams with $|\cP|>2$, we do not yet have sufficient moves to obtain all boundaries at all codimensions.  The moves are sufficient to generate the codimension 1 boundaries of all diagrams in $\text{WLD}(3,6)$,  $\text{WLD}(3,7)$, and the majority of diagrams in $\text{WLD}(3,8)$.  The missing boundaries from $\text{WLD}(3,8)$ then are particularly interesting.  We make some observations on them which give a complete explanation of the codimension one boundaries in $\text{WLD}(3,8)$, and which give indications of where to look for new systematic rules to extend our boundary calculus. The missing boundaries are investigated in Examples \ref{eg:missingcodim1boundary1} and \ref{eg:missingcodim1boundary2}.
 For $\text{WLD}(3,9)$ and beyond the computation as currently formulated became too large to pursue.

\subsubsection{Codimension 1 boundaries of ordinary Wilson loop diagrams with three propagators}

One can verify that not all the codimension 1 boundaries of ordinary Wilson loop diagrams with three propagators are generated by the graphical calculus with one of the functions in the accompanying Python files. The evaluation of the function \begin{lstlisting}
CheckMovesGenerateCodim1Boundaries(3,8)  
\end{lstlisting} returns \lstinline{False}, although the analogous code for \lstinline{CheckMovesGenerateCodim1Boundaries(3,6)} and \lstinline{CheckMovesGenerateCodim1Boundaries(3,7)} does return \lstinline{True}. 

We now investigate the boundary calculus on $\text{WLD}(3,8)$ further.
Among the 300 admissible ordinary Wilson loop diagrams with three capacity 1 propagators and $8$ vertices, the moves are sufficient to generate the codimension 1 boundaries for all but 32 of them. 

Each of the 32 diagrams in $\text{WLD}(3,8)$ with a codimension 1 boundary missed by the boundary moves in Section \ref{sec:moves} are equivalent -- up to rotation or reflection --  to either the diagram in the Example \ref{eg:missingcodim1boundary1} or \ref{eg:missingcodim1boundary2}.

\begin{eg}
\label{eg:missingcodim1boundary1}
Consider the following ordinary Wilson loop diagram $W$ with three propagators, each of capacity 1.

\begin{align*}
W=
    \begin{tikzpicture}[rotate=-67.5, line width=1, scale=1.25,baseline=0]
%define number of vertices 
 \def \n {8}
%draw circle with nodes 
 \draw circle(1) 
 \foreach \v in{1,...,\n} 
 {(360*\v/\n-360/\n+180:1)circle(.4pt)circle(.8pt)circle(1.2pt)circle(1.4pt) node[anchor=360/\n*\v-360/\n-67.5]{$\v$}};
%draw 2 ended propagator 
 \draw[decorate,decoration={snake,amplitude=0.8mm}] (-0.966,-0.259) -- (0.924,-0.383);
%draw 2 ended propagator 
 \draw[decorate,decoration={snake,amplitude=0.8mm}] (-0.866,-0.500) -- (0.383,-0.924);
%draw 2 ended propagator 
 \draw[decorate,decoration={snake,amplitude=0.8mm}] (0.383,0.924) -- (-0.924,0.383);
\end{tikzpicture}.
\end{align*}
There are ten codimension 1 boundaries, only nine of which are generated by the moves in Section \ref{sec:moves}. Eight come from the slide move:
\begin{align*}
\begin{tikzpicture}[rotate=-67.5, line width=1, scale=1]
%define number of vertices 
 \def \n {8}
%draw circle with nodes 
 \draw circle(1) 
 \foreach \v in{1,...,\n} 
 {(360*\v/\n-360/\n+180:1)circle(.4pt)circle(.8pt)circle(1.2pt)circle(1.4pt) node[anchor=360/\n*\v-360/\n-67.5]{$\v$}};
%draw 2 ended propagator 
 \draw[decorate,decoration={snake,amplitude=0.8mm}] (-1.00000000000000,0) -- (0.924,-0.383);
%draw 2 ended propagator 
 \draw[decorate,decoration={snake,amplitude=0.8mm}] (-0.924,-0.383) -- (0.383,-0.924);
%draw 2 ended propagator 
 \draw[decorate,decoration={snake,amplitude=0.8mm}] (0.383,0.924) -- (-0.924,0.383);
\end{tikzpicture}
\quad &  \quad \begin{tikzpicture}[rotate=-67.5, line width=1, scale=1]
%define number of vertices 
 \def \n {8}
%draw circle with nodes 
 \draw circle(1) 
 \foreach \v in{1,...,\n} 
 {(360*\v/\n-360/\n+180:1)circle(.4pt)circle(.8pt)circle(1.2pt)circle(1.4pt) node[anchor=360/\n*\v-360/\n-67.5]{$\v$}};
%draw 2 ended propagator 
 \draw[decorate,decoration={snake,amplitude=0.8mm}] (-0.966,-0.259) -- (1.00000000000000,0);
%draw 2 ended propagator 
 \draw[decorate,decoration={snake,amplitude=0.8mm}] (-0.866,-0.500) -- (0.383,-0.924);
%draw 2 ended propagator 
 \draw[decorate,decoration={snake,amplitude=0.8mm}] (0.383,0.924) -- (-0.924,0.383);
\end{tikzpicture}  & \begin{tikzpicture}[rotate=-67.5, line width=1, scale=1]
%define number of vertices 
 \def \n {8}
%draw circle with nodes 
 \draw circle(1) 
 \foreach \v in{1,...,\n} 
 {(360*\v/\n-360/\n+180:1)circle(.4pt)circle(.8pt)circle(1.2pt)circle(1.4pt) node[anchor=360/\n*\v-360/\n-67.5]{$\v$}};
%draw 2 ended propagator 
 \draw[decorate,decoration={snake,amplitude=0.8mm}] (-0.924,-0.383) -- (0.924,-0.383);
%draw 2 ended propagator 
 \draw[decorate,decoration={snake,amplitude=0.8mm}] (-0.707,-0.707) -- (0.383,-0.924);
%draw 2 ended propagator 
 \draw[decorate,decoration={snake,amplitude=0.8mm}] (0.383,0.924) -- (-0.924,0.383);
\end{tikzpicture} & \quad \begin{tikzpicture}[rotate=-67.5, line width=1, scale=1]
%define number of vertices 
 \def \n {8}
%draw circle with nodes 
 \draw circle(1) 
 \foreach \v in{1,...,\n} 
 {(360*\v/\n-360/\n+180:1)circle(.4pt)circle(.8pt)circle(1.2pt)circle(1.4pt) node[anchor=360/\n*\v-360/\n-67.5]{$\v$}};
%draw 2 ended propagator 
 \draw[decorate,decoration={snake,amplitude=0.8mm}] (-0.966,-0.259) -- (0.924,-0.383);
%draw 2 ended propagator 
 \draw[decorate,decoration={snake,amplitude=0.8mm}] (-0.866,-0.500) -- (0,-1.00000000000000);
%draw 2 ended propagator 
 \draw[decorate,decoration={snake,amplitude=0.8mm}] (0.383,0.924) -- (-0.924,0.383);
\end{tikzpicture} \\
\begin{tikzpicture}[rotate=-67.5, line width=1, scale=1]
%define number of vertices 
 \def \n {8}
%draw circle with nodes 
 \draw circle(1) 
 \foreach \v in{1,...,\n} 
 {(360*\v/\n-360/\n+180:1)circle(.4pt)circle(.8pt)circle(1.2pt)circle(1.4pt) node[anchor=360/\n*\v-360/\n-67.5]{$\v$}};
%draw 2 ended propagator 
 \draw[decorate,decoration={snake,amplitude=0.8mm}] (-0.966,-0.259) -- (0.924,-0.383);
%draw 2 ended propagator 
 \draw[decorate,decoration={snake,amplitude=0.8mm}] (-0.866,-0.500) -- (0.383,-0.924);
%draw 2 ended propagator 
 \draw[decorate,decoration={snake,amplitude=0.8mm}] (0.707,0.707) -- (-0.924,0.383);
\end{tikzpicture} \quad & \quad \begin{tikzpicture}[rotate=-67.5, line width=1, scale=1]
%define number of vertices 
 \def \n {8}
%draw circle with nodes 
 \draw circle(1) 
 \foreach \v in{1,...,\n} 
 {(360*\v/\n-360/\n+180:1)circle(.4pt)circle(.8pt)circle(1.2pt)circle(1.4pt) node[anchor=360/\n*\v-360/\n-67.5]{$\v$}};
%draw 2 ended propagator 
 \draw[decorate,decoration={snake,amplitude=0.8mm}] (-0.966,-0.259) -- (0.924,-0.383);
%draw 2 ended propagator 
 \draw[decorate,decoration={snake,amplitude=0.8mm}] (-0.866,-0.500) -- (0.383,-0.924);
%draw 2 ended propagator 
 \draw[decorate,decoration={snake,amplitude=0.8mm}] (0,1.00000000000000) -- (-0.924,0.383);
\end{tikzpicture}  & \begin{tikzpicture}[rotate=-67.5, line width=1, scale=1]
%define number of vertices 
 \def \n {8}
%draw circle with nodes 
 \draw circle(1) 
 \foreach \v in{1,...,\n} 
 {(360*\v/\n-360/\n+180:1)circle(.4pt)circle(.8pt)circle(1.2pt)circle(1.4pt) node[anchor=360/\n*\v-360/\n-67.5]{$\v$}};
%draw 2 ended propagator 
 \draw[decorate,decoration={snake,amplitude=0.8mm}] (-0.966,-0.259) -- (0.924,-0.383);
%draw 2 ended propagator 
 \draw[decorate,decoration={snake,amplitude=0.8mm}] (-0.866,-0.500) -- (0.383,-0.924);
%draw 2 ended propagator 
 \draw[decorate,decoration={snake,amplitude=0.8mm}] (0.383,0.924) -- (-0.707,0.707);
\end{tikzpicture} \quad & \quad \begin{tikzpicture}[rotate=-67.5, line width=1, scale=1]
%define number of vertices 
 \def \n {8}
%draw circle with nodes 
 \draw circle(1) 
 \foreach \v in{1,...,\n} 
 {(360*\v/\n-360/\n+180:1)circle(.4pt)circle(.8pt)circle(1.2pt)circle(1.4pt) node[anchor=360/\n*\v-360/\n-67.5]{$\v$}};
%draw 2 ended propagator 
 \draw[decorate,decoration={snake,amplitude=0.8mm}] (-1.00000000000000,0) -- (0.383,0.924);
%draw 2 ended propagator 
 \draw[decorate,decoration={snake,amplitude=0.8mm}] (-0.966,-0.259) -- (0.924,-0.383);
%draw 2 ended propagator 
 \draw[decorate,decoration={snake,amplitude=0.8mm}] (-0.866,-0.500) -- (0.383,-0.924);
\end{tikzpicture}
\end{align*}
and one comes from a merge move:
\begin{align*}
\begin{tikzpicture}[rotate=-67.5, line width=1, scale=1]
%define number of vertices 
 \def \n {8}
%draw circle with nodes 
 \draw circle(1) 
 \foreach \v in{1,...,\n} 
 {(360*\v/\n-360/\n+180:1)circle(.4pt)circle(.8pt)circle(1.2pt)circle(1.4pt) node[anchor=360/\n*\v-360/\n-67.5]{$\v$}};
%draw multi-edge propagator 
 \foreach \x/\y in {-0.924/-0.383,0.383/-0.924,0.924/-0.383}{\draw[decorate,decoration={snake,amplitude=0.8mm}] (0.128,-0.563) -- (\x,\y);}
%label multi-edge propagator 
 \draw (0.128,-0.563) node[shift = {(.25,0.15)}] {\small 2};
%draw 2 ended propagator 
 \draw[decorate,decoration={snake,amplitude=0.8mm}] (0.383,0.924) -- (-0.924,0.383);
\end{tikzpicture}.
\end{align*}
(One can check that any other generalized Wilson loop diagrams resulting from a single application of the moves is codimension 2 or greater.)

Let $\partial W$ denote the missing boundary, which has Grassmann necklace $$\cI = \{ 123, 231, 341, 451, 561, 612, 712, 812\}.$$

In this case, this boundary is already obtained in our calculation as a codimension one boundary of an equivalent admissible ordinary Wilson loop diagram with three propagators and 8 vertices.
Specifically, let 

\begin{align*}
	W' = \begin{tikzpicture}[rotate=-67.5, line width=1, scale=1.25, baseline=0]
		%define number of vertices
		\def \n {8}
		%draw circle with nodes
		\draw circle(1)
		\foreach \v in{1,...,\n}
		{(360*\v/\n-360/\n+180:1)circle(.4pt)circle(.8pt)circle(1.2pt)circle(1.4pt) node[anchor=360/\n*\v-360/\n-67.5]{$\v$}};
		%draw 2 ended propagator
		\draw[decorate,decoration={snake,amplitude=0.8mm}] (-0.924,-0.383) -- (0.966,-0.259);
		%draw 2 ended propagator
		\draw[decorate,decoration={snake,amplitude=0.8mm}] (-0.383,-0.924) -- (0.866,-0.500);
		%draw 2 ended propagator
		\draw[decorate,decoration={snake,amplitude=0.8mm}] (0.383,0.924) -- (-0.924,0.383);
	\end{tikzpicture} \;.
\end{align*} As this is an admissible ordinary Wilson loop diagram, one may check by the retriangulation arguments laid out in \cite{generalcombinatoricsI}, by using the Grassmann necklace algorithm in \cite{generalcombinatoricsII}, or by applying Algorithm \ref{algo:non-crossgwld} that the both $W$ and $W'$ give rise to the same positroid: $\cM(W)= \cM(W')$. Unlike in the diagram $W$, there is a diagrammatic move from $W'$ that gives rise to the same positroid as $\partial W$. Namely, one may apply the X-merge and split at vertex $1$: 

\begin{align*}
	\partial W' = \begin{tikzpicture}[rotate=-67.5, line width=1, scale=1.25, baseline = 0]
		%define number of vertices
		\def \n {8}
		%draw circle with nodes
		\draw circle(1)
		\foreach \v in{1,...,\n}
		{(360*\v/\n-360/\n+180:1)circle(.4pt)circle(.8pt)circle(1.2pt)circle(1.4pt) node[anchor=360/\n*\v-360/\n-67.5]{$\v$}};
		%draw single-ended propagator
		\draw[decorate,decoration={snake,amplitude=0.8mm}] (-1.00000000000000,0) -- ++({atan(0)}:.25 cm);
		%draw multi-edge propagator
		\foreach \x/\y in {-0.707/-0.707,0.966/-0.259,0.383/0.924,-0.707/0.707}{\draw[decorate,decoration={snake,amplitude=0.8mm}] (-0.016,0.166) -- (\x,\y);}
		%draw 2 ended propagator
		\draw[decorate,decoration={snake,amplitude=0.8mm}] (-0.383,-0.924) -- (0.866,-0.500);
	\end{tikzpicture}
\end{align*}

One can check from Algorithm \ref{algo:non-crossgwld} that $\cM(\partial W') = \cM(\partial W)$.

An alternative way to realize the missing boundary without passing to equivalent representations of $W$ can be found by introducing a decoration on a certain pair of propagator ends which indicates they cannot both be used in the formation of a particular Grassmann necklace element when applying Algorithm \ref{algo:non-crossgwld}. We indicate this in the following picture with an $x$ in the angle between the two ends which cannot be used together.  With this convention, the diagram below produces the missing codimension 1 boundary.

\begin{align*}
\partial W = \begin{tikzpicture}[rotate=-67.5, line width=1, scale=1.25,baseline = 0]
%define number of vertices 
 \def \n {8}
%draw circle with nodes 
 \draw circle(1) 
 \foreach \v in{1,...,\n} 
 {(360*\v/\n-360/\n+180:1)circle(.4pt)circle(.8pt)circle(1.2pt)circle(1.4pt) node[anchor=360/\n*\v-360/\n]{$\v$}};
%draw single-ended propagator 
 \draw[decorate,decoration={snake,amplitude=0.8mm}] (-1.00000000000000,0) -- ++({atan(0)}:.25 cm);
%draw multi-edge propagator 
 \foreach \x/\y in {-0.707/-0.707,0.383/-0.924,0.924/-0.383,0.383/0.924,-0.707/0.707}{\draw[decorate,decoration={snake,amplitude=0.8mm}] (0.055,-0.077) -- (\x,\y);}
%label multi-edge propagator 
 \draw (0.055,-0.077) node[shift = {(.25,-.2)}] {\small $2$};
  \draw (0.055,-0.077) node[shift = {(.4,.25)}] {\small $x$};
\end{tikzpicture}
 \qquad & \qquad \cI = \{ 123, 231, 341, 451, 561, 612, 712, 812\}
%\ytableaushort{00000,00+++,+++++}.
\end{align*}

This gives a diagrammatic derivation of the missing boundary directly from the original diagram $W$, and the new move this hints at remains local, but the resulting diagram is outside the current scope of the generalized Wilson loop diagram formalism as outlined in this work (see Remark \ref{rmk:matroids on props def}.

\end{eg}

\begin{eg}
\label{eg:missingcodim1boundary2}
Similarly to the previous example, applying the boundary moves to the ordinary Wilson loop diagram

\begin{align*}
\tilde{W}= \begin{tikzpicture}[rotate=-67.5, line width=1, scale=1,baseline=0]
%define number of vertices 
 \def \n {8}
%draw circle with nodes 
 \draw circle(1) 
 \foreach \v in{1,...,\n} 
 {(360*\v/\n-360/\n+180:1)circle(.4pt)circle(.8pt)circle(1.2pt)circle(1.4pt) node[anchor=360/\n*\v-360/\n-67.5]{$\v$}};
%draw 2 ended propagator 
 \draw[decorate,decoration={snake,amplitude=0.8mm}] (-0.966,-0.259) -- (0.924,0.383);
%draw 2 ended propagator 
 \draw[decorate,decoration={snake,amplitude=0.8mm}] (-0.866,-0.500) -- (0.383,-0.924);
%draw 2 ended propagator 
 \draw[decorate,decoration={snake,amplitude=0.8mm}] (0.383,0.924) -- (-0.924,0.383);
\end{tikzpicture}
\end{align*}
results in generalized Wilson loop diagrams corresponding to all but one of the codimension 1 boundaries of $\mathcal{M}(\tilde{W})$.

In this case, we do not have an equivalent representation -- no equivalent admissible ordinary Wilson loop diagram exists, and moving to generalized Wilson loop diagrams does not result in any useful equivalences for the present purposes.

On the other hand, we can represent the missing boundary using the propagators with the $x$ marking two ends which cannot be used together in a Grassmann necklace element, similarly to Example \ref{eg:missingcodim1boundary1}.  This is illustrated below. 

\begin{align*}
\partial \tilde{W} =
\begin{tikzpicture}[rotate=-67.5, line width=1, scale=1,baseline=0]
%define number of vertices 
 \def \n {8}
%draw circle with nodes 
 \draw circle(1) 
 \foreach \v in{1,...,\n} 
 {(360*\v/\n-360/\n+180:1)circle(.4pt)circle(.8pt)circle(1.2pt)circle(1.4pt) node[anchor=360/\n*\v-360/\n-67.5]{$\v$}};
%draw single ended propagator 
 \draw[decorate,decoration={snake,amplitude=0.8mm}] (-1.00000000000000,0) -- ++({atan(0)}:.25 cm);
%draw multi-edge propagator 
 \foreach \x/\y in {-0.707/-0.707,0.383/-0.924,0.924/0.383,0.383/0.924,-0.707/0.707}{\draw[decorate,decoration={snake,amplitude=0.8mm}] (0.055,0.077) -- (\x,\y);}
%label multi-edge propagator 
 \draw (0.055,0.077) node[shift = {(.15,0)}] {\small $2$};
  \draw (0.055,-0.077) node[shift = {(.4,.25)}] {\small $x$};
\end{tikzpicture}
 \qquad & \qquad \cI = \{ 123, 231, 351, 451, 561, 612, 712, 812\}
%\ytableaushort{00000,00+++,+++++}.
\end{align*}

There is a generalized Wilson loop diagram (without the additionally generalized propagator as in $\partial \tilde{W}$) with the same matroid as $\partial \tilde{W}$, as illustrated below.  However, it is not apparently obtainable through a local move from $\tilde{W}$ and so this remains a subject for future investigation.

\begin{align*}
\begin{tikzpicture}[rotate=-67.5, line width=1, scale=1.25,baseline=0]
%define number of vertices 
 \def \n {8}
%draw circle with nodes 
 \draw circle(1) 
 \foreach \v in{1,...,\n} 
 {(360*\v/\n-360/\n+180:1)circle(.4pt)circle(.8pt)circle(1.2pt)circle(1.4pt) node[anchor=360/\n*\v-360/\n-67.5]{$\v$}};
%draw single ended propagator 
 \draw[decorate,decoration={snake,amplitude=0.8mm}] (-1.00000000000000,0) -- ++({atan(0)}:.25 cm);
%draw multi-edge propagator 
 \foreach \x/\y in {-0.707/-0.707,0.866/0.500,-0.383/0.924}{\draw[decorate,decoration={snake,amplitude=0.8mm}] (-0.075,0.239) -- (\x,\y);}
%draw multi-edge propagator 
 \foreach \x/\y in {-0.383/-0.924,0.707/-0.707,0.966/0.259}{\draw[decorate,decoration={snake,amplitude=0.8mm}] (0.430,-0.457) -- (\x,\y);}
\end{tikzpicture}
\end{align*}
\end{eg}

Since the $W$ of Example \ref{eg:missingcodim1boundary1} and the $\tilde{W}$ of Example \ref{eg:missingcodim1boundary2} are up to rotation and reflection the only diagrams with missing boundaries and we have identified diagrams for those boundaries in two different ways, then ultimately we have obtained all boundaries of codimension one for all of $\text{WLD}(3,8)$.

\subsubsection{Further remarks on implementation and applications}

The Python implementation of the graphical calculus of Section \ref{sec:moves} was designed specifically for the goal of proving Theorem \ref{thm:boundarymovesgenerate}. As such there are some aspects of the general theory that are not reflected in its implementation and therefore some limitations that are worth pointing out for the prospective end user. 

Generally speaking the implementation can only be applied to generalized Wilson loop diagrams in which (1) no propagators have repeated ends and (2) half propagators do not share support with other propagators. Diagrams with these features do not arise from iterated application of the moves in Section \ref{sec:moves} to admissible ordinary Wilson loop diagrams and thus their omission from the implementation does not affect our main result. Similarly we can largely ignore diagrams in which two propagators end at the same vertex, since whenever such diagrams are produced by applying some sequence of moves to ordinary Wilson loop diagrams, there is some equivalent representation without this property attainable by some other sequence of moves.

Apart from the restriction to a simpler class of generalized Wilson loop diagrams, there are a couple of other differences between the definitions of the moves and their implementation worth pointing out. For example, for our purposes it is not necessary to implement the slide move when a propagator ends on either side of a given vertex. We are also able to get away without implementing one of the cases of the merge move from (Section \ref{sec:merge}), namely the case when two propagators ending at the same vertex are merged. 

With these considerations in mind the implementation of the graphical calculus in the accompanying files should be viewed mainly as a graphical calculus for computing boundaries of ordinary Wilson loop diagrams with a small number of propagators. 

\section{Conclusion\label{sec:finale}}

In this paper we define a notion of generalized Wilson loop diagrams which generalizes ordinary Wilson loop diagrams (Section~\ref{sec:gWLDdiagrammatics}) and provides a diagrammatic way to understand many boundaries of ordinary Wilson loop diagrams, including at higher codimension (Section~\ref{sec:diagrammaticmoves}). The generalized Wilson loop diagrams are associated to matroids using the formalism of Rado matroids (Section~\ref{sec:allWLDtomatroid}). We prove properties of these objects, notably including that non-crossing generalized Wilson loop diagrams always give positroids (Corollary~\ref{res:non-crosspos}), and an algorithm to read off the Grassmann necklace of said positroid (Algorithm~\ref{algo:non-crossgwld}). We define a set of diagrammatic boundary moves on non-crossing generalized Wilson loop diagrams (Section \ref{sec:moves}), prove that they give rise to boundaries of the associated positroid cells (Section~\ref{sec:movesrealizeboundaries}), and implement them in the code included in the supplemental files to the arXiv version of this paper. We show that our moves give all boundaries at all codimensions for all admissible ordinary Wilson loop diagrams with two propagators, give all codimension 1 boundaries for all admissible ordinary Wilson loop diagrams with $3$ propagators on $6$, $7$, and $8$ vertices, although, when $n= 8$, one is sometimes required to use an equivalent representation (Section~\ref{sec:boundaries of more than 2 props}). 

Understanding boundaries of positroid cells is important both in mathematics and in physics. Mathematically, while there are many different (equivalent) combinatorial representations of positroids, to date, only one of them lets us directly find boundaries while a different one is needed to calculate the codimension of said boundary. While it is known that the positroid cells have a CW structure in the positive Grassmannian \cite{Postnikov, GrBall}, needing to pass back and forth between different representations makes tracing geometric information across the many bijections quite opaque. From the physics perspective, many boundaries correspond to poles in the associated Feynman integral. While it is understood that $\mathcal{N}=4$ SYM is a finite theory, and thus the singularities should only occur on in the boundaries in such a way as to cancel, without a proper characterization of the boundaries, proving this property is far from straightforward. For instance \cite{cancellation} shows that all singularities that appear on codimension one poles cancel, but identifies singularities that appear on higher codimension boundaries, about which nothing can be said. A better understanding of how boundaries relate between different diagrams lets us understand how the poles cancel in the overall amplitude. Furthermore, there are boundaries that do not contain poles, and so one also wants to understand how to distinguish between boundaries containing and not containing poles. 

Furthermore, generalized Wilson loop diagrams are a tool to understand the boundaries that remain tethered to the original physics context. Just as the ordinary Wilson loop diagrams encode the 4-momenta of the propagators via their support in the diagram, generalized Wilson loop diagrams index the colinearities between momenta. Therefore, unlike other combinatorial or geometric approaches to characterizing positroid boundaries, these diagrams are a combinatorial tool to understand boundaries that maintains many aspects of the original physical intuition. 

Many interesting questions remain. First of all one would like to have diagrammatic moves which generate all boundaries at all codimensions at least of admissible ordinary Wilson loop diagrams, if not of all non-crossing generalized Wilson loop diagrams. As we saw in Remark \ref{rmk:matroids on props M(W)} and Examples \ref{eg:missingcodim1boundary1} and \ref{eg:missingcodim1boundary2}, it looks like further generalizing our generalized Wilson loop diagrams in order to allow the ends of a propagator to be decorated by a matroid (not just a uniform matroid) will be necessary, at least if the moves are to be local. The other option is less clear. In Example \ref{eg:missingcodim1boundary1} we see that in that case one can obtain the missing boundary by taking an equivalent diagram to the starting diagram, and then using one of the moves which we already had, while in Example~\ref{eg:missingcodim1boundary2}, we have a representation of the missing boundary as a generalized Wilson loop diagram with a different structure throughout, not just a local change.  In this situation there is a trade-off between local moves, acting only on ends supported in a limited region, and moves not needing matroids beyond uniform on propagators. 
We are interested in formalizing the moves which go beyond the uniform case. The first step in this direction is clear: when merging multiple-ended propagators with capacity less than one fewer than the number of ends, the matroid we put on the new merged propagator must be the one where the independences are as before the merge except with at most one copy of the now merged ends allowed.
On the other hand from the perspective of the positroid, there is no particular advantage to insisting on local moves.  We also hope that at the expense of non-local changes, whether equivalences or not, we can capture all boundaries with the generalized Wilson loop diagrams of this paper, using additional moves that have not yet been specified.

The value of moving between equivalent representations also brings up another family of interesting questions, that of understanding all equivalent representations of generalized Wilson loop diagrams. In the ordinary Wilson loop diagram case, equivalent representations come about precisely from retriangulating regions of the diagram that have a critical density condition on propagators known as \emph{exactness} \cite{generalcombinatoricsI}, and so they are enumerable and understandable. A similar characterization in the generalized case would be very desirable. One could also formulate the question in terms of giving a set of diagrammatic equivalence moves, similar to the diagrammatic boundary moves, but which give equivalent diagrams rather than boundary diagrams. Some of these equivalences are given in Section \ref{sec:manyreps}.

Equivalent representations come about because of a kind of density condition, which we do not have a rigorous handle on, but which we see in examples. In the ordinary Wilson loop diagram case, admissibility is also (in part) a density condition, see Definition \ref{dfn:admissibleWLD}. So in the ordinary case we have one density condition (admissibility) to establish the class of well behaved diagrams, and another (exactness) to characterize equivalence. In the generalized case we do not understand precisely the required density conditions, and so we work with various conditions: local minimality, capacity-rankedness, generalized admissibility, as needed. All of these are in some sense density conditions and sorting these out into simple and useful density conditions with structural definitions would be very appealing.

Returning to the boundary moves themselves, it would be nice to understand which moves give rise to boundaries containing poles in the Feynman context. For codimension 1, one can infer from the definitions in this paper and in \cite{cancellation} that the merge move and slide move are sufficient to find the boundaries with poles. The eventual goal would be to use these moves to understand pole cancellation at higher codimensions. Algebraically, we intend to look at the structure of the algebra generated by composing these moves in the spirit of cluster algebras. Finally, because of our initial physics motivation we focus here on those positroids which come from admissible ordinary Wilson loop diagrams and their boundaries, but one could also ask if some notion of generalized Wilson loop diagram can provide another insightful diagrammatic form for all positroids.

\bibliographystyle{abbrv}
\bibliography{Bibliography}

\end{document}